\newtheorem{lemma}{Lemma}
\newtheorem{theorem}{Theorem}
\newtheorem{proposition}{Proposition}
\definecolor{mygray}{gray}{.8}
\definecolor{mypink}{rgb}{0.99,.71,.95}
\definecolor{mycyan}{cmyk}{.3,0,0,0}
\definecolor{mypurple}{rgb}{0.79,.71,.95}
\definecolor{mint}{RGB}{170, 240, 209}
\definecolor{lavender}{RGB}{220, 208, 255}
\definecolor{sunshine}{RGB}{255, 230, 153}
\newcommand{\anon}{1}
\begin{document}

\def\spacingset#1{\renewcommand{\baselinestretch}%
{#1}\small\normalsize} \spacingset{1}


\if1\anon
{
  \title{\bf A Markov-switching  dynamic matrix factor model for the high-dimensional matrix  time series}
  \author{Chaofeng Yuan\thanks{
    The author gratefully acknowledges \textit{NSFC grants 12401389 and the Outstanding Youth Science Foundation of Heilongjiang University}}\hspace{.2cm}\\
    School of Mathematical Sciences,
       Heilongjiang University\\
   Sainan Xu \\
   School of Mathematical Sciences,
       Heilongjiang University\\
    Xingbing Kong \thanks{The author gratefully acknowledges \textit{NSFC grants 12431009 and 72342019}}\\
   School of Statistics \& Data Science, Southeast University \\
       Jianhua Guo \thanks{The author gratefully acknowledges \textit{NSFC grants 12431009}}\\
School of Mathematics and Statistics, Beijing Technology and Business University}
  \maketitle
} \fi

\if0\anon
{
  \bigskip
  \bigskip
  \bigskip
  \begin{center}
    {\LARGE\bf A Markov-switching  dynamic matrix factor model for the high-dimensional matrix  time series}
\end{center}
  \medskip
} \fi

\bigskip
\begin{abstract}
In this study, we propose a novel model called the Markov-switching dynamic matrix factor (Ms-DMF) model, which serves the dual purpose of structural interpretation and prediction for high-dimensional matrix time series.
When estimating the parameters of the Ms-DMF model,  an EM (expectation maximization) algorithm was used to get a quasi-maximum likelihood estimation, where all the parameters are estimated jointly.  A filtering and smoothing algorithm is used to compute the posterior expectations corresponding to the latent regimes and factors. The consistency, convergence rates, and limit distributions of the estimated parameters are established under mild conditions.  The effectiveness of this estimation method is also validated by rigorous numerical simulations.
Furthermore, we apply the Ms-DMF model to an international trade flow network. Compared to existing matrix factor models, our approach not only identifies the main import and export centers, but also recognizes the trade cycles between these centers. This provides profound insights and analytical capabilities to advance research in the field of international trade.
\end{abstract}

\noindent%
{\it Keywords:} Matrix time series, Markov-switching model, Filtering and smoothing, International trade flow network
\vfill

\newpage
\spacingset{1.8} 

\section{Introduction}\label{sec-intro}

Matrix time series refer to data sequences in which each observation is a matrix. They arise in many fields, including finance, engineering, and the social sciences. Recent research has been dedicated to the development of novel models for analyzing matrix time series. Most of the existing literature on matrix time series is based on factor modeling via Tucker decomposition; see \cite{2019Factor}, \cite{2021Statistical}, \cite{2021Projected} and \cite{2024Matrix}. This approach has now been extended to encompass high-dimensional tensor time series \citep{2021Statistically,2022Factor}, which also fall within the Tucker tensor decomposition framework in computer science \citep{2009Tensor}. \cite{2023Modelling} consider to model matrix time series based on a tensor canonical polyadic (CP)-decomposition. \cite{2023Two-way}  proposed a two-way dynamic matrix model for matrix time series.
Both of these methods can capture the complex dependencies and heterogeneity inherent within matrix time series.

In this paper, we focus on the matrix factor model via Tucker's decomposition, which was first introduced in \cite{2019Factor} as an extension of vector factor models to the case of matrices. Extensive research has been conducted on this model, both in terms of parameter estimation methods \citep{2021Projected, 2021Statistical,2024Matrix,2024Quasi,2024Adaptively}
and model extensions \citep{GARCH2024,Online2024,Quantile2024,Generalized2024,Modeling2024}.

Both of the above methods assume that factor loadings remain constant over time, which is an assumption not always realistic in practice. Recently, several methods have been proposed to detect structural shifts of the matrix factor. \cite{Time-Varying2024} proposes a time-varying
 matrix factor model to investigate a matrix factor model with loading matrices defined as unknown, smooth, time-varying functions. \cite{Online2024}
proposes the online detection of change points in the factor structure of large matrix time series.

In this paper, we generalize the matrix factor model to a Markov-switching dynamic matrix factor (Ms-DMF) model by introducing a state sequence to characterize shifts in factor structures. The proposed model is applicable to numerous cyclical analyses, such as business cycle analysis and stock/bond return studies. This extension accounts for scenarios where factor loadings or processes may synchronously shift due to market conditions (e.g., bull/bear markets), volatility fluctuations, business cycle phases,  or other state variables.


The Ms-DMF model can be viewed as an extension of the Markov-switching dynamic factor (Ms-DF) model for vector-valued observational data to matrix time series data and an extension of the Markov-switching state space (Ms-SS)  model to matrix time series data.
The general Ms-DF framework was first proposed by \cite{1995new} and later applied to business cycle analysis in studies such as \cite{1996Measuring}, \cite{1998business}, and \cite{1998econometric}. For Ms-SS models, Bayesian MCMC estimation was employed by \cite{2014Estimation}, Kalman filter methods by \cite{2012Regime}, and the EM algorithm by \cite{2022Markov}. However, these works focused primarily on estimation methods for vector-valued Ms-DF and Ms-SS frameworks, without establishing theoretical foundations for the models or their estimators.   Recent work by \cite{2024Estimation} proposed quasi-likelihood estimation for high-dimensional regime-switching factor models and derived asymptotic properties for the estimators, though it did not account for serial correlation. Separately, \cite{2025Modelling} developed a two-step estimator for Markov-switching factor models and studied its theoretical properties. Nevertheless, both approaches remain confined to vector factor models.

Compared with existing regime-switching factor models, this paper introduces two primary novelties. First, to the best of our knowledge, it presents the first work incorporating regime switching within a matrix factor model framework. Second, we also built a prediction model (Markov-switching vector autoregression model, Ms-VAR) for the factor process, and the corresponding Ms-VAR parameters can be simultaneously estimated with other parameters.
 Consequently, the proposed model achieves dual purposes: providing structural interpretation and enabling prediction for high-dimensional matrix time series data.

Extending regime-switching models like Ms-DF or Ms-SS to matrix-valued data presents novel computational and theoretical challenges. To estimate the parameters of our proposed Ms-DMF model, we adapt the EM-based method with Kalman filtering developed for the Ms-SS model \citep{2022Markov}. This adaptation achieves high computational efficiency by circumventing explicit high-dimensional matrix operations. Our theoretical foundation is based on recent work in regime-switching factor models \citep{2024Estimation}. However, unlike \cite{2024Estimation}, we address two additional challenges. First,
 the factor dynamic is considered, which makes the log-likelihood  more complex. Second, the distinct Kronecker structures within the factor loadings and autoregression matrices further complicate the estimation equations.
Consequently, rather than analyzing the estimating equations directly, we derive the convergence properties of the parameter estimates directly from the convergence results of the EM iteration equations.



The remainder of this paper is organized as follows. Section 2 introduces the model setup and addresses the identification issue. Section 3 describes the proposed Q-MLE estimation method.  Section 4 presents the technical assumptions and asymptotic results. Section 5 is devoted to numerical studies. A real data example is provided in Section 6.  Finally, Section 7 concludes the paper and discusses possible future work.The technical proofs for the theoretical results and some omitted results Sections 3 and 5 are included in the Supplementary Material.

Throughout the paper, $(p,q,n) \rightarrow \infty$ denotes $p,q$ and $n$ going to infinity jointly. For matrix $A$, let $\|A\|$ and $\|A\|_F$ denote its spectral norm and the Frobenius norm, respectively. Let $P_A=A(A^{\top}A)^{-1}A^{\top}$ denote the projection matrix and $M_A=I-P_A$.

\section{ Markov-switching Dynamic Matrix Factor Model }\label{sec2}

The proposed Markov-switching dynamic matrix factor (Ms-DMF) model comprises two fundamental components: an observation equation governing dimensionality reduction and a state equation characterizing system dynamics through Markov-switching autoregression.

Formally, consider a matrix time series $Y_t=(y_{t,ij}) \in \mathcal{R}^{p\times q}, t \in [n]$. The Ms-DMF model is defined as follows.
\begin{align}\label{model1}
Y_t = &R_{s_t}F_tC_{s_t}^\top+E_t,\nonumber \\
F_t = &B_{s_t}+\Phi_{s_t} F_{t-1}\Gamma_{s_t}^{\top}+\epsilon_t,
\end{align}
where $(s_t)_{t=1,\cdots,n}$ is a stationary and homogeneous hidden Markov chain taking values in a finite set of regimes $\{1,\cdots,M\}$ with transition probabilities
\begin{align}
  p(s_t=j\mid s_{t-1}=i,s_{t-2}=k,\cdots, Y_{t-1},\cdots)=p(s_t=j\mid s_{t-1}=i)\triangleq p_{ij}, \notag
\end{align}
for $i,j,k \in [M]$. The switching variable $s_t$ indicates the regime (or state) in which the system (\ref{model1}) operates at time $t$. The regimes represent different modes or states of the series.

In model (\ref{model1}),
$F_t \in \mathcal{R}^{k_1 \times k_2}$ is the latent factor matrix common to all elements of $Y_t$;
$E_t\in \mathcal{R}^{p \times q}$   represents the idiosyncratic matrix component (measurement errors) with $E(E_t)=0$ and $\text{Cov}[\text{vec}(E_t), \text{vec}(E_t)]=\sigma^2I_{pq}$;
$\epsilon_t \in \mathcal{R}^{k_1 \times k_2}$  corresponds to innovation terms in the autoregressive structure with     $E(\epsilon_t)=0$ and $\text{Cov}[\text{vec}(\epsilon_t),\text{vec}(\epsilon_t)]= \sigma_{\epsilon}^2I_{k_1,k_2} $;
 $R_{s_t} \in \mathcal{R}^{p \times k_1}$ and  $C_{s_t} \in  \mathcal{R}^{q \times k_2}$   constitute state-dependent row and column loading matrices;
 $B_{s_t} \in \mathcal{R}^{k_1 \times k_2}$  contains regime-specific intercept terms;
 $\Phi_{s_t} \in \mathcal{R}^{k_1 \times k_1}, \Gamma_{s_t} \in \mathcal{R}^{k_2 \times k_2}$  encode autoregressive coefficients.
 Under regime $s_t=j$, the core parameters are $\{R_j,C_j,B_j,\Phi_j,\Gamma_j\}$.  For simplicity, we assume that $E_t, s_t, \epsilon_t, t\in [n]$ are mutually independent.

In reality,  the model (\ref{model1}) is just a basic form. There are many expansions, such as the Ms-SS models \citep{2022Markov}. For example, both the parameters in $E_t$ and $\epsilon_t$ can also depend on the regime $s_t$, and the order of the autoregressive process can be greater than 1, and $E_t$ could be weakened to the heteroscedasticity assumption as \cite{2024Quasi}.

\textbf{Remark 1} If the observation equation does not depend on the regime $s_t$, and only the dynamics of the state equation switch with $s_t$, model (\ref{model1}) reduces to
\begin{align}\label{Rmodel1}
Y_t = &RF_tC^\top+E_t,\nonumber \\
F_t = &B_{s_t}+\Phi_{s_t} F_{t-1}\Gamma_{s_t}^{\top}+\epsilon_t.
\end{align}
The dependencies between observations, factors, and regimes under model (\ref{model1}) and (\ref{Rmodel1}) are depicted in Figure \ref{Fig1}. Taking the international trade flow network as an example, model (\ref{Rmodel1}) implies that the export hubs and import hubs (defined by $R$ and $C$) remain unchanged over time, while only the trade volumes from export hubs to import hubs (defined by $F_t$) exhibit different dynamic properties under different trade cycles (defined by $s_t$).

 \begin{figure}[htbp!]
  \centering
  \resizebox{\textwidth}{!}{\includegraphics{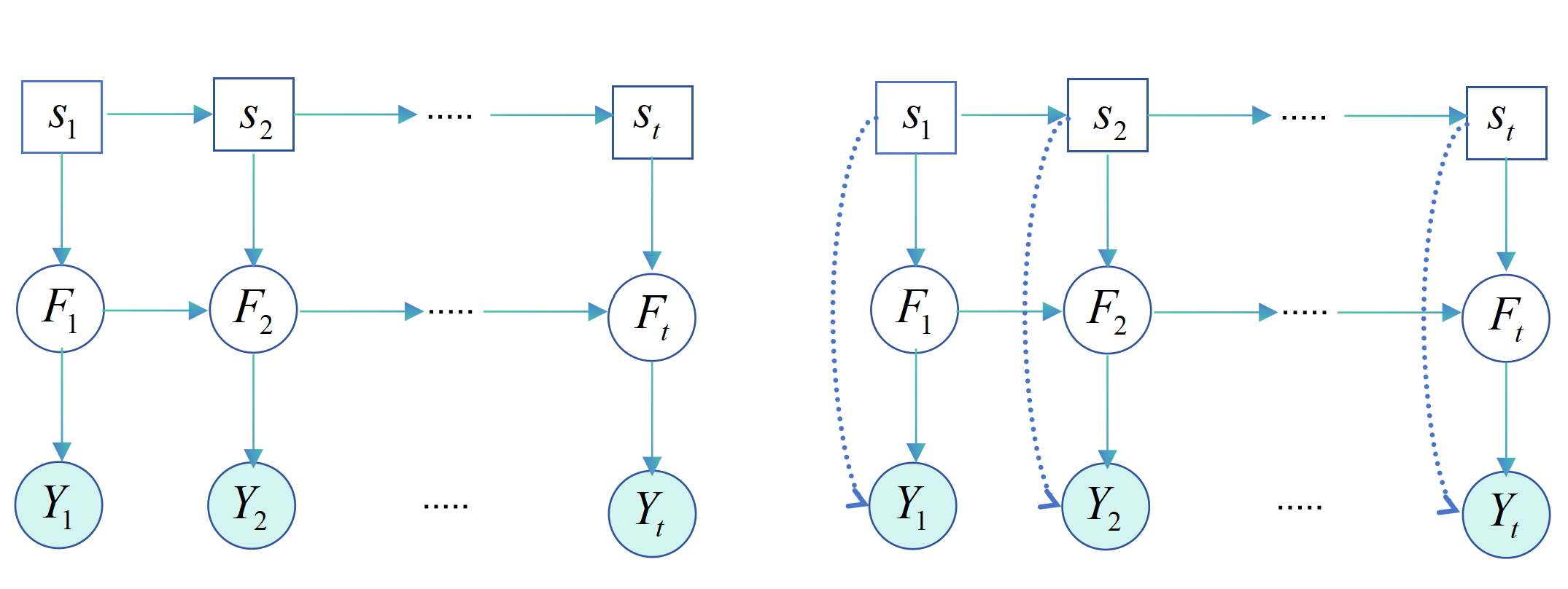}}\\
    \vspace{-15pt} 
  \caption{\small The Bayes network of model (\ref{model1}) (right) and model (\ref{Rmodel1}) (left). Square nodes represent discrete variables and oval ones are continuous variables. Shaded nodes are observed and white ones are hidden. }\label{Fig1}
\end{figure}

\subsection{  The Vector Representation}
Let $\text{vec}(\cdot)$ denote the vec operator that stacks a matrix into a vector by column.
Then, the model (\ref{model1}) can be expressed in the following
vectorized form:
\begin{align}\label{vectorizedmodel}
\boldsymbol {y_t} = & (C_{s_t}\otimes R_{s_t}) \boldsymbol {f_t}+\boldsymbol {\bm{e}_t},\nonumber \\
\boldsymbol {f_t} = &\boldsymbol {\beta_{s_t}}+ (\Gamma_{s_t} \otimes \Phi_{s_t}) \boldsymbol {f_{t-1}}+\boldsymbol {\bm{\varepsilon}_t},
\end{align}
where
$\boldsymbol {y_t}=\text{vec}(Y_t)$, $\boldsymbol {f_t}=\text{vec}(F_t)$, $\boldsymbol{e_t}=\text{vec}(E_t)$, $\boldsymbol {\beta_{s_t}}=\text{vec}(B_{s_t})$,  $\boldsymbol\varepsilon_t=\text{vec}(\epsilon_t)$. It reduces to the Ms-SS model with the exception of a special Kronecker constraint on the loading matrix and the autoregression coefficient matrix.

\textbf{Remark 2}  The processes $\{\bm{y}_t\}$ and $\{\bm{f}_t\}$ are Markovian conditional on the history of regimes $\{s_t\}$,
\begin{align}
    p(\bm{y}_t|\mathcal{Y}_{t-1},\mathcal{S}_t) = p(\bm{y}_t|\bm{y}_{t-1},\mathcal{S}_t ), ~~~~
    p(\bm{f}_t|\mathcal{F}_{t-1},\mathcal{S}_t) &= p(\bm{f}_t|\bm{f}_{t-1},\mathcal{S}_t ),
\end{align}
where $\mathcal{S}_t\triangleq \{s_0,\cdots,s_t\}$ and $\mathcal{Y}_{t-1}\triangleq \{Y_0,\cdots Y_{t-1}\}$. However, the (unconditional) marginal processes $\{\bm{y}_t\}$ and $\{\bm{f}_t\}$ are generally not Markovian. This is caused by the information in the history of the observed variable on the distribution of the regime $s_t$. Only if there is no serial correlation of regimes (mixture of normal), then the Markov property of  $\{\bm{y}_t\}$ and
$\{\bm{f}_t\}$ would be re-established.

The following proposition shows that, under moderate assumptions, $\bm{f_t}$ is a unique stationary and ergodic solution of the state equation (\ref{vectorizedmodel}), if the spectral radius of
 $Q_{\Psi}:= \text{diag} (\Psi_1 \otimes \Psi_1,\cdots,\Psi_M \otimes \Psi_M  )(P^{\top} \otimes I_{(pq)^2})$ is less than one, where $\Psi_k=\Gamma_k\otimes\Phi_k$ for $k\in[M]$.
 \begin{proposition}
      Assume that $(\bm{\epsilon}_t)_{t\in Z}$ is a sequence of i.i.d. random variables with $E(\bm{\epsilon}_0)=0, \sigma_{\epsilon}^2 <\infty$ and $\rho(Q_{\Psi}):= \lim_{k\rightarrow \infty}\| Q_{\Psi} ^{k-1}\|^{1/k} <1.$ Then there is a unique stationary and ergodic solution $\bm{f}=(\bm{f}_t)_{t\in Z}$ to equation (\ref{vectorizedmodel}).
    \end{proposition}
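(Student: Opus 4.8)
Equation (\ref{vectorizedmodel}) for $\bm f_t$ is a Markov-switching VAR(1) with coefficient matrices $\Psi_j=\Gamma_j\otimes\Phi_j$, and I would follow the classical route for such models: construct a candidate solution by unrolling the recursion into an infinite series, show it converges almost surely and in $L^2$ using the condition $\rho(Q_\Psi)<1$, verify that the limit is strictly stationary, ergodic and solves the equation, and finally rule out any other stationary solution. Throughout I would work with the driving sequence $\xi_t=(s_t,\bm\varepsilon_t)$: since $(s_t)$ is a stationary homogeneous (and, under the model's standing irreducibility, ergodic) regime chain, $(\bm\varepsilon_t)$ is i.i.d.\ with finite second moment, and the two are independent, the sequence $(\xi_t)$ is stationary and ergodic, so any measurable non-anticipative functional of $(\xi_u)_{u\le t}$ is stationary and, jointly with $(s_t)$, ergodic.

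Unrolling $K$ steps gives
\begin{align}
\bm f_t \;=\; \sum_{k=0}^{K-1} M_{t,k}\,(\bm\beta_{s_{t-k}}+\bm\varepsilon_{t-k}) \;+\; M_{t,K}\,\bm f_{t-K},\qquad M_{t,k}:=\Psi_{s_t}\Psi_{s_{t-1}}\cdots\Psi_{s_{t-k+1}}, \notag
\end{align}
with $M_{t,0}:=I$, suggesting the candidate $\bm f_t^\ast:=\sum_{k\ge0}M_{t,k}(\bm\beta_{s_{t-k}}+\bm\varepsilon_{t-k})$. The core estimate is that $E\|M_{t,k}\|_F^2$ decays geometrically at a rate controlled by $Q_\Psi$: writing $M_{t,k}M_{t,k}^\top$ in terms of $M_{t,k-1}M_{t,k-1}^\top$ and the factor $\Psi_{s_t}$, using $\operatorname{vec}(ABA^\top)=(A\otimes A)\operatorname{vec}(B)$, tagging with the relevant regime, and taking conditional expectations via the Markov property of $(s_t)$, one finds that the stacked vector of regime-conditioned second moments of $M_{t,k}$ obeys a linear recursion whose iteration matrix is exactly $Q_\Psi=\operatorname{diag}(\Psi_j\otimes\Psi_j)(P^\top\otimes I)$ (up to transposition, which does not affect the spectral radius). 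Hence $E\|M_{t,k}\|_F^2=E\operatorname{tr}(M_{t,k}M_{t,k}^\top)\le C\|Q_\Psi^{\,k}\|$, and since $\rho(Q_\Psi)=\lim_k\|Q_\Psi^{\,k-1}\|^{1/k}<1$, Gelfand's formula gives $\|Q_\Psi^{\,k}\|\le C_\eta\,\varrho^{\,k}$ with $\varrho:=\rho(Q_\Psi)+\eta<1$. Because $\bm\varepsilon_{t-k}$ is centered and independent of $(s_u)_u$, while $\bm\beta_{s_{t-k}}$ is bounded, the cross term vanishes and
\begin{align}
\big\|M_{t,k}(\bm\beta_{s_{t-k}}+\bm\varepsilon_{t-k})\big\|_{L^2}^2 \;\le\; \big(\max_j\|\bm\beta_j\|^2+\sigma_\epsilon^2\big)\,E\|M_{t,k}\|_F^2 \;\le\; C'\,\varrho^{\,k}, \notag
\end{align}
so $\sum_{k\ge0}\|M_{t,k}(\bm\beta_{s_{t-k}}+\bm\varepsilon_{t-k})\|_{L^2}\le C''\sum_{k\ge0}\varrho^{\,k/2}<\infty$. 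The series defining $\bm f_t^\ast$ is therefore Cauchy in $L^2$, and since $E\sum_k\|M_{t,k}(\bm\beta_{s_{t-k}}+\bm\varepsilon_{t-k})\|<\infty$ it also converges absolutely almost surely; in particular $\bm f_t^\ast\in L^2$.

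It then remains to confirm that $\bm f_t^\ast$ solves (\ref{vectorizedmodel}): peeling off the $k=0$ term and using $M_{t,k}=\Psi_{s_t}M_{t-1,k-1}$ together with absolute $L^2$ convergence yields $\bm f_t^\ast=\bm\beta_{s_t}+\bm\varepsilon_t+\Psi_{s_t}\bm f_{t-1}^\ast$. Since $\bm f_t^\ast$ is a fixed measurable functional of $(\xi_u)_{u\le t}$, the process $(\bm f_t^\ast)$ is strictly stationary, and $(\bm f_t^\ast,s_t)_t$ is ergodic as a factor of $(\xi_t)_t$. For uniqueness, if $\tilde{\bm f}_t$ is another stationary solution, then $D_t:=\bm f_t^\ast-\tilde{\bm f}_t$ satisfies $D_t=\Psi_{s_t}D_{t-1}=M_{t,k}D_{t-k}$ for all $k$ simultaneously (a.s.). From $\sum_kE\|M_{t,k}\|_F^2<\infty$ we get $\|M_{t,k}\|\to0$ a.s.; since $(D_{t-k})_k$ is stationary it is tight, so $\|M_{t,k}\|\,\|D_{t-k}\|\to0$ in probability, whereas $\|D_t\|$ does not depend on $k$, forcing $D_t=0$ a.s.

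The only genuinely delicate point is the geometric-decay estimate with rate governed by $Q_\Psi$: one must set up the regime-partitioned second-moment vector of $M_{t,k}$ and align the Kronecker identities, the roles of $P$ versus $P^\top$, and the indexing so that the iteration matrix is precisely $\operatorname{diag}(\Psi_j\otimes\Psi_j)(P^\top\otimes I)$; once this identity is secured, Gelfand's formula and the summability bounds take over, and the remaining steps (stationarity via measurable functionals, ergodicity as a factor, verification of the state equation, and the tightness-based uniqueness argument) are routine.
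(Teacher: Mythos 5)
The paper itself contains no proof of Proposition 1 --- the Supplementary Material only proves Lemma 1 and Theorems 1--6 --- so there is no in-paper argument to compare yours against; the authors implicitly lean on the standard stationarity theory for Markov-switching autoregressions, and your proposal is exactly that standard route and is essentially correct. In particular, the key second-moment recursion works as you describe: defining $u_k^{(j)}=E\big[\operatorname{vec}(M_{t,k}M_{t,k}^\top)I_{(s_t=j)}\big]$ and conditioning on $\sigma(s_u,\,u\le t-1)$ gives $u_k^{(j)}=(\Psi_j\otimes\Psi_j)\sum_i p_{ij}\,u_{k-1}^{(i)}$, so the iteration matrix is \emph{literally} $Q_\Psi=\operatorname{diag}(\Psi_j\otimes\Psi_j)(P^\top\otimes I)$; your hedge ``up to transposition'' is unnecessary, and as stated it would itself require justification, since transposing only $P$ inside this product does not obviously preserve the spectral radius. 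Two small repairs are needed. (i) Ergodicity of the driving pair $(s_t,\bm{\varepsilon}_t)$ is not automatic for a product of independent ergodic processes; you should say that an i.i.d.\ sequence is mixing and that the product of an ergodic system with an independent mixing system is ergodic, and note that ergodicity of the stationary chain itself uses the irreducibility imposed in Assumption A/(I4). (ii) In the uniqueness step, $D_t=\bm{f}_t^\ast-\tilde{\bm{f}}_t$ need not be stationary, because joint stationarity of the two solutions is not given; but you only need tightness of $\|D_{t-k}\|$, which follows from the marginal stationarity of each solution via $P(\|D_{t-k}\|>M)\le P(\|\bm{f}_0^\ast\|>M/2)+P(\|\tilde{\bm{f}}_0\|>M/2)$, after which your argument ($\|M_{t,k}\|\to 0$ a.s.\ plus tightness, while $\|D_t\|$ does not depend on $k$) goes through. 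With those two sentences added the proof is complete, and it in fact yields a second-order (not merely strictly) stationary solution, which is a slight bonus relative to the bare statement of the proposition.
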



\subsection{ The Identification Problem}
  Maximum likelihood estimation presupposes that the model is at least locally identified. In Ms-DMF, the identification problem may be superficially caused by the factor structure in the observation equation and the interchangeability of the labels of the regime.

To see this, for any invertible $k_1\times k_1$ matrix $H_r$ and $k_2\times k_2$ matrix $H_{c}$, model (\ref{model1}) can be rewritten as
  \begin{align}
      Y_t &= R_{s_t}H_{r}H_{r}^{-1}F_tH_{c}^{-1\top}H_{c}^{\top}C_{s_t}^{\top}+E_t= \widetilde{R}_{s_t}\widetilde{F}_t\widetilde{C}_{s_t}^{\top}+E_t, \notag \\
\widetilde{F}_t &= \widetilde{B}_{s_t} + \widetilde{\Phi}_{s_t}\widetilde{F}_{t-1}\widetilde{\Gamma}_{s_t}^\top+\widetilde{\epsilon}_t ,
       \end{align}
   where $\widetilde{R}_{s_t}= R_{s_t}H_r$, $ \widetilde{C}_{s_t}= C_{s_t}H_c$, $ \widetilde{F}_t= H_{r}^{-1}F_tH_{c}^{-1\top}$, $ \widetilde{B}_{s_t} =H_{r}^{-1} B_{s_t} H_{c}^{-1\top}$, $\widetilde{\Phi}_{s_t}= H_{r}^{-1}\Phi_{s_t}H_r$, $ \widetilde{\Gamma}_{s_t}=H_c^{-1}\Gamma_{s_t}H_{c}$ and $\widetilde{\epsilon}_t= H_{r}^{-1} \epsilon_t H_{c}^{-1\top}$.  We temporarily refer to it as rotational unidentifiability. To ensure that $\widetilde{\epsilon}_t$ satisfies the model assumption that $\text{Cov}[\text{vec}(\widetilde{\epsilon}_t),\text{vec}(\widetilde{\epsilon}_t)] =\sigma_\epsilon^2 I_{k_1k_2}$,   $H_r$ and $H_c$ need to be an orthogonal matrix. Thus, $C_{k_1}^2+k_1+C_{k_2}^2+k_2$ restrictions are imposed on $H_r$ and $H_c$.
    Then there are $k_1^2+k_2^2-C_{k_1}^2-k_1-C_{k_2}^2-k_2=C_{k_1}^2+C_{k_2}^2$ free elements remaining in $H_r$ and $H_c$.
       So we assume that there exists a $k\in [M]$ such that $R_k^{\top}R_k$ and $C_k^{\top}C_k$ are diagonal matrices in (I1), where the number of restrictions is just $C_{k_1}^2+C_{k_2}^2$.


In addition to rotational unidentifiability, the model also faces unidentifiability in order and sign. Specifically, if we swap the $s$-th and $l$-th columns of $R_{s_t}$ while simultaneously swapping the $s$-th and $l$-th rows of $F_t$, the matrix $R_{s_t}F_t C_{s_t}^\top$ does not change. Similarly, swapping the $s$-th and $l$-th columns of $C_{s_t}$ and the $s$-th and $l$-th columns of $F_t$ also leaves the matrix unchanged.
To resolve these order-related issues, we assume in constraint (I1) that the diagonal elements of $R_k^{\top}R_k$ and $C_k^{\top}C_k$ are arranged in descending order. The unidentifiability of the sign arises because changing the sign of the $s$-th column of $R_{s_t}$ and the $s$-th row of $F_t$, or changing the signs of the $l$-th column of $C_{s_t}$ and the $l$-th column of $F_t$
preserves $R_{s_t}F_t C_{s_t}^\top$. This is addressed by constraint (I2).

Finally, we address the issue of regime label interchangeability in the Ms-DMF model. Following the identification strategy proposed by \cite{1997Markov} for Ms-VAR models, this nonidentifiability problem can be resolved by imposing two conditions:
(I3) distinct parameter configurations across regimes, and
(I4) an irreducible and aperiodic Markov chain associated with $P$, which guarantees a unique stationary distribution.
Since the loadings also vary across regimes, condition (I3) can be replaced by the requirement that two regimes differ if their loading spaces differ, and vice versa. A more detailed discussion is provided in \cite{2024Estimation}.

\begin{description}
    \item[(I1)]  $\exists k\in [M]$,  such that $R_k^{\top}R_k=pD_{k}^{(1)}$ and $C_k^{\top}C_k=qD_{k}^{(2)}$, where $D_{k}^{(1)}$ and $D_{k}^{(2)}$ are diagonal matrices with its elements arranged in descending order.
    \item[(I2)] The first rows of $R_k$ and $C_k$ are all non-negative for each $k\in [M]$.
    \item[(I3)] A monotonic constraint is imposed on the intercept terms of $F_{t,11}$, such that $\beta_{1,11}>\beta_{2,11}>\cdots>\beta_{M,11}$,where $\beta_{k,11}$ denotes the $(1,1)$-th
element of $B_k$.
 \item[(I4)] The transition probability matrix $P=(p_{ij})$ is irreducible and aperiodic.
\end{description}

\section{Estimation}

In this section, an EM (expectation maximization)  algorithm
will be used to obtain the quasi-maximum likelihood estimation (QMLE) of the Ms-DMF model, where all parameters
are estimated jointly. When computing the posterior expectation corresponding to
the latent regimes and factors, smoothed regime probabilities and factor scores are calculated using a recursive algorithm.

Let \(\theta _1=\{ (R_k,C_k):1\leq k\leq M,\sigma^2\}\) be the collection of parameters in the observation equation. Let \(\theta _2=\{(B_k,\Phi_k,\Gamma_k):1\leq k\leq M,\sigma_\epsilon^2\} \) be the collection of parameters in the state equation that describes the system dynamics. Let $\rho =\text{vec}(P)$. Let \(\theta= \{\theta_1,\theta_2,\rho\}\) be the collection of all parameters in model (\ref{model1}). For brevity, define \(\mathcal{Y}_t = \{Y_{t - j}\}_{j = 0}^{t - 1},~\mathcal{F}_t = \{F_{t - j}\}_{j = 0}^t\), and \(\mathcal{S}_t=\{s_{t-j}\}_{j = 0}^t\).

\subsection{The EM algorithm}\label{Sec3.1}

The complete log quasi-likelihood is given by
\begin{align*}
L^c(\theta)=&\log p( \mathcal{Y}_n,\mathcal{F}_n,\mathcal{S}_n;\theta)
=\log p(\mathcal{Y}_n\mid \mathcal{F}_n,\mathcal{S}_n;\theta)+\log p(\mathcal{F}_n\mid \mathcal{S}_n;\theta)+ \log p(\mathcal{S}_n;\theta)\\
=&\sum_{t = 1}^n\log\psi\left(y_t;(C_{s_t}\otimes R_{s_t}) f_t,\sigma^2I_{pq}\right)+\sum_{t = 1}^n\log\psi\left(f_t;\beta _{s_t}+(\Gamma_{s_t}\otimes\Phi_{s_t})f_{t-1},\sigma_\epsilon^2I_r\right)\\
&+\sum_{t=1}^n\log P_r(s_t\mid s_{t-1};\rho)+\log p(f_0\mid \mathcal{S}_n;\theta)+\log P_r(s_0;\rho),
\end{align*}
where \(\psi (x;m,v)\) is the density function of a multivariate normal distribution with mean vector $m$ and covariant matrix $v$, and $r=k_1k_2$.

Given a current estimate \( \widetilde{\theta}\) of \(\theta\), the E step consists in taking the conditional expectation of \(L^c(\theta)\) with respect to the observed data \(\mathcal{Y}_n\) while assuming that \(\widetilde{\theta}\) is the true model parameter. This produces the Q-function as
\begin{align*}
Q(\theta;\widetilde{\theta})=& \mathbb{E}[ L^c(\theta)\mid \mathcal{Y}_n;\widetilde{\theta}]\\
=&-\frac{npq}{2}\ln \sigma^2 -\frac{1}{2\sigma^2}\sum_{t = 1}^n \mathbb{E}\left[\left\|Y_t-R_{s_t}F_tC_{s_t}^{\top} \right\|_F^2\mid \mathcal{Y}_n;\widetilde{\theta}\right]\\
&-\frac{nr}{2}\ln\sigma_\epsilon^2-\frac{1}{2\sigma_\epsilon^2}\sum_{t=1}^n \mathbb{E}\left[\left\| F_t- B_{s_t}-\Phi_{s_t}F_{t-1}\Gamma_{s_t}^{\top}\right\|_F^2\mid \mathcal{Y}_n;\widetilde{\theta}\right]\\
&+\sum_{t=1}^n\sum_{i=1}^M\sum_{j=1}^M\log P_r[s_t=j\mid s_{t-1}=i;\rho]P_r[s_{t-1}=i,s_t=j\mid \mathcal{Y}_n;\widetilde{\theta}]+Cons \\
\triangleq &Q_1(\theta_1;\widetilde{\theta})+Q_2(\theta_2;\widetilde{\theta})+Q_3(\rho;\widetilde{\theta})+Cons,
\end{align*}
where $\mathit{Cons}$ denotes a constant term that does not depend on any unknown parameters. To compute the Q-function, we require the elements listed in Table 1. These elements are obtained via the filtering and smoothing methods described in Section 3.2 and Section A.5 of the Supplementary Material. More details on filtering and smoothing algorithms for Ms-SS models can be found in \cite{Kim1994}.

\begin{table}[h]
  \centering
  \caption{ The posterior expectations required for computing the Q-function.}
\label{Ta1}
  \begin{tabular}{ll}
    \toprule
    For \(Q_1(\theta_1;\widetilde{\theta})\) & For \(Q_2(\theta_2;\widetilde{\theta})\) \\
    \midrule
    \(w_{t\mid n}^{(k)}\triangleq P_r[s_t=k\mid \mathcal{Y}_n;\widetilde{\theta}]\) & \(f_{t-1\mid n}^{*(k)}\triangleq \mathbb{E}[f_{t-1}\mid s_t=k,\mathcal{Y}_n;\widetilde{\theta}]\) \\
    \(f_{t\mid n}^{(k)}\triangleq \mathbb{E}[f_t\mid s_t=k,\mathcal{Y}_n;\widetilde{\theta}]\) & \(p_{t-1\mid n}^{*(k)}\triangleq \mathbb{E}[f_{t-1}f_{t-1}^\top\mid s_t=k,\mathcal{Y}_n;\widetilde{\theta}]\) \\
    \(p_{t\mid n}^{(k)}\triangleq \mathbb{E}[f_tf_t^\prime\mid s_t=k,\mathcal{Y}_n;\widetilde{\theta}]\) &\(p_{t,t-1\mid n}^{(k)}\triangleq \mathbb{E}[f_tf_{t-1}^\top\mid s_t=k,\mathcal{Y}_n;\widetilde{\theta}]\) \\
 \midrule
For \(Q_3(\rho;\widetilde{\theta})\) &
    \(w_{t-1,t\mid n}^{(i,j)}=P_r[s_{t-1}=i,s_t=j\mid \mathcal{Y}_n;\widetilde{\theta}]\) \\
\bottomrule
\end{tabular}
\end{table}

\subsubsection{ The iterative equation of $\theta_1$}

Consider $Q_1(\theta_1;\widetilde{\theta})$.
 With some matrix operations, we have
\begin{align*}
&\left\|Y_t-R_{s_t}F_tC_{s_t}^{\top} \right\|_F^2=\text{Tr}(R_{s_t}^\top R_{s_t}F_tC_{s_t}^\top C_{st}F_t^\top)-2\text{Tr}(R_{s_t}^\top Y_tC_{s_t}F_t^\top)+\text{Tr}(Y_t^\top Y_t).
\end{align*}
Then by the law of total expectation
\begin{align*}
    Q_1(\theta_1;\widetilde{\theta})& =-\frac{npq}{2}\ln\sigma^2+Cons\\
    &-\frac{1}{2\sigma^2}\sum_{t=1}^n\sum_{k=1}^M w_{t\mid n}^{(k)}\left[\text{Tr}(R_k^\top R_kP_{t\mid n}^{c(k)})-2\text{Tr}(R_k^\top Y_tC_kF_{t\mid n}^{(k)\top})
   +\text{Tr}(Y_t^\top Y_t) \right],
\end{align*}
where \(w_{t\mid n}^{(k)}\) is defined in Table \ref{Ta1}, and
\begin{align*}
P_{t\mid n}^{c(k)}\triangleq & \mathbb{E}[F_tC_k^\top C_kF_t^\top\mid s_t=k,\mathcal{Y}_n;\widetilde{\theta}]
=\sum_{j=1}^q(e_q^{(j)\top}C_k\otimes
I_{k_1})p_{t\mid n}^{(k)}(C_k^\top e_q^{(j)}\otimes I_{k_1}),
\end{align*}
with \(e_q^{(j)}\) denoting the $j$-th column vector of \(I_q\) and \(p_{k\mid n}^{(k)}\) is defined in Table \ref{Ta1}. \(F_{t\mid n}^{(k)}\) denotes the \( k_1 \times k_2 \) matrix version of \( f_{t\mid n}^{(k) } \) given in Table \ref{Ta1}. Similarly, define
 $P_{t\mid n}^{r(k)}= \sum_{i=1}^p(I_{k_2}\otimes e_p^{(i)\top}R_k)p_{t\mid n}^{(k)}(I_{k_2}\otimes R_k^\top e_p^{(i)}).$
Maximizing \(Q_1(\theta_1;\widetilde{\theta})\) with respect to \(R_k,C_k\) and \(\sigma^2\) , We can get the iterative equations of \(\theta_1\) as follows.
\begin{align}\label{es-th1}
    \widehat{R}_k=&\left(\sum_{t=1}^nw_{t\mid n}^{(k)}Y_t\widehat{C}_kF_{t\mid n}^{(k)\top}\right)\left(\sum_{t=1}^nw_{t\mid n}^{(k)}P_{t\mid n}^{c(k)}\right)^{-1}, \notag\\
    \widehat{C}_k=&\left(\sum_{t=1}^nw_{t\mid n}^{(k)}Y_t^\top\widehat{R}_kF_{t\mid n}^{(k)}\right)\left(\sum_{t=1}^nw_{t\mid n}^{(k)}P_{t\mid n}^{r(k)}\right)^{-1},\notag\\
    \hat{\sigma}^2=&\frac{1}{npq}\sum_{t=1}^n\sum_{k=1}^Mw_{t\mid n}^{(k)}\left[\text{Tr}(R_k^\top R_k P_{t\mid n}^{c(k)})-2\text{Tr}(R_k^\top Y_t C_k F_{t\mid n}^{(k)\top})+\text{Tr}(Y_t^\top Y_t)\right].
\end{align}

\subsubsection{ The iterative equation of \(\theta_2\)}
By the law of total expectation and some matrix operations,
\begin{align}\label{es1}
Q_2(\theta_2;\widetilde{\theta})=&
-\frac{nr}{2}\ln\sigma_\epsilon^2-\frac{1}{2\sigma_\epsilon^2}\sum_{t=1}^n \mathbb{E}\left[\left\| F_t- B_{s_t}-\Phi_{s_t}F_{t-1}\Gamma_{s_t}^{\top}\right\|_F^2\mid \mathcal{Y}_n;\widetilde{\theta}\right]\notag\\
=&-\frac{nr}{2}\ln\sigma_\epsilon^2-\frac{1}{2\sigma_\epsilon^2}\sum_{t=1}^n\sum_{k=1}^M w_{t\mid n}^{(k)} \text{Tr}\left[P_{t\mid n}^{(k)}-2F_{t\mid n}^{(k)} B_k^\top-2P_{t,t-1\mid n}^{(2k)}\Phi_k^{\top} \right.\notag\\
&\left.+B_kB_k^\top+2B_k\Gamma_kF_{t-1\mid n}^{*(k)\top}\Phi_k^{\top}+\Phi_kP_{t-1\mid n}^{*(2k)}\Phi_k^{\top}\right],
\end{align}
where
\begin{align*}
  & P_{t\mid n}^{(k)}\triangleq  \mathbb{E}[F_t F_t^\top\mid s_t=k,\mathcal{Y}_n;\widetilde{\theta}]= \sum_{d=1}^{k_2}(e_{k_2}^{(d)\top}\otimes I_{k_1})p_{t|n}^{(k)}(e_{k_2}^{(d)}\otimes I_{k_1}),\\
  &  P_{t,t-1\mid n}^{(2k)}\triangleq  \mathbb{E}[F_t \Gamma_kF_{t-1}^\top\mid s_t=k,\mathcal{Y}_n;\widetilde{\theta}]=\sum_{d=1}^{k_2}(e_{k_2}^{(d)\top}\otimes I_{k_1})p_{t,t-1\mid n}^{(k)}(\Gamma_k^{\top}e_{k_2}^{(d)}\otimes I_{k_1}),\\
 &   P_{t-1\mid n}^{*(2k)}\triangleq  \mathbb{E}[F_{t-1} \Gamma_k^{\top}\Gamma_kF_{t-1}^\top\mid s_t=k,\mathcal{Y}_n;\widetilde{\theta}]=\sum_{d=1}^{k_2}(e_{k_2}^{(d)\top}\Gamma_k\otimes I_{k_1})p_{t-1\mid n}^{*(k)}(\Gamma_k^{\top}e_{k_2}^{(d)}\otimes I_{k_1}),
\end{align*}
and $F_{t-1|n}^{*(k)}$ denotes the matrix version of $f_{t-1|n}^{*(k)}$ in Table \ref{Ta1}.   Following the matrix differentiation framework, the parameter update rules for \(B_k\),\(\Phi_k\) and \(\sigma_\epsilon^2\) derived from (\ref{es1}) are formulated as follows:
\begin{align}\label{es-th2}
    \widehat{B}_k=&\frac{1}{\sum_{t=1}^nw_{t\mid n}^{(k)}}\sum_{t=1}^nw_{t\mid n}^{(k)}\left(F_{t\mid n}^{(k)}-\widetilde{\Phi}_kF_{t-1\mid n}^{*(k)}\widetilde{\Gamma}_k^{\top}\right), \notag \\
    \widehat{\Phi}_k=&\left[\sum_{t=1}^nw_{t\mid n}^{(k)}(P_{t,t-1\mid n}^{(2k)}-\widetilde{B}_k \widetilde{\Gamma}_kF_{t-1\mid n}^{*(k)\top})\right ]\left(\sum_{t=1}^nw_{t\mid n}^{(k)}P_{t-1\mid n}^{*(2k)}\right)^{-1},\notag\\
      \widehat{\sigma}_\epsilon^2=&\frac{1}{nr}\sum\limits_{t=1}^n\sum\limits_{k=1}^M w_{t\mid n}^{(k)} \text{Tr}\left[P_{t\mid n}^{(k)}-2F_{t\mid n}^{(k)} \widetilde{B}_k^\top-2P_{t,t-1\mid n}^{(2k)}\widetilde{\Phi}_k^{\top}
+\widetilde{B}_k\widetilde{B}_k^\top \right. \notag \\
&\left.   +2\widetilde{B}_k\widetilde{\Gamma}_kF_{t-1\mid n}^{*(k)\top}\widetilde{\Phi}_k^{\top}+\widetilde{\Phi}_kP_{t-1\mid n}^{*(2k)}\widetilde{\Phi}_k^{\top}\right].
\end{align}

To establish the update rule of \(\Gamma_k\), we first transform the objective function (\ref{es1}) into an equivalent form as follows.
\begin{align}\label{es2}
    Q_2(\theta_2;\widetilde{\theta})=&-\frac{nr}{2}\ln\sigma_\epsilon^2-\frac{1}{2\sigma_\epsilon^2}\sum_{t=1}^n\sum_{k=1}^M w_{t\mid n}^{(k)}\left[\text{Tr}(P_{t\mid n}^{(k)}-2F_{t\mid n}^{(k)}B_k^\top)-2\text{Tr}(P_{t,t-1\mid n}^{(1k)}\Gamma_k) \right . \nonumber\\
    &\left. +\text{Tr}(B_kB_k^\top)+2\text{Tr}(F_{t-1\mid n}^{*(k)\top}\Phi_k^{\top}B_k\Gamma_k)+\text{Tr}(\Gamma_kP_{t-1\mid n}^{*(1k)}\Gamma_k^{\top})\right],
\end{align}
where
\begin{align*}
    P_{t,t-1\mid n}^{(1k)}=&\sum_{d=1}^{k_1}(I_{k_2}\otimes e_{k_1}^{(d)\top}\Phi_k)p_{t,t-1\mid n}^{(k)\top}(I_{k_2}\otimes e_{k_1}^{(d)}),\\
    P_{t-1\mid n}^{*(1k)}=&\sum_{d=1}^{k_1}(I_{k_2}\otimes e_{k_1}^{(d)\top}\Phi_k)p_{t-1\mid n}^{*(k)\top}(I_{k_2}\otimes \Phi_k^{\top} e_{k_1}^{(d)}).
\end{align*}
Maximizing (\ref{es2}) with respect to \(\Gamma_k\), we can get
\begin{align}\label{es-th22}
    \widehat{\Gamma}_k=\left[\sum_{t=1}^n w_{t\mid n}^{(k)}\left(P_{t,t-1\mid n}^{(1k)\top}-\widetilde{B}_k^\top\widetilde{\Phi}_kF_{t-1\mid n}^{*(k)}\right)\right]\left(\sum_{t=1}^n w_{t\mid n}^{(k)}P_{t-1\mid n}^{*(1k)}\right)^{-1}.
\end{align}

\subsubsection{ The iterative equation of $P$}

Since \({Q_3(\rho;\widetilde{\theta})}= \mathop{\sum }_{t = 1}^n\mathop{\sum}_{i = 1}^M\mathop{\sum }_{j = 1}^M(\log p_{ij})w_{t-1,t\mid n}^{(i,j)}\), then \( \partial Q_3(\rho;\widetilde{\theta})/\partial\rho =\hat{\xi} \oslash \rho,\) where \(\rho \triangleq \text{vec}(P)\) and \( \hat{\xi} = \mathop{\sum}_{t = 1}^n \text{vec}(W_t)\) with \(W_t=(w_{t-1,t\mid n}^{(i,j)})_{i=1,j=1}^{M~M}\). The maximization problem is constrained by the $M$ adding restriction \(P1_M=1_M,\) that is \((1_M^\top\otimes I_M)\rho=1_M\). Then we will consider the Lagrange argument function given by
$$Q_3^\lambda(\rho;\widetilde{\theta})=Q_3(\rho;\widetilde{\theta})-\lambda^\top[(1_M^\top\otimes I_M)\rho-1_M],$$
where \( \lambda \) is the vector of the corresponding Lagrange multipliers. The derivation of \( {Q}^{\lambda } \) with respect to \( \rho \) is given by
\begin{align}\label{es3}
    \frac{\partial Q_3^\lambda(\rho;\widetilde{\theta})}{\partial\rho}=\hat{\xi}\oslash\rho-(1_M\otimes I_M)\lambda=0.
\end{align}
The solution of equation (\ref{es3}) for \(\rho\) yields \(\rho=\hat{\xi}\oslash(1_M\otimes\lambda)\) . Then applying the adding restriction \((1_M^\top\otimes I_M)\rho=1_M\) to the above equation results in
\begin{align*}
    (1_M^\top\otimes I_M)[\hat{\xi}\oslash(1_M\otimes\lambda)]=1_M,
\end{align*}
and hence \(\lambda  = (1_M^\prime\otimes I_M)\hat{\xi}\). Inserting the result of \( \lambda \) into the estimation equation of \( \rho\) yields
\begin{align}\label{es_P}
    \hat{\rho}=\hat{\xi}\oslash\left\{1_M\otimes \left[(1_M^\prime\otimes I_M)\hat{\xi}\right]\right\}.
\end{align}

\subsection{  The filtering and smoothing of \(F_t\) and \(s_t\)}\label{Sec3.2}

\subsubsection{ The Kalman filtering algorithm of \(F_t\) conditioned on \(s_t\) }

For each \(m,k\in [M]\) and \(t,s\in [n]\), define
\begin{align*}
    f_{t\mid s}^{(m,k)}:=&\mathbb{E}[f_t\mid \mathcal{Y}_s,s_{t-1}=m,s_t=k;\theta],&V_{t\mid s}^{(m,k)}:=\text{Cov}[f_t\mid \mathcal{Y}_s,s_{t-1}=m,s_t=k;\theta],\\
    y_{t\mid s}^{(m,k)}:=&\mathbb{E}[y_t\mid \mathcal{Y}_s,s_{t-1}=m,s_t=k;\theta],&{\Sigma}_{t\mid s}^{(m,k)}:=\text{Cov}[y_t\mid \mathcal{Y}_s,s_{t-1}=m,s_t=k;\theta],\\
    f_{t\mid s}^{(m)}:=&\mathbb{E}[f_t\mid \mathcal{Y}_s,s_t=m;\theta],&V_{t\mid s}^{(m)}:=\text{Cov}[f_t\mid \mathcal{Y}_s,s_t=m;\theta].
\end{align*}

Under normal assumptions on \(E_t\) and \(\epsilon_t\), the conditional mean and covariance mentioned above can be obtained through the following Kalman filter recursions. The proof of these recursions follows \cite{1979Anderson} (pp. 39-41) and \cite{Helmut2005} (pp. 630-632).

(1) Initialization : \(f_{0|0}^{(m)},V_{0|0}^{(m)}.\)

(2) Prediction steps (\(1\leq t\leq n\)):
\begin{align*}
    f_{t\mid t-1}^{(m,k)}=&\beta_k+(\Gamma_k\otimes\Phi_k)f_{t-1\mid t-1}^{(m)},\\
    V_{t\mid t-1}^{(m,k)}=&(\Gamma_k\otimes\Phi_k)V_{t-1\mid t-1}^{(m)}(\Gamma_k\otimes\Phi_k)^\top+\sigma_\epsilon^2I_r,\\
    y_{t\mid t-1}^{(m,k)}=&(C_k\otimes R_k)f_{t\mid t-1}^{(m,k)},\\
    {\Sigma}_{t\mid t-1}^{(m,k)}=&(C_k\otimes R_k)V_{t\mid t-1}^{(m,k)}(C_k\otimes R_k)^\top+\sigma^2I_{pq}.
\end{align*}

(3) Correction step (\(1\leq t\leq n\)) :
\begin{align*}
    f_{t\mid t}^{(m,k)}=&f_{t\mid t-1}^{(m,k)}+\mathcal{K}_t^{(m,k)}\left(y_t-y_{t\mid t-1}^{(m,k)}\right),\\
    V_{t\mid t}^{(m,k)}=&V_{t\mid t-1}^{(m,k)}-\mathcal{K}_t^{(m,k)}{\Sigma}_{t\mid t-1}^{(m,k)}\mathcal{K}_t^{(m,k)\top},
\end{align*}
 where $\mathcal{K}_{t\mid t}^{(m,k)}=V_{t\mid t-1}^{(m,k)}(C_k\otimes R_k)^\top({\Sigma}_{t\mid t-1}^{(m,k)})^{-1}$
is the Kalman filter gain.

(4) Reduce the (\(M\times M\)) posteriors (\( f_{t\mid t}^{(m,k)}\) and \(V_{t\mid t}^{(m,k)}\)) to the $M$ posteriors (\( f_{t\mid t}^{(k)}\) and \(V_{t\mid t}^{(k)}\)). See Section A.2 of the Supplementary Material for a detailed calculation.
\begin{align*}
     f_{t\mid t}^{(k)}=&\frac{\sum\limits_{m=1}^M P_r[s_{t-1}=m,s_t=k\mid \mathcal{Y}_t;\theta]f_{t\mid t}^{(m,k)}}{P_r[s_t=k\mid \mathcal{Y}_t;\theta]}=\frac{\sum\limits_{m=1}^M w_{t-1,t\mid t}^{(m,k)}f_{t\mid t}^{(m,k)}}{w_{t\mid t}^{(k)}},\\
     V_{t\mid t}^{(k)}=&{\sum\limits_{m=1}^M w_{t-1,t\mid t}^{(m,k)}\left[V_{t\mid t}^{(m,k)}+(f_{t\mid t}^{(k)}-f_{t\mid t}^{(m,k)})(f_{t\mid t}^{(k)}-f_{t\mid t}^{(m,k)})^\top\right]}\big/{w_{t\mid t}^{(k)}}.
\end{align*}

Using matrix algebra, the above Kalman filter recursions  can be further simplified within the matrix factor model framework to avoid high-dimensional matrix operations (see Lemma \ref{lema1}). The proof of Lemma \ref{lema1} appears in Section A.1 of the Supplementary Material. The simplified Kalman filter recursions are presented in Algorithm 2 of the  Supplementary Material.

\begin{lemma}\label{lema1}
 Define \({\Lambda}_k=C_k\otimes R_k\). For each \(m,k \in [M],\)  \(t \in   [n]\), the correction step given in Subsection 3.2.1 can be simplified as,
\begin{align*}
    V_{t\mid t}^{(m,k)} =& \left(I_r + \frac{1}{\sigma^2}V_{t\mid t-1}^{(m,k)}\Lambda_k^\top\Lambda_k\right)^{-1}V_{t\mid t-1}^{(m,k)},\\
    f_{t\mid t}^{(m,k) }=&f_{t\mid t-1}^{(m,k)}+\frac{1}{\sigma^2}V_{t\mid t}^{(m,k)}\left({\Lambda_k^\top y_t-\Lambda_k^\top\Lambda_kf_{t\mid t-1}^{(m,k)}}\right).
\end{align*}
\end{lemma}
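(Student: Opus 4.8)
The plan is to start from the standard Kalman correction formulas stated in step (3) of Subsection 3.2.1 and rewrite them using the Sherman--Morrison--Woodbury identity so that the $pq\times pq$ inverse of $\Sigma_{t\mid t-1}^{(m,k)}=\Lambda_k V_{t\mid t-1}^{(m,k)}\Lambda_k^\top+\sigma^2 I_{pq}$ is replaced by an $r\times r$ inverse. Concretely, I would write $\Sigma_{t\mid t-1}^{(m,k)}=\sigma^2 I_{pq}+\Lambda_k V_{t\mid t-1}^{(m,k)}\Lambda_k^\top$ and apply Woodbury to get
\[
(\Sigma_{t\mid t-1}^{(m,k)})^{-1}=\frac{1}{\sigma^2}I_{pq}-\frac{1}{\sigma^4}\Lambda_k\left(I_r+\frac{1}{\sigma^2}V_{t\mid t-1}^{(m,k)}\Lambda_k^\top\Lambda_k\right)^{-1}V_{t\mid t-1}^{(m,k)}\Lambda_k^\top,
\]
which already isolates the $r\times r$ matrix $G_{t}^{(m,k)}:=\left(I_r+\frac{1}{\sigma^2}V_{t\mid t-1}^{(m,k)}\Lambda_k^\top\Lambda_k\right)^{-1}$ that appears in the claimed simplified $V_{t\mid t}^{(m,k)}$.

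Next I would substitute this expression into $V_{t\mid t}^{(m,k)}=V_{t\mid t-1}^{(m,k)}-V_{t\mid t-1}^{(m,k)}\Lambda_k^\top(\Sigma_{t\mid t-1}^{(m,k)})^{-1}\Lambda_k V_{t\mid t-1}^{(m,k)}$, collect the $V_{t\mid t-1}^{(m,k)}\Lambda_k^\top\Lambda_k V_{t\mid t-1}^{(m,k)}$ terms, and use the algebraic identity $I_r-\frac1{\sigma^2}G_{t}^{(m,k)}V_{t\mid t-1}^{(m,k)}\Lambda_k^\top\Lambda_k=G_{t}^{(m,k)}$ (which follows from $G^{-1}=I+\frac1{\sigma^2}V\Lambda^\top\Lambda$ by left-multiplying $G$ and rearranging) to collapse everything into $V_{t\mid t}^{(m,k)}=G_{t}^{(m,k)}V_{t\mid t-1}^{(m,k)}$, i.e. the first claimed formula. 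Alternatively, and perhaps more cleanly, I would note that the information-form update $(V_{t\mid t}^{(m,k)})^{-1}=(V_{t\mid t-1}^{(m,k)})^{-1}+\frac1{\sigma^2}\Lambda_k^\top\Lambda_k$ is equivalent to the correction step (a one-line Woodbury argument), and inverting this gives $V_{t\mid t}^{(m,k)}=\bigl((V_{t\mid t-1}^{(m,k)})^{-1}+\frac1{\sigma^2}\Lambda_k^\top\Lambda_k\bigr)^{-1}=\bigl(I_r+\frac1{\sigma^2}V_{t\mid t-1}^{(m,k)}\Lambda_k^\top\Lambda_k\bigr)^{-1}V_{t\mid t-1}^{(m,k)}$ directly.

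For the mean update, I would start from $f_{t\mid t}^{(m,k)}=f_{t\mid t-1}^{(m,k)}+\mathcal{K}_t^{(m,k)}(y_t-\Lambda_k f_{t\mid t-1}^{(m,k)})$ with $\mathcal{K}_t^{(m,k)}=V_{t\mid t-1}^{(m,k)}\Lambda_k^\top(\Sigma_{t\mid t-1}^{(m,k)})^{-1}$, and show that the gain factors as $\mathcal{K}_t^{(m,k)}=\frac1{\sigma^2}V_{t\mid t}^{(m,k)}\Lambda_k^\top$. This again follows from the information form: $\mathcal{K}_t^{(m,k)}=V_{t\mid t}^{(m,k)}\Lambda_k^\top\sigma^{-2}$ is the well-known identity relating the Kalman gain to the posterior covariance, provable by right-multiplying the target expression by $\Sigma_{t\mid t-1}^{(m,k)}$ and using $V_{t\mid t}^{(m,k)}=G_{t}^{(m,k)}V_{t\mid t-1}^{(m,k)}$ plus the definition of $G$. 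Substituting and grouping $\Lambda_k^\top y_t-\Lambda_k^\top\Lambda_k f_{t\mid t-1}^{(m,k)}$ yields the second claimed formula.

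I do not anticipate a genuine obstacle here; the result is essentially a bookkeeping exercise with Woodbury and the information-filter identities, and the only care needed is to keep track of which quantities are $pq\times pq$ versus $r\times r$ and to verify the intermediate identity $I_r-\frac1{\sigma^2}G_{t}^{(m,k)}V_{t\mid t-1}^{(m,k)}\Lambda_k^\top\Lambda_k=G_{t}^{(m,k)}$ cleanly. The mildly delicate point is justifying invertibility of $I_r+\frac1{\sigma^2}V_{t\mid t-1}^{(m,k)}\Lambda_k^\top\Lambda_k$ and of $(V_{t\mid t-1}^{(m,k)})^{-1}$: $V_{t\mid t-1}^{(m,k)}$ is positive definite because it equals $(\Gamma_k\otimes\Phi_k)V_{t-1\mid t-1}^{(m)}(\Gamma_k\otimes\Phi_k)^\top+\sigma_\epsilon^2 I_r\succeq\sigma_\epsilon^2 I_r\succ0$ by the prediction step and $\sigma_\epsilon^2>0$, so both inverses exist and all manipulations are legitimate.
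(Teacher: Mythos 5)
Your proposal is correct, and both it and the paper rest on the same engine: apply the Woodbury identity to $\Sigma_{t\mid t-1}^{(m,k)}=\sigma^2 I_{pq}+\Lambda_k V_{t\mid t-1}^{(m,k)}\Lambda_k^\top$ so that only $r\times r$ inverses remain. The difference is in how the subsequent simplification is organized. The paper's proof (Section A.1) substitutes the Woodbury expression into the correction step and then grinds through a long chain of manipulations, repeatedly invoking $(I+AB)^{-1}=I-A(I+BA)^{-1}B$ and $(I+A^{-1})^{-1}=(A+I)^{-1}A$, until the forms $V_{t\mid t}^{(m,k)}=\bigl(I_r+\tfrac{1}{\sigma^2}V_{t\mid t-1}^{(m,k)}\Lambda_k^\top\Lambda_k\bigr)^{-1}V_{t\mid t-1}^{(m,k)}$ and the key intermediate identity (its equation (A.1.3), which is exactly your gain factorization $\mathcal{K}_t^{(m,k)}=\tfrac{1}{\sigma^2}V_{t\mid t}^{(m,k)}\Lambda_k^\top$ in disguise) emerge. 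You instead package the same content via the standard information-filter identities: the precision update $(V_{t\mid t}^{(m,k)})^{-1}=(V_{t\mid t-1}^{(m,k)})^{-1}+\tfrac{1}{\sigma^2}\Lambda_k^\top\Lambda_k$, which immediately yields the first formula, and the gain--posterior-covariance relation, which immediately yields the second; your collapse identity $I_r-\tfrac{1}{\sigma^2}G\,V_{t\mid t-1}^{(m,k)}\Lambda_k^\top\Lambda_k=G$ (using that $V_{t\mid t-1}^{(m,k)}\Lambda_k^\top\Lambda_k$ commutes with $G$) checks out, as does your verification of the gain factorization by right-multiplying by $\Sigma_{t\mid t-1}^{(m,k)}$. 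What your route buys is brevity and reuse of well-known filter identities; what it costs is that the information form needs $V_{t\mid t-1}^{(m,k)}$ invertible, a point you handle explicitly via $V_{t\mid t-1}^{(m,k)}\succeq\sigma_\epsilon^2 I_r\succ 0$ from the prediction step, whereas the paper uses $V^{-1}$ throughout without comment. No gap; the proposal would compile into a complete proof with only routine expansion of the sketched verifications.
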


The last thing to be considered to complete this filtering is to calculate \(P_r[s_{t-1} = m,s_t = k \mid  \mathcal{Y}_t;\theta ]\triangleq  w_{t-1,t\mid t}^{( m,k) }\) and \(w_{t \mid t}^{(k)}:=P_r[w_t = k \mid \mathcal{Y}_t;\theta ]\). Following  \cite{1989Hamilton}  and \cite{Kim1994} with slight modifications for the high-dimensional computational problem, the detailed computational process is outlined in Algorithm 2. See Supplementary Material A.3 for a detailed derivation process.

\subsubsection{ The smoothing algorithm for \(F_t\) and $s_t$}

As pointed by \cite{Kim1994}, we can first calculate smoothed probabilities \(w_{t,t + 1\mid n}^{(j,k) }:= P_r[s_t=j ,s_{t + 1} = k \mid   \mathcal{Y}_n;\theta ]\) and \(w_{t\mid n}^{(j)}:=P_r[s_t = j \mid \mathcal{Y}_n;\theta ]\) , and then these smoothed probabilities can be used to get smoothed values of \(f_t\) .

(1) For \(t = n - 1,\cdots 1\) , get the smoothed probabilities \(w_{t,t + 1\mid n}^{(i,k)}\) and \(w_{t\mid n}^{(j) }\) by
\begin{align*}
    w_{t,t+1\mid n}^{(j,k)}={w_{t+1\mid n}^{(k)}w_{t\mid t}^{(j)}p_{jk}}/{w_{t+1\mid t}^{(k)}},~~~~
    w_{t\mid n}^{(j)}=\sum_{k=1}^Mw_{t,t+1\mid n}^{(j,k)}.
\end{align*}

(2) For \(t=n-1,\cdots ,1\) , the smoothing algorithm for \(F_t\) given \(s_t=j\) and \(s_{t + 1} = k\) is as follows:
\begin{align*}
    f_{t\mid n}^{(j,k)}&=f_{t\mid t}^{(j)}+G_t^{(j,k)}({f_{t+1\mid n}^{(k)}-f_{t + 1\mid t}^{(j,k)}}),\\
    V_{t\mid n}^{(j,k)}& =V_{t\mid t}^{(j)}+G_t^{(j,k)}( {V_{t+1\mid  n}^{( k)}-V_{t+1\mid t}^{(j,k)}})G_t^{(j,k)\top},\\
    f_{t\mid n}^{(j)}&={\sum\limits_{k=1}^{M}{w}_{t,t+1\mid n}^{(j,k)}f_{t\mid n}^{(j,k)}}/{w_{t\mid n}^{(j)}},\\
    V_{t\mid n}^{(j)}&= \sum_{k=1}^M w_{t,t+1\mid n}^{(j,k)}\left[{V_{t\mid n}^{(j,k)}+\left({f_{t\mid n}^{(j)}-f_{t\mid n}^{(j,k)}}\right){\left(f_{t\mid n}^{(j)}-f_{t\mid n}^{(j,k)}\right)}^{\top}}\right] \big/w_{t\mid n}^{(j)},
\end{align*}
 where $G_t^{(j,k)}= V_{t\mid t}^{(j)}( {\Gamma_k\otimes \Phi_k}) (V_{t+1\mid t}^{(j,k)})^{-1}.$

The filtering and smoothing process is organized in Algorithm 2 in Section A.4 of the Supplementary Materials.

\subsection{ The EM with filtering and smoothing}
We summarize the above processes given in Sections 3.1 and 3.2 in the following algorithm (Algorithm 1). Given that the EM algorithm is only guaranteed to converge to a stationary point of the likelihood function, choosing good starting points is essential to increase the chances of convergence to a global maximum. The detailed procedure for determining initial values is outlined in Section A.6 of the Supplementary Materials. In the R function "MsDMF.R", the initialization process is carried out by the sub-function "ini".

\begin{algorithm}[h]
\label{EM}
\caption{  EM Algorithm with Filtering and Smoothing}
\KwIn{Data $(Y_t)_{t=1}^n$, factor dimensions $(k_1,k_2)$, number of states $M$, converge level $\epsilon$, and maximum number of iterations $n_{max}$.}
\KwOut{$\widehat{\theta}, ~(\widehat{f}_t)_{t=1}^n,~ (\widehat{s}_t)_{t=1}^n$}
\textbf{Initialization:} $\theta ^{(m)}, ~dis \leftarrow 1, ~ m \leftarrow 0,$ \;
     \While{  ($dis > \epsilon $ and $m \leq n_{max}$) }
            {
     Given $\theta ^{(m)}$, obtain $ \{f_{t|n}^{(k)}, V_{t|n}^{(k)},w_{t|n}^{(k)},w_{t-1,t|n}^{(i,k)}: t \in [n], i,k\in [K]\}$ by the filtering and smoothing algorithm ( Algorithm 2)\;

     Given $ \{f_{t|n}^{(k)}, V_{t|n}^{(k)},w_{t|n}^{(k)},w_{t-1,t|n}^{(i,k)}: ~t \in [n], ~i,k\in [K]\}$, update $\theta ^{(m)}$ to $\theta ^{(m+1)}$ by equations (\ref{es-th1}), (\ref{es-th2}) and (\ref{es-th22})\;

     $dis \leftarrow \|\theta ^{(m+1)}-\theta ^{(m)} \|_F^2$\;

     $m\leftarrow m+1 $

            }
    return  $\widehat{\theta}= \theta ^{(m+1)}, ~ \widehat{f}_t= \sum_{k=1}^K w_{t|n}^{(k)}f_{t|n}^{(k)},  ~ s_t= \arg \max_{k \in [K]} w_{t|n}^{(k)},~ t \in [n]$.

    \end{algorithm}

\addtolength{\textheight}{-.2in}%

\section{ Assumptions and asymptotic results}
In this section, we establish the theoretical properties of the proposed estimators. First, we present a set of regularity conditions under which the asymptotic properties are derived. Subsequently, the consistency and convergence rates of the estimators are established. Throughout the remainder of the paper, the symbol $c_0$ denotes a positive constant that may vary across different contexts.
\subsection{ Assumptions}

\vspace{0.2cm}

~ \textbf {Assumption A.}
The Markov chain associated with \( P = {\left( {p}_{ij}\right) }_{i = 1,j = 1}^{M,M} \) is irreducible and aperiodic, and denote its unique stationary distribution as $\pi=(\pi_1^0,\cdots,\pi_M^0)$, where $\pi_k^0>0$ for all $k\in[M]$.

\vspace{0.2cm}

\textbf {Assumption B.}
     For any \( k,\ell \in  \left\lbrack  M\right\rbrack \) ,\(i \in  \left\lbrack  p\right\rbrack \) and \( j \in  \left\lbrack  q\right\rbrack \) ,

(1) \( {R}_{k}^{0\top}{R}_{k}^{0} = pD_k^{(1)},{C}_{k}^{0\top}{C}_{k}^{0} = qD_k^{(2)},\begin{Vmatrix}{\gamma }_{k,i\cdot}^{0}\end{Vmatrix}_F \leq  c_0 \) and \( \begin{Vmatrix}{c}_{k, j\cdot}^{0}\end{Vmatrix}_F \leq  c_0\) for some

\( c_0 > 0 \), where $D_k^{(1)}$ and $D_k^{(2)}$ are defined in (I1).

(2) For $k\neq \ell$, \( \mathop{\min }\limits_{t}{f}_{t}{}^{\top }{\Lambda }_{k}^{{0}^{\top }}{M}_{{\Lambda }_{\ell}^{0}}{\Lambda }_{k }^{0}f_t /(pq)\geq  c_0 \) with $\Lambda_k= C_k \otimes R_k.
$

\vspace{0.2cm}

\textbf{Assumption C.} The Ms - VAR  parameters $\{(B_k,\Phi_k,\Gamma_k): k\in[M]\}$  satisfy \( {\left| \right| B_k \left| \right| }_{F}^{2} \leq  c_0 \),
\( \parallel \Phi_k {\parallel }_{F}^{2} \leq  c_0 \)  and $\parallel \Gamma_k {\parallel }_{F}^{2} \leq  c_0 $ for some $c_0>0$, and make the factor process satisfy the following moment conditions.

(1) For some \( \alpha > {16} \) , there exists \( c_0  > 0 \) such that \( \mathbb{E}\left( {\left| \right| F_t{\left| \right| }_{F}^{\alpha }}\right)  \leq  {\ c_0} \).

(2) For $k \in [M],$ and  some positive definite $\Sigma_{F_k},\Sigma_{F_k}^{(1)}$ and $\Sigma_{F_k}^{(2)}$,

(2.1) $ \dfrac{1}{n\pi_k^0} \sum\limits_{t=1}^n{I}_{(s_t  = k)}{\bm{f}}_{t}\bm{f}_{t}^{\top}\overset{{p}}{ \longrightarrow  }{\Sigma }_{{F}_{k}},$

(2.2)
 $ \dfrac{1}{n\pi_k^0} \sum\limits_{t=1}^n {I}_{(s_t  = k)}(D_k^{(1)})^{1/2}{F}_{t}D_k^{(2)}F_{t}^{\top}(D_k^{(1)})^{1/2}\overset{{p}}{ \longrightarrow  }{\Sigma }_{{F}_{k}}^{(1)}, $

 (2.3) $\dfrac{1}{n\pi_k^0} \sum\limits_{t=1}^n {I}_{(s_t  = k)}(D_k^{(2)})^{1/2}{F}_{t}^{\top}D_k^{(1)}F_{t}(D_k^{(2)})^{1/2}\overset{{p}}{ \longrightarrow  }{\Sigma }_{{F}_{k}}^{(2)}$,

(2.4) \( p\lim \dfrac{1}{\left| {A}_{k}\right| }  \mathop{\sum }\limits_{{t \in  {A}_{k}}}{\bm{f}}_{t}{\bm{f}}_{t}^{\top } \) is also positive definite, where \( {A}_{k} \) denotes
any subset    ~~~~of  $\{ t:{s}_{t} = k \}$  with Cardinality \( \left| {A}_{k}\right| \) and \( \lim \left| {A}_{k}\right| /n > 0 \). \vspace{0.2
cm}

\vspace{0.2cm}

\textbf {Assumption D.}
 For some $c_0>0$,

(1) for  $i\in[p],j\in[q]$ and $t\in[n]$, \( \mathbb{E}\left( {{e }_{t,ij}}\right)  = 0,  \mathbb{E}\left( {e }_{t,{ij}}^{\alpha}\right)  \leq  c_0 \) for some \( \alpha  > {16} \);

(2) for  $i\in[p],j\in[q]$ and $t\in[n]$,

(2.1) $\mathop{\sum }\limits_{{i_1 = 1}}^{p}\mathop{\sum }\limits_{{j_1 = 1}}^{q}\left| { \mathbb{E}\left( {{e }_{t,ij},{e}_{t,i_1j_1}}\right) }\right|  \leq  c_0, $

(2.2) $\mathop{\sum }\limits_{{i_1 = 1}}^{p}\mathop{\sum }\limits_{{j_1 = 1}}^{q}\left| { \mathbb{E}\left( {{e}_{t,ij_1},{e}_{t,i_1j}}\right) }\right|  \leq  c_0, $

(2.3)
\( \mathop{\sum }\limits_{{d = 1}}^{n}\mathop{\sum }\limits_{{i_1 = 1}}^{p}\mathop{\sum }\limits_{{j_1 = 1}}^{q}\left| { \mathbb{E}\left\lbrack  {{e}_{t,ij},{e}_{d,i_1j_1}}\right\rbrack  }\right|  \leq  c_0 \).

(3) for any \(i,i_1   \in \left\lbrack  p\right\rbrack, j,j_1 \in\left\lbrack  q\right\rbrack \),  \( {k \in}\left\lbrack  M\right\rbrack \), and any bounded constant
\( {a }_{t,{i}_{1}{j}_{1}}, a_{t,i_1}, a_{ij} \),

(3.1) \(  \mathbb{E}{\begin{Vmatrix}\dfrac{1}{\sqrt{n}}\mathop{\sum }\limits_{t = 1}^{n}I_{( {s}_{t} = k)} \left( {e}_{t,ij}{e }_{t,i_1j_1} -  \mathbb{E}\left( {e }_{t,ij}{e}_{t,i_1j_1}\right) \right) \end{Vmatrix}}_{F}^{2} \leq  c_0 \),

(3.2) \(  \mathbb{E}{\begin{Vmatrix}\dfrac{1}{\sqrt{npq}} \sum \limits_{t=1}^n\mathop{\sum }\limits_{{i_1 = 1}}^{p}\mathop{\sum }\limits_{{j_1 = 1}}^{q}{I}_{(s_t=k)}a_{t,i_1j_1}\left( {e }_{t,ij_1}{e }_{t,i_1j} -  \mathbb{E}\left( {e }_{t,ij_1}{e }_{t,i_1j}\right) \right) \end{Vmatrix}}_{F}^{2} \leq  c_0 \),

(3.3) \(  \mathbb{E}{\begin{Vmatrix}\dfrac{1}{\sqrt{np}} \sum \limits_{t=1}^n\mathop{\sum }\limits_{{i_1 = 1}}^{p}{I}_{(s_t=k)}a_{t,i_1}\left( {e }_{t,ij_1}{e }_{t,i_1j} -  \mathbb{E}\left( {e }_{t,ij_1}{e }_{t,i_1j}\right) \right) \end{Vmatrix}}_{F}^{2} \leq  c_0 \),

(3.4) \(  \mathbb{E}{\begin{Vmatrix}\dfrac{1}{\sqrt{pq}} \mathop{\sum }\limits_{{i = 1}}^{p}\mathop{\sum }\limits_{{j = 1}}^{q}{I}_{(s_t=k)}a_{ij}\left( {e }_{t,ij}{e }_{t,i_1j_1} -  \mathbb{E}\left( {e }_{t,ij}{e }_{t,i_1j_1}\right) \right) \end{Vmatrix}}_{F}^{2} \leq  c_0 \).

(4) For some $\beta \geq 2$,
\(  \mathbb{E}{\begin{Vmatrix}\dfrac{1}{\sqrt{pq}} \mathop{\sum }\limits_{{i = 1}}^{p}\mathop{\sum }\limits_{{j = 1}}^{q}r_{k,is}c_{k,jl}e_{t,ij} \end{Vmatrix}}_{F}^{\beta} \leq  c_0 \) for $k\in [M], s\in[k_1],l\in[k_2]$ and  $t\in[n]$.

\vspace{0.2cm}

\textbf {Assumption E.}
For any deterministic vector \( v \) and \( w \) satisfying \( \parallel v{\parallel }_{F}^{2} = 1 \) and \( \parallel w{\parallel }_{F}^{2} = 1 \)
with suitable dimensions, assume that \( E\begin{Vmatrix}\frac{1}{\sqrt{n}}\sum_{t=1}^n I_{(s_t=k)}F_tv^\top E_tw\end{Vmatrix}_F^2  \leq  c_0 \) for all $k\in[M]$.

\vspace{0.2cm}

\textbf {Assumption F.} For $i\in[p],j\in[q],l\in[k_1],h\in[k_2]$ and $k\in[M]$,

(1)  $\dfrac{1}{\sqrt{qn}}\sum\limits_{t=1}^nI_{(s_t=k)}(D_k^{(1)})^{\frac{1}2}F_tC_k^{0\top}e_{t,i\cdot} \xrightarrow[]{d} N(0, V_{1i})$ for some positive definite $V_{1i}$;


(2) $\dfrac{1}{\sqrt{pn}}\sum\limits_{t=1}^nI_{(s_t=k)}(D_k^{(2)})^{\frac{1}{2}}F_t^{\top} R_k^{0\top}e_{t,\cdot j} \xrightarrow[]{d} N(0, V_{2i})$ for some positive definite $V_{2i}$.






Assumption A is a standard Markov chain assumption. Assumption B(1) imposes orthogonality and boundedness on the factor loadings, which is common in factor models \citep{2012Statistical,2024Quasi}. Assumption B(2) requires that the loading spaces across different regimes differ $-$ the regime-distinctiveness criterion serving as an identification condition. A more detailed discussion is provided in \cite{2024Estimation}.
Assumptions C(1), C(2.1)-C(2.3) specify that the factors have bounded moments. Assumption C(2.4) corresponds to Assumption 1(1) in \cite{2024Estimation}, ruling out the possibility that for any regime $k$, the subsample $\{t: s_t=k\}$ could be decomposed into sub-regimes.
Assumptions D, E, and F are slight modifications of corresponding assumptions in \cite{2021Projected} and \cite{2024Quasi}.

\subsection{ Asymptotic results}

Theorems 1-6 establish the asymptotic properties of the proposed estimators, with corresponding proofs provided in Sections B to G of the Supplementary Material, respectively. We now define: $\delta_{pqn} = \min \{\sqrt{pq},\sqrt{n}\}$, $\widetilde{R}_k^0={R}_k^0({D}_k^{(1)})^{-1/2}$,  $\widetilde{C}_k^0={C}_k^0({D}_k^{(2)})^{-1/2},$ and $\widetilde{R}_k,\widetilde{C}_k$ as the QMLE of $\widetilde{R}_k^0$ and $\widetilde{C}_k^0$, respectively.

\begin{theorem}\label{Th1}
    \textbf{(Consistency of the estimated loading space)} Under Assumptions (A)-(D), $\dfrac{1}{pq}\left\|M_{\widehat{\Lambda}_k}\Lambda_{k}^{0}\right\|_F^2=O_p( 1/ \sqrt{\delta_{pqn}})$ for each $k\in [M]$ as $(p,q,n)\xrightarrow{} \infty$, where $\widehat{\Lambda}_k= \widehat{C}_k\otimes \widehat{R}_k$.
\end{theorem}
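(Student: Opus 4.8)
The plan is to derive the consistency of the estimated loading space directly from the EM iteration equations (\ref{es-th1}), exploiting the fact that a fixed point $\widehat{R}_k,\widehat{C}_k$ of those equations must satisfy the corresponding normal equations. First I would work conditionally on the smoothed regime weights $w_{t\mid n}^{(k)}$ and factor moments $F_{t\mid n}^{(k)}, p_{t\mid n}^{(k)}$, observing that under Assumption A and the identification conditions, the smoothed weights $w_{t\mid n}^{(k)}$ concentrate (up to a negligible error) on the true regime indicator $I_{(s_t=k)}$, and that $F_{t\mid n}^{(k)}$ consistently recovers $F_t$ on the set $\{s_t=k\}$; this reduction step replaces the EM-based quantities by their ``oracle'' counterparts and is where the serial-correlation complication (relative to \cite{2024Estimation}) must be absorbed. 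With that in hand, the fixed-point equation for $\widehat{R}_k$ reads, schematically, $\widehat{R}_k \big(\sum_t w_{t\mid n}^{(k)} P_{t\mid n}^{c(k)}\big) = \sum_t w_{t\mid n}^{(k)} Y_t \widehat{C}_k F_{t\mid n}^{(k)\top}$, and substituting the observation equation $Y_t = R_{s_t}^0 F_t C_{s_t}^{0\top} + E_t$ decomposes the right-hand side into a ``signal'' term proportional to $R_k^0$ times a Gram-type matrix plus a ``noise'' term involving $E_t$.

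The core of the argument is then a standard factor-model perturbation bound. I would show that $\frac{1}{pq}\|M_{\widehat\Lambda_k}\Lambda_k^0\|_F^2$ is controlled by the ratio of the noise contribution to the smallest eigenvalue of the signal Gram matrix. The signal term, after normalizing by $pq$, is governed by $\frac{1}{n\pi_k^0}\sum_t I_{(s_t=k)}(D_k^{(1)})^{1/2}F_t D_k^{(2)} F_t^\top (D_k^{(1)})^{1/2} \to \Sigma_{F_k}^{(1)} \succ 0$ (Assumption C(2.2)) together with B(1), so it is bounded below by a positive constant with probability tending to one. For the noise term I would use Assumptions D and E: cross terms of the form $\frac{1}{\sqrt{n}}\sum_t I_{(s_t=k)} F_t v^\top E_t w$ are $O_p(1)$ by Assumption E, and the purely quadratic idiosyncratic terms $\frac{1}{\sqrt{pq}}\sum_{i,j} r_{k,is} c_{k,jl} e_{t,ij}$ are controlled by D(4), while terms like $\|E_t\|^2$ contribute $O(pq)$ on average via D(2). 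Collecting the orders, the noise-to-signal ratio is $O_p(\max\{1/\sqrt{pq}, 1/\sqrt{n}\}) = O_p(1/\delta_{pqn})$; a subsequent symmetrization/square-root step (because the statement is about $M_{\widehat\Lambda_k}\Lambda_k^0$, a projection applied once rather than the full difference $\widehat\Lambda_k - \Lambda_k^0 H$) yields the stated rate $O_p(1/\sqrt{\delta_{pqn}})$. The Kronecker structure $\widehat\Lambda_k = \widehat C_k \otimes \widehat R_k$ is handled by combining the two fixed-point equations for $\widehat R_k$ and $\widehat C_k$ and using $\|M_{A\otimes B}\| \le \|M_A\| + \|M_B\|$ type inequalities, so that row- and column-space consistency each follow from the respective equation and then multiply up.

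The main obstacle I anticipate is the reduction step: unlike in the i.i.d.-regime setting of \cite{2024Estimation}, here the factor process is serially dependent through the Ms-VAR state equation, so the smoothed quantities $w_{t\mid n}^{(k)}$ and $f_{t\mid n}^{(k)}$ depend on the whole sample in a complicated way, and one must show that the difference between the EM fixed-point equations and their oracle versions is $o_p$ of the leading terms uniformly. I would handle this by a bootstrapping/contraction argument: assume a preliminary rough consistency of $\widehat\theta$ (or argue along a convergent subsequence of the EM sequence), plug it into the filtering–smoothing recursions of Section 3.2, and use the geometric ergodicity guaranteed by Assumption A together with the exponential forgetting of the Kalman filter (Proposition 1 ensures stationarity of $f_t$) to bound the smoothing error; Assumptions C(2.4) and B(2) then prevent degeneracy of the regime-$k$ subsample so the signal eigenvalue stays bounded away from zero even after the weights are perturbed. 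Once that uniform control is in place, the perturbation bound above delivers the theorem.
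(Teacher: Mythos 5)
There is a genuine gap, and it is the step you yourself flag as the ``main obstacle'': your reduction of the EM fixed-point equations (\ref{es-th1}) to their oracle versions requires that the smoothed weights $w_{t\mid n}^{(k)}$ concentrate on $I_{(s_t=k)}$ and that $F_{t\mid n}^{(k)}$ recovers $F_t$, but in this paper that concentration is exactly the content of Theorem \ref{Th2}, whose proof \emph{uses} Theorem \ref{Th1} (the loading-space consistency enters, e.g., through the bound on $\|P_{\widehat{\Lambda}_j}-P_{\Lambda_j^0}\|$ in the terms $A_3,A_4$). So your argument is circular as stated: the ``preliminary rough consistency of $\widehat{\theta}$'' you propose to bootstrap from is precisely what Theorem \ref{Th1} is supposed to deliver, and neither geometric ergodicity of the chain nor exponential forgetting of the Kalman filter gives it to you for free in the high-dimensional regime. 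A second, related omission is regime-label matching: before any weight concentration is available you cannot assume that the estimated regime $k$ corresponds to the true regime $k$, so the ``signal Gram matrix'' in your perturbation bound is not automatically the regime-$k$ factor Gram matrix of Assumption C(2.2).

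The paper avoids both problems by not touching the estimating equations at all for this theorem. It runs a direct likelihood-comparison (basic inequality) argument: the QMLE likelihood is bounded above by replacing the sum over regime paths with the pathwise maximizer $m_t=\arg\max_j p(y_t\mid G_{t-1},s_t=j;\widehat{\theta}_1,\widehat{\theta}_2)$, the likelihood at the true loadings $(R^0,C^0)$ (with the other parameters kept at their estimates) is bounded below using $\min_{j,k}p_{jk}>0$, and the nonnegativity of the difference then forces $\sum_{t}\|M_{\widetilde{\Lambda}_{m_t}}\Lambda_{s_t}^{0}f_t\|^2 I_{(m_t=s_t)}=O_p(p^{3/4}q^{3/4}n+pqn^{3/4})$ after controlling the determinant, projection, and cross terms under Assumptions B--E. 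Label matching is then handled by a pigeonhole argument (some $\widehat{\Lambda}_j$ must receive the regime-$1$ observations at least $\pi_1^0 n/M$ times), and Assumption C(2.4) keeps the factor Gram matrix on that subset nondegenerate, yielding $\frac{1}{pq}\|M_{\widehat{\Lambda}_k}\Lambda_k^0\|_F^2=O_p(p^{-1/4}q^{-1/4}+n^{-1/4})=O_p(1/\sqrt{\delta_{pqn}})$. If you want to salvage your route, you would need an independent, pre-consistency proof of the weight and factor concentration (uniformly in $t$) that does not presuppose loading consistency; as written, the proposal does not supply one, and the perturbation analysis of (\ref{es-th1}) is the right tool only later, for the refined rates of Theorem \ref{Th3}.
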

Theorem \ref{Th1} establishes the consistency of the estimated loading space without observation of the state variable $s_t$. Furthermore, it implies:
$\frac{\widetilde{C}_k^{0\top}\widetilde{C}_k}{q} \frac{\widetilde{C}_k^{\top}\widetilde C_k^{0}}{q} \otimes \frac{\widetilde{R}_k^{0\top}\widetilde{R}_k}{p} \frac{\widetilde{R}_k^{\top}\widetilde{R}_k^{0}}{p}=I_{r}+O_p({1}/{\sqrt{\delta_{pqn}}})$ where $r=k_1k_2$. This asymptotic representation is essential for deriving properties of loadings and factors, including the QMLE for matrix factor models in \cite{2024Quasi} and \cite{2023Two-way}.

\begin{theorem}\label{Th2}
    \textbf{(Consistency of the estimated states)} Under Assumptions (A)-(D), as $(p,q,n) \xrightarrow{} \infty$, for each $k\in[M]$ and any fixed $\eta>0$,

    (1) $\sup\limits_{t}|\widehat{w}_{t|n}^{(k)}-I_{s_t=k} |=o_p( \frac{1}{(pq)^\eta})$, if $n^{\frac{16}{\alpha}}/(pq)$ and $n^{\frac{2}{\alpha}+\frac{2}{\beta}}/(pq) \rightarrow0$;

    (2) $|\widehat{w}_{t|n}^{(k)}-I_{s_t=k} |=o_p(\frac{1}{(pq)^\eta})$.
\end{theorem}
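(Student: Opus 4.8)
\textbf{Proof plan for Theorem \ref{Th2}.} The plan is to show that the smoothed regime probability $\widehat{w}_{t|n}^{(k)}$ concentrates on the indicator $I_{s_t=k}$ by exhibiting an exponential separation in the conditional likelihoods of the data under the correct versus incorrect regime labels, and then tracking how this separation propagates through the filtering and smoothing recursions. The starting point is the observation equation: conditional on $s_t=\ell$, we have $\bm{y}_t = \Lambda_\ell^0 \bm{f}_t + \bm{e}_t$, so the (quasi-)likelihood contribution at time $t$ under a hypothesized regime $k$ involves the residual $\bm{y}_t - \widehat{\Lambda}_k \widehat{\bm{f}}_t^{(k)}$, whose squared Frobenius norm is of order $pq\sigma^2$ when $k$ is the true regime but, by Assumption B(2) together with the loading-space consistency in Theorem \ref{Th1}, is of order $pq \cdot (c_0 + o_p(1))$ strictly larger than $pq\sigma^2$ when $k$ is a wrong regime. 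Exponentiating, the likelihood ratio between a wrong and the right regime is $\exp(-c_0' pq (1+o_p(1)))$, which dominates the prior factors $p_{ij}$ and $\pi_k^0$ (bounded away from $0$ and $1$ by Assumption A) and also dominates the contributions from all other time points, since those are balanced across regime hypotheses up to lower-order terms.

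The key steps, in order: (i) Control the estimation error in the plug-in quantities: use Theorem \ref{Th1} to replace $\widehat{\Lambda}_k$ by $\Lambda_k^0$ up to $O_p(1/\sqrt{\delta_{pqn}})$ in the relevant quadratic forms, and similarly bound the deviation of the filtered/smoothed factor estimates $\widehat{f}_{t|t}^{(m,k)}$ and covariances from their population analogues — here the Kalman recursions of Section 3.2 and Lemma \ref{lema1} keep everything low-dimensional. (ii) Establish the likelihood-separation bound: for the correct label, $\frac{1}{pq}\|\bm{y}_t - \Lambda_{s_t}^0 \bm{f}_t\|_F^2 = \sigma^2 + o_p(1)$ uniformly in $t$, using Assumption D(2) and the moment bounds D(1), C(1); for an incorrect label $k\neq s_t$, the residual picks up the term $M_{\Lambda_k^0}\Lambda_{s_t}^0\bm{f}_t$ whose normalized squared norm is $\geq c_0$ by Assumption B(2), so the per-period log-likelihood gap is $\geq \tfrac12 pq \log(1 + c_0/\sigma^2) (1+o_p(1))$. (iii) Propagate through Hamilton's filter: the filtered probabilities $w_{t|t}^{(k)}$ are ratios of products of these likelihood factors and transition probabilities, so the wrong-regime filtered weight is $\leq \exp(-c_0'' pq (1+o_p(1)))$; then the backward smoothing recursion $w_{t,t+1|n}^{(j,k)} = w_{t+1|n}^{(k)} w_{t|t}^{(j)} p_{jk}/w_{t+1|t}^{(k)}$ preserves exponential smallness because the denominators $w_{t+1|t}^{(k)}$ are bounded below (again by Assumption A and the fact that the true-regime filtered weight is $1-o_p(1)$). (iv) Convert the $O_p$ statements into the stated rates: part (2) follows directly for fixed $t$; part (1) requires a uniform-in-$t$ bound, obtained by a union bound over $t\in[n]$ combined with the moment conditions — this is exactly where the side conditions $n^{16/\alpha}/(pq)\to 0$ and $n^{2/\alpha+2/\beta}/(pq)\to 0$ enter, since they force the $n$ bad events (each of probability controlled via Markov's inequality using the $\alpha$-th moment bounds in D(1) and the $\beta$-th moment bound in D(4)) to have vanishing total probability even after being amplified by the polynomial factor $(pq)^\eta$.

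The main obstacle I anticipate is step (iii)–(iv): making the exponential-separation argument fully uniform over $t$ while simultaneously accounting for the \emph{compounding} of filtering errors through the recursion. Unlike the static regime-switching factor model of \cite{2024Estimation}, here the factor process is serially dependent via the Ms-VAR state equation, so the filtered factor estimate $\widehat{f}_{t|t}^{(m,k)}$ at time $t$ depends on the entire past through $\widehat{f}_{t-1|t-1}^{(m)}$, and one must verify that a misclassification "error" at an earlier step does not contaminate the likelihood separation at time $t$; the resolution is that the observation-equation separation (Assumption B(2)) is so strong — of exponential order $e^{-cpq}$ — that it swamps any polynomial-in-$n$ accumulation of filtering errors, but writing this cleanly requires carefully bounded recursions for $\|\widehat{f}_{t|t}^{(m,k)} - f_t\|_F$ and $\|V_{t|t}^{(m,k)}\|$ and checking that the side conditions on $n$ versus $pq$ are precisely what is needed to close the induction. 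The secondary difficulty is purely bookkeeping: translating the vectorized Kalman quantities back and forth through the Kronecker structure $\Lambda_k = C_k\otimes R_k$ while keeping all bounds dimension-free, but Lemma \ref{lema1} and the matrix-version identities already established in Section 3.2 do most of that work.
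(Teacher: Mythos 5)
Your overall strategy coincides with the paper's: establish an order-$pq$ separation between the conditional log quasi-likelihoods of the true and wrong regime labels (driven by Assumption B(2)), control the error terms via Theorem \ref{Th1} and the moment conditions, handle the supremum over $t$ by bounding $\sup_t$ through $\sum_t$ of $\alpha$-th or $\beta$-th powers (which is exactly where $n^{16/\alpha}/(pq)\to 0$ and $n^{2/\alpha+2/\beta}/(pq)\to 0$ enter), and then pass from filtered to smoothed probabilities through the recursion $\widehat{w}_{t|n}^{(j)}=\widehat{w}_{t|t}^{(j)}\sum_{\ell}\widehat{w}_{t+1|n}^{(\ell)}p_{j\ell}/\widehat{w}_{t+1|t}^{(\ell)}$ together with $\widehat{w}_{t+1|t}^{(\ell)}\geq p_{k\ell}\widehat{w}_{t|t}^{(k)}$ and $\widehat{w}_{t|t-1}^{(k)}\geq\min_i p_{ik}>0$. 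Your steps (iii)--(iv) and the treatment of part (2) (same bounds without the supremum) are the paper's Steps (1)--(2).

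The divergence is at what you call the main obstacle, and there your plan would stall. First, the separation cannot be read off raw residual norms ($pq\sigma^2$ versus $pq(\sigma^2+c_0)$): the quasi-likelihood is a Mahalanobis form with predicted covariance $\widehat{\Sigma}_{t|t-1}^{(k)}=\widehat{\Lambda}_k\widehat{V}_{t|t-1}^{(k)}\widehat{\Lambda}_k^{\top}+\widehat{\sigma}^2 I_{pq}$, which deflates exactly the loading-space directions in which the one-step factor prediction error (of order $pq$ after multiplication by $\Lambda$) lives. Second, and more importantly, no recursive control of $\|\widehat{f}_{t|t}^{(m,k)}-f_t\|_F$ is needed, nor is it available under a wrong regime label. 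The paper's Woodbury decomposition splits the quadratic form into $\widehat{\sigma}^{-2}\|M_{\widetilde{\Lambda}_k}(y_t-\widehat{y}_{t|t-1}^{(k)})\|_F^2$ plus a term damped by $\|(\widehat{\sigma}^2 I_r+\widetilde{\Lambda}_k^{\top}\widetilde{\Lambda}_k)^{-1}\|=O_p(1/(pq))$; since $\widehat{y}_{t|t-1}^{(k)}$ lies in the column space of $\widehat{\Lambda}_k$, the first piece equals $\widehat{\sigma}^{-2}\|M_{\widehat{\Lambda}_k}y_t\|_F^2$ and does not depend on the filtered factors at all, while the second piece only needs the crude bound $\sup_t\|y_t-\widehat{y}_{t|t-1}^{(k)}\|_F^2=O_p(pq\,n^{2/\alpha})$; likewise past misclassification cannot contaminate the time-$t$ weights because the predicted probability is bounded below by $\min_i p_{ik}$. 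So your proposed induction over compounding filtering errors, justified only by the heuristic that the exponential separation "swamps" them, should be replaced by this projection argument; it is also what pins down the exponent $16/\alpha$, which arises from pairing $\sup_t\|y_t\|^2=O_p(pq\,n^{2/\alpha})$ with the rate $\|P_{\widehat{\Lambda}_j}-P_{\Lambda_j^0}\|=O_p(\delta_{pqn}^{-1/4})$ inherited from Theorem \ref{Th1}, a linkage your union-bound description does not capture.
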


Note that $\eta$ may be large but remains fixed as $(p, q, n) \to \infty$.
Theorem \ref{Th2} establishes the consistency of $\widehat{w}_{t|n}^{(k)}$ as $(p, q, n) \to \infty$, with uniform consistency holding if both $n^{16/\alpha}/(pq) \to 0$ and $n^{(2/\alpha) + (2/\beta)}/(pq) \to 0$.
When $\alpha$ and $\beta$ are large, these rate conditions are unrestrictive, but Assumptions C(1), D(1), and D(4) become stronger than those in \cite{2024Quasi} and \cite{2023Two-way}. Conversely, when $\alpha$ and $\beta$ are small, Assumptions C(1), D(1), and D(4) are more readily satisfied, while the rate conditions become restrictive.

We emphasize that the rate conditions $n^{16/\alpha}/(pq) \to 0$ and $n^{(2/\alpha) + (2/\beta)}/(pq) \to 0$ are sufficient but not necessary. Similar trade-off analyses appear in \cite{2024Estimation}. Theorems 3-6 inherit these requirements, as they are derived from Theorem \ref{Th2}.

\begin{theorem}\label{Th3}
    \textbf{(Convergence rates)} Under Assumptions (A)-(E),   $n^{\frac{16}{\alpha}}/(pq)$ and $n^{\frac{2}{\alpha}+\frac{2}{\beta}}/(pq) \rightarrow0$, as $(p,q,n) \xrightarrow{} \infty$, for each $k\in[M]$
\begin{align}
    (1)~~~~ & \frac{1}{p}\left\|\widetilde{R}_k-\widetilde{R}_k^0\widetilde{H}_{1k}\right\|_F^2=O_p\left( \frac{1}{nq}+\frac{1}{p^2q^2}+\frac{1}{n^2p}\right);\notag\\
     (2)~~~~ & \frac{1}{q}\left\|\widetilde{C}_k-\widetilde{C}_k^0\widetilde{H}_{2k}\right\|_F^2=O_p\left( \frac{1}{np}+\frac{1}{p^2q^2}+\frac{1}{n^2q}\right),\notag
\end{align}
where $ \widetilde{H}_{jk}$  satisfy $  \widetilde{H}_{jk}^{\top}\widetilde{H}_{jk}= \widetilde{H}_{jk}\widetilde{H}_{jk}^{\top}+o_p(1)=I_{k_j}+o_p(1), j=1,2.$
\end{theorem}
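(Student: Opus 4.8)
The plan is to leverage Theorem \ref{Th2} to reduce the problem to the observed-state case and then import the convergence-rate machinery from the matrix factor QMLE literature. Since $\widehat{w}_{t|n}^{(k)}$ concentrates on $I_{s_t=k}$ at a super-polynomial rate $o_p((pq)^{-\eta})$ for arbitrarily large fixed $\eta$, the difference between the feasible EM updates in (\ref{es-th1}) and their infeasible counterparts obtained by replacing $w_{t|n}^{(k)}$ with $I_{s_t=k}$ (and, correspondingly, the smoothed moments $f_{t|n}^{(k)}, p_{t|n}^{(k)}$ by their oracle versions conditioned on the true regime path) is negligible compared to the target rates. So the first step is to write $\widehat{R}_k$ and $\widehat{C}_k$ as $\sum_t w_{t|n}^{(k)}(\cdots)(\sum_t w_{t|n}^{(k)} P_{t|n}^{c(k)})^{-1}$, add and subtract the version with $I_{s_t=k}$, and bound the perturbation term using $\sup_t|\widehat{w}_{t|n}^{(k)}-I_{s_t=k}|=o_p((pq)^{-\eta})$ together with the moment bounds in Assumptions C and D; the bounded-moment conditions ensure the sums grow at most polynomially in $n$, $p$, $q$, which is killed by the $(pq)^{-\eta}$ factor for $\eta$ large enough.

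Once reduced to the oracle estimator on the subsample $A_k=\{t:s_t=k\}$ with $|A_k|\asymp n\pi_k^0$, the structure is exactly that of a (static) matrix factor model estimated by alternating least squares, where the "factors" $F_t$ are also replaced by their Kalman-smoothed oracle values. I would then follow the argument in \cite{2024Quasi} / \cite{2023Two-way}: the normal equations for $\widetilde{R}_k$ give $\widetilde{R}_k - \widetilde{R}_k^0\widetilde{H}_{1k}$ in terms of the cross-products $\frac{1}{|A_k|}\sum_{t\in A_k} F_t C_k^{0\top} e_{t,\cdot}$ (controlled by Assumption E), the idiosyncratic second moments (Assumption D(2)), and the factor second moments (Assumption C(2)), with $\widetilde{H}_{1k}$ the rotation aligning $\widetilde{R}_k$ to $\widetilde{R}_k^0$. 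Expanding $\widehat{R}_k = (\sum_t w P_{t|n}^{c(k)})^{-1}\sum_t w Y_t \widehat{C}_k F_{t|n}^{(k)\top}$ and substituting $Y_t = R_k^0 F_t C_k^{0\top} + E_t$, one collects a "signal" term proportional to $\widetilde{R}_k^0$ times an invertible (by Assumptions B(1), C(2.4)) matrix, plus three error terms whose squared Frobenius norms are $O_p(\frac{1}{nq})$, $O_p(\frac{1}{p^2q^2})$, and $O_p(\frac{1}{n^2p})$ respectively — the first from the $E_t C_k^0 F_t$ cross term, the second from the consistency rate $\frac{1}{pq}\|M_{\widehat\Lambda_k}\Lambda_k^0\|_F^2 = O_p(1/\sqrt{\delta_{pqn}})$ fed through Theorem \ref{Th1} and its corollary on $\widetilde{C}_k^{0\top}\widetilde C_k/q$, and the third from the $\|E_t\|^2$ self term averaged over $A_k$. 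The symmetric argument gives part (2). The properties of $\widetilde{H}_{jk}$ follow because $\widetilde{R}_k^{0\top}\widetilde R_k^0/p = I_{k_1}$ by Assumption B(1), so $\widetilde{H}_{1k}$ is asymptotically the orthogonal polar factor of $\widetilde R_k^{0\top}\widetilde R_k/p$, hence $\widetilde H_{1k}^\top\widetilde H_{1k}=\widetilde H_{1k}\widetilde H_{1k}^\top + o_p(1) = I_{k_1}+o_p(1)$, and likewise for $\widetilde H_{2k}$.

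The main obstacle I anticipate is controlling the error introduced by using the Kalman-smoothed factors $f_{t|n}^{(k)}$ and their second moments $p_{t|n}^{(k)}$ in place of the true $f_t$. Unlike the static matrix factor model where one can plug in PCA-type factor estimates with known rates, here the smoothed factors are defined through the coupled filtering/smoothing recursions in Section 3.2 evaluated at $\widehat\theta$, and the smoothing error depends on the whole parameter estimate (including $\theta_2$ and $P$) as well as on the regime-classification error. The key will be to show that, conditional on correct regime recovery (which holds with probability $1-o(1)$ by Theorem \ref{Th2}), the smoother error $f_{t|n}^{(k)} - f_t$ has conditional variance of order $\sigma^2/(pq)$ uniformly in $t$ — because each observation equation $y_t=\Lambda_{s_t}f_t+e_t$ contributes information of order $pq/\sigma^2$ about the $r=k_1k_2$-dimensional $f_t$ — so that the resulting extra terms in the normal equations are $O_p(1/(pq))$ per summand, averaging to $O_p(1/(npq))$ or smaller, which is dominated by the stated rates. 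Making this precise requires a uniform (in $t$) bound on the Kalman gain and the steady-state filter covariance under the estimated parameters, which I would obtain from Lemma \ref{lema1} combined with the boundedness of $\widehat\theta$ in a neighborhood of $\theta^0$ (itself a consequence of the consistency results), and this is where the bulk of the technical work in Section D of the Supplementary Material would go.
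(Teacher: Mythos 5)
Your overall route is the same as the paper's: use Theorem \ref{Th2} to replace $\widehat{w}_{t|n}^{(k)}$ by $I_{(s_t=k)}$ at a negligible cost, show via Lemma \ref{lema1} that the smoothed factor collapses to the projection $\frac{1}{pq}(\widehat D_k^{(1)})^{-1}\widehat R_k^{\top}Y_t\widehat C_k(\widehat D_k^{(2)})^{-1}+O_p(p^{-1}q^{-1})$ (this is exactly the paper's Lemma D.1, and your anticipated ``steady-state filter covariance of order $\sigma^2/(pq)$'' is the right mechanism), then expand the EM normal equation with $Y_t=R_k^0F_tC_k^{0\top}+E_t$ and bound the error terms, with invertibility of the normalizing matrix $\bar\Delta_k$ supplied by Assumptions B(1) and C(2.4) (the paper's Lemma D.3).

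However, your accounting of where the three rate components come from contains two concrete errors that would block the stated conclusion if executed as written. First, the $O_p(1/(p^2q^2))$ term cannot be obtained by ``feeding through'' Theorem \ref{Th1}: that theorem only gives $\frac{1}{pq}\|M_{\widehat\Lambda_k}\Lambda_k^0\|_F^2=O_p(\delta_{pqn}^{-1/2})$, which is polynomially slower than $1/(p^2q^2)$ and would contaminate the final rate rather than produce it. In the paper this term comes from bounding the quadratic forms $\sum_t I_{(s_t=k)}E_t\widetilde C_k\widetilde C_k^{\top}E_t^{\top}$ (and the version post-multiplied by $\widetilde R_k$) directly under Assumption D, i.e.\ Lemma D.2(2)--(3); you need those bounds, not Theorem \ref{Th1}. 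Second, the $O_p(1/(n^2p))$ term in part (1) (and $1/(n^2q)$ in part (2)) is not an ``$\|E_t\|^2$ self term''; it does not appear in either single estimating equation. It arises only because the error bound for $\widetilde R_k$ has the multiplicative form $O_p\bigl(\frac{1}{qn}+\frac{1}{p^2q^2}\bigr)\bigl[1+\|\widetilde C_k-\widetilde C_k^0\widetilde H_{2k}\|_F^2\bigr]$ and symmetrically for $\widetilde C_k$, so that the two inequalities must be solved as a coupled system; substituting one into the other generates the cross terms $1/(n^2p)$ and $1/(n^2q)$. Treating parts (1) and (2) as independent ``symmetric arguments,'' as your proposal does, leaves these terms unexplained and the claimed rates unproven. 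With those two repairs (the Assumption-D quadratic-form lemma and the coupling step), your plan coincides with the paper's proof.
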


Theorem \ref{Th3} establishes that QMLEs for the  factor spaces achieve convergence rates that match those of \cite{2024Quasi}, although marginally slower than those of \cite{2021Projected}. The latter reports rates of $O_p\left( \frac{1}{nq} + \frac{1}{p^2q^2} + \frac{1}{n^2p^2} \right)$ for $R$ and $O_p\left( \frac{1}{np} + \frac{1}{p^2q^2} + \frac{1}{n^2q^2} \right)$ for $C$.

 \begin{theorem}\label{Th4}
    \textbf{(Asymptotic normality)} Under Assumptions (A)-(F),   $n^{\frac{16}{\alpha}}/(pq)$ and $n^{\frac{2}{\alpha}+\frac{2}{\beta}}/(pq) \rightarrow0$, as $(p,q,n) \xrightarrow{} \infty$, for each $k\in[M], i\in[p]$ and $j\in[q]$,

    (1) when $qn=o_p(\min \{p^2q^2,n^2p\})$,
    \begin{align}
        \sqrt{qn}\widetilde{H}_{1k}\left({ \widetilde{\gamma}_{k,i}. - \widetilde{H}_{1k}^{\top}{\gamma}^{0}_{k,i\cdot}}\right) \overset{L}{ \rightarrow  }{N}_{{k}_{1}}\left( {0},\left( {\pi }_{k}^{0}\right)^{-2}\left({\Sigma}_{F_k}^{\left( 1\right) }\right)^{-1}{V}_{1i}\left({\Sigma}_{F_k}^{\left( 1\right) }\right)^{-1}\right);\notag
    \end{align}

        (2) when $pn=o_p(\min \{p^2q^2,n^2q\})$,
    \begin{align}
       \sqrt{pn}\widetilde{H}_{2k}\left({ \widetilde{c}_{k,j}. - \widetilde{H}_{2k}^{\top}{c}^{0}_{k,j\cdot}}\right) \overset{L}{ \rightarrow  }{N}_{{k}_{2}}\left( {0},\left( {\pi }_{k}^{0}\right)^{-2}\left({\Sigma}_{F_k}^{\left( 2\right) }\right)^{-1}{V}_{2i}\left({\Sigma}_{F_k}^{\left( 2\right) }\right)^{-1}\right), \notag
    \end{align}
    where $ \widetilde{H}_{1k}$ and $\widetilde{H}_{2k}$ are defined in Theorem \ref{Th3}.
\end{theorem}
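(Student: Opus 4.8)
The plan is to read the limit law off the EM fixed-point equation (\ref{es-th1}) for $\widehat R_k$ (and, by symmetry, for $\widehat C_k$), after reducing to the case of ``observed'' regimes and nearly noiseless factors. \textbf{Step 1 (reduction).} By Theorem \ref{Th2}(1), under the rate restrictions $n^{16/\alpha}/(pq)\to0$ and $n^{2/\alpha+2/\beta}/(pq)\to0$ we have $\sup_t|\widehat w_{t|n}^{(k)}-I_{s_t=k}|=o_p((pq)^{-\eta})$ for an arbitrarily large fixed $\eta$; with the moment bounds of Assumptions C(1) and D this lets every weight $\widehat w_{t|n}^{(k)}$ in (\ref{es-th1}) be replaced by $I_{s_t=k}$ up to an $o_p((pq)^{-\eta_0})$ error for some large $\eta_0$. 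In parallel I invoke an auxiliary smoothing lemma (proved in the Supplement alongside Lemma \ref{lema1} and already used for Theorem \ref{Th3}) showing that on $\{s_t=k\}$ the smoothed factor moments obey $f_{t|n}^{(k)}=\mathrm{vec}(F_t)+O_p((pq)^{-1/2})$ and $P_{t|n}^{c(k)}=F_t\widehat C_k^\top\widehat C_kF_t^\top+O_p((pq)^{-1/2})$ in the relevant norms. After these replacements (\ref{es-th1}) reduces, to leading order, to the regime-$k$ least-squares identity $\widehat R_k=\big(\sum_{t:s_t=k}Y_t\widehat C_kF_t^\top\big)\big(\sum_{t:s_t=k}F_t\widehat C_k^\top\widehat C_kF_t^\top\big)^{-1}+(\text{negligible})$.

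\textbf{Step 2 (error expansion).} Substituting $Y_t=R_k^0F_tC_k^{0\top}+E_t$, extracting the $i$-th row, and passing to the normalized estimator $\widetilde R_k$ and the rotation $\widetilde H_{1k}$ from Theorem \ref{Th3} (which satisfies $\widetilde H_{1k}^\top\widetilde H_{1k}=I_{k_1}+o_p(1)$), I obtain a representation of the schematic form
\[
\widetilde H_{1k}\big(\widetilde\gamma_{k,i\cdot}-\widetilde H_{1k}^\top\gamma^0_{k,i\cdot}\big)=\widehat A_k^{-1}\,\frac1{nq}\sum_{t:s_t=k}(D_k^{(1)})^{1/2}F_tC_k^{0\top}e_{t,i\cdot}+\mathrm{rem}_{k,i},
\]
where $\widehat A_k=\frac1{nq}(D_k^{(1)})^{1/2}\big(\sum_{t:s_t=k}F_t\widehat C_k^\top\widehat C_kF_t^\top\big)(D_k^{(1)})^{1/2}$, and $\mathrm{rem}_{k,i}$ collects (i) terms carrying the loading error $\widehat C_k-C_k^0\widehat H_{2k}$, which by Theorem \ref{Th3} are of order $O_p(1/(p^2q^2)+1/(n^2p))$ in squared normalized norm; (ii) the factor-smoothing errors from Step 1, of order $O_p(1/(pq))$; (iii) the state-replacement error $o_p((pq)^{-\eta_0})$; and (iv) a cross term of the type controlled by Assumption E. Each of these is shown to be $o_p(1/\sqrt{nq})$ exactly under $qn=o_p(\min\{p^2q^2,n^2p\})$, which is why that condition is imposed in part (1).

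\textbf{Step 3 (CLT and Slutsky).} By Assumption A, $|\{t:s_t=k\}|/n\xrightarrow{p}\pi_k^0$; combining this with $\widehat C_k^\top\widehat C_k/q\xrightarrow{p}D_k^{(2)}$ (from Theorems \ref{Th1} and \ref{Th3}) and Assumption C(2.2) gives $\widehat A_k\xrightarrow{p}\pi_k^0\,\Sigma_{F_k}^{(1)}$, while Assumption F(1) gives $\frac1{\sqrt{nq}}\sum_{t:s_t=k}(D_k^{(1)})^{1/2}F_tC_k^{0\top}e_{t,i\cdot}\xrightarrow{d}N(0,V_{1i})$. Multiplying the display of Step 2 by $\sqrt{nq}$, discarding the $o_p(1)$ remainder and applying Slutsky's theorem yields $\sqrt{qn}\,\widetilde H_{1k}(\widetilde\gamma_{k,i\cdot}-\widetilde H_{1k}^\top\gamma^0_{k,i\cdot})\xrightarrow{d}N_{k_1}\big(0,(\pi_k^0)^{-2}(\Sigma_{F_k}^{(1)})^{-1}V_{1i}(\Sigma_{F_k}^{(1)})^{-1}\big)$. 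Part (2) is the mirror image: start from the $\widehat C_k$-update in (\ref{es-th1}), interchange the roles of $p$ and $q$, of $D_k^{(1)}$ and $D_k^{(2)}$, and of Assumptions C(2.2)/F(1) and C(2.3)/F(2), and use $pn=o_p(\min\{p^2q^2,n^2q\})$.

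\textbf{Main obstacle.} The crux is Step 2: the $\widehat R_k$-update is coupled both to $\widehat C_k$ (via $Y_t\widehat C_k$ and $P_{t|n}^{c(k)}$) and to the smoothed factors $f_{t|n}^{(k)}$, which themselves depend on the full parameter vector, so $\mathrm{rem}_{k,i}$ is a sum of products of several small errors whose joint order must be tracked. Showing that each such product is $o_p(1/\sqrt{nq})$ forces one to use simultaneously the loading-space consistency of Theorem \ref{Th1}, the sharp loading rates of Theorem \ref{Th3}, the uniform state consistency of Theorem \ref{Th2}(1) (which is where the extra restrictions $n^{16/\alpha}/(pq)\to0$ and $n^{2/\alpha+2/\beta}/(pq)\to0$ enter), and the serial-dependence and covariance bounds of Assumptions D and E; matching these orders is precisely what produces the constraints $qn=o_p(\min\{p^2q^2,n^2p\})$ and $pn=o_p(\min\{p^2q^2,n^2q\})$.
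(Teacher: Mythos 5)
Your proposal follows essentially the same route as the paper: Section E likewise linearizes the EM update for $\widehat R_k$ using the uniform state consistency of Theorem \ref{Th2} and the Lemma-type representation $\widehat F_{t|n}^{(k)}=\frac{1}{pq}(\widehat D_k^{(1)})^{-1}\widehat R_k^{\top}Y_t\widehat C_k(\widehat D_k^{(2)})^{-1}+O_p(p^{-1}q^{-1})$, bounds the remainder blocks ($A_{1i},A_{3i},D_i,B_i$) via Theorem \ref{Th3} and Assumptions D--E exactly as you sketch, and obtains the limit from Assumption F plus Slutsky, with the rate conditions $qn=o_p(\min\{p^2q^2,n^2p\})$ playing the same role. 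The only point to tighten is your claim $\widehat A_k\xrightarrow{p}\pi_k^0\Sigma_{F_k}^{(1)}$: in the paper the normalizer $\bar\Delta_k$ converges to the rotation-conjugated quantity $\pi_k^0\,\frac{\widetilde R_k^{\top}\widetilde R_k^{0}}{p}\Sigma_{F_k}^{(1)}\frac{\widetilde R_k^{0\top}\widetilde R_k}{p}$, and the stated covariance emerges only after combining $\bar\Delta_k^{-1}\frac{\widetilde R_k^{\top}\widetilde R_k^{0}}{p}=\frac{1}{\pi_k^0}\widetilde H_{1k}^{\top}(\Sigma_{F_k}^{(1)})^{-1}+o_p(1)$ with the asymptotic orthogonality of $\widetilde H_{1k}$ --- a bookkeeping refinement of your Step 3 rather than a different idea.
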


\begin{theorem}\label{Th5}
    Under Assumptions (A)-(F),   $n^{\frac{16}{\alpha}}/(pq)$ and $n^{\frac{2}{\alpha}+\frac{2}{\beta}}/(pq) \rightarrow0$, as $(p,q,n) \xrightarrow{} \infty$,
    $\widehat{p}_{ij} \overset{p}{ \rightarrow} p_{ij}^{0}$ for each $i,j \in[M]$.
\end{theorem}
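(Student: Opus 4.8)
The plan is to start from the EM stationarity condition (\ref{es_P}) that the QMLE must satisfy. Unpacking the Kronecker factors and the Hadamard division $\oslash$ in (\ref{es_P}), with $\widehat\xi=\sum_{t=1}^n\mathrm{vec}(W_t)$ and $W_t=(\widehat w_{t-1,t\mid n}^{(i,j)})_{i,j}$, one obtains the familiar Baum--Welch form
\[
\widehat p_{ij}=\frac{\sum_{t=1}^n \widehat w_{t-1,t\mid n}^{(i,j)}}{\sum_{t=1}^n \widehat w_{t-1\mid n}^{(i)}},\qquad \widehat w_{t-1\mid n}^{(i)}=\sum_{j=1}^M \widehat w_{t-1,t\mid n}^{(i,j)},
\]
where the smoothed probabilities are evaluated at $\widehat\theta$. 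Hence it suffices to show that the numerator and denominator, each divided by $n$, converge in probability to $\pi_i^0 p_{ij}^0$ and $\pi_i^0$ respectively, and then conclude by Slutsky's theorem using $\pi_i^0>0$ from Assumption A.

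The first step is to upgrade Theorem \ref{Th2} from the marginal smoothed probabilities to the joint ones. Fixing $t$ and writing $a_i=\widehat w_{t-1\mid n}^{(i)}$, $b_j=\widehat w_{t\mid n}^{(j)}$, $c_{ij}=\widehat w_{t-1,t\mid n}^{(i,j)}$, these obey $c_{ij}\ge0$, $\sum_j c_{ij}=a_i$, $\sum_i c_{ij}=b_j$. For $(i,j)\neq(s_{t-1},s_t)$ one has either $i\neq s_{t-1}$, giving $c_{ij}\le a_i$, or $j\neq s_t$, giving $c_{ij}\le b_j$; while $c_{s_{t-1},s_t}=a_{s_{t-1}}-\sum_{j\neq s_t}c_{s_{t-1},j}\ge a_{s_{t-1}}-\sum_{j\neq s_t}b_j=a_{s_{t-1}}+b_{s_t}-1$ and $c_{s_{t-1},s_t}\le 1$. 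Feeding in the uniform bound of Theorem \ref{Th2}(1) — available because the rate conditions $n^{16/\alpha}/(pq)\to0$ and $n^{2/\alpha+2/\beta}/(pq)\to0$ are also assumed in Theorem \ref{Th5} — yields $\sup_t\max_{i,j}\bigl|\widehat w_{t-1,t\mid n}^{(i,j)}-I_{s_{t-1}=i}I_{s_t=j}\bigr|=o_p\bigl((pq)^{-\eta}\bigr)$ for any fixed $\eta>0$. Since $pq\to\infty$, averaging over $t\in[n]$ then gives $\frac1n\sum_t \widehat w_{t-1,t\mid n}^{(i,j)}=\frac1n\sum_t I_{s_{t-1}=i}I_{s_t=j}+o_p(1)$ and likewise $\frac1n\sum_t \widehat w_{t-1\mid n}^{(i)}=\frac1n\sum_t I_{s_{t-1}=i}+o_p(1)$ (here it is the uniform-in-$t$ part (1), not the pointwise part (2), of Theorem \ref{Th2} that is needed, because $n$ terms are being summed).

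The second step is the ergodic theorem for the finite-state chain $\{s_t\}$: by Assumption A it is irreducible and aperiodic, hence the bivariate chain $\{(s_{t-1},s_t)\}$ is ergodic with stationary mass $\pi_i^0 p_{ij}^0$ at $(i,j)$, so $\frac1n\sum_t I_{s_{t-1}=i}I_{s_t=j}\xrightarrow{p}\pi_i^0 p_{ij}^0$ and $\frac1n\sum_t I_{s_{t-1}=i}\xrightarrow{p}\pi_i^0$. Substituting the two asymptotic identities from the first step into the ratio and invoking $\pi_i^0>0$ delivers $\widehat p_{ij}\xrightarrow{p}p_{ij}^0$.

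The only real input is Theorem \ref{Th2}, which may be assumed; conditional on it, the remaining work is light. I expect the main (but minor) obstacle to be the elementary simplex bookkeeping that transfers the uniform consistency of the marginal smoothed regime probabilities to the joint ones, together with verifying that the denominator $\frac1n\sum_t\widehat w_{t-1\mid n}^{(i)}$ stays bounded away from zero — both of which are settled by the argument above and by $\pi_i^0>0$. No delicate estimate is needed here that is not already contained in the proof of Theorem \ref{Th2}.
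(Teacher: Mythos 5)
Your proof is correct, and its skeleton matches the paper's: both reduce $\widehat p_{ij}$ to the Baum--Welch ratio $\sum_t\widehat w_{t-1,t\mid n}^{(i,j)}\big/\sum_t\widehat w_{t-1\mid n}^{(i)}$, replace the smoothed weights by regime indicators using Theorem \ref{Th2}, and finish with the ergodic law of large numbers for $\{(s_{t-1},s_t)\}$ together with $\pi_i^0>0$. The one step you handle genuinely differently is the numerator. The paper factors the joint smoothed probability as $P_r(s_{t-1}=i\mid s_t=j,\mathcal Y_n;\widehat\theta)\,\widehat w_{t\mid n}^{(j)}$, rewrites the conditional term as $\widehat p_{ij}\widehat w_{t-1\mid t-1}^{(i)}\big/\sum_h\widehat p_{hj}\widehat w_{t-1\mid t-1}^{(h)}$, and then invokes the consistency of the \emph{filtered} probabilities $\widehat w_{t\mid t}^{(k)}$ — a fact established inside Step (1) of the proof of Theorem \ref{Th2} but not part of its statement — as well as the conditional-independence identity $P_r(s_{t-1}=i\mid s_t=j,\mathcal Y_n)=P_r(s_{t-1}=i\mid s_t=j,\mathcal Y_{t-1})$. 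You instead obtain $\sup_t\max_{i,j}|\widehat w_{t-1,t\mid n}^{(i,j)}-I_{(s_{t-1}=i)}I_{(s_t=j)}|=o_p((pq)^{-\eta})$ directly from the marginal uniform consistency in Theorem \ref{Th2}(1) via the exact marginalization constraints $\sum_j c_{ij}=a_i$, $\sum_i c_{ij}=b_j$ and the simplex bookkeeping you spell out; this is self-contained, uses only the stated Theorem \ref{Th2} (correctly noting that the uniform version is what licenses averaging over $n$ terms, which the paper also implicitly relies on), and sidesteps both the filtering-level result and the conditional-independence step. The paper's route, in exchange, exhibits the explicit structure of the joint smoothed probability through the filter, but for the purpose of this theorem your shortcut is fully adequate and arguably cleaner.
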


\begin{theorem}\label{Th6}
    Under the Assumptions  of Theorem   \ref{Th5},  for each $k\in[M]$, when $B_k^0=0, \frac{1}{n}\sum_{t=1}^n I_{(s_t=k)}F_t= O_p(n^{-\frac{1}{2}})$, $ \widehat{R}_k^{\top}R_k^0 /p= D_k^{(1)}+ o_p(n^{-\frac{1}{2}})$,  and  $\widehat{C}_k^{\top}C_k^0 /q= D_k^{(2)}+ o_p(n^{-\frac{1}{2}}    )$, we have
    \begin{align*}
        \|\widehat{B}_k-B_k^0\|_F^2 & =O_p\left(\frac{1}{n}+\frac{1}{pq}\right), \\
        \|\widehat{\Phi}_k-\Phi_k^0\widehat{P}_{k2}\|_F^2 & =O_p\left(\frac{1}{n}+\frac{1}{pq}\right), \\
         \|\widehat{\Gamma}_k-\Gamma_k^0\widehat{P}_{k1}\|_F^2 & =O_p\left(\frac{1}{n}+\frac{1}{pq}\right),
    \end{align*}
    where $ \widehat{P}_{k2}=\left( \frac{1}{n}\sum_{t=1}^nF_{t-1}\Gamma_k^{0\top}\widehat{\Gamma}_k F_{t-1}^{\top}\right)\left( \frac{1}{n}\sum_{t=1}^nF_{t-1}\widehat{\Gamma}_k^{\top}\widehat{\Gamma}_k F_{t-1}^{\top}\right)^{-1}$ and $ \widehat{P}_{k1}=\left( \frac{1}{n}\sum_{t=1}^nF_{t-1}^{\top}\Phi_k^{0\top}\widehat{\Phi}_k \right.$ $ \left. F_{t-1} \right)\left( \frac{1}{n}\sum_{t=1}^nF_{t-1}^{\top}\widehat{\Phi}_k^{\top}\widehat{\Phi}_k F_{t-1}\right)^{-1}.$
\end{theorem}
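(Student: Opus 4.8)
The plan is to use the fact that at convergence of the EM algorithm (Algorithm 1) the current estimate $\widetilde\theta$ equals $\widehat\theta$, so (\ref{es-th2}) and (\ref{es-th22}) become fixed-point identities for $(\widehat B_k,\widehat\Phi_k,\widehat\Gamma_k)$ whose right-hand sides are built from the smoothed moments of Table \ref{Ta1} evaluated at $\widehat\theta$. I would then proceed in four stages: (i) replace the posterior regime weights $\widehat w_{t\mid n}^{(k)}$ by the true indicators $I_{(s_t=k)}$; (ii) replace the smoothed factor moments ($F_{t\mid n}^{(k)}$, $P_{t\mid n}^{(k)}$, $P_{t,t-1\mid n}^{(2k)}$, $P_{t-1\mid n}^{*(2k)}$, $P_{t,t-1\mid n}^{(1k)}$, $P_{t-1\mid n}^{*(1k)}$, $F_{t-1\mid n}^{*(k)}$) by the corresponding functions of the true factors, with averaged error $O_p((pq)^{-1/2})$; (iii) recognize the reduced equations as the normal equations of a bilinear least-squares fit of the regime-$k$ matrix VAR on the subsample $A_k=\{t:s_t=k\}$; and (iv) plug in the true recursion $F_t=\Phi_k^0F_{t-1}\Gamma_k^{0\top}+\epsilon_t$ (valid on $A_k$ because $B_k^0=0$) to separate out the rotations $\widehat P_{k1},\widehat P_{k2}$ and bound the residual score terms.

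Stage (i) follows from Theorem \ref{Th2}(1): since the rate conditions of Theorem \ref{Th5} are in force, $\sup_t|\widehat w_{t\mid n}^{(k)}-I_{(s_t=k)}|=o_p((pq)^{-\eta})$ for any fixed $\eta$, so every weighted average $\tfrac1n\sum_t\widehat w_{t\mid n}^{(k)}(\cdot)$ differs from $\tfrac1n\sum_tI_{(s_t=k)}(\cdot)$ by $o_p((pq)^{-\eta})$, negligible for $\eta\ge 1/2$; and $\tfrac1n\sum_tI_{(s_t=k)}=\pi_k^0+O_p(n^{-1/2})$ by the CLT for the chain in Assumption A. Stage (ii) is the technical heart. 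Because the observation equation $y_t=\Lambda_{s_t}^0f_t+e_t$ supplies information of order $pq$ to the conditional law of $f_t$ while the state equation supplies only $O(1)$ precision, the Kalman smoother run at $\widehat\theta$ obeys, on $A_k$,
\[
f_{t\mid n}^{(k)}=f_t+(\widehat\Lambda_k^\top\widehat\Lambda_k)^{-1}\widehat\Lambda_k^\top e_t+(\text{state-equation shrinkage})+(\text{loading-error term}),
\]
the first correction being mean-zero of variance $O(1/(pq))$, the shrinkage lower order, and the loading-error term controlled to $o_p(n^{-1/2})$ precisely by the hypotheses $\widehat R_k^\top R_k^0/p=D_k^{(1)}+o_p(n^{-1/2})$, $\widehat C_k^\top C_k^0/q=D_k^{(2)}+o_p(n^{-1/2})$ (whence $\widehat\Lambda_k^\top\Lambda_k^0/(pq)=D_k^{(2)}\otimes D_k^{(1)}+o_p(n^{-1/2})$ and, with Theorem \ref{Th3}, the same for $\widehat\Lambda_k^\top\widehat\Lambda_k/(pq)$); the second smoothed moments inherit the same order. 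Averaging over $A_k$, every term in (\ref{es-th2})--(\ref{es-th22}) matches its true-factor analogue up to $O_p(n^{-1/2}+(pq)^{-1/2})$.

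Carrying out (i)--(ii), the fixed-point equations become, schematically,
\[
\widehat B_k=\tfrac1{n_k}\textstyle\sum_{t\in A_k}\bigl(F_t-\widehat\Phi_kF_{t-1}\widehat\Gamma_k^\top\bigr)+O_p((pq)^{-1/2}),
\]
\[
\widehat\Phi_k=\Bigl[\tfrac1{n_k}\textstyle\sum_{t\in A_k}(F_t-\widehat B_k)\widehat\Gamma_kF_{t-1}^\top\Bigr]\Bigl[\tfrac1{n_k}\textstyle\sum_{t\in A_k}F_{t-1}\widehat\Gamma_k^\top\widehat\Gamma_kF_{t-1}^\top\Bigr]^{-1}+O_p((pq)^{-1/2}),
\]
with $n_k=|A_k|=n\pi_k^0(1+o_p(1))$, and symmetrically for $\widehat\Gamma_k$. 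A preliminary argument gives joint boundedness and rotation-consistency of $(\widehat B_k,\widehat\Phi_k,\widehat\Gamma_k)$ (the population limit of $Q_2$ is uniquely maximized at $(B_k^0,\Phi_k^0,\Gamma_k^0)$ up to the bilinear reparametrization), so the inverse matrices are $O_p(1)$ by the positive-definiteness in Assumption C(2). Substituting $F_t=\Phi_k^0F_{t-1}\Gamma_k^{0\top}+\epsilon_t$ on $A_k$ splits the numerator of the $\widehat\Phi_k$ equation into a signal piece $\Phi_k^0\bigl(\tfrac1{n_k}\sum_{A_k}F_{t-1}\Gamma_k^{0\top}\widehat\Gamma_kF_{t-1}^\top\bigr)$, which, divided by the denominator, is exactly $\Phi_k^0\widehat P_{k2}$, and a remainder $\bigl[\tfrac1{n_k}\sum_{A_k}(\epsilon_t-\widehat B_k)\widehat\Gamma_kF_{t-1}^\top\bigr]$ times the $O_p(1)$ inverse. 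Since $\epsilon_t$ is independent of the regime path and of $F_{t-1}$, $\{I_{(s_t=k)}\epsilon_t\Gamma_k^0F_{t-1}^\top\}$ is a martingale-difference sequence, giving $\tfrac1{n_k}\sum_{A_k}\epsilon_t\widehat\Gamma_kF_{t-1}^\top=O_p(n^{-1/2})$; and $B_k^0=0$ together with $\tfrac1n\sum_tI_{(s_t=k)}F_t=O_p(n^{-1/2})$ forces $\mathbb E[F_t\mid s_t=k]=0$ and, through the recursion, $\mathbb E[F_{t-1}\mid s_t=k]=0$, hence $\tfrac1{n_k}\sum_{A_k}F_{t-1}=O_p(n^{-1/2})$, so the $\widehat B_k\widehat\Gamma_k\tfrac1{n_k}\sum_{A_k}F_{t-1}^\top$ term is $o_p(n^{-1/2}+(pq)^{-1/2})$. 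The same bounds feed the $\widehat B_k$ identity ($\widehat B_k=O_p(n^{-1/2}+(pq)^{-1/2})$) and, by symmetry, $\widehat\Gamma_k-\Gamma_k^0\widehat P_{k1}=O_p(n^{-1/2}+(pq)^{-1/2})$; squaring gives the claimed $O_p(1/n+1/(pq))$ rates.

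I expect stage (ii) to be the main obstacle: quantifying the Kalman-smoother reconstruction error when the smoother is run with estimated loadings and only approximately identified regimes, and sharply enough — $O_p(n^{-1/2}+(pq)^{-1/2})$ after averaging — to sustain the near-parametric conclusion. The crux is showing that the smoother at $\widehat\theta$ is, to this order, equivalent to the oracle projection $(\widehat\Lambda_k^\top\widehat\Lambda_k)^{-1}\widehat\Lambda_k^\top y_t$, i.e.\ that both the state-equation shrinkage and the cross-regime contamination are lower order, with the loading-estimation contribution being exactly what the $o_p(n^{-1/2})$ hypotheses on $\widehat R_k^\top R_k^0/p$ and $\widehat C_k^\top C_k^0/q$ absorb. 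A secondary difficulty is the bilinear least-squares bookkeeping in stage (iv): matching the specific formulas for $\widehat P_{k1},\widehat P_{k2}$ and closing the mutual coupling among $\widehat B_k,\widehat\Phi_k,\widehat\Gamma_k$.
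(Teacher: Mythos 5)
Your proposal is correct and follows essentially the same route as the paper's proof: treating the EM updates (\ref{es-th2})--(\ref{es-th22}) as fixed-point identities, using Theorem \ref{Th2} to swap $\widehat w_{t\mid n}^{(k)}$ for $I_{(s_t=k)}$, establishing (as in the paper's Lemma C.1 and equations (G1)--(G4)) that the smoothed factor means and second moments reduce to the oracle projection of $Y_t$ with error $O_p((pq)^{-1/2})+o_p(n^{-1/2})$ absorbed by the $o_p(n^{-1/2})$ hypotheses on $\widehat R_k^{\top}R_k^0/p$ and $\widehat C_k^{\top}C_k^0/q$, and then substituting the regime-$k$ VAR recursion to extract $\Phi_k^0\widehat P_{k2}$, $\Gamma_k^0\widehat P_{k1}$ and bound the remaining $\epsilon_t$- and $\widehat B_k$-terms by $O_p(n^{-1/2})$ using $B_k^0=0$ and $\frac1n\sum_t I_{(s_t=k)}F_t=O_p(n^{-1/2})$. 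The crux you flag (the smoother at $\widehat\theta$ being equivalent, to this order, to $(\widehat\Lambda_k^{\top}\widehat\Lambda_k)^{-1}\widehat\Lambda_k^{\top}y_t$, with the conditional covariances $O_p(p^{-1}q^{-1})$) is exactly what the paper's Lemma C.1 and the $\widehat V_{t-1\mid n}^{(i)}$ bound supply.
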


According to the definition of $\widehat{P}_{k2}$ in Theorem \ref{Th6}, convergence of $\widehat{\Gamma}_k$ to $\Gamma_k^0$ implies $\widehat{P}_{k2} \to I_{k_1}$. Consequently, $\widehat{\Phi}_k \to \Phi_k^0$. This rotation matrix $\widehat{P}_{k2}$ arises from the rotational invariance in the model structure $\Phi_k F_{t-1} \Gamma_k^{\top}$, which  prevents unique identification of $\Phi_k$ and $\Gamma_k$.

\section{ Simulation Studies}
In this section, we examine the finite sample performance of the proposed estimation method for the Ms-DMF model from the following two aspects: (a) the estimation accuracy; (b) the robustness of the estimation method, specifically its fitting performance compared to alternative approaches under varying data structures and error distribution settings.
\subsection{ Simulation settings}
The simulation studies utilize data generated from an Ms-DMF model with $k_1 = k_2 = 2$ and $M = 2$. The data generation follows this procedure:
\begin{description}

        \item[Step 1: {\bf [State process]} ] The state variable $s_t$ follows a Markov chain with transition probabilities $p_{11} = p_{22} = 0.95$.

            \item[Step 2: \textbf{[Factors]}] The factor matrix $F_t$ is generated as:
$
F_t = B_{s_t} + \Phi_{s_t} F_{t-1} \Gamma_{s_t}^{\top} + \epsilon_t
$
where $\text{vec}(\epsilon_t) \stackrel{\text{i.i.d.}}{\sim} \mathcal{N}(\mathbf{0}, I_{k_1k_2})$. Parameters are specified as: $
\text{vec}(B_1) = b\beta, \quad \text{vec}(B_2) = 0.1\beta,  \quad
\text{vec}(\Phi_1) = \text{vec}(\Gamma_1) = (0.9, 0, 0, 0.7)^\top, \quad
\text{vec}(\Phi_2) = \text{vec}(\Gamma_2) = (0.7, 0, 0, 0.5)^\top,
$
with elements of $\beta$ independently drawn from $\mathcal{U}(0,1)$.

\item[Step 3: \textbf{[Loadings]}] For each regime $m \in \{1,2\}$:
     generate $\widetilde{R}_m$ ($p \times k_1$) and $\widetilde{C}_m$ ($q \times k_2$) with entries from $\mathcal{U}(2,4)$;
     define binary matrices $Q_{rm}$ and $Q_{cm}$ with mutually exclusive column supports such that their first and second columns have disjoint supports: positions containing 1 in the first column must be 0 in the second column (and vice versa). Here, \(\odot\) denotes the Hadamard product;
   compute normalized loadings:
    \begin{align*}
    R_m &= \left[p (Q_{rm} \odot \widetilde{R}_m)^\top (Q_{rm} \odot \widetilde{R}_m)\right]^{-1/2} (Q_{rm} \odot \widetilde{R}_m) \\
    C_m &= \left[q (Q_{cm} \odot \widetilde{C}_m)^\top (Q_{cm} \odot \widetilde{C}_m)\right]^{-1/2} (Q_{cm} \odot \widetilde{C}_m)
    \end{align*}

\item[Step 4: \textbf{[Errors]}] The error matrix follows a vector autoregressive process:
$
\text{vec}(E_t) = \psi \cdot \text{vec}(E_{t-1}) + \sqrt{1 - \psi^2} \cdot \text{vec}(U_t),
$
where $\text{vec}(U_t) \stackrel{\text{i.i.d.}}{\sim} \mathcal{N}(\mathbf{0}, \sigma^2 I_{pq})$.
\item[Step 5: \textbf{[Observations]}] The observed data is generated as:
$
Y_t = R_{s_t} F_t C_{s_t}^\top + E_t.$
\end{description}
We consider various $(p, q, n)$ combinations to evaluate estimation accuracy and perform 200 replications per parameter setting.

\subsection{ Estimation accuracy}
 We first assess the parameter recovery performance of the proposed EM method with filtering and smoothing for the Ms-DMF model. For the loading matrices $(R, C)$, we assess its performance by computing the distance between the estimated and true loading spaces using the metric established in \citet{2021Projected}. Regarding the latent factors $F_t$, we calculate regime-specific $R^2$ values through regressions of true factors on their estimated counterparts within each state. State recovery accuracy for $s_t$ is quantified via the Rand Index between the estimated and true state sequences, where values approaching 1 indicate stronger agreement. Finally, the estimation precision for all other parameters is measured by mean squared error (MSE).
		\begin{table}[H]
			\centering
             \setlength{\intextsep}{6pt} 
			\caption{\label{EA1}
				Averaged distance metrics of $R,C$, $R^2$ of $F_t$, Rand Index of $S_t$, and averaged MSE of other parameters over 200 replicates.
			}
               \renewcommand{\arraystretch}{0.7} 
                \resizebox{\textwidth}{!}{
				\begin{tabular}{ccccccccc}
					\toprule
					$n$ & $R_{s_t=1}$ & $R_{s_t=2}$ & $C_{s_t=1}$ & $C_{s_t=2}$ & $F_{s_t=1}$ & $F_{s_t=2}$& $P$ & $s_t$\\
					\midrule
 100 & 0.021  & 0.030  & 0.021  & 0.030  & 0.909  & 0.950  & 0.0011  & 0.9975  \\
         200 & 0.014  & 0.020  & 0.014  & 0.020  & 0.902  & 0.958  & 0.0006  & 0.9998  \\
      300 & 0.012  & 0.016  & 0.012  & 0.016  & 0.903  & 0.964  & 0.0005  & 0.9998  \\
         500 & 0.009  & 0.012  & 0.009  & 0.012  & 0.889  & 0.968  & 0.0002  & 0.9998  \\
						\cmidrule(lr){1-9}
					$n$ & $\sigma^2$ & $\sigma_{\varepsilon}^2$ & $\beta_{S_t=1}$ & $\beta_{S_t=2}$ & $\Phi_{S_t=1}$ & $\Phi_{S_t=2}$&  $\Gamma_{S_t=1}$  & $\Gamma_{S_t=2}$ \\
					\cmidrule(lr){1-9}
					 100 & 0.0048  & 0.015  & 0.118  & 0.166  & 0.058  & 0.087  & 0.074  & 0.043  \\
         200 & 0.0001  & 0.008  & 0.081  & 0.089  & 0.028  & 0.053  & 0.055  & 0.036  \\
         300 & 6.45$\times 10^{-5}$ & 0.007  & 0.073  & 0.066  & 0.019  & 0.039  & 0.051  & 0.031  \\
         500 & 7.17$\times 10^{-5}$ & 0.006  & 0.065  & 0.053  & 0.009  & 0.034  & 0.044  & 0.027  \\
					\bottomrule
				\end{tabular}
			}
		\end{table}
Table \ref{EA1} shows the averaged estimation errors  of $R$ and $C$, $R^2$ values for   $F_t$, Rand Index for $s_t$, and the averaged MSEs for other parameters under $p=q=10$, $n=100,200,300,500, \psi= 0.1, b=0.5, \sigma^2 =1 $. From Table \ref{EA1}, we can see that all parameters benefit from large dimension $n$, indicating that the EM with filtering and smoothing method possesses good estimation performance. Table  \ref{EA1} reveals that the estimation performance of $R_1$ and $C_1$ outperforms that of $R_2$ and $C_2$, respectively. This occurs because, compared to state 2, the factor series under state 1 exhibits larger intercept terms and autoregressive coefficients, resulting in a higher signal-to-noise ratio within the factor structure. This enhanced ratio facilitates more accurate estimation of the loading matrices. On the other hand, the larger intercept terms and variances of the factor series also imply that $F_t$ and its related autoregressive parameters are estimated less effectively under state 1 than under state 2.

		Figure \ref{Est_Acu} presents the average estimation errors of the loading parameters $R$ and $C$ under different matrix dimensions $(p, q)$, comparing the EM with filtering and smoothing method with the initial estimates given in Section A.6 of the Supplementary Material. The results indicate that: (a) the estimation accuracy of $\widehat{R}$  improves significantly with increasing dimension $q$, demonstrating better performance under larger $q$;  (b) the accuracy of $\widehat{C}$ is closely related to the dimension $p$, with estimation errors decreasing as $p$ increases, suggesting that higher dimensionality contributes to improved estimation accuracy of the loading matrices; and (c) the estimates from EM with filtering and smoothing are superior to the initial values.

\begin{figure}[htbp!]
			\centering
			\resizebox{\textwidth}{!}{\includegraphics{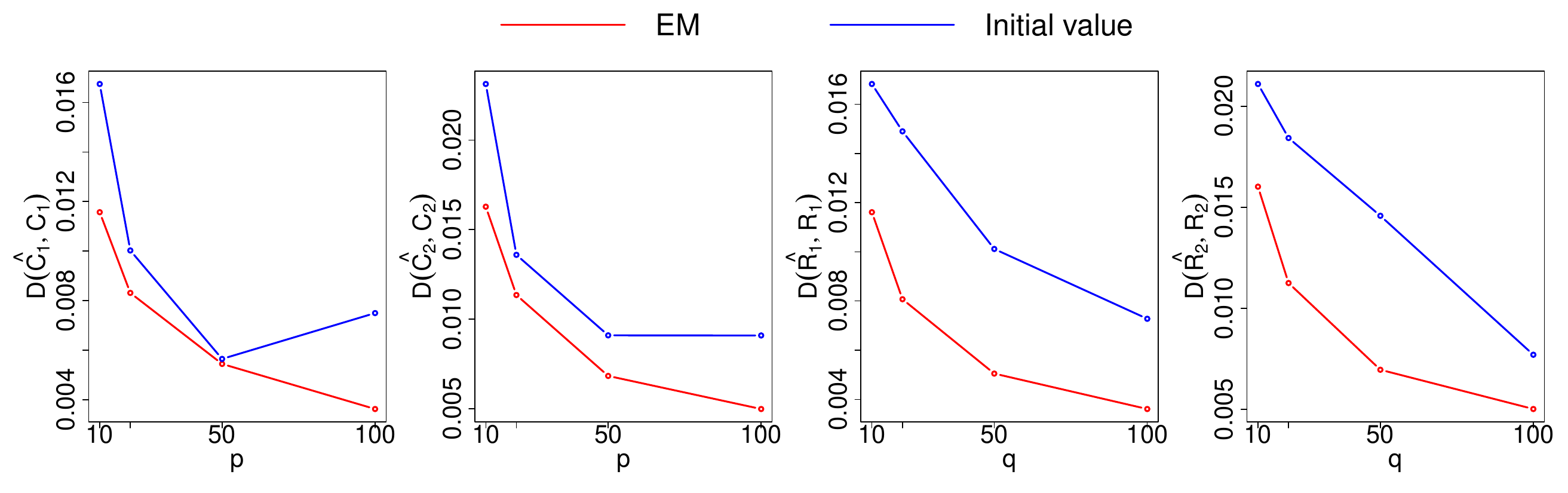}}\\
             \vspace{-18pt} 
			\caption{\small Averaged estimation errors of $R,C$ under different setting of $p$ and $q$. }\label{Est_Acu}
			\normalsize
		\end{figure}

		Figure \ref{Hete} illustrates the performance of the estimates under different levels of factor intercepts ($b = 0.2,0.4,0.6,0.8,1$), with fixed dimensions $p=q=10$ and a time length of $n = 300$.
		The results indicate that: (1) the estimation accuracy of the factor loadings $R$, $C$, and the observation error variance $\sigma^2$ (i.e., parameters in the observation equation) benefits from larger intercept values $b$, as we explained before; (2) the estimation errors of the state equation parameters--including $B$, $\Phi$, $\Gamma$, and $\sigma_\epsilon^2$--tend to increase with larger values of $b$; (3) since $b$ only affects the intercept in the first state, the estimation accuracy of the latent factors $F_t$ in the first state deteriorates as $b$ grows, while the estimation performance in the second state remains relatively stable.

		\begin{figure}[htbp!]
			\centering
			\resizebox{\textwidth}{!}{\includegraphics{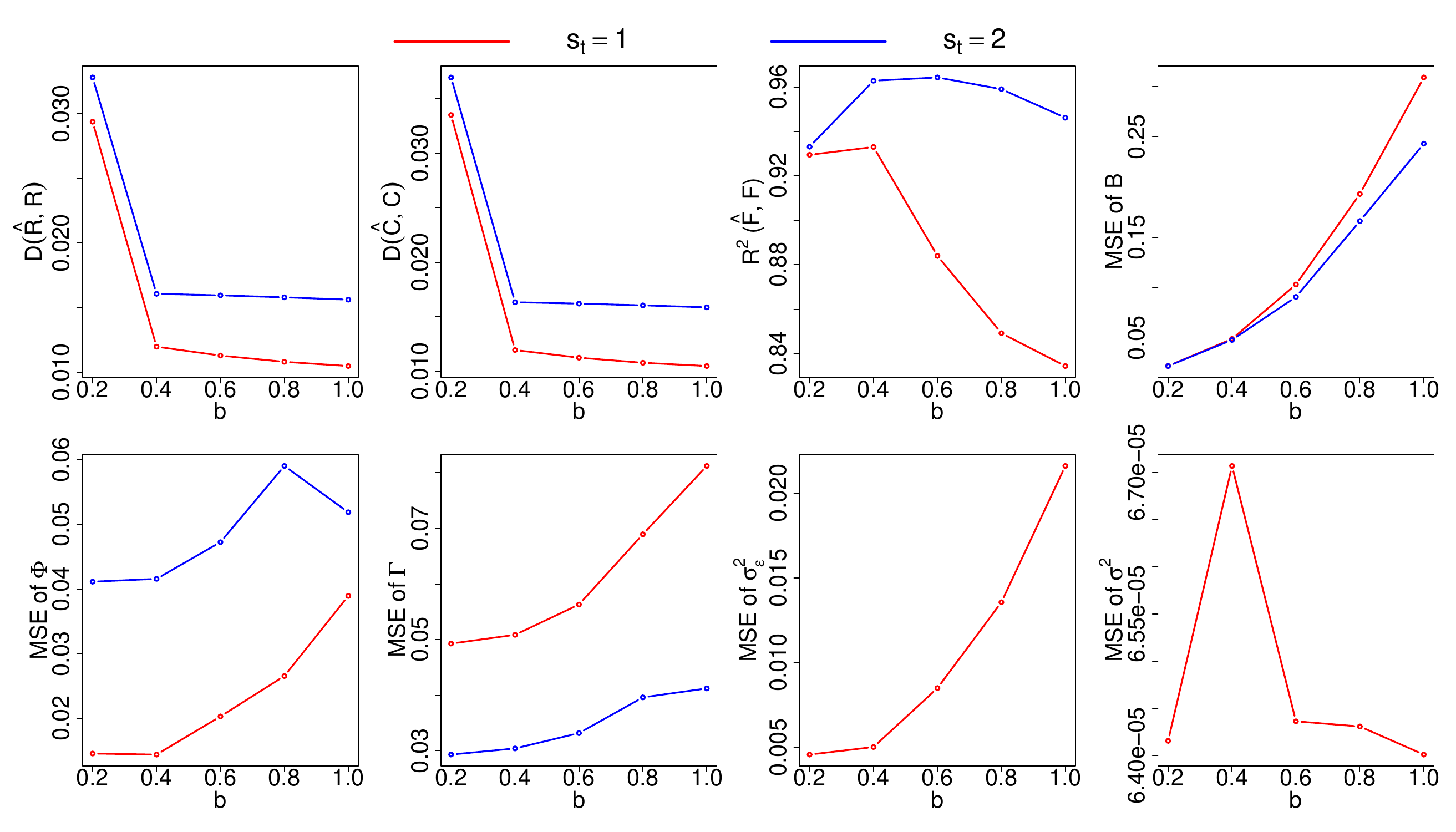}}\\
              \vspace{-20pt} 
			\caption{\small Averaged estimation errors  of  $R,C,F,B,\Phi,\Gamma,\sigma_{\varepsilon}^2,\sigma^2$ under different setting of $b$ ($p=q=10,n=300$).}\label{Hete}
			\normalsize
		\end{figure}
		
		\subsection{ Robustness Analysis}
In this subsection, we evaluate the robustness of the proposed method across varying data structures and error distributions by comparing its model-fitting performance against two established matrix factor model estimators: the PE method  \citep{2021Projected} and the Q-MLE approach \citep{2024Quasi}. We generate data under three distinct model specifications and two error distribution assumptions. Estimation performance is systematically compared using the mean squared error (MSE) of the common  $X_t=R_{s_t}F_tC_{s_t}'$:
 $\text{MSE} = \frac{1}{npq}\sum_{t=1}^n\|\widehat{X}_t - X_t\|_F^2$. The data-generating models are:

(a) \textbf{Ms-DMF}: Both observation and state equations depend on $s_t$ (Model 1);

(b) \textbf{State-only switching Ms-DMF}: Only state equations are regime-dependent while observation equations remain static (Model 2);

(c) \textbf{Static matrix factor model}: The matrix factor model without regime switching \cite{2024Quasi}.

Models (a) and (b) are generated following the simulation settings described in subsection 5.1, while Model (c) follows the data-generating process described in \cite{2024Quasi}. To evaluate methodological robustness under non-Gaussian conditions, we generate observation errors $E_t$ for all models under two distinct distributions: standard normal and chi-square with 1 degree of freedom.

 Figure \ref{robust} presents the average MSEs of the common component $X_t$ under a fixed time length $n = 300$, across different matrix dimensions ($p = q = 10, 20, 50, 100$) and various model specifications, comparing the performance of the three estimation methods. The results show that:
		(1) The fitting accuracy of $X_t$ using the one-step EM method is influenced by the matrix dimensions $p$ and $q$; in all model settings, the estimation improves as $p$ and $q$ increase;
		(2) Under model setting (a), the one-step EM method significantly outperforms the other two methods, while under settings (b) and (c), the performance of all three methods is comparable, with only minor differences.
\begin{figure}[!htbp]
			\centering
             \setlength{\intextsep}{6pt} 
		{\includegraphics[width=\textwidth, height=7.5cm]{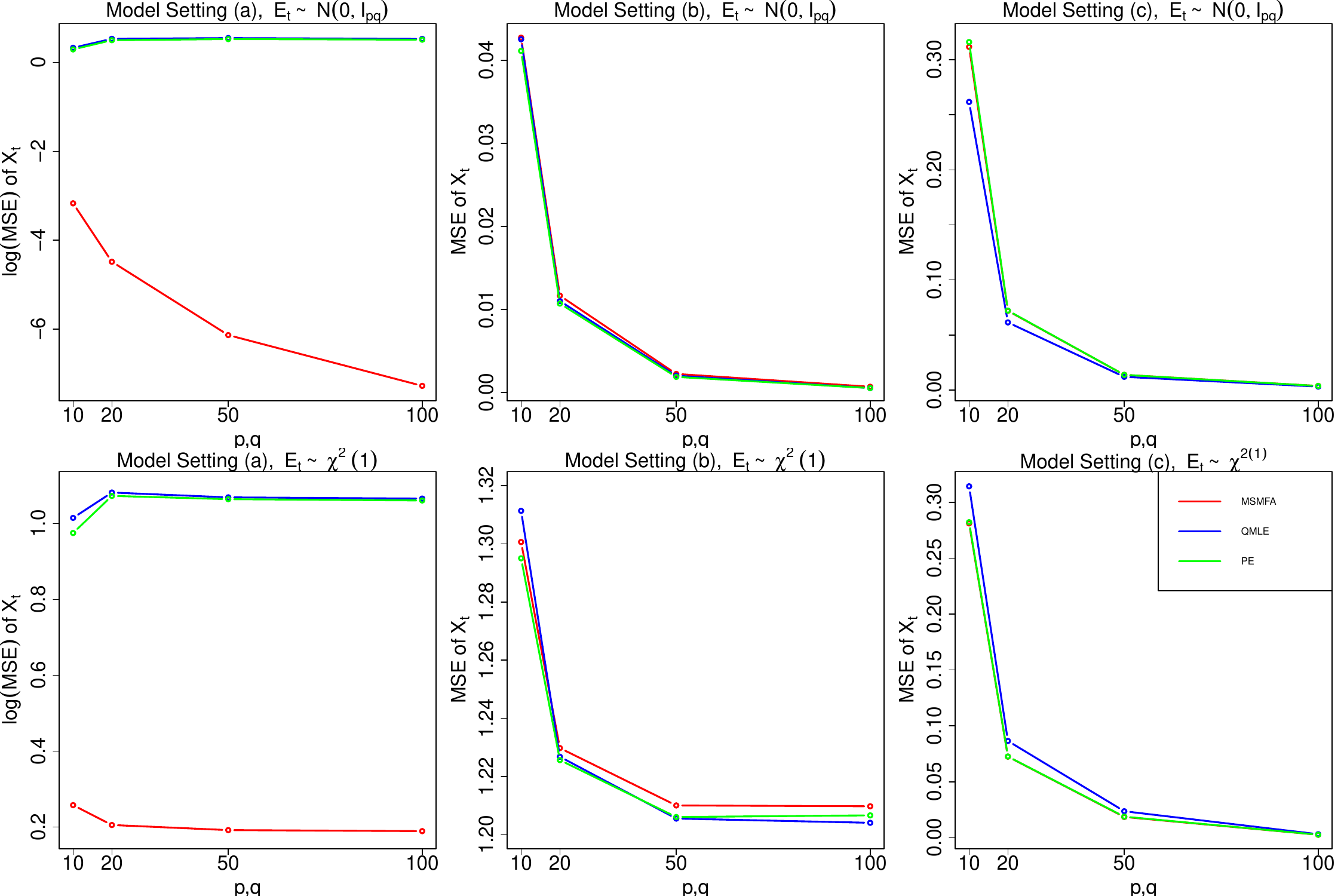}}\\
          \vspace{-18pt} 
			\caption{\small  The average MSEs of the common component $X_t$ under a fixed time length $n = 300$, across different matrix dimensions ($p = q = 10, 20, 50, 100$) and various model specifications.}\label{robust}
			\normalsize
		\end{figure}

\section{ Analysis of the International Trade Flow Data}

We analyze monthly multilateral import and export volumes of commodity goods between 22 economies from 2000 to 2024. Data are sourced from the International Monetary Fund's Direction of Trade Statistics (DOTS), which reports monthly bilateral trade values. The study uses total import values (CIF basis, USD-denominated). Missing import data for certain economies are imputed using corresponding export records from trading partners, consistent with IMF methodologies.
Economies included (Alphabetical Order):
Australia, Canada, China mainland, Denmark, Finland, France, Germany, Indonesia, Ireland, Italy, Japan, Korea, Malaysia, Mexico, Netherlands, New Zealand, Singapore, Spain, Sweden, Thailand, United Kingdom and United States.

Let $X_t=(x_{t,ij})_{i=1, j=1}^{22,22}, t=1,\cdots,300$ be the observed trade flow network, where $x_{t,ij}$ denotes the trade volume from country $i$ (exporter) to country $j$ (importer) in month $t$.  For each time series $(x_{t,ij})_{t=1}^{300}$, the seasonal component and trend component are removed by the Loess smoothing method with the "stl" function in R, and the irregular components are denoted as $y_{t,ij}$.  In the following, we will fit $Y_t=(y_{t,ij})_{i=1, j=1}^{22,22}$ by the proposed Ms-DMF model.

The initial step involves determining the number of factors in rows and columns. Consistent with the model's assumption of state-invariant factor dimensions, we employ the iterative algorithm from \cite{2021Projected}, which yields $k_1=2$ (row factors) and $k_2=1$ (column factors). For interpretability, we fix the number of regimes at $M=2$.

Table \ref{RC} reports the estimated factor loading matrices across regimes. Crucially, these matrices exhibit significant state-dependent variation. For example, China emerges as an export hub exclusively in State 2. The dominant export and import hubs under each regime are summarized in Table \ref{hub}.

\begin{table}[htbp]
  \centering
  \caption{The estimated factor loading matrices across different states.}
  \label{RC}
    \scriptsize
    \renewcommand{\arraystretch}{0.8} 
    \resizebox{\textwidth}{!}{
  \begin{tabular}{crrrrrr}
    \toprule
    & \multicolumn{2}{c}{$s_t=1$} & \multicolumn{2}{c}{$s_t=2$} & $s_t=1$ &$s_t=2$  \\
    \cmidrule(lr){2-3} \cmidrule(lr){4-5} \cmidrule(lr){6-7}
Country    &$R_{11}$ & $R_{12}$ & $R_{21}$ & $R_{22}$ & $C_{11}$ & $C_{21}$ \\
    \midrule
  Australia&	\cellcolor{sunshine}-5.92&	0.59&	0.12&	0.60&	-2.83&	12.77\\
Canada&	-0.34&	1.97&	\cellcolor{mint}-7.81&	5.17&	-10.34&	18.28\\
China&	0.08&	0.62&	-0.48&	\cellcolor{mint}43.03&	\cellcolor{lavender}-233.12&	8.58\\
Denmark&	-0.01&	0.22&	-0.08&	-0.19&	-0.10&	1.68\\
Finland&	-0.01&	0.14&	-0.11&	0.13&	-1.00&	1.19\\
France&	-0.02&	1.79&	-1.00&	-0.39&	-4.40&	11.5\\
Germany&	0.46&	7.30&	-2.07&	-0.72&	-4.27&	18.04\\
Indonesia&	-0.5&	2.47&	-0.29&	0.35&	-5.10&	8.97\\
Ireland&	-0.03&	0.44&	-0.29&	-0.27&	-0.70&	3.15\\
Italy&	-0.07&	1.35&	-1.03&	-0.02&	-1.84&	10.6\\
Japan&	-0.04&	\cellcolor{sunshine}14.35&	-2.13&	0.44&	0.40&	20.16\\
Korea&	0.59&	\cellcolor{sunshine}17.05&	-0.71&	0.25&	-12.66&	14.6\\
Malaysia&	0.74&	5.61&	-0.70&	0.31&	-4.05&	4.35\\
Mexico&	-0.01&	0.69&	\cellcolor{mint}-6.97&	-0.96&	-11.75&	19.04\\
Netherlands&	0.06&	0.86&	-0.60&	-0.45&	-2.59&	22.52\\
New Zealand&	-0.27&	0.10&	-0.04&	0.00&	-1.22&	1.51\\
Singapore&	-0.44&	1.68&	0.14&	0.49&	-9.53&	7.68\\
Spain&	-0.09&	0.46&	-0.35&	0.02&	-4.08&	9.31\\
Sweden&	-0.06&	0.42&	-0.22&	0.18&	-2.14&	2.31\\
Thailand&	-0.75&	2.52&	-0.38&	0.72&	-5.43&	7.44\\
United Kingdom&	0.04&	1.13&	-1.03&	0.09&	-5.73&	14.94\\
United States&	-0.48&	\cellcolor{sunshine}11.24&	-1.27&	0.41&	-10.92&	\cellcolor{lavender}203.87\\
    \bottomrule
  \end{tabular}
  }
\end{table}

\begin{table}[htbp]
  \centering
  \caption{The dominant export and import hubs under each regime.}
  \label{hub}
     \renewcommand{\arraystretch}{0.8} 
  \resizebox{\textwidth}{!}{
  \scriptsize
  \begin{tabular}{lll}
    \toprule
\textbf{State 1:}  & & \\
 Export Hub 1  & ($R_{11}$):  & Primarily comprises Australia \\
Export Hub 2& ($R_{12}$): & Dominated by Japan, Korea, and the United States\\
Import Hub &($C_{11}$): & Centered on China\\
    \midrule
\textbf{State 2:}&&\\
Export Hub 1 &($R_{21}$):& Led by Canada and Mexico\\
Export Hub 2& ($R_{22}$):& Driven primarily by China\\
Import Hub &($C_{21}$): & Led by the United States\\
    \bottomrule
  \end{tabular}
  }
\end{table}

Figure \ref{SS} displays the monthly regime distribution over the 25-year period (300 months). State 2 predominated, occurring in 199 months (66.3\% of observations), while State 1 occurred in 101 months. Consequently, during the State 2 periods, which represent approximately two-thirds of the sample, Canada, Mexico, and China consistently functioned as export hubs, while the United States (US) served as the main import hub. It further illustrates that during the 25-year period, State 2 predominantly occurred from March to November, with a particularly strong prevalence in March. In contrast, State 1 was primarily observed during December and January.

{\footnotesize
    \begin{figure}[H]
  \centering
 \includegraphics[width=\textwidth, height=5.5cm]{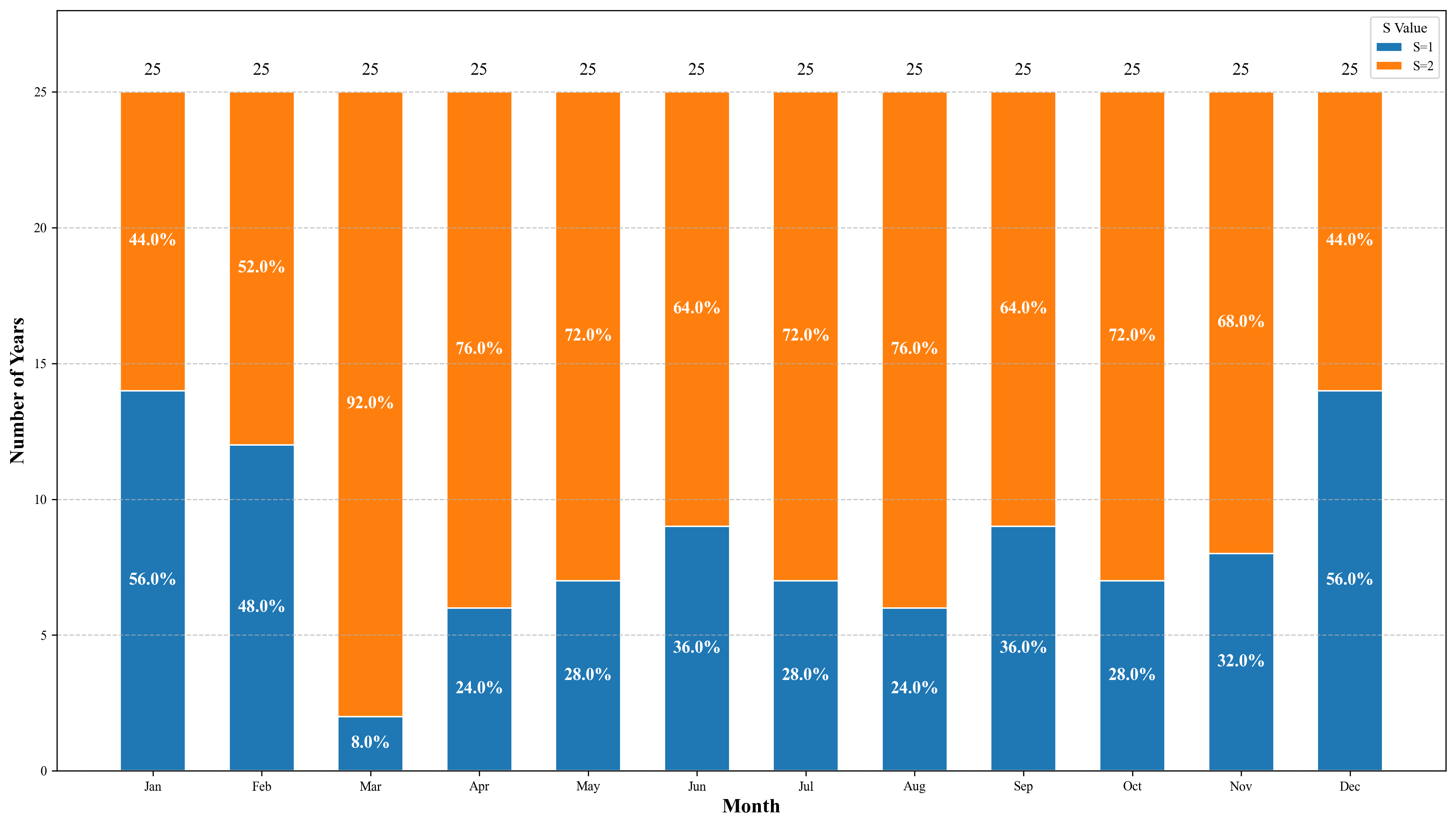}\\
   \vspace{-15pt} 
  \caption{\small  The monthly regime distribution over the 25-year period (300 months).}\label{SS}
\end{figure}
}

Figure \ref{China_out_S} displays the joint trajectory of the state variable and China's detrended/deseasonalized total exports, revealing a systematic association between the regimes and China's export cycles.
For the 2000-2005 period (upper panel), state 2 consistently coincided with declining export phases, while state 1 aligned with rising export phases.
For the 2020-2024 period (lower panel), this cyclical relationship reversed, with state 2 now corresponding to export expansions and state 1 to contractions.

{\footnotesize
    \begin{figure}[H]
  \centering
 \includegraphics[width=\textwidth, height=\textwidth/2]{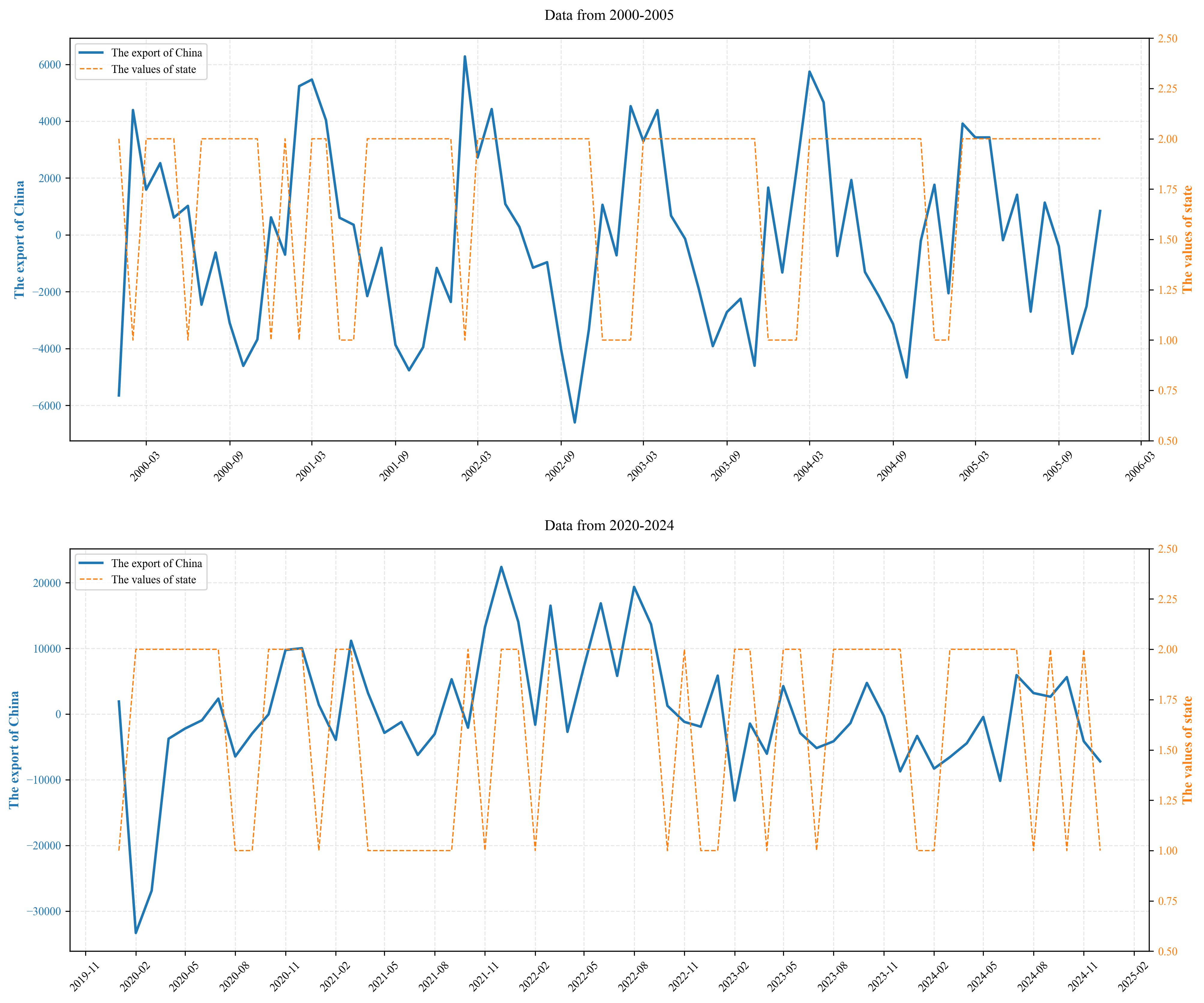}\\
   \vspace{-15pt} 
  \caption{\small  The joint trajectory of the state variable and China's total exports (detrended and deseasonalized).}\label{China_out_S}
\end{figure}
}

\cite{Modeling2024} interprets the matrix factor model applied to trade data by treating the factor process $F_t$ as representing trade volumes between latent hubs, analogous to trade between country clusters. To empirically validate this interpretation, Figure \ref{Fseries} (Upper panel) compares $F_{t,11}$ with Canada-to-U.S. exports. This comparison is made because Canada represents Export Hub 1 while the U.S. represents Import Hub 1 under State 2, as shown in Table \ref{hub}. Correspondingly, the lower panel of Figure \ref{Fseries} compares $F_{t,21}$ with China-to-U.S. exports, since China dominates Export Hub 2 under State 2. Both panels reveal a strong alignment between the trajectories of the latent factors and their corresponding bilateral trade series, empirically confirming the hub interpretation hypothesis.

{\footnotesize
    \begin{figure}[H]
  \centering
 \includegraphics[width=\textwidth, height=7.5cm]{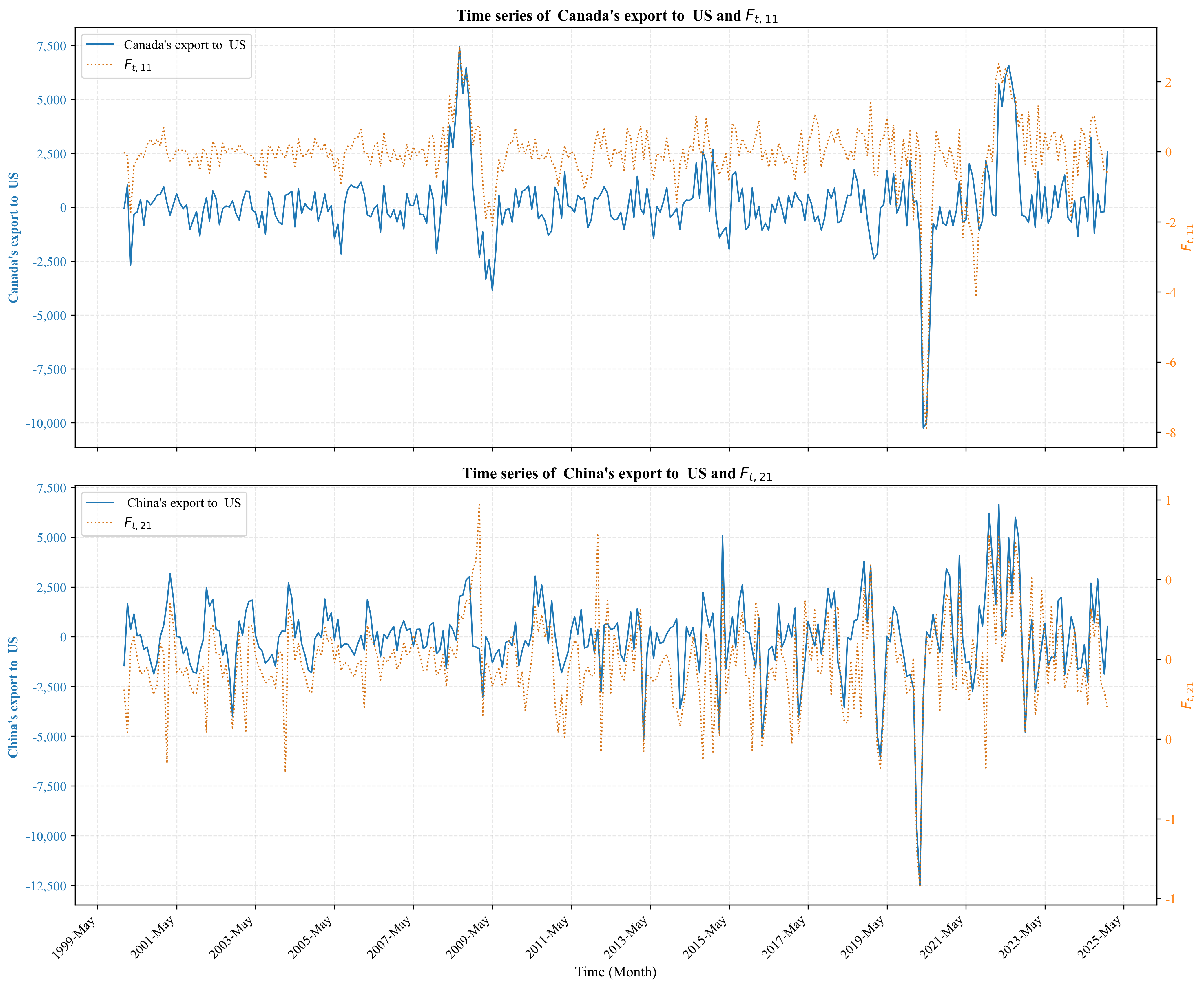}\\
   \vspace{-15pt} 
  \caption{\small Upper panel: The joint time series plots of
$F_{t,11}$ and Canada-to-U.S. exports. Lower panel: the joint time series plots of $F_{t,21}$ and China-to-U.S. exports.}\label{Fseries}
\end{figure}
}

A rolling forecast procedure is also applied to evaluate the forecast accuracy of the Ms-DMF model. For comparison, we consider both the standard matrix factor model with VAR applied to the factor matrix (MFM-VAR) and a simple autoregressive AR(1) model fitted to each individual time series. For each month $t$ from May 2023 to November 2024, we train the models using the 280 most recent observations preceding $t$ and generate one-step-ahead forecasts. Figure \ref{Forecast} displays the mean absolute forecast errors (MAPE) for these methods. The results show that during March, April, July, and August of 2024, the Ms-DMF model significantly outperformed the other two approaches. At all other time points, all three methods demonstrated comparable performance.

{\footnotesize
    \begin{figure}[htbp!]
  \centering
 \includegraphics[width=0.95\textwidth]{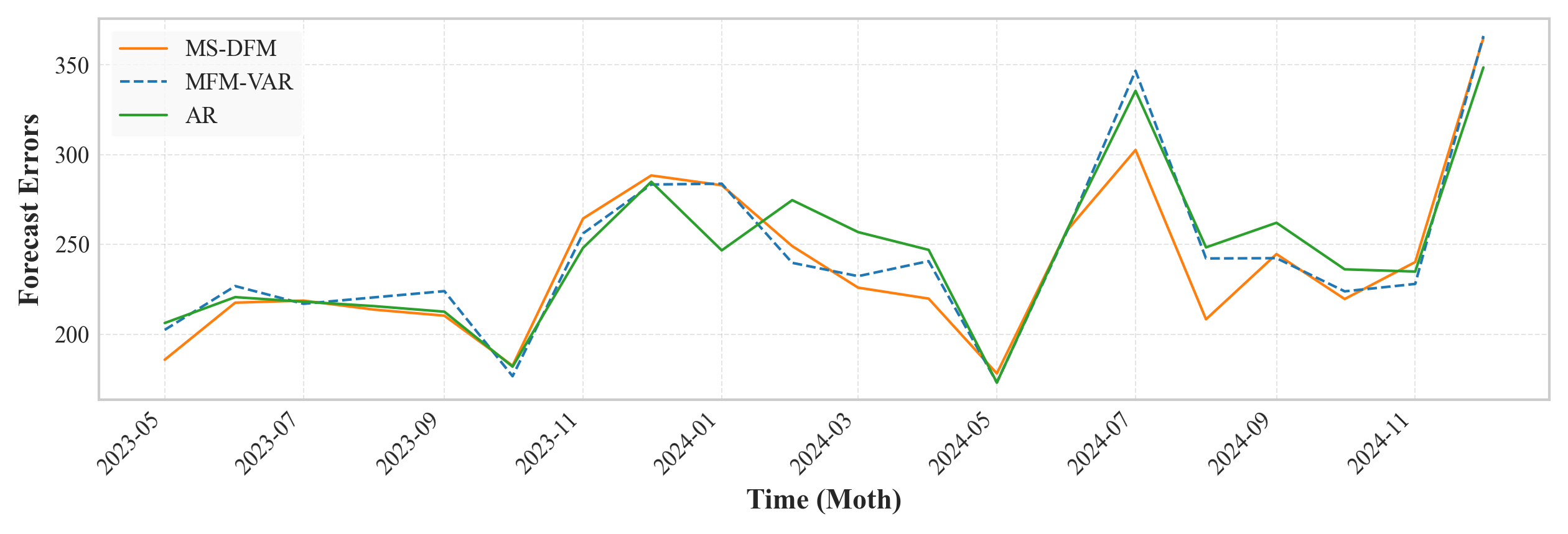}\\
   \vspace{-18pt} 
  \caption{\small The rolling forecast accuracy of the three methods (Ms-DMF, MFM-VAR and AR).} \label{Forecast}
\end{figure}
}

\section{ Conclusions and Discussions}
This study proposes a Markov-switching dynamic matrix factor (Ms-DMF) model designed to simultaneously achieve structural interpretation and dynamic prediction for high-dimensional matrix-valued time series data. By incorporating latent Markov regime states, the model allows both factor loadings and factor processes to switch across regimes, providing a flexible framework for capturing heterogeneity and dynamic patterns. For parameter estimation, the EM algorithm  with the filtering and smoothing process yields QMLEs. Under mild regularity conditions, we establish the consistency, convergence rates, and asymptotic distributions of the proposed estimators. Finally, we apply the Ms-DMF model to international trade flow data. Compared with existing matrix factor models, our approach not only identifies key import/export countries as trade hubs but also reveals regime switches and cyclical patterns among them. This provides new insights into the structural dynamics of global trade networks and highlights the model's broad applicability in economics and finance.

Notably, this study does not address automatic selection of the number of regimes or factors, nor does it consider cases where the number of factors varies across regimes. These important issues remain for future research.

\bigskip

\begin{center}

{\large\bf SUPPLEMENTARY MATERIAL}

\end{center}

\begin{description}
\item[Title:]
 This document provides Supplementary Information for the manuscript. It comprises seven sections: Section A contains omitted  results referenced in Section 3 of the main text, while Sections B through G present the proofs of Theorems 1 to 6, respectively.
\end{description}

\spacingset{1}


\newpage
\spacingset{1.8} 
\begin{center}
\large
\textbf{ Supplement to ``A Markov-switching  dynamic matrix factor model for the high-dimensional matrix-valued time series'' }
\end{center}

\vspace{2cm}
This document provides Supplementary Information for the manuscript titled ``A Markov-Switching Dynamic Matrix Factor Model for High-Dimensional Matrix Time Series". The supplement comprises seven sections: Section A contains omitted estimation results referenced in Section 3 of the main text, while Sections B through G present the proofs of Theorems 1 to 6, respectively.

 \setcounter{equation}{0}
 \setcounter{subsection}{0}
 \setcounter{lemma}{0}
 \renewcommand{\theequation}{A.\arabic{equation}}
 \renewcommand{\thesubsection}{A.\arabic{subsection}}
  \renewcommand{\thelemma}{A.\arabic{lemma}}
\section*{A ~~ Details for estimation}
\subsection{ Proof of Lemma 1}
\begin{proof}

 To simplify the presentation, we introduce the following notations in the proof: \(V \mathrel{\text{:= }} V_{t \mid t- 1}^{\left(m,k\right)}\), \({\Lambda }_k = C_k \otimes  R_k\) and  \( f_{t \mid t- 1}^{\left(m,k\right)}:= f_{t \mid t- 1}\). Since
\begin{align}\label{A1.1}
{\left({\Sigma}_{t\mid t-1}^{(m,k) }\right) }^{-1} = &{\left\lbrack\sigma^2I_{pq} + \Lambda _kV{\Lambda }_{k}^{\top }\right\rbrack  }^{-1} \notag\\
=& \frac{1}{{\sigma}^2}I_{pq} - \frac{1}{\sigma^2}\Lambda_{k}{\lbrack  {\sigma }^2V^{-1} +\Lambda_{k} ^{\top }{\Lambda }_{k}\rbrack}^{-1}\Lambda _k ^\top,
\end{align}
then
\begin{align*}
V_{t \mid  t}^{\left( m,k\right) }=&V-V{\Lambda }_k^\top{\left({\Sigma}_{t\mid  t-1}^{( m,k) }\right) }^{-1 }\Lambda_kV\\
=& V - \frac{1}{\sigma^2}V   {\Lambda }_k^{\top }\Lambda_kV + \frac{1}{\sigma^2}{V}\Lambda_{k}^\top {\Lambda }_k{\lbrack \sigma^2V^{-1} +\Lambda_k^\top\Lambda _k\rbrack  }^{-1}\Lambda _k^\top \Lambda _kV\\
=& V - \frac{1}{\sigma^2}V\Lambda_k^\top\Lambda_kV + \frac{1}{\sigma^2}V{\lbrack \sigma^2V^{-1 }{(\Lambda _k^\top\Lambda_k) }^{-1} +I_r\rbrack  }^{-1}\Lambda_k^\top\Lambda _kV.
\end{align*}

By the equality \({( I + AB) }^{-1}=I-A{( I+BA)}^{-1}B\) , we have
\begin{align*}
V_{t\mid t}^{(m,k)}=&V-\frac{1}{\sigma ^2}V\left\{ {I_r- {\left[ \sigma^2V^{-1}{\left(\Lambda_k^\top\Lambda_k\right) }^{-1} +I_r\right]  }^{-1}}\right\}\Lambda_k^\top\Lambda_kV\\
=& {V} - \frac{1}{\sigma^2}V\left\{{I_r-I_r+V^{-1}{(\Lambda_k^\top\Lambda _k) }^{-1}{\left[I_r+\sigma^2V^{-1}{(\Lambda_k^\top\Lambda_k) }^{-1}\right] }^{-1}\sigma^2}\right\}\Lambda_k^\top\Lambda_k{V}\\
=& V - {(\Lambda _k^\top\Lambda_k)}^{-1}{\left\lbrack  I_r +\sigma^2V^{-1}{\left( \Lambda_k^\top\Lambda_k\right) }^{-1}\right\rbrack}^{-1}{\Lambda }_k^\top\Lambda_{k}V\\
=& V - {(\Lambda_k^\top\Lambda_k)}^{-1}{\left\{I_r-V^{-1}({\Lambda_k^\top\Lambda_k)}^{-1}{\left\lbrack I_r+\sigma^2V^{ -1}{(\Lambda_k^\top\Lambda_k) }^{-1}\right\rbrack}^{-1}\sigma^2\right\}}  \Lambda_k^\top\Lambda_kV
\\
=& {(\Lambda_k^\top\Lambda_k)}^{-1}V^{-1 }{(\Lambda_k^\top\Lambda_k) }^{-1}{\left[ I_r + {\sigma}^2V^{-1}{(\Lambda_k^\top\Lambda_k) }^{-1}\right]  }^{-1}\Lambda_k^\top\Lambda_kV\sigma^2\\
=& (\Lambda_k^\top\Lambda_k)^{-1}{\left[\Lambda_k^\top\Lambda_kV+\sigma^2I_r\right]}^{-1} \sigma^2\Lambda_k^\top\Lambda_{k}V\\
=& {\left[\Lambda_k^\top\Lambda_kV\Lambda_k^\top\Lambda _k+\sigma^2\Lambda_k^\top\Lambda_k\right]}^{-1} \sigma^2\Lambda_k^\top\Lambda_kV\\
=&{\left[ I_r+ \frac{1}{\sigma^2}V\Lambda_k^\top\Lambda_k\right]}^{-1}  V.
\end{align*}
This establishes the first conclusion of Lemma 1.

Next, consider the result of \({f}_{t \mid t}^{\left( m,k\right) }\), where
\begin{align}\label{A1.2}
   f_{t\mid t}^{(m,k)}=f_{t\mid t-1}^{(m,k)}+ V_{t\mid t-1}^{(m,k)}\Lambda_k^\top{\left({\Sigma}_{t \mid t-1}^{(m,k)}\right)}^{-1}\left({y_t-\Lambda_kf_{t\mid t - 1}^{(m,k) }}\right) .
\end{align}
Substituting Equation (\ref{A1.1}) into Equation (\ref{A1.2}) gives
\begin{align*}
f_{t \mid t}^{(m,k)} =& f_{t \mid t- 1} + \frac{1}{\sigma^2}V\Lambda_k^\top({y_t-\Lambda_kf_{t \mid t-1}}) \notag\\
& ~~~~~~~~~- V\frac{1}{\sigma^2}\Lambda_k^\top\Lambda_k{(\sigma^2V^{-1}+\Lambda_k^\top\Lambda_k) }^{-1}\Lambda_k^\top({y_t-\Lambda_kf_{t \mid t-1}})\\
=&f_{t\mid t-1} + \frac{1}{\sigma^2}V\Lambda_k^\top y_t-\frac{1}{\sigma^2}V\Lambda_k^\top\Lambda_kf_{t\mid t- 1} - \frac{1}{\sigma^2}V\Lambda_k^\top\Lambda_k{(\sigma^2V^{-1}+\Lambda_k^\top\Lambda_k)}^{-1}\Lambda_k^\top y_t\\
 &~~~~~~~~~+\frac{1}{\sigma^2}V\Lambda_k^\top\Lambda_k{(\sigma^2V^{-1}+\Lambda_k^\top\Lambda_k)}^{-1}\Lambda_k^\top\Lambda_kf_{t \mid t-1}\\
=& f_{t \mid t-1} +\frac{1}{\sigma^2}V\left\lbrack {I_r-\Lambda_k^{\top }{\Lambda}_{k}{(\sigma^2V^{-1}+\Lambda_k^\top\Lambda_k) }^{-1}}\right\rbrack\Lambda_k^\top y_t\\
&~~~~~~~~-\frac{1}{\sigma^2}V\left\lbrack{I_r-\Lambda_k^\top\Lambda_k{(\sigma^2V^{-1}+\Lambda_k^\top\Lambda_k)}^{ -1}}\right\rbrack\Lambda_k^\top\Lambda_kf_{t\mid t-1}\\
=& {f}_{t \mid t-1} + \frac{1}{\sigma^2}V\left\lbrack{I_r-\left( \sigma^2V^{-1}(\Lambda_k^\top\Lambda_k) ^{-1}+I_r\right) }^{-1}\right\rbrack\Lambda_k^\top y_t \\
&~~~~~~~-\frac{1}{\sigma^2}V\left\lbrack{I_r-{\left(\sigma^2V^{-1}{(\Lambda_k^\top\Lambda_k)}^{-1}+I_r\right)}^{-1}}\right\rbrack\Lambda_k^\top\Lambda_kf_{t\mid t-1}
\end{align*}

Applying the identity \({(I+AB) }^{-1}=I-A{(I+BA) }^{-1}B,~f_{t\mid t}^{(m,k)}\) can be further simplified to
\begin{align*}
f_{t \mid  t}^{\left( m,k\right) } =&f_{t\mid t-1}+\frac{1}{\sigma^2}V\left\lbrack{V^{-1}{(\Lambda_k^\top\Lambda_k)}^{-1}{\left(I_r+\sigma^2V^{-1}{(\Lambda_k^\top\Lambda_k)}^{-1}\right) }^{-1}\sigma^2}\right\rbrack\Lambda_k^\top y_t\\
&~~~~~~~~-\frac{1}{\sigma^2}V\left\lbrack{V^{-1}{(\Lambda_k^\top\Lambda_k)}^{-1}{\left(I_r+\sigma^2V^{-1}{(\Lambda_k^\top\Lambda_k)}^{-1}\right)}^{-1}\sigma^2}\right\rbrack\Lambda_k^\top\Lambda_kf_{t\mid t-1}\\
=& f_{t\mid t-1} + \frac{1}{\sigma^2}{(\Lambda_k^\top\Lambda_k)}^{-1}{\left\lbrack I_r+\sigma^2V^{-1}{(\Lambda_k^\top\Lambda_k)}^{-1}\right\rbrack}^{-1}\sigma^2\Lambda_k^\top y_t\\
&~~~~~~~- \frac{1}{\sigma^2}{(\Lambda_k^\top\Lambda_k) }^{-1}\left\lbrack I_r+\sigma^2V^{-1}{(\Lambda_k^\top\Lambda_k) }^{-1 }\right\rbrack ^{-1} \sigma^2\Lambda_k^\top\Lambda_kf_{t\mid t-1}.
\end{align*}
Since
\begin{align}\label{A1.3}
{(\Lambda_k^\top\Lambda_k)}^{-1}{\left\lbrack I_r+\sigma^2V^{-1}{(\Lambda_k^\top\Lambda_k)}^{-1}\right\rbrack}^{-1}=&{(\Lambda_k^\top\Lambda_k) }^{-1}{\left\lbrack I_r+\frac{1}{\sigma^2}\Lambda_k^\top\Lambda_kV\right\rbrack }^{-1}\Lambda_k^\top\Lambda_kV\frac{1}{\sigma^2}\notag\\
=& {\left\lbrack\Lambda_k^\top\Lambda_k+\frac{1}{\sigma^2}\Lambda_k^\top\Lambda_kV\Lambda_k^\top\Lambda_k\right\rbrack}^{-1}\Lambda_k^\top\Lambda_kV\frac{1}{\sigma^2}\notag\\
=& {\left\lbrack I_r+\frac{1}{\sigma^2}V\Lambda_k^\top\Lambda_k\right\rbrack}^{-1}V\cdot\frac{1}{\sigma^2} \notag\\
=& V_{t \mid t}^{(m,k)}\frac{1}{\sigma^2},
\end{align}
where the first equality  holds due to \({(I+A^{-1})}^{-1} = {(A+I)}^{-1}A\), then we have

\[
f_{t\mid t}^{(m,k)}=f_{t\mid t-1}^{(m,k)}+\frac{1}{\sigma^2}V_{t\mid t}^{(m,k)}\left( {\Lambda_k^\top y_t-\Lambda_k^\top\Lambda_kf_{t\mid t-1}^{(m,k)}}\right) .
\]
\end{proof}

\subsection{ The derivation of the reduced posteriors \(f_{t |  t}^{(k) }\) and \(V_{t |  t}^{(k) }\)}

Consider $ f_{t \mid  t}^{(k) }$.
Since
\begin{align*}
f_{t \mid  t}^{(k)} &= \mathbb{E}[f_t \mid  \mathcal{Y}_t,s_t = k;\theta] = \mathop{\sum }\limits_{i = 1}^M \mathbb{E}[f_t \mid  \mathcal{Y}_t,s_{t-1} = i,s_t = k] \cdot  P_r[s_{t-1} = i \mid  \mathcal{Y}_t,s_t = k] \\
&= \mathop{\sum }\limits_{i = 1}^M f_{t|t}^{( i,k) }\frac{P_r( s_{t-1} = i,s_t= k \mid \mathcal{Y}_t) }{P_r( s_t = k \mid  \mathcal{Y}_t) },
\end{align*}
then we have
$$
f_{t \mid  t}^{(k) } = \mathop{\sum }\limits_{i = 1}^M f_{t \mid t}^{(i,k)}{w_{t-1, t \mid  t}^{(i,k) }}/{w_{t \mid t}^{( k) }}.
$$

Consider \(V_{t|t}^{(k)}\).
\begin{align*}	
V_{t|t}^{(k) } &= \mathbb{E}\left[(f_t - f_{t|t}^{(k)}) ( f_t - f_{t |t}^{(k)})^{\top} \mid  s_t = k,\mathcal{Y}_t\right] \\
&= \mathop{\sum }\limits_{i = 1}^M \mathbb{E}\left[  (f_t - f_{t | t}^{(k)}) (f_t - f_{t | t}^{(k)})^{\top } \mid  s_t = k,s_{t - 1} = i , \mathcal{Y}_t\right] P_r \left\lbrack s_{t - 1} = i \mid s_t = k,\mathcal{Y}_t \right\rbrack \\
&= \mathop{\sum }\limits_{i = 1}^M \mathbb{E}\left[ ( f_t - f_{t | t}^{(i,k)} + f_{t | t}^{(i,k) } - f_{t | t}^{(k)}) (f_t - f_{t| t}^{(i,k)} + f_{t | t}^{( i,k)} - f_{t | t}^{(k)})^{\top }\right. \\
& \quad \left.\mid s_{t - 1} =i,s_t = k, \mathcal{Y}_t\right] \frac{w_{t -1,t|t}^{( i,k)}}{w_{t|t}^{(k)}} \\
&= \mathop{\sum }\limits_{i = 1}^M\left[ V_{t|t}^{(i,k)} + ( f_{t |t}^{(i,k)} - f_{t|t}^{(k)}) ( f_{t|t}^{( i,k)} - f_{t|t}^{(k)})^{\top }\right]  \frac{w_{t - 1,t|t}^{(i,k)}}{w_{t|t}^{(k)}}.
\end{align*}

\subsection{ The filtering of \(s_t\)}
The argument below follows \cite{Kim1994}.

Step (1). Calculate
\begin{align*}
w_{t-1, t \mid t - 1}^{(i,k)} &\triangleq  P_r\left\lbrack s_{t - 1} =i,s_t = k \mid \mathcal{Y}_{t - 1}\right\rbrack  \\
&=P_r\left\lbrack s_t = k \mid s_{t - 1} = i\right\rbrack  \mathop{\sum }\limits_{i' = 1}^M P_r\left\lbrack s_{t - 2} = i',s_{t - 1} = i \mid \mathcal{Y}_{t - 1}\right\rbrack  \\
&= p_{ik}\mathop{\sum }\limits_{i^{' } = 1}^M w_{t - 2,t - 1|t - 1}^{(i^{'},i)}.
\end{align*}

Step(2). Calculate the joint density function of \(y_t\) and \(( s_{t - 1},s_t)\) :
\begin{align*}
f(y_t, s_{t - 1} = i,s_t = k \mid \mathcal{Y}_{t-1}) &= f(y_t \mid s_{t-1} = i,s_t = k,\mathcal{Y}_{t - 1}) P_r\left\lbrack s_{t - 1} = i,s_t = k \mid \mathcal{Y}_{t - 1}\right\rbrack  \\
&= w_{t-1,t \mid t-1}^{(i,k)}f(y_t \mid s_{t-1}=i,s_t = k,\mathcal{Y}_{t - 1}),
\end{align*}
where $$f(y_t \mid s_{t-1} = i,s_t =k,\mathcal{Y}_{t - 1}) = (2\pi)^{-\frac{pq}{2}}\left| \Sigma_{t \mid t-1}^{(i,k)}\right|^{-\frac{1}{2}}\exp\left\{-\frac{1}{2}\eta_{t|t - 1}^{( i,k)}\left( \Sigma_{t | t - 1}^{(i,k)}\right)^{-1} \eta_{t \mid t - 1}^{( i,k)}\right\},$$  and its detailed calculation is given in Lemma \ref{lemaA1}.

Step (3). Calculate
\begin{align}
    w_{t-1,t|t}^{(i,k)} &= P_r\left\lbrack s_{t-1} = i,s_t = k \mid \mathcal{Y}_t\right\rbrack = \frac{f(y_t,s_{t-1} = i,s_t=k \mid \mathcal{Y}_{t - 1})}{f(y_t \mid \mathcal{Y}_{t-1})} \notag\\
    &= \frac{f(y_t,s_{t-1} = i,s_t=k \mid \mathcal{Y}_{t-1})}{\mathop{\sum }\limits_{i = 1}^M\mathop{\sum }\limits_{k = 1}^M f( {y_t,s_{t-1}=i,s_t=k} \mid \mathcal{Y}_{t-1}) }.
\end{align}

Step(4).  \(w_{t | t}^{(k) } = P_r\left\lbrack  s_t=k \mid \mathcal{Y}_t\right\rbrack\) can be calculated by
$$w_{t | t}^{(k) } = \mathop{\sum }\limits_{i = 1}^M P_r\left\lbrack s_{t - 1} = i,s_t=k \mid \mathcal{Y}_t\right\rbrack = \mathop{\sum }\limits_{i = 1}^M w_{t-1,t|t}^{(i,k)}.$$

\begin{lemma}\label{lemaA1}
    For \(k \in  \lbrack M\rbrack\), we have
\begin{align}
   &f(y_t \mid s_{t-1} = m,s_t = k,\mathcal{Y}_{t - 1}) \nonumber\\
   &={(2\pi)}^{-\frac{pq}{2}}(\sigma^2)^{-\frac{pq}{2}}\mathop{\prod }\limits_{s = 1}^{k_1}\mathop{\prod }\limits_{\ell = 1}^{k_2}(1+\frac{1}{\sigma^2}{\operatorname{d_{s\ell}^v}})^{-\frac{1}{2}} \exp\{-\frac{1}{2}(A_1 - A_2 + A_3 - B_1 + B_2 - B_3) \}, \nonumber
\end{align}
where \(d_{s\ell}^v\) denotes the eigenvalues of \(\Lambda _k^{\top}\Lambda_k V_{t \mid t-1}^{(m,k)}\) with $\Lambda_k=C_k\otimes R_k$, and
\begin{align}
A_1 & = \frac{1}{\sigma^2} \text{Tr}( Y_t^{\top }Y_t), \nonumber\\
A_2 &= \frac{2}{\sigma^2}\text{Tr}( Y_t^{\top}R_kF_{t|t-1}^{(m,k)}C_k^{\top}), \notag\\
A_3 &= \frac{1}{\sigma^2}\text{Tr}\left(C_k F_{t|t-1}^{({m,k})\top}R_k^{\top}R_k F_{t|t-1}^{({m,k})}C_k^{\top}\right),\nonumber\\
B_1 &= \frac{1}{\sigma^4}\text{Vec}^{\top}( R_k^{\top }Y_t C_k) V_{t|t}^{(m,k)}\text{Vec}(R_k^{\top }Y_t C_k), \nonumber\\
B_2 &= \frac{2}{\sigma^4}{\text{Vec}}^{\top}(R_k^{\top }Y_t C_k) V_{t|t}^{(m,k) }\Lambda_k^{\top }\Lambda_k f_{t \mid t-1}^{(m,k)},\nonumber\\
B_3 &= \frac{1}{\sigma^4}f_{t \mid t-1}^{(m,k)\top}\Lambda_k^{\top }\Lambda_k V_{t \mid  t}^{(m,k) }\Lambda_k^{\top }\Lambda_k f_{t \mid t-1}^{(m,k)}. \notag
\end{align}
\end{lemma}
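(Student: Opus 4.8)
The plan is to start from the exact Gaussian conditional law of $y_t$ given $\{s_{t-1}=m,\,s_t=k,\,\mathcal{Y}_{t-1}\}$, which by the Kalman prediction step of Subsection 3.2.1 has mean $y_{t\mid t-1}^{(m,k)}=\Lambda_k f_{t\mid t-1}^{(m,k)}$ and covariance $\Sigma_{t\mid t-1}^{(m,k)}=\Lambda_k V_{t\mid t-1}^{(m,k)}\Lambda_k^\top+\sigma^2 I_{pq}$, with $\Lambda_k=C_k\otimes R_k$, and then to rewrite its determinant factor and quadratic form so that no $pq\times pq$ object remains. For the normalizing constant, I would apply Sylvester's determinant identity: $|\Sigma_{t\mid t-1}^{(m,k)}|=|\sigma^2 I_{pq}+\Lambda_k V_{t\mid t-1}^{(m,k)}\Lambda_k^\top|=(\sigma^2)^{pq}\,|I_r+\sigma^{-2}\Lambda_k^\top\Lambda_k V_{t\mid t-1}^{(m,k)}|=(\sigma^2)^{pq}\prod_{s=1}^{k_1}\prod_{\ell=1}^{k_2}(1+\sigma^{-2}d_{s\ell}^v)$, where the $d_{s\ell}^v$ are the $r=k_1k_2$ eigenvalues of $\Lambda_k^\top\Lambda_k V_{t\mid t-1}^{(m,k)}$ (real and nonnegative, being those of the symmetric PSD matrix $(\Lambda_k^\top\Lambda_k)^{1/2}V_{t\mid t-1}^{(m,k)}(\Lambda_k^\top\Lambda_k)^{1/2}$). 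Taking the $-\tfrac12$ power yields the stated prefactor $(2\pi)^{-pq/2}(\sigma^2)^{-pq/2}\prod_{s,\ell}(1+\sigma^{-2}d_{s\ell}^v)^{-1/2}$.

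For the exponent, set $\eta_{t\mid t-1}^{(m,k)}=y_t-\Lambda_k f_{t\mid t-1}^{(m,k)}$ and invoke the Woodbury form (\ref{A1.1}), $(\Sigma_{t\mid t-1}^{(m,k)})^{-1}=\sigma^{-2}I_{pq}-\sigma^{-2}\Lambda_k\big[\sigma^2 (V_{t\mid t-1}^{(m,k)})^{-1}+\Lambda_k^\top\Lambda_k\big]^{-1}\Lambda_k^\top$. Combining this with the identity already established inside the proof of Lemma 1, namely $V_{t\mid t}^{(m,k)}=(I_r+\sigma^{-2}V_{t\mid t-1}^{(m,k)}\Lambda_k^\top\Lambda_k)^{-1}V_{t\mid t-1}^{(m,k)}=\big[(V_{t\mid t-1}^{(m,k)})^{-1}+\sigma^{-2}\Lambda_k^\top\Lambda_k\big]^{-1}$, we get $\big[\sigma^2(V_{t\mid t-1}^{(m,k)})^{-1}+\Lambda_k^\top\Lambda_k\big]^{-1}=\sigma^{-2}V_{t\mid t}^{(m,k)}$, hence $\eta_{t\mid t-1}^{(m,k)\top}(\Sigma_{t\mid t-1}^{(m,k)})^{-1}\eta_{t\mid t-1}^{(m,k)}=\sigma^{-2}\|\eta_{t\mid t-1}^{(m,k)}\|_F^2-\sigma^{-4}(\Lambda_k^\top\eta_{t\mid t-1}^{(m,k)})^\top V_{t\mid t}^{(m,k)}(\Lambda_k^\top\eta_{t\mid t-1}^{(m,k)})$. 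Finally I would translate both pieces back into matrix form using $\text{vec}(Y_t)^\top\text{vec}(X)=\text{Tr}(Y_t^\top X)$, $(C_k\otimes R_k)\text{vec}(F_{t\mid t-1}^{(m,k)})=\text{vec}(R_kF_{t\mid t-1}^{(m,k)}C_k^\top)$, and $(C_k\otimes R_k)^\top\text{vec}(Y_t)=\text{vec}(R_k^\top Y_tC_k)$: expanding $\sigma^{-2}\|\eta_{t\mid t-1}^{(m,k)}\|_F^2$ produces $A_1-A_2+A_3$, and expanding $\sigma^{-4}(\Lambda_k^\top\eta_{t\mid t-1}^{(m,k)})^\top V_{t\mid t}^{(m,k)}(\Lambda_k^\top\eta_{t\mid t-1}^{(m,k)})$, with $\Lambda_k^\top\eta_{t\mid t-1}^{(m,k)}=\text{vec}(R_k^\top Y_tC_k)-\Lambda_k^\top\Lambda_k f_{t\mid t-1}^{(m,k)}$, produces $B_1-B_2+B_3$, so that the exponent equals $-\tfrac12(A_1-A_2+A_3-B_1+B_2-B_3)$.

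The argument is essentially routine once the two structural reductions are in place; the only genuine care is in the bookkeeping of the powers of $\sigma^2$ and of the signs when the cross term is expanded into three pieces, and in confirming that the matrix inverted by Woodbury is exactly $\sigma^{-2}V_{t\mid t}^{(m,k)}$ — which is immediate from the intermediate identities in Lemma 1 but is the one step that should not be skipped. I do not anticipate any obstacle beyond this bookkeeping; the determinant reduction is a single application of Sylvester's identity.
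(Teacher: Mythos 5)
Your proposal is correct and follows essentially the same route as the paper's proof: Sylvester's identity for the determinant, the Woodbury form of $(\Sigma_{t\mid t-1}^{(m,k)})^{-1}$, the identity $[\sigma^2(V_{t\mid t-1}^{(m,k)})^{-1}+\Lambda_k^\top\Lambda_k]^{-1}=\sigma^{-2}V_{t\mid t}^{(m,k)}$ drawn from Lemma 1 (the paper's equation (A.3)), and vec/trace identities to obtain the six terms. The only cosmetic difference is that you identify the inner inverse as $\sigma^{-2}V_{t\mid t}^{(m,k)}$ once and expand the two quadratic forms compactly, while the paper expands the Woodbury inverse into the six pieces first and then applies the identity term by term.
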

\begin{proof}

Firstly, consider \(|\mathop{\Sigma }_{t\mid t-1}^{(m,k)}|\) , where \(\mathop{\Sigma }_{t\mid t-1}^{(m,k)} = \Lambda_k V_{t|t-1}^{(m,k)}\Lambda_k^{\top } + \sigma^2 I_{pq}.\) By the Sylvester theorem,
\begin{align}\label{A3.1}
\left| {\Sigma}_{t\mid t-1}^{(m,k)}\right| &= (\sigma^2)^{pq}\left| I_{pq} + \frac{1}{\sigma^2}\Lambda_k V_{t|t-1}^{(m,k)}\Lambda_k^{\top} \right|\notag \\
&=(\sigma^2)^{pq}\left| I_{r} + \frac{1}{\sigma^2}\Lambda_k^{\top}\Lambda_k V_{t|t-1}^{(m,k)}\right| \notag\\
&=(\sigma^2)^{pq}\mathop{\prod }\limits_{s = 1}^{k_1}\mathop{\prod }\limits_{l = 1}^{k_2}(1+\frac{1}{\sigma^2}d_{sl}^v).
\end{align}
Next, Consider \(\eta_{ t|t-1 }^{(m,k)\top}(\mathop{\Sigma }_{t \mid t-1}^{(m,k)})^{-1}{\eta}_{t|t-1}^{(m,k)}\). For convenience, we will omit the superscripts $(m, k)$  in the symbols. Define \(V = V_{t|t-1}^{(m,k)}\).   Since \(\Sigma _{t |t-1}^{-1} = \frac{1}{\sigma^2}I_{pq} - \frac{1}{\sigma^2}\Lambda_k{\left\lbrack  \sigma^2 V^{-1} + \Lambda_k^{\top}\Lambda _k\right\rbrack}^{-1}\Lambda_k^{\top}\) , then
\begin{align*}
&\eta_{t|t-1}^{\top}({\Sigma}_{t|t-1})^{-1}\eta_{t|t-1}\notag\\
=& (y_t - \Lambda_k f_{t|t-1})^{\top}\left( \frac{1}{\sigma^2}I_{pq} - \frac{1}{\sigma^2}\Lambda_k{\left\lbrack  \sigma^2 V^{-1} + \Lambda_k^{\top}\Lambda_k\right\rbrack  }^{-1}\Lambda_k^{\top }\right) (y_t - \Lambda_k f_{t|t - 1}) \\
=&\frac{1}{\sigma^2}y_t^{\top}y_t - \frac{2}{\sigma^2}y_t^{\top }\Lambda_k f_{t|t-1} + \frac{1}{\sigma^2}f_{t|t-1}^{\top }\Lambda_k^{\top}\Lambda_k f_{t|t-1} \\
-&\frac{1}{\sigma^2}y_t^{\top}\Lambda_k{\left\lbrack \sigma^2 V^{-1} + \Lambda_k^{\top}\Lambda_k\right\rbrack}^{-1}\Lambda_k^{\top}y_t \\
+& \frac{2}{\sigma^2}y_t^{\top }\Lambda_k{\left\lbrack \sigma^2 V^{-1} + \Lambda_k^{\top}\Lambda_k\right\rbrack  }^{-1}\Lambda_k^{\top}\Lambda_k f_{t|t-1} \\
-&\frac{1}{\sigma^2}f_{t \mid t-1}^{\top}\Lambda_k^{\top}\Lambda_k{\left\lbrack \sigma^2 V^{-1} + \Lambda_k^{\top}\Lambda_k\right\rbrack  }^{-1}\Lambda_k^{\top}\Lambda_k f_{t \mid t-1} \\
\triangleq &A_1 - A_2 + A_3 - B_1 + B_2 - B_3,
\end{align*}
where
\begin{align*}
A_1 &= \frac{1}{\sigma^2}y_t^{\top}y_t = \frac{1}{\sigma^2} \text{Tr}(Y_t^{\top}Y_t), \\
A_2 &= \frac{2}{\sigma^2}y_t^{\top}\Lambda_k f_{t|t-1} = \frac{2}{\sigma^2}y_t^{\top}(C_k \otimes  R_k)f_{t|t-1} = \frac{2}{\sigma^2}y_t^{\top}\text{Vec}(R_k F_{t|t-1}C_k^{\top}) \\
 &= \frac{2}{\sigma^2}\text{Tr}(Y_t^{\top}R_k F_{t| t-1}C_k^{\top}), \\
A_3 &= \frac{1}{\sigma^2}f_{t|t-1}^{\top}\Lambda_k^{\top}\Lambda_k f_{t|t-1} = \frac{1}{\sigma^2}f_{t|t-1}^{\top}(C_k^{\top} \otimes R_k^{\top})(C_k \otimes R_k)f_{t|t-1} \\
&= \frac{1}{\sigma^2}\text{Vec}^{\top}(R_k F_{t|t-1}C_k^{\top})\text{Vec}(R_k F_{t|t-1}C_k^{\top}) \\
&= \frac{1}{\sigma^2}T_r\left\lbrack C_k F_{t|t-1}^{\top}R_k^{\top}R_k F_{t|t- 1}C_k^{\top}\right\rbrack,
\end{align*}
and
\begin{align*}
B_1 &=\frac{1}{\sigma^2}y_t^{\top}\Lambda_k{\left\lbrack \sigma^2 V^{-1} + \Lambda_k^{\top}\Lambda_k\right\rbrack}^{-1}\Lambda_k^{\top}y_t \\
&= \frac{1}{\sigma ^2}y_t^{\top}\Lambda_k(\Lambda_k^{\top}\Lambda_k)^{-1}{\left\lbrack \sigma^2 V^{-1}(\Lambda_k^{\top}\Lambda_k)^{-1}+ I_r\right\rbrack}^{-1}\Lambda_k^{\top}y_t \\
&= \frac{1}{\sigma^2}y_t^{\top}\Lambda_k \cdot \frac{1}{\sigma^2}V_{t|t}^{(m,k)}\Lambda_k^{\top}y_t~~~~~~( \text{ See equation (\ref{A1.3}) }) \\
&= \frac{1}{\sigma^4}\text{Vec}(R_k^{\top}Y_t C_k)V_{t|t}^{(m,k)}\text{Vec}(R_k^{\top}Y_t C_k),
\end{align*}
\begin{align*}
B_2 &= \frac{2}{\sigma^2}y_t^{\top}\Lambda _k{\left\lbrack \sigma^2 V^{-1} + \Lambda_k^{\top}\Lambda_k\right\rbrack }^{-1}\Lambda_k^{\top}\Lambda_k f_{t|t- 1} \\
&=\frac{2}{\sigma^2}y_t^{\top}\Lambda_k  \frac{1}{\sigma^2}V_{t|t}^{(m,k) }\Lambda_k^{\top}\Lambda_k f_{t|t - 1} \\
&=\frac{2}{\sigma^4}\text{Vec}^{\top}(R_k^{\top}Y_t C_k)V_{t|t}^{(m,k)}\Lambda_k^{\top}\Lambda_k f_{t|t-1}, \\
B_3 &=\frac{1}{\sigma^2}f_{t \mid t-1}^{\top}\Lambda_k^{\top}\Lambda_k{\left\lbrack \sigma^2 V^{-1} + \Lambda_k^{\top}\Lambda_k\right\rbrack  }^{-1}\Lambda_k^{\top}\Lambda_k f_{t \mid t-1} \\
&= \frac{1}{\sigma^2}f_{t|t- 1}^{\top}{\Lambda }_{k}^{\top }\Lambda_k(\Lambda_k^{\top}\Lambda_k)^{-1}{\left\lbrack \sigma^2 V^{-1}(\Lambda_k^{\top}\Lambda_k)^{-1} + I_r\right\rbrack}^{-1}\Lambda_k^{\top }\Lambda_k f_{t|t-1} \\
&= \frac{1}{\sigma^2}f_{t|t-1}^{\top }\Lambda_k^{\top}\Lambda_k\frac{1}{\sigma^2}V_{t|t}^{(m,k)}\Lambda_k^{\top}\Lambda_kf_{t|t - 1}.
\end{align*}
With the above results, we finally get the results of Lemma \ref{lemaA1}.

\end{proof}

\subsection{ The filtering and smoothing algorithm }
Algorithm 2 summarizes the filtering and smoothing process for
$f_t$ and $s_t$, as detailed in Section 3.2 of the main text.

\begin{algorithm}
\caption{The  Filtering  and  Smoothing Algorithm}
\begin{algorithmic}
\scriptsize
\STATE
\label{filter}

\begin{description}
\item[Input:]   Data $(Y_t)_{t=1}^n$,  and parameter $ ~\theta$.
\item[Output:] $\{f_{t|n}^{(k)}, V_{t|n}^{(k)}, w_{t|n}^{(k)}, w_{t-1,t|n}^{(i,k)}:  t\in[n], i, k\in[K]\}$.
\item[Initialization:]  For $i,k \in [K],~f_{0|0}^{(i)}=0, ~V_{0|0}^{(i)}=\textbf{0},~ w_{-1,0|0}^{(i,k)}=0.$
\item[Filtering:] ~
\begin{description}
\item[Conditional Kalman filtering of $f_t$ $(t = 1,  \cdots, n)$ ]~

    \(	{f}_{t \mid  t - 1}^{( i,k) } = {\beta }_{k} + \left( {{\Gamma }_{k} \otimes  {\Phi }_{k}}\right) f_{t - 1 \mid  t - 1}^{( i) }\)

\(	{V}_{t |t- 1}^{( i,k) } = \left( {{\Gamma }_{k}\otimes  {\Phi }_{k}}\right) V_{t -1|t- 1}^{( i) }\left( {{\Gamma }_{k} \otimes  {\Phi }_{k}}\right)  + {\sigma }_{\epsilon}^{2}I_r\)

\(	{V}_{t \mid  t}^{( i,k) } = {\left\lbrack  I_r + \frac{1}{{\sigma }^{2}}V_{t \mid  t - 1}^{( i,k) }{\Lambda }_{k}^{\top }{\Lambda }_{k}\right\rbrack  }^{-1}V_{t \mid  t - 1}^{( i,k ) }\)

\(  {f}_{t \mid t}^{( i,k) } = f_{t \mid t - 1}^{( i,k) } + \frac{1}{{\sigma }^{2}}V_{t \mid t}^{( i,k) }\left\lbrack  {\text{vec}\left( {R_{k}^{\top }Y_t C_k}\right)  - {\Lambda }_k^{\top}{\Lambda }_kf_{t \mid t - 1}^{(i,k) }
}\right\rbrack\)

\item[Filtering process of $s_t$ $(t = 1,  \cdots, n)$ ]~

\(	{w}_{t - 1,t \mid t - 1}^{( i,k) } =p_{ik}\mathop\sum \limits_{{{i'} = 1}}^Mw_{t - 2,t - 1 \mid  t - 1}^{( {i'},i) }\)

\(	{w}_{t \mid  t - 1}^{( k) } = \mathop\sum \limits_{i = 1}^Mw_{t - 1,t \mid  t - 1}^{( i,k) }\)

\(	{fy}_{t-1,t \mid t- 1}^{( i,k) } = {( 2\pi ) }^{-\frac{pq}{2}}{\left( {\sigma}^{2}\right) }^{-\frac{pq}{2}}\mathop\prod \limits_{s = 1}^{k_1}\mathop\prod \limits_{\ell = 1}^{k_2}\left( {1 +{ \frac{1}{{\sigma}^{2}}d_{se}^{v}}}\right) \exp  ^{-\frac{1}{2}\left( {A_1 - A_2 + A_3 - B_1 + B_2 - B_3}\right) },\)
where \(d_{se}^v\), \({A}_1,A_2,A_3,B_1,B_2\) and \(B_3\) are given in Section A.3 of the Supplementary.

\({w}_{t - 1,t \mid  t}^{( i,k) } = \frac{f{y}_{t - 1,t \mid  t - 1}^{( i,k) } {w}_{t - 1,t \mid t - 1}^{( i,k) }}{\mathop\sum\limits_{{i' = 1}}^M\mathop\sum \limits_{{k' = 1}}^Mf{y}_{t - 1,t \mid  t - 1}^{(i',k') } {w}_{t - 1,t \mid t - 1}^{( i',k') }}
\)

\(
{w}_{t \mid t}^{( k) } = \mathop\sum \limits_{{i = 1}}^Mw_{t - 1,t \mid t}^{( i,k) }
\)

 \item[Collapsed filtering of $f_t$ $(t = 1,  \cdots, n)$]~

\(
{f}_{t \mid  t}^{( k) } = \mathop\sum \limits_{{i = 1}}^Mw_{t-1,t \mid  t}^{( i,k) }f_{t \mid  t}^{( i,k) }/{w}_{t \mid  t}^{( k) }
\)

\(
{V}_{t \mid  t}^{( k) } = \mathop\sum \limits_{{i = 1}}^Mw_{t-1,t \mid  t}^{( i,k) }{\left[  V_{t \mid t}^{( i,k) } +\left( f_{t \mid  t}^{( k) }-f_{t \mid  t}^{( i,k) }\right)\left( f_{t \mid  t}^{( k) }-{f}_{t \mid  t}^{( i,k) }\right)^{\top} \right]  } /w_{t \mid  t}^{( k) }
\)
\end{description}
\item[Smoothing:] ~
\begin{description}
\item[Smoothing of $s_t$ ( $t = n - 1, n - 2, \cdots, 1$)]~

    \(w_{t,t + 1 \mid  n}^{(j,k)} = {w_{t + 1 \mid  n}^{(k)}w_{t \mid  t}^{(j)}p_{jk}}/{w_{t + 1 \mid  t}^{(k)}}, \text{where}~~ w_{t + 1 \mid  t}^{(k)} = \mathop\sum \limits_{{i = 1}}^Mw_{t,t + 1 \mid  t}^{(i,k)} \)

\(w_{t | n}^{(j)} = \mathop\sum \limits_{k = 1}^Mw_{t,t + 1 \mid  n}^{(j,k)}\)

\item[Smoothing of $f_t$ ( $t = n - 1, n - 2, \cdots, 1$)]~

 \(f_{t \mid  n}^{( j,k) } = f_{t \mid  t}^{(j) } + G_{t}^{( j,k) }\left( {f_{t + 1 \mid  n}^{(k)} - f_{t +1 \mid  t}^{( j,k) }}\right) , \text{where}~ G_{t}^{( j,k) } = V_{t \mid  t}^{(j)}(\Gamma_k \otimes \Phi_k){\left(  V_{t +1 \mid  t}^{( j,k) }\right)^{-1}  }\)

\(
V_{t | n}^{( j,k) } = V_{t | t}^{(j)} + G_{t}^{( j,k) }\left(  {V_{t + 1|n}^{(k)} - V_{t + 1|t}^{( j,k) }}\right)  {G_{t}^{( j,k) }}^{\top }
\)

\(
f_{t \mid  n}^{(j)} = {\mathop\sum \limits_{k = 1}^Mw_{t,t + 1 \mid  n}^{(j,k)}f_{t \mid  n}^{( j,k) }}/{w_{t \mid  n}^{(j)}}
\)

\(
V_{t | n}^{(j)} = {\mathop\sum \limits_{{k = 1}}^Mw_{t,t + 1 \mid  n}^{(j, k) }\left\{ V_{t | n}^{( j,k) } +\left( f_{t \mid  n}^{(j)}-f_{t \mid  n}^{( j,k) }\right)\left( f_{t \mid  n}^{(j)}-f_{t \mid  n}^{( j,k) }\right)^{\top } \right\}  }/{w_{t | n}^{(j)}}
\)
\end{description}
 \end{description}
\end{algorithmic}
\end{algorithm}

\subsection{  The detailed calculation of the elements in Table 1 }

(1)  Consider $f_{t-1|n}^{*(k)} $.
\begin{align*}
f_{t - 1|n}^{*( k) } &= \mathbb{E}\left\lbrack  {f_{t - 1} \mid  \mathcal{Y}_n,s_t = k;\theta}\right\rbrack \\
&= \mathop\sum \limits_{i = 1}^M \mathbb{E} \left\lbrack  {f_{t - 1} \mid  \mathcal{Y}_n,s_{t - 1} = i,s_t = k;\theta}\right\rbrack  \;P_r\left\lbrack  {s_{t - 1} = i \mid  \mathcal{Y}_n,s_t = k;\theta}\right\rbrack \\
&= \mathop\sum \limits_{i = 1}^Mf_{t - 1|n}^{( i) }\frac{w_{t -1, t | n}^{( i,k) }}{w_{t | n}^{( k) }},
\end{align*}
where \(f_{t -1 |n}^{( i) },w_{t -1, t|n}^{( i,k) }\) and \(w_{t|n}^{( k) }\) are given by the filtering and smoothing algorithm in Section \(A. 4\) .

(2) Consider $p_{t-1|n}^{*(k)}$.
\begin{align*}
p_{t - 1|n}^{*( k) } &= \mathbb{E}\left\lbrack  {f_{t - 1}f_{t - 1}^{\top } \mid \mathcal{Y}_n,s_t = k;\theta}\right\rbrack \\
&= \mathop\sum \limits_{i = 1}^ME\left\lbrack  {f_{t - 1}f_{t - 1}^{\top} \mid  \mathcal{Y}_n,s_{t - 1} = i,s_t = k;\theta}\right\rbrack  P_r\left( {s_{t - 1} = i \mid  \mathcal{Y}_n,s_t  = k;\theta}\right) \\
&= \mathop\sum \limits_{i = 1}^Mp_{t - 1|n}^{( i) }   \frac{w_{t - 1,t|  n}^{( i,k) }}{w_{t | n}^{( k) }} = \mathop\sum \limits_{i = 1}^M\left\lbrack  {V_{t -1|n}^{( i) } + f_{t-1|n}^{( i) }f_{t - 1|n}^{( i) \top} }\right\rbrack    \frac{w_{t -1, t |n}^{( i,k) }}{w_{t |n}^{( k) }},
\end{align*}
where \(V_{t - 1|n}^{( i) },f_{t - 1|n}^{( i) },w_{t - 1,t|n}^{( i,k) }\) and \(w_{t |n}^{( k ) }\) are given by the filtering and smoothing algorithm in Section A.4.

(3) Consider $p_{t,t-1|n}^{(k)}$.
\begin{align*}
p_{t,t - 1 \mid n}^{( k) } &= \mathbb{E}\left\lbrack f_tf_{t - 1}^{\top} \mid \mathcal{Y}_n,s_t = k;\theta\right\rbrack \\
&= {\operatorname{Cov}\left\lbrack f_t,~f_{t - 1} \mid \mathcal{Y}_n,s_t=k;\theta\right\rbrack} + f_{t |n}^{( k)}f_{t - 1 \mid n}^{*( k)^{\top}} \\
&= {\operatorname{Cov}\left\lbrack {\beta}_{k} + \left( {\Gamma}_{k}\otimes{\Phi}_{k}\right) f_{t - 1} + {\varepsilon}_{t},~f_{t - 1} |\mathcal{Y}_n,s_t = k;\theta\right\rbrack} + f_{t | n}^{( k)}f_{t - 1|n}^{*(k)\top} \\
&= \left( {\Gamma}_{k}\otimes{\Phi}_{k}\right)\operatorname{Cov}\left\lbrack f_{t - 1},~f_{t - 1} \mid \mathcal{Y}_n,s_t=k;\theta\right\rbrack + f_{t | n}^{( k)}f_{t - 1|n}^{*(k)\top} \\
&= \left( {\Gamma}_{k} \otimes {\Phi}_{k}\right) \left\lbrack p_{t - 1|n}^{*( k)} - f_{t - 1|n}^{*( k)}f_{t - 1|n}^{*( k) \top}\right\rbrack + f_{t |n}^{( k)}f_{t - 1|n}^{*( k)\top},
\end{align*}
where
\(p_{t - 1|n}^{*( k) }\)
  and
\(f_{t - 1|n}^{*( k) }\)
  are defined in results (2) and (1), respectively, and
 \(f_{t |n}^{( k) }\)
  is given by the filtering and smoothing algorithm in Section A.4.
\subsection{  Initialization }
\begin{enumerate}
    \item Partition $\{1,\cdots,n\}$ into intervals $\mathcal{T}_1,\cdots, \mathcal{T}_a$ of approximately equal length for some $a>0$. For $l = 1,..., a$, fit a matrix factor model to $(Y_t)_{t \in \mathcal{T}_l}$ using the QMLE estimation method \citep{2024Quasi}. Let $\widehat{R}^{(l)}, \widehat{C}^{(l)}$ and $\widehat F_{t}^{(l)}$ be the estimated loading matrices and the factor score matrix.
    \item  Calculate the distance matrices as follows: $D_R=\left(D(R^{(i)}, R^{(j)})\right)_{i,j=1}^a$, where $D(\cdot,\cdot)$ is defined in \cite{2021Projected}, which is used to measure the distance of two matrices. Similarly,  $D_C=\left(D(C^{(i)}, C^{(j)})\right)_{i,j=1}^a$. Let $\mu^{(l)}$ and $\Sigma^{(l)}$ be the sample mean and sample covariance of $(f_t^{(l)})_{t\in \mathcal{T}_l}, l=1,\cdots,a$.
    \item  Partition the $a$ intervals into $M$ clusters by hierarchical clustering based on the distance matrix $D_{R}$, and let $c_l^R$ denote the cluster label of $R^{(l)}$.  Similarly,    perform hierarchical clustering in $D_{C}$ to partition $C^{(1)},\dots,\widehat C^{(a)}$ into $M$ clusters and denote by $c_l^C$ the clustering label of $C^{(l)}$.   In parallel, apply $k$-means to $\{\mu^{(l)},\Sigma^{(l)}\}_{l=1}^a$ and let $c_l^F$ be the clustering label of $(\mu^{(l)},\Sigma^{(l)})$.

\item  if a given interval $\mathcal{T}_l$ is assigned to the same cluster by at least two
of these three labels $(c_l^R, c_l^C, c_l^F )$ , it is assigned to this cluster and denotes the cluster label as $c_l$. For $t\in \mathcal{T}_l$, let $s_t^{(0)} = c_l$.

\item  For each $k=1,\cdots,M$, fitting the matrix factor model on $\{Y_t: t\in \cup _{l=1}^a \mathcal{T}_lI_{(c_l=k)} \}$. Then get the initial value of the loading matrices $R_k^{(0)}, C_k^{(0)}$ and the factor scores $F_t^{(0)}$. The initial values of $\sigma^{(0)2} = \sum_t\|Y_t- R_{s_t^{(0)}}^{(0)} F_t^{(0)} C_{s_t^{(0)}}^{(0)'}\|_F^2/(pqn)$.
Let $\theta_1^{(0)}=\{(R_k^{(0)}, C_k^{(0)} ): k \in [K], \sigma^{(0)2}\}$.

\item   For each $k=1,\cdots,M$, fitting the VAR(1) model on $\{f_t: t\in \cup _{l=1}^a \mathcal{T}_lI_{(c_l=k)} \}$. Let $\beta_k^{(0)}$ and $\Psi_k^{(0)}$  be the intercept vector and autocoefficient matrix, respectively.

\item  Finally, using the Kronecker product approximation on $\Psi_k^{(0)}$ via singular value decomposition \citep{1993Linear} to get $\Phi_k^{(0)}$ and $\Gamma_k^{(0)}$.

\end{enumerate}

	
	


 \setcounter{equation}{0}
 \setcounter{subsection}{0}
 \renewcommand{\theequation}{B.\arabic{equation}}
 \renewcommand{\thesubsection}{B.\arabic{subsection}}
\section*{B ~~ Details for Theorem 1}
\begin{proof}

Step (1): The log likelihood function of \({\mathcal{Y}}_{n}\) is given by
\begin{align}
\ell\left( \theta \right)  & = \log \left\lbrack  {\mathop{\sum }\limits_{{{s}_{0} = 1}}^{M}\cdots \mathop{\sum }\limits_{{{s}_{n} = 1}}^{M}p\left( {{y}_{1},\cdots ,{y}_{n} \mid  {s}_{n},\cdots ,{s}_{0};\theta }\right) {P}_{r}\left( {{s}_{n},\cdots ,{s}_{0};\theta }\right) }\right\rbrack
 \\
 &= \log \left\lbrack  {\mathop{\sum }\limits_{{{s}_{0} = 1}}^{M}\cdots \mathop{\sum }\limits_{{{s}_{n} = 1}}^{M}\mathop{\prod }\limits_{{t = 1}}^{n}p\left( {{y}_{t} \mid  \mathcal{Y}_{t - 1},\mathcal{S}_{n};{\theta }_{1},{\theta }_{2}}\right) } \mathop{\prod }\limits_{{t = 1}}^{n}{P}_{r}\left( {{s}_{t} \mid  {s}_{t - 1};P}\right) {P}_{r}\left( {{s}_{0};{P}_{}}\right)\right\rbrack  \notag
\end{align}
where \(p\left( {{y}_{t} \mid  \mathcal{Y}_{t - 1},\mathcal{S}_{n};{\theta }_{1},{\theta }_{2}}\right)  = {\left( 2\pi \right) }^{-\frac{pq}{2}}{\left| {\Sigma }_{t \mid  t - 1}\right| }^{-\frac{1}{2}}{e}^{-\frac{1}{2}{\left( {y}_{t} - {y}_{t \mid  t - 1}\right) }^{\top}{\Sigma }_{t \mid  t - 1}^{-1}\left( {{y}_{t} - {y}_{t \mid  t - 1}}\right) ,{}}\) and
\begin{align}\label{Th1_1}
    {y}_{t \mid  t - 1} & \triangleq  E\left\lbrack  {{y}_{t} \mid  \mathcal{Y}_{t - 1},\mathcal{S}_{n};{\theta }_{1},{\theta }_{2}}\right\rbrack   = \left( {{C}_{s_t} \otimes  {R}_{s_t}}\right)\underbrace{ E\left\lbrack  {{f}_{t} \mid  \mathcal{Y}_{t - 1},\mathcal{S}_{n};{\theta }_{1},{\theta }_{2}}\right\rbrack }_{:= f_{t|t-1}},\notag\\
    {\Sigma }_{t \mid  t - 1} &\triangleq  \operatorname{Cov}\left\lbrack  {{y}_{t} \mid  \mathcal{Y}_{t - 1},\mathcal{S}_{n};{\theta }_{1},{\theta }_{2}}\right\rbrack \notag\\
    &= \left( {{C}_{s_t} \otimes  {R}_{s_t}}\right)
 \underbrace{\text{Cov}\left\lbrack  {{f}_{t} \mid  \mathcal{Y}_{t - 1},\mathcal{S}_{n};{\theta }_{1},{\theta }_{2}}\right\rbrack}_{:=V_{t|t-1}}  {\left( {C}_{s_t} \otimes  {R}_{s_t}\right) }^{\top } + {\sigma}^{2}{I}_{pq},
\end{align}
 \({f}_{t \mid  t - 1}\) and \({V}_{t\mid {t} - 1}\) can be given by the kalman filter recursions (See Section A of the supplement).

In the following, define \({G}_{t} = \left\{  {\mathcal{Y}_{t},\mathcal{S}_{t}}\right\}\). Let
$${m}_{t} = \arg \mathop{\max}\limits_{{j}} p\left( {{y}_{t} \mid  {G}_{t - 1},{s}_{t} = j;{\widehat{\theta }}_{1},{\widehat{\theta }}_{2}}\right), $$
that is, \(p\left( {{y}_{t} \mid  {G}_{t - 1},{s}_{t} = j;{\widehat{\theta}}_{1},{\widehat{\theta }}_{2}}\right)\) take the maximum when \(j = m_{t}\). Under the model setting, we have
\(p\left( {{y}_{t} \mid  \mathcal{Y}_{t - 1},\mathcal{S}_{n};{\theta }_{1},{\theta }_{2}}\right)  = p\left( {{y}_{t} \mid  {G}_{t - 1},{s}_{t};{\theta }_{1},{\theta }_{2}}\right).\)
Since \(\mathop{\sum }\limits_{{{s}_{t} = 1}}^{M}{P}_{r}\left( {{s}_{t} \mid  {s}_{t - 1};P}\right)  = 1\), for any \({s}_{t - 1}\),
$$\mathop{\sum }\limits_{{s_{t} = 1}}^{M}p\left( {{y}_{t} \mid  {G}_{t - 1},{s}_{t};\widehat{{\theta }}_{1},\widehat{{\theta} }_{2}}\right) {P}_{r}\left( {{s}_{t} \mid  {s}_{t - 1};P}\right) \le p\left( {{y}_{t} \mid  {G}_{t - 1},{s}_{t} = {m}_{t};\widehat{{\theta }}_{1},\widehat{{\theta} }_{2}}\right), $$
then
\begin{align}\label{Th1_0}
    l\left( {\widehat{{\theta }}_{1},\widehat{{\theta }}_{2},P}\right) & = \log \left\lbrack{\mathop{\sum }\limits_{{{s}_{0} = 1}}^{M} \cdots \mathop{\sum }\limits_{{{s}_{n - 1} = 1}}^{M}\mathop{\prod }\limits_{{t = 1}}^{{n - 1}}p\left( {{y}_{t} \mid  {G}_{t - 1},{s}_{t};\widehat{{\theta }}_{1},\widehat{{\theta }}_{2}}\right) \mathop{\prod }\limits_{{t = 1}}^{n-1}{P}_{r}\left( {{s}_{t} \mid  {s}_{t - 1};P}\right)  } \right.  \notag\\
   & ~~~~~{P}_{r}\left( {{s}_{0};P}\right) \left. \left( {\mathop{\sum }\limits_{{{s}_{n} = 1}}^{M}p\left( {{y}_{n} \mid  {G}_{n - 1},{s}_{n};{\widehat{\theta} }_{1},{\widehat{\theta }}_{2}}\right) {P}_{r}\left( {{s}_{n} \mid  {S}_{n-1};P}\right) }\right) \right\rbrack \notag\\
   &\leq  \log \left\lbrack { \mathop{\sum }\limits_{{{s}_{0} = 1}}^{M}\cdots \mathop{\sum }\limits_{{{s}_{n - 1} = 1}}^{M}\mathop{\prod }\limits_{{t = 1}}^{{n - 1}}p\left( {{y}_{t} \mid  {G}_{t - 1},\mathcal{S}_{t};\widehat{{\theta }}_{1},\widehat{{\theta }}_{2}}\right) \mathop{\prod }\limits_{{t = 1}}^{{n - 1}}P_{r}\left( {{s}_{t} \mid  {s}_{t - 1};P}\right) } \right.\notag\\
& ~~~~~   P_{r}\left({s_{0};P}\right)p  \left.\left( {{y}_{n} \mid  {G}_{t - 1},{s}_{n} = {m}_{n};\widehat{{\theta }}_{1},\widehat{{\theta} }_{2}}\right) \right\rbrack\notag\\
& \leq  \cdots  \leq  \mathop{\sum }\limits_{{t = 1}}^{n}\log p\left( {{y}_{t} \mid  {G}_{t - 1},{s}_{t} = {m}_{t};\widehat{{\theta} }_{1},\widehat{{\theta} }_{2}}\right) .
\end{align}
It follows that
\begin{align}\label{Th1_11}
    \ell\left( {\widehat{{\theta }}_{1},\widehat{{\theta} }_{2},P}\right)   & \leq  \mathop{\sum }\limits_{{t = 1}}^{n}\log \left\lbrack  {{\left( 2\pi \right) }^{-\frac{pq}{2}}{\left| {\widehat{\Sigma }}_{t \mid  t - 1}\right| }^{-\frac{1}{2}}{e}^{-\frac{1}{2}{\left({y}_{t} - {\widehat{y}}_{t \mid  t - 1}\right)}^{\top}{\widehat{\Sigma }}_{t \mid  t - 1}^{-1}\left( {{y}_{t} - {\widehat{y}}_{t \mid  t - 1}}\right) }}\right\rbrack \notag\\
    & =  - \frac{n{pq}}{2}\log \left( {2\pi }\right)  - \frac{1}{2}\mathop{\sum }\limits_{{t = 1}}^{n}\log \left| {\widehat{\Sigma }}_{t \mid  t - 1}\right|  - \frac{1}{2}\mathop{\sum }\limits_{{t = 1}}^{n}{{\left( {y}_{t} - {\widehat{y}}_{t \mid  t - 1}\right)}}^{\top}\widehat{{\Sigma }}_{t \mid  t - 1}^{-1} \notag\\
    &~~~~
    \left( {{y}_{t} - {\widehat{y}}_{t - 1}}\right),
\end{align}
where \({\widehat{y}}_{t \mid  t - 1} = {\widehat{\Lambda }}_{m_t}{\widehat{f}}_{t \mid  t - 1}\) and \({\widehat{\Sigma }}_{t \mid  t - 1} = {\widehat{\Lambda }}_{m_t}{\widehat{V}}_{t \mid  t - 1}{{\widehat{\Lambda }}_{m_t}^{\top}} + {\widehat{\sigma }}^{2}{I}_{pq}, {\widehat{\Lambda }}_{m_t} = {\widehat{C}}_{m_t} \otimes  {\widehat{R}}_{m_t}\).
Consider the last term on the right hand side of equation (\ref{Th1_11}). By woodburg identity,
\({\left( {\widehat{\Lambda }}_{m_t}{\widehat{V}}_{t \mid t- 1}{{\widehat{\Lambda }}_{m_t}^{\prime}} + {\widehat{\sigma }}^{2}{I}_{pq}\right) }^{-1} = {\widehat{\sigma }}^{-2}{I}_{pq} - {\widehat{\sigma }}^{-2}{\widehat{\Lambda }}_{m_t}{\widehat{V}}_{t \mid t - 1}^{\frac{1}{2}}\left( {{\widehat{\sigma }}^{2}{I}_{r} + {\widehat{V}}_{t \mid t - 1}^{\frac{1}{2}}{{\widehat{\Lambda }}_{m_t}^{\prime}}{\widehat{\Lambda }}_{m_t}{\widehat{V}}_{t \mid  t-1}^{\frac{1}{2}}}\right)^{-1}  {\widehat{V}}_{t\mid t-1}^{\frac{1}{2}}{\widehat{\Lambda}}_{m_t}\).
Define \({\widetilde{\Lambda }}_{m_t} = {\widehat{\Lambda }}_{m_t}{\widehat{V}}_{t \mid t-1}^{\frac{1}{2}}\), then
\begin{align}\label{Th1_2}
    \mathop{\sum }\limits_{{t = 1}}^{n}\left( {{y}_{t} - {\widehat{y}}_{t \mid  t - 1}}\right) {}^{\prime }{\widehat{\Sigma}}_{t \mid  t - 1}^{-1}\left( {{y}_{t} - {\widehat{y}}_{t \mid  t - 1}}\right)  = {\widehat{\sigma }}^{-2}\mathop{\sum }\limits_{{t = 1}}^{n}{\left( {y}_{t} - {\widehat{y}}_{t \mid  t - 1}\right) }^{\prime }\left( {{y}_{t} - {\widehat{y}}_{t \mid  t - 1}}\right) \notag\\
    - {\widehat{\sigma }}^{-2}\mathop{\sum }\limits_{{t = 1}}^{n}{\left( {y}_{t} - {\widehat{y}}_{t \mid t -1}\right) }^{\prime }{\widetilde{\Lambda }}_{m_t}{\left( {\widehat{\sigma} }^{2}{I}_{r} + {\widetilde{\Lambda }}_{m_t}^{\prime }{\widetilde{\Lambda}}_{m_t}\right) }^{-1}{\widetilde{\Lambda }}_{m_t}^{\prime }\left( {{y}_{t} - {\widehat{y}}_{t \mid t - 1}}\right).
\end{align}
Since \({\left( {\widehat{\sigma }}^{2}{I}_{r} + {\widetilde{\Lambda }}_{m_t}^{\prime }{\widetilde{\Lambda }}_{m_t}\right) }^{-1} = {\left( {\widetilde{\Lambda }}_{m_t}^{\prime }{\widetilde{\Lambda }}_{m_t}\right) }^{-1} - {\widehat{\sigma}}^{2}{\left( {\widehat{\sigma}}^{2}{I}_{r} + {\widetilde{\Lambda }}_{m_t}^{\prime }{\widetilde{\Lambda }}_{m_t}\right) }^{-1}{\left( {\widetilde{\Lambda }}_{m_t}^{\prime }{\widetilde{\Lambda }}_{m_t}\right) }^{-1},\) then (\ref{Th1_2}) equals to
\begin{align}\label{Th1_2_1}
  & \mathop{\sum }\limits_{{t = 1}}^{n}{\left( {y}_{t} - {\widehat{y}}_{t \mid t- 1}\right) }^{\prime }\widehat{\mathop{\Sigma }}_{{t \mid t- 1 }}^{-1}\left( {{y}_{t} - {\widehat{y}}_{t \mid t- 1}}\right)  \notag\\
  =&
   \widehat{\sigma}^{-2}\mathop{\sum }\limits_{{t = 1}}^{n}{\left( {y}_{t} - {\widehat{y}}_{t \mid  t - 1}\right) }^{\prime }\left( {{y}_{t} - {\widehat{y}}_{t \mid  t - 1}}\right)  - {\widehat{\sigma }}^{-2}\mathop{\sum }\limits_{{t = 1}}^{n}{\left( {y}_{t} - {\widehat{y}}_{t \mid  t - 1}\right) }^{\prime }{P}_{\widetilde{{\Lambda}}_{m_t}}\left( {{y}_{t} - {\widehat{y}}_{t \mid  t - 1}}\right) +I_{\Lambda} \notag\\
   =&  \widehat{{\sigma }}^{-2}\mathop{\sum }\limits_{{t = 1}}^{n}{\left( {y}_{t} - {\widehat{y}}_{t \mid  t - 1}\right) }^{\prime }{M}_{{\widetilde{\Lambda }}_{m_t}}\left( {{y}_{t} - {\widehat{y}}_{t \mid  t - 1}}\right)  + {I}_{\Lambda} \notag\\
   =& \widehat{ \sigma}^{-2} \mathop{\sum }\limits_{{t = 1}}^{n}{\begin{Vmatrix}{M}_{{\widetilde{\Lambda }}_{m_t}}\left( {y}_{t} - {\widehat{y}}_{t \mid t- 1}\right) \end{Vmatrix}}_{F}^{2} + {I}_{\Lambda},
\end{align}
where $I_{\Lambda}= \mathop{\sum }\limits_{{t = 1}}^{n}{\left( {y}_{t} - {\widehat{y}}_{t \mid  t - 1}\right) }^{\prime }{\widetilde{\Lambda }}_{m_t}{\left( {\widehat{\sigma }}^{2}{I}_{r} + {\widetilde{\Lambda }}_{m_t}^{\prime }{\widetilde{\Lambda }}_{m_t}\right) }^{-1}\left( {{\widetilde{\Lambda }}_{m_t}^{\prime }{\widetilde{\Lambda }}_{m_t}}\right)^{-1} {\widetilde{\Lambda }}_{m_t}^{\prime }\left( {{y}_{t} - {\widehat{y}}_{t \mid  t - 1}}\right)$.

Step(2) : Consider \(\ell\left( {{R}^{\circ},{C}^{\circ},{\widehat{\sigma }}^{2},{\widehat{\theta }}_{2},P}\right)\), where $ {R}^{\circ}$ and ${C}^{\circ}$ denote the true value of $R \triangleq
 \{R_1, \cdots,R_M\}$ and $C \triangleq
 \{C_1, \cdots,C_M\}$.

Since
$
{P}_{r}\left( {{s}_{t} \mid  {s}_{t - 1};P}\right)  \geq  \mathop{\min }\limits_{{j,k}}{p}_{jk}
$, then
\begin{align}
  &  \mathop{\sum }\limits_{{{s}_{t} = 1}}^{M}p\left( {{y}_{t}|{G}_{t - 1},{s}_{t};{R}^{\circ},{C}^{\circ},{\widehat{\sigma }}^{2},{\widehat{\theta }}_{2}}\right) {P}_{r}\left( {{s}_{t}|{s}_{t - 1};P}\right) \notag\\
   \geq&  p\left( {{y}_{t}|{G}_{t - 1},{s}_{t};{R}^{\circ},{C}^{\circ},{\widehat{\sigma }}^{2},\widehat{{\theta }}_{2}}\right) \mathop{\min }\limits_{{j,k}}{p}_{jk}.
\end{align}
Similar to inequality (\ref{Th1_0}),
\begin{align}
  \ell \left( {{R}^{\circ},{C}^{\circ},{\widehat{\sigma }^{2}},{\widehat{\theta }}_{2},P}\right)  & \geq  \mathop{\sum }\limits_{{t = 1}}^{n}\log \left\lbrack  {p\left( {{y}_{t} \mid  {G}_{t - 1},{s}_{t};{R}^{\circ},{C}^{\circ},{\widehat{\sigma }}^{2},\widehat{{\theta }}_{2}}\right) \mathop{\min }\limits_{{j,k}}\{ {p}_{jk}\} }\right\rbrack   \notag\\
  & = n\log \mathop{\min }\limits_{{j,k}}{p}_{jk} - \frac{pqn}{2}\log \left( {2\pi }\right)  - \frac{1}{2}\mathop{\sum }\limits_{{t = 1}}^{n}\log \left| {{\Lambda }_{s_t}^{\circ}{\widehat{V}}_{t \mid t- 1}^{\circ}{{\Lambda }_{s_t}^{\circ\prime}}+ {\widehat{\sigma }}^{2}{I}_{pq}}\right| \notag\\
  &- \frac{1}{2}\mathop{\sum }\limits_{{t = 1}}^{n}{\left( {y}_{t} - {\widehat{y}}_{t \mid  t - 1}^{\circ}\right) }^{\prime }\left( {{\Lambda }_{s_t}^{\circ}{\widehat{V}}_{t \mid  t - 1}^{\circ}{{\Lambda }_{s_t}^{\circ\prime}} + \widehat{{\sigma}}^{2}{I}_{pq}}\right) {}^{-1}\left( {{y}_{t} - {\widehat{y}}_{t \mid  t - 1}^{\circ}}\right) ,
\end{align}
where the definition of $ \widehat{y}_{t \mid t- 1}^{\circ}$ and $ \widehat{V}_{t \mid t- 1}^{\circ}$ are the same as that of ${y}_{t \mid t- 1}$ and $ {V}_{t \mid t- 1}$  in equation (\ref{Th1_1}), with the only requirement being to replace the parameters with $ \{{R}^{\circ},{C}^{\circ},{\widehat{\sigma }^{2}},{\widehat{\theta }}_{2},P\}$ .

Similar to equation (\ref{Th1_2}), we have
\begin{align}
  &  \sum\limits_{t} {\left( {y}_{t} - {\widehat{y}}_{t \mid t-1}^{\circ}\right) }^{\prime }{\left( {\Lambda }_{s_t}^{\circ}\widehat{{V}}_{t \mid t- 1}^{\circ}{{\Lambda }_{{st}}^{\circ\prime}} + {\widehat{\sigma}}^{2}{I}_{pq}\right) }^{-1}\left( {{y}_{t} - {\widehat{y}}_{t \mid t- 1}^{\circ}}\right) \notag\\
  =&  {\widehat{\sigma}}^{-2}\mathop{\sum }\limits_{{t = 1}}^{n}\parallel {M}_{{\widetilde{\Lambda }}_{s_t}^{\circ}}( {{y}_{t} - {\widehat{y}}_{t \mid t- 1}^{\circ}}) {\parallel }_{F}^{2} + \mathop{\sum }\limits_{{t = 1}}^{n}{\left( {y}_{t} - {\widehat{y}}_{t \mid t- 1}^{\circ}\right) }^{\prime }{\widetilde{\Lambda }}_{s_t}^{\circ}{\left( {\widehat{\sigma}^{2}}I_{r} +{{{{\widetilde{\Lambda }}_{s_t}^{\circ\prime}}}}{\widetilde{\Lambda }}_{s_t}^{\circ}\right) }^{-1} \notag\\
  & {\left( {\widetilde{\Lambda }}_{s_t}^{\circ\prime}{\widetilde{\Lambda }}_{s_t}^{\circ}\right) }^{-1}{\widetilde{\Lambda}}_{s_t}^{\circ\prime}\left( {{y}_{t} - {\widehat{y}}_{t \mid  t - 1}^{\circ}}\right),
\end{align}
where \({{\widetilde{\Lambda}}_{s_t}^{\circ}} = {{\Lambda}_{s_t}^{\circ}}{\left( {\widehat{V}}_{t \mid t-1}^{\circ}\right) }^{\frac{1}{2}}\).

Step(3) : Consider \(\ell\left( {\widehat{R},\widehat{C},{\widehat{\sigma }}^{2},\widehat{\theta }_{2},P}\right)  - \ell\left( {{R}^{ \circ  },{C}^{ \circ  },{\widehat{\sigma }}^{2},{\widehat{\theta }}_{2},P}\right)\) .

Since \(\ell\left( {\widehat{R},\widehat{C},{\widehat{\sigma }}^{2},{\widehat{\theta }}_{2},P}\right)  - \ell\left( {{R}^{ \circ  },{C}^{ \circ  },{\widehat{\sigma }}^{2},{\widehat{\theta }}_{2},P}\right)  \geq  0\) , then
\begin{align}\label{Th1_4}
    & \frac{1}{2} \left\lbrack
    {\widehat{\sigma}}^{-2}
    \sum_{t = 1}^{n}
    \left\lVert
        {M}_{\widetilde{\Lambda}_{m_t}} \left( {y}_{t} - {\widehat{y}}_{t \mid t-1} \right)
    \right\rVert_{F}^{2}
    -
    {\widehat{\sigma}}^{-2}
    \sum_{t = 1}^{n}
    \left\lVert
        {M}_{\widetilde{\Lambda}_{s_t}^{\circ}} \left( {y}_{t} - {y}_{t \mid t-1}^{\circ} \right)
    \right\rVert_{F}^{2}
\right\rbrack \notag\\
\leq&    - n\log \mathop{\min }\limits_{{j,k}}{p}_{jk} - \frac{1}{2}\mathop{\sum }\limits_{{t = 1}}^{n}\log \frac{\left| {\widetilde{\Lambda }}_{m_t}{\widetilde{\Lambda}_{m_t}}^{\prime} + \widehat{\sigma}^{2}{I}_{pq}\right| }{\left| {\widetilde{\Lambda}}_{s_t}^{\circ}{{\widetilde{\Lambda }}_{s_t}^{\circ \prime}} + {\widehat{\sigma}}^{2}{I}_{pq}\right| } \notag\\
& - \frac{1}{2}\mathop{\sum }\limits_{{t = 1}}^{n}{\left( {y}_{t} - {\widehat{y}}_{t \mid  t - 1}\right) }^{\prime }{\widetilde{\Lambda }}_{m_t}{\left( {\widehat{\sigma }}^{2}{I}_{r} + {\widetilde{\Lambda }}_{m_t}^{\prime }{\widetilde{\Lambda }}_{m_t}\right) }^{-1}\left( {{\widetilde{\Lambda }}_{m_t}^{\prime }{\widetilde{\Lambda }}_{m_t}}\right) {}^{-1}{\widetilde{\Lambda }}_{m_t}^{\prime }\left( {{y}_{t} - {\widehat{y}}_{t \mid  t - 1}}\right)\notag\\
& + \frac{1}{2}\mathop{\sum }\limits_{{t = 1}}^{n}{\left( {y}_{t} - {\widehat{y}}_{t \mid  t - 1}^{\circ}\right) }^{\prime }{\widetilde{\Lambda}}_{{s_t}}^{\circ}{\left( {\widehat{\sigma }}^{2}{I_{r}} + {\widetilde{\Lambda }}_{{s_t}}^{\circ\prime }{\widetilde{\Lambda}}_{{s_t}}^{\circ}\right) }^{-1}\left( {{\widetilde{\Lambda }}_{{s_t}}^{\circ\prime }{\widetilde{\Lambda}}_{{S_t}}^{\circ}}\right)^{-1} {\widetilde{\Lambda}}_{{s_t}}^{\circ\prime}\left( {{y}_{t} - {\widehat{y}}_{t \mid  t - 1}^{\circ}}\right)
\end{align}

(3.1) The first term on the right hand side is $O(n)$ since \(\mathop{\min }\limits_{{j,k}}{p}_{jk} > 0\).

(3.2) consider the second term on the right hand side of (\ref{Th1_4}).
Since $\left| {{\widetilde{\Lambda}}_{m_t}{\widetilde{\Lambda}}_{m_t}^{\prime} + \widehat{\sigma}^{2}{I}_{pq}}\right|  = {\left( \widehat{\sigma}^{2}\right) }^{pq}\left| {{I}_{pq} + \frac{1}{\widehat{\sigma}^{2}}{\widetilde{\Lambda}}_{m_t }{\widetilde{\Lambda}}_{m_t }^{\prime}}\right|  = {\left( \widehat{\sigma }^{2}\right) }^{pq}\left| {{I}_{r} + \frac{1}{\widehat{\sigma }^{2}}{\widetilde{\Lambda}}_{m_t}^{\prime}{\widetilde{\Lambda}}_{m_t}}\right| $
and $\left| {{\widetilde{\Lambda }}_{s_t}^{\circ}{{\widetilde{\Lambda }}_{s_t}^{\circ\prime}}+ {\widehat{\sigma }}^{2}{I}_{pq}}\right|  = {\left( {\widehat{\sigma}}^{2}\right) }^{pq}\left| {{I}_{r} + \frac{1}{{\widehat{\sigma}}^{2}}{{\widetilde{\Lambda }}_{s_t}^{\circ\prime}}{\widehat{\Lambda }}_{s_t}^{\circ}}\right|$, where \(r = {k}_{1}{k}_{2}\) , then
\begin{align}
    & - \frac{1}{2}\mathop{\sum }\limits_{{t = 1}}^{n}\log \frac{\left| {\widetilde{\Lambda }}_{{m_t}}{\widetilde{\Lambda }}_{{m_t}}^{\prime}
    + {\widehat{\sigma}}^{2}{I}_{pq}\right| }{\left| {\widetilde{\Lambda }}_{{s_t}}^{\circ}{{\widetilde{\Lambda }}_{{s_t}}^{\circ\prime}} + {\widehat{\sigma }}^{2}{I}_{pq}\right| }\notag\\
    =&  - \frac{1}{2}\mathop{\sum }\limits_{{t = 1}}^{n}\log \left| {{\widetilde{\Lambda }}_{{m_t}}{\widetilde{\Lambda }}_{{m_t}}^{\prime} + {\widehat{\sigma}}^{2}{I}_{pq}}\right|
    \notag+ \frac{1}{2}\mathop{\sum }\limits_{{t = 1}}^{n}\log \left| {{\widetilde{\Lambda }}_{{s_t}}^{\circ}{{\widetilde{\Lambda }}_{{s_t}}^{\circ\prime}} + {\widehat{\sigma}}^{2}{I}_{pq}}\right| \notag\\
    = & - \frac{1}{2}\mathop{\sum }\limits_{{t = 1}}^{n}\left\lbrack  {{pq}{\ln}{\widehat{\sigma}}^{2} + \log \left| {{I}_{r} + \frac{1}{\widehat{\sigma }^{2}}{\widetilde{\Lambda }}_{m_t}^{\prime}{\widetilde{\Lambda }}_{m_t}}\right| }\right\rbrack   + \frac{1}{2}\mathop{\sum }\limits_{{t = 1}}^{n}\left\lbrack  {{pq}{\ln}{\widehat{\sigma}}^{2}+ \log \left| {{I}_{r} + \frac{1}{\widehat{\sigma }^{2}}{{\widetilde{\Lambda }}_{s_t}^{\circ\prime}}{\widetilde{\Lambda }}_{s_t}^{\circ}}\right| }\right\rbrack \notag\\
    \leq&    \frac{1}{2}\mathop{\sum }\limits_{{t= 1}}^{n}\log \left| {{I}_{r} + \frac{1}{\widehat{\sigma }^{2}}{{\widetilde{\Lambda}}_{s_t}^{\circ\prime} }\widetilde{\Lambda}_{s_t}}\right|  = {O}_{p}\left( {n\log \left( pq\right) }\right).
\end{align}

(3.3) The third term on the right hand side of (\ref{Th1_4}) is negative, thus inequality (\ref{Th1_4}) Still holds when this term is throw away.

(3.4) Consider the fourth term on the right hand side of (\ref{Th1_4}), that is,
\begin{align}
 & \frac{1}{2}\mathop{\sum }\limits_{{t = 1}}^{n}{\left( {y}_{t} - {y}^{\circ}_{t \mid t-1}\right) }^{\prime }{\widetilde{\Lambda }}_{s_t}^{\circ}{\left( {\widehat{\sigma}}^{2}{I}_{r} + {{\widetilde{\Lambda }}_{s_t}^{\circ\prime}}{\widetilde{\Lambda}}_{s_t}^{\circ}\right) }^{-1}{\left( {{\widetilde{\Lambda }}_{s_t}^{\circ\prime}}{\widetilde{\Lambda }}_{s_t}^{\circ}\right) }^{-1}{{\widetilde{\Lambda }}_{s_t}^{\circ\prime}}\left( {{y}_{t} - {y}_{t \mid t- 1}^{\circ}}\right)   \notag\\
= & \frac{1}{2}\mathop{\sum }\limits_{{t = 1}}^{n}{\left( {y}_{t} - {y}_{t \mid t - 1}^{ \circ  }\right) }^{\prime }{\widetilde{\Lambda }}_{{s_t}}^{ \circ  }{\left( {\widetilde{\Lambda }}_{{s_t}}^{\circ \prime }{\widetilde{\Lambda }}_{{s_t}}^{ \circ}\right) }^{-\frac{1}{2}}{\left( {\widetilde{\Lambda }}_{{s_t}}^{\circ \prime }{\widetilde{\Lambda }}_{{s_t}}^{ \circ  }\right) }^{\frac{1}{2}}{\left( {\widehat{\sigma}}^{2}{I}_{r} + {\widetilde{\Lambda }}_{s_t}^{\circ \prime }{\widetilde{\Lambda }}_{{s_t}}^{ \circ  }\right) }^{-1}{\left( {\widetilde{\Lambda }}_{{s_t}}^{\circ \prime }{\widetilde{\Lambda }}_{{s_t}}^{ \circ  }\right) }^{-\frac{1}{2}} \notag \\
&{\left( {\widetilde{\Lambda }}_{{s_t}}^{\circ \prime }{\widetilde{\Lambda }}_{{s_t}}^{ \circ  }\right) }^{-\frac{1}{2}}{\widetilde{\Lambda }}_{{s_t}}^{\circ \prime }\left( {{y }_{t} - {y}_{t \mid t- 1}^{ \circ  }}\right).
\end{align}
Since $ {\left( {\widetilde{\Lambda }}_{{s_t}}^{\circ\prime}{\widetilde{\Lambda }}_{{s_t}}^{\circ}\right) }^{\frac{1}{2}}{\left( {\widehat{\sigma}}^{2}{I_{r}} + {\widetilde{\Lambda }}_{{s_t}}^{\circ\prime}{\widetilde{\Lambda }}_{{s_t}}^{\circ}\right) }^{-1}{\left( {\widetilde{\Lambda }}_{{s_t}}^{\circ\prime}{\widetilde{\Lambda }}_{{s_t}}^{\circ}\right) }^{-\frac{1}{2}} = {\left( {\widehat{\sigma}}^{2}{I_{r}}+ {\widetilde{\Lambda }}_{{s_t}}^{\circ\prime}{\widetilde{\Lambda }}_{{s_t}}^{\circ}\right) }^{-1}$ and
\begin{align}
    \begin{Vmatrix}{{\left( {\widetilde{\Lambda }}_{{s_t}}^{\circ \prime }{\widetilde{\Lambda }}_{{s_t}}^{ \circ  }\right) }^{-\frac{1}{2}}{\widetilde{\Lambda }}_{s_t}^{\circ\prime}\left( {{y}_{t} - {\widehat{y}}_{t \mid t-1}^{\circ}}\right) }\end{Vmatrix}  & = \sqrt{{\left( {y}_{t} - {\widehat{y}}_{t \mid  t - 1}^{\circ}\right) }^{\prime}{\widetilde{\Lambda }}_{s_t}^{\circ}{\left( {\widetilde{\Lambda }}_{{s_t}}^{\circ \prime }{\widetilde{\Lambda }}_{{s_t}}^{ \circ  }\right) }^{-1}{\widetilde{\Lambda }}_{s_t}^{\circ\prime}\left( {{y}_{t} - {\widehat{y}}_{t \mid  t - 1}^{\circ}}\right) } \notag\\
    &= \sqrt{{\left( {y}_{t} - {\widehat{y}}_{t \mid  t - 1}^{ \circ  }\right) }^{\prime }{P}_{{\widetilde{\Lambda}}_{s_t}^{ \circ  }}\left( {{y}_{t} - {\widehat{y}}_{t \mid  t - 1}^{ \circ  }}\right) } \notag\\
    &
    = \sqrt{{\left( {y}_{t} - {\widehat{y}}_{t \mid  t - 1}^{ \circ  }\right) }^{\prime }{P}_{{\widetilde{\Lambda}}_{s_t}^{ \circ  }}^{2}\left( {{y}_{t} - {\widehat{y}}_{t \mid  t - 1}^{ \circ  }}\right) } \notag\\
    & =\begin{Vmatrix}{{{P}_{\widetilde{\Lambda }_{s_t}^{\circ}}}\left( {{y}_{t} - {\widehat{y}}_{t \mid  t - 1}^{\circ}}\right) }\end{Vmatrix} \leq  \begin{Vmatrix}{{y}_{t} - {\widehat{y}}_{t \mid  t - 1}^{\circ}}\end{Vmatrix},
\end{align}
then the fourth term can be bounded by \(\dfrac{1}{2}\mathop{\sum }\limits_{{t= 1}}^{n}{\begin{Vmatrix}{y}_{t} - {\widehat{y}}_{t \mid t- 1}^{\circ}\end{Vmatrix}}^{2} \cdot  {\begin{Vmatrix}{\left( {\widehat{\sigma }}^{2}{I}_{r} +  {\widetilde{\Lambda }}_{{s_t}}^{\circ\prime}{\widetilde{\Lambda }}_{{s_t}}^{\circ}\right) }^{-1}\end{Vmatrix}}\) . By Assumption B(1),
\begin{align}
    \begin{Vmatrix}{\left( {\widehat{\sigma}}^{2}{I_{r}}+ {\widetilde{\Lambda }}_{{s_t}}^{\circ\prime}{\widetilde{\Lambda }}_{{s_t}}^{\circ}\right) }^{-1}\end{Vmatrix} & \leq  \sup\limits_{j} \begin{Vmatrix}{\left( {\widehat{\sigma}}^{2}{I_{r}}+ {\widetilde{\Lambda }}_{{j}}^{\circ\prime}{\widetilde{\Lambda }}_{{j}}^{\circ}\right) }^{-1}\end{Vmatrix} \leq  \sup\limits_{j} \begin{Vmatrix}{\left( {\widetilde{\Lambda }}_{{j}}^{\circ\prime}{\widetilde{\Lambda }}_{{j}}^{\circ}\right) }^{-1}\end{Vmatrix} \notag\\
    & = \sup\limits_{j} \begin{Vmatrix}{\left( {\widehat{V}}_{t \mid t- 1}^{\circ\frac{1}{2}}{\Lambda }_{j}^{\circ\prime}{\Lambda }_{j}^{\circ} {\widehat{V}}_{t \mid t- 1}^{\circ\frac{1}{2}}\right) }^{-1}\end{Vmatrix} = {O}_{p}\left( \frac{1}{{pq}}\right) .
\end{align}

By Assumption B(1), C(1) and D(1), \(\mathop{\sum }\limits_{{t= 1}}^{n}\parallel {y}_{t} - {\widehat{y}}_{t \mid t- 1}^{\circ}{\parallel }^{2} = {O}_{p}\left( {pqn}\right)\) . Thus the fourth term is \({O}_{p}\left( n\right)\).

(3.5) Now consider the left hand side of expression (\ref{Th1_4}). Since \({y}_{t} = {\Lambda }_{s_t}^{\circ}{f}_{t} + {e}_{t}\) and
$ {M}_{\widetilde{\Lambda}_{s_t}^{0}} ={M}_{{\Lambda}_{s_t}^{\circ}},{M}_{{\Lambda}_{s_t}^{0}}{\Lambda }_{s_t}^{\circ} {f}_{t}= 0$, it is easy to verify that
\begin{align} \label{Th1_5}
    & \frac{1}{2}{\widehat{\sigma }}^{-2}\left\lbrack  {\mathop{\sum }\limits_{{t = 1}}^{n}{\begin{Vmatrix}{M}_{\widetilde{\Lambda }_{m_t}}\left( {y}_{t} - \widehat{y}_{t \mid  t - 1}\right) \end{Vmatrix}}^{2} - \mathop{\sum }\limits_{{t = 1}}^{n}{\begin{Vmatrix}{M}_{\widetilde{\Lambda }_{s_t}^{\circ}}\left( {y}_{t} - y^{\circ}_{t \mid  t - 1}\right) \end{Vmatrix}}^{2}}\right\rbrack \notag\\
    =& \frac{1}{2}{\widehat{\sigma }}^{-2}\left\lbrack  {\mathop{\sum }\limits_{{t = 1}}^{n}\parallel {M}_{\widetilde{\Lambda }_{m_t}}\left( {{\Lambda }_{s_t}^{\circ}{f}_{t} - \widehat{\Lambda }_{s_t}^{  }\widehat{f}_{t \mid t- 1}}\right)  + {M}_{\widetilde{\Lambda }_{m_t}}{e}_{t}{\parallel }^{2} - \mathop{\sum }\limits_{{t = 1}}^{n}\parallel {M}_{{\Lambda }_{s_t}^{ \circ  }}{e}_{t}{\parallel }^{2}}\right\rbrack \notag\\
    \geq&  \frac{1}{2}{\widehat{\sigma }}^{-2}\left\lbrack  {\mathop{\sum }\limits_{{t = 1}}^{n}\parallel {M}_{{\widetilde{\Lambda}}_{m_t}}\left( {{\Lambda }_{{s_t}}^{ \circ  } {f}_{t}- {\widehat{\Lambda}}_{{s_t}}{\widehat{f}}_{t \mid t-1}}\right)  + {M}_{{\widetilde{\Lambda }}_{m_t}}{e}_{t}{\parallel }^{2}}\right\rbrack \notag\\
    \geq & \frac{1}{2}{\widehat{\sigma }}^{-2}\left\lbrack  {\mathop{\sum }\limits_{{t = 1}}^{n}\parallel {M}_{\widetilde{\Lambda }_{m_t}}\left( {{\Lambda }_{{s_t}}^{ \circ  } {f}_{t}- {\widehat{\Lambda}}_{{s_t}}{\widehat{f}}_{t \mid t-1}}\right)  + {M}_{{\widetilde{\Lambda }}_{m_t}}{e}_{t}{\parallel }^{2}{I}_{\left( {m_t} = {s_t}\right) }}\right\rbrack \notag\\
    = &\frac{1}{2}{\widehat{\sigma }}^{-2}\left\lbrack  {\mathop{\sum }\limits_{{t = 1}}^{n}\parallel {M}_{\widetilde{\Lambda }_{m_t}}{\Lambda }_{s_t}^{ \circ  }{f}_{t} + {M}_{\widetilde{\Lambda }_{m_t}}{e}_{t}{\parallel }^{2}{I}_{\left( m_t = {s_t}\right) }}\right\rbrack \notag\\
    =& \frac{1}{2}{\widehat{\sigma }}^{-2}\left\lbrack  {\mathop{\sum }\limits_{{t = 1}}^{n}\parallel {M}_{\widetilde{\Lambda }_{m_t}}{\Lambda }_{s_t}^{ \circ  }{f}_{t}{\parallel }^{2}} {I}_{\left( {m_t} = {s_t}\right) } + \mathop{\sum }\limits_{{t = 1}}^{n}\parallel{M}_{\widetilde{\Lambda }_{m_t}}{e}_{t}{\parallel }^{2}{I}_{\left( {m_t} = {s_t}\right) }\right.\notag\\
    & \left. + 2\mathop{\sum }\limits_{t}{e }_{t}^{\prime}{M}_{\widetilde{\Lambda }_{m_t}}{\Lambda }_{s_t}^{ \circ  }{f}_{t}{I}_{\left( {m_t} = {s_t}\right) }\right\rbrack.
\end{align}

For the second term of (\ref{Th1_5}), we have \(\mathop{\sum }\limits_{t}\left| \right| {M}_{\widetilde{\Lambda }_{m_t}}{e}_{t}{\left| \right| }^{2}  = \mathop{\sum }\limits_{t}{e }_{t}^{\prime}{M}_{\widetilde{\Lambda }_{m_t}}{e}_{t} = \mathop{\sum }\limits_{t}{e }_{t}^{\prime}{e}_{t}-\mathop{\sum }\limits_{t}e_t^{\prime}{P}_{\widetilde{\Lambda }_{m_t}} e_t\),
and
\begin{align} \label{Th1_6}
\sum_{t=1}^n {\begin{Vmatrix}{P}_{\widetilde{\Lambda }_{m_t}}{e}_{t}\end{Vmatrix}}_{F}^{2}   & =    \mathop{\sum }\limits_{{t = 1}}^{n}{\begin{Vmatrix}{P}_{\widehat{\Lambda }_{m_t}}{e}_{t}\end{Vmatrix}}_{F}^{2} \leq  \mathop{\sum }\limits_{{j = 1}}^{M}\mathop{\sum }\limits_{{t = 1}}^{n}{\begin{Vmatrix}{P}_{\widehat{\Lambda }_{j}}{e}_{t}\end{Vmatrix}}_{F}^{2} \notag\\
 &= \mathop{\sum }\limits_{{j = 1}}^{M}\mathop{\sum }\limits_{{t= 1}}^{n}{{e}_{t}}^{\prime}{\widehat{\Lambda }_{j}}\left({\widehat{\Lambda }_{j}^{\prime}}{\widehat{\Lambda }_{j}}\right )^{-1}{\widehat{\Lambda }_{j}}{e}_{t} \notag\\
 & = \sum_j\mathop{\sum }\limits_{t}{T}_{r}\left\lbrack  {{e }_{t}^{\prime }{\widehat{\Lambda }}_{j}{\left( {\widehat{\Lambda }}_{j}^{\prime }{\widehat{\Lambda }}_{j}\right) }^{-\frac{1}{2}}{\left( {\widehat{\Lambda }}_{j}^{\prime }{\widehat{\Lambda }}_{j}\right) }^{-\frac{1}{2}}{\widehat{\Lambda }}_{j}^{\prime }{e}_{t}}\right\rbrack \notag\\
 & =\sum_j {T}_{r}{\left\lbrack{\left( {\widehat{\Lambda }}_{j}^{\prime }{\widehat{\Lambda }}_{j}\right) }^{-\frac{1}{2}} {\widehat{\Lambda }}_{j}^{\prime }\mathop{\sum }\limits_{t}{e}_{t}{e}_{t}^{\prime}{\widehat{\Lambda }}_{j}{\left( {\widehat{\Lambda }}_{j}^{\prime }{\widehat{\Lambda }}_{j}\right) }^{-\frac{1}{2}} \right\rbrack  } \notag\\
 &= \sum_j{T}_{r}\left\lbrack {\mathop{\sum }\limits_{t}{e}_{t}{e}_{t}^{\prime}\widehat{\Lambda}_j {\left( {\widehat{\Lambda }}_{j}^{\prime }{\widehat{\Lambda }}_{j}\right) }^{-1} {\widehat{\Lambda }}_{j}^{\prime }}\right\rbrack   =\sum_j {T}_{r}\left\lbrack {\mathop{\sum }\limits_{t}{e}_{t}{e}_{t}^{\prime}P_{\widehat{\Lambda}_{j}}}\right\rbrack   \notag\\
 & \leq  {\rho }_{\max }\left(  {\mathop{\sum }\limits_{t}{e}_{t}{e}_{t}^{\prime}}\right)  \cdot  \sum_j\operatorname{Tr}\left( P_{\widehat{\Lambda }_{j}}\right)  = Mr {\rho }_{\max }\left( {\mathop{\sum }\limits_{t}{e}_{t}{e}_{t}^{\prime}}\right) \notag\\
 & = Mr\begin{Vmatrix}{{E}^{\prime }E}\end{Vmatrix} = {O}_{p}\left( {{p}^{\frac{1}{2}}{q}^{\frac{1}{2}}n + {pq}{n}^{\frac{1}{2}}}\right).
\end{align}
 So the second term of (\ref{Th1_5}) can be bounded by \({O}_{p}\left( {{p}^{\frac{1}{2}}{q}^{\frac{1}{2}}n + {pq}{n}^{\frac{1}{2}}}\right)\) . The third term of (\ref{Th1_5}) equals \(2\mathop{\sum }\limits_{t}{e }_{t}^{\prime}{\Lambda }_{s_t}^{ \circ  }{f}_{t}{I}_{\left( {m_t} = {s_t}\right) } - {2\mathop{\sum }\limits_{t}{e }_{t}^{\prime}{P}_{\widetilde{\Lambda }_{m_t}}{\Lambda }_{s_t}^{ \circ  }{f}_{t}{I}_{\left( {m_t} = {s_t}\right) }}.\)  By Assumption E, we have
\begin{align}
   {\begin{Vmatrix}\mathop{\sum }\limits_{{t= 1}}^{n}{e}_{t}^{\prime}{\Lambda }_{s_t}^{\circ}{f}_{t}\end{Vmatrix}}^{2}
   &= {\begin{Vmatrix}\mathop{\sum }\limits_{t}{Vec}^{\prime}\left( {F}_{t}\right) \left( {C}_{{s_t}}^{\circ\prime} \otimes  {R}_{s_t}^{\circ\prime}\right) {Vec}\left( {E}_{t}\right) \end{Vmatrix}}^{2} = {\begin{Vmatrix}\mathop{\sum }\limits_{t}{Vec}^{\prime}\left( {F}_{t}\right)  {Vec}\left( {R}_{s_t}^{\circ\prime}{E}_{t}{C}_{{s_t}}^{\circ} \right) \end{Vmatrix}}^{2} \notag\\
   & \lesssim  \mathop{\sum }\limits_{{s = 1}}^{{k}_{1}}\mathop{\sum }\limits_{{h = 1}}^{{k}_{2}}{\begin{Vmatrix}\dfrac{1}{\sqrt{n}}\mathop{\sum }\limits_{t}{f}_{t,{sh}}\dfrac{{R}_{{s_t},\cdot s}^{\circ\prime}}{\sqrt{p}}{E}_{t}\dfrac{{C}_{{s_t}, \cdot h}^{\circ}}{\sqrt{q}}\end{Vmatrix}}^{2} \times  {npq} \notag\\
   &=  {O}_{p}\left( {n{pq}}\right).
\end{align}

By expression (\ref{Th1_6}), Assumption A(1) and B(1), we have
$\parallel  \mathop{\sum }\limits_{t}{e}_{t}^{\prime}{P}_{\widetilde{\Lambda }_{m_t}}{\Lambda }_{s_t}^{\circ}{f}_{t}\parallel  \leq  {\left(\mathop{\sum }\limits_{t}\parallel {e}_{t}^{\prime}{P}_{\widetilde{\Lambda }_{m_t}}{\parallel }^{2}\right) }^{\frac{1}{2}}{\left(   \mathop{\sum }\limits_{t}\parallel {f}_{t}{\parallel }^{2}\right) }^{\frac{1}{2}} \cdot \sup\limits_{j} \parallel {\Lambda }_{j}^{\circ}\parallel  = {O}_{p}\left( {{p}^{\frac{3}{4}}{q}^{\frac{3}{4}}n + {pq}{n}^{\frac{3}{4}}}\right) .$  Thus the third term of expression (\ref{Th1_5}) is \({O}_{p}\left( {{p}^{\frac{3}{4}}{q}^{\frac{3}{4}}n + {pq}{n}^{\frac{3}{4}}}\right)\) .

(3.6) Move from expression (\ref{Th1_5}) to the right hand side of equation (\ref{Th1_4}), and
take the results of \(\left( {3.1}\right)  - \left( {3.5}\right)\) together, we have
\begin{align}
   0 \leq  \frac{1}{2}\widehat{\sigma }^{-2} \mathop{\sum}\limits_{t=1}^{n}{\begin{Vmatrix}{M}_{{\widetilde{\Lambda }}_{m_t}}{\Lambda }_{s_t}^{ \circ  }{f}_{t}\end{Vmatrix}}^{2}{I}_{\left( {m_t} = {s_t}\right) }
   & \leq  O\left( n\right)  + {O}_{p}\left( {n\log \left( pq\right) }\right)  + O\left( n\right) \notag\\
   & + {O}_{p}\left( {{p}^{\frac{1}{2}}{q}^{\frac{1}{2}}{n} + {pq}{n}^{\frac{1}{2}}}\right) \notag\\
   & + {O}_{p}\left( {{p}^{\frac{3}{4}}{q}^{\frac{3}{4}}n + {pq}{n}^{\frac{3}{4}}}\right).
\end{align}
Thus$
\mathop{\sum}\limits_{t=1}^{n}{\begin{Vmatrix}{M}_{{\widetilde{\Lambda }}_{m_t}}{\Lambda }_{s_t}^{ \circ  }{f}_{t}\end{Vmatrix}}^{2}{I}_{\left( {m_t} = {s_t}\right) }={O}_{p}\left( {{p}^{\frac{3}{4}}{q}^{\frac{3}{4}}n + {pq}{n}^{\frac{3}{4}}}\right).
$

In the summation, there are about \({\pi}_{1}^{\circ}n\) terms with \({\Lambda }_{s_t}^{\circ} = {\Lambda }_{1}^{\circ}\) , since \({\pi}_{1}^{\circ}\) is the unconditional probability of \({s_t} = 1\) (Rigorously speaking, there are \(\mathop{\sum }\limits_{{t = 1}}^{n}{I}_{(s_t=1)}\) terms, but\(\frac{1}{n} \mathop{\sum }\limits_{{t = 1}}^{n}{I}_{(s_t=1)}\xrightarrow{P} {\pi}_{1 }^{\circ}\), as \( n\rightarrow \infty\) ). For each \(t\) with \({s}_{t} = 1,{\Lambda }_{1}^{\circ}{f}_{t}\) are projected on one of \({\widetilde{\Lambda }}_{j}\) , \(j = 1,\cdots M\) , thus there exists one certain \({\widetilde{\Lambda }}_{j}\) such that \({\Lambda }_{1}^{\circ}{f}_{t}\) is projected on \(\widetilde{\Lambda }_{j}\) at least \({\pi }_{1}^{ \circ  }n  / M\) times. Defined this \({\widetilde{\Lambda }}_{j}\) as \({\widetilde{\Lambda }}_{1}\), then \(\mathop{\sum }\limits_{{t = 1}}^{n}{I}_{(m_t = 1)}{I}_{(s_t=1)} \geq  \frac{n{\pi }_{1}^{ \circ  }}{M}\) . Thus by Assumption C(2.4),

\[
{\rho }_{\min }\left( {\frac{1}{\mathop{\sum }\limits_{{t = 1}}^{n}{I}_{(m_t = 1)}{I}_{({s_t} = 1)}}\mathop{\sum }\limits_{{t = 1}}^{n}{f}_{t}{f}_{t}^{\prime }{I}_{(m _t =1)}{I}_{({s_t} = 1)}}\right)  \geq  {c}_{0},
\]
for some \({c}_{0}>0 \)  \( w.p.a.1.\) Then we have
\begin{align}
    {O}_{p}\left( {{p}^{\frac{3}{4}}{q}^{\frac{3}{4}}n + {pq}{n}^{\frac{3}{4}}}\right) & = \mathop{\sum}\limits_{t=1}^{n}{\begin{Vmatrix}{M}_{{\widetilde{\Lambda }}_{m_t}}{\Lambda }_{s_t}^{ \circ  }{f}_{t}\end{Vmatrix}}^{2}{I}_{\left( {m_t} = {s_t}\right) }\geq  \mathop{\sum}\limits_{t=1}^{n}{\begin{Vmatrix}{M}_{{\widetilde{\Lambda }}_{1}}{\Lambda }_{s_t}^{ \circ  }{f}_{t}\end{Vmatrix}}^{2}I_{\left( {{m_t} = 1}\right)} I_{\left({{s_t} = 1}\right)} \notag\\
    & = \operatorname{tr}\left( {{\Lambda }_{1}^{\circ\prime}{M}_{\widetilde{\Lambda }_{1}}{\Lambda }_{1}^{\circ}\mathop{\sum }\limits_{{t = 1}}^{n}{f}_{t}{f}_{t}^{\prime}{I}_{\left( {m_t} = 1\right) }{I}_{\left( {s_t} = 1\right) }}\right) \notag\\
    &\geq \operatorname{tr}\left( {\Lambda }_{1}^{\circ}{}^{\prime}{M}_{\widetilde{\Lambda }_{1}}{\Lambda }_{1}^{\circ}\right) {\rho }_{\min }
\left(
    \mathop{\sum}\limits_{t}
    {f}_{t} {f}_{t }^{\prime}
    {I}_{\left( {m}_{t} = 1 \right)}
    {I}_{\left( s_t = 1 \right)}
\right) \notag\\
&\geq \operatorname{tr}\left( {{\Lambda }_{1}^{\circ\prime}{M}_{\widehat{\Lambda }_{1}}{\Lambda }_{1}^{\circ}}\right) \frac{n{\pi }_{1}^{o}}{M}{c}_{0}\;\text{ w.p.a.1 }
\end{align}
Thus \(\frac{1}{{pq}}{\begin{Vmatrix}{M}_{{\widehat{\Lambda }}_{1}}{\Lambda }_{1}^{ \circ  }\end{Vmatrix}}_{F}^{2} = \frac{1}{pq}\operatorname{tr}\left( {{\Lambda }_{1}^{\circ\prime}{M}_{{\widehat{\Lambda }}_{1}}{\Lambda }_{1}^{ \circ  }}\right)  = {O}_{p}\left( {{p}^{-\frac{1}{4}}{q}^{-\frac{1}{4}} + {n}^{-\frac{1}{4}}}\right)  \stackrel{\triangle}{=}
{O}_{p}\left( \frac{1}{\sqrt{{\delta}_{p q n}}}\right),\)  where
\({\delta}_{pqn} = \min \{ \sqrt{pq},\sqrt{n}\}\) . Similarly, for \(k = 2,\cdots ,M\) , we also have \(\frac{1}{pq}{\begin{Vmatrix}{M}_{\widehat{\Lambda}_{k}}{\Lambda }_{k}^{0}\end{Vmatrix}}_{F}^{2} = {O}_{p}\left( \frac{1}{\sqrt {{\delta}_{pqn} }}\right)\).
Under Assumption B(1), we can further have
\begin{align}
    O_{p}\left( \frac{1}{\sqrt {{\delta}_{pqn} }}\right) & = \frac{1}{{pq}}{\begin{Vmatrix}{M}_{{\widehat{\Lambda}}_{k}}{\Lambda }_{k}^{\circ}\end{Vmatrix}}_{F}^{2} = \frac{1}{{pq}}\operatorname{tr}\left\lbrack  {{\Lambda }_{k}^{\circ\prime}{M}_{{\widehat{\Lambda}}_{k}}{\Lambda }_{k}^{0}}\right\rbrack \notag\\
    & = \frac{1}{{pq}}\operatorname{tr}\left\lbrack  {{ \Lambda_{k}^{\circ\prime}\Lambda_k^{\circ}} - {\Lambda }_{k}^{\circ\prime}{\widehat{\Lambda}}_{k}\left( {{\widehat{\Lambda}}_{k}^{\prime}{\widehat{\Lambda}}_{k}}\right)^{-1}  \widehat{\Lambda }_{k}^{\prime}{\Lambda}_{k}^{\circ}}\right\rbrack \notag\\
    &= \operatorname{tr}\left\lbrack  {D_k^{(2)}\otimes D_k^{(1)} - \frac{{C}_{k}^{\circ\prime}\widetilde{C}_{k}\widetilde{C}_{k}^{\prime}{C}_{k}^{\circ}}{q} \otimes  \frac{{R}_{k}^{\circ\prime}\widetilde{R}_{k}\widetilde{R}_{k}^{\prime}{R}_{k}^{\circ}}{p}}\right\rbrack ,
\end{align}
where $\widetilde{R} = \widehat{R}\widehat{D}_1^{-1/2}$ and $\widetilde{C} = \widehat{C}\widehat{D}_2^{-1/2}$, which means
\begin{align}
    & \frac{{C}_{k}^{\circ\prime }{\widetilde{C}}_{k}}{q} \cdot  \frac{\widetilde{C}_{k}^{\prime }{C}_{k}^{\circ}}{q} \otimes  \frac{{R}_{k}^{\circ\prime }\widetilde{R}_{k}}{p} \cdot  \frac{\widetilde{R}_{k}^{\prime}{R}_{k}^{\circ }}{p} = D_k^{(2)}\otimes D_k^{(1)} + {O}_{p}\left( \frac{1}{\sqrt{\delta_{pqn}}}\right),  \notag\\
 & \frac{\widetilde{C}_{k}^{\circ\prime }{\widetilde{C}}_{k}}{q} \cdot  \frac{\widetilde{C}_{k}^{\prime }{\widetilde{C}_{k}^{\circ}}}{q} \otimes  \frac{{\widetilde{R}}_{k}^{\circ\prime }\widetilde{R}_{k}}{p} \cdot  \frac{\widetilde{R}_{k}^{\prime}{\widetilde{R}}_{k}^{\circ }}{p} = I_r + {O}_{p}\left( \frac{1}{\sqrt{\delta_{pqn}}}\right).
\end{align}
Likewise, by examining the likelihood function of \(\left( {\text{Vec}\left( {y}_{t}^{\prime }\right) }\right)\) , we can also get
\[
\frac{{\widetilde{R}}_{k}^{\circ\prime}\widetilde{R}_{k}}{p} \cdot  \frac{\widetilde{R}_{k}^{\prime  }{\widetilde{R}}_{k}^{\circ}}{p} \\\otimes\frac{{\widetilde{C}}_{k}^{\circ\prime}\widetilde{{C}}_{k}}{q}\frac{\widetilde{C}_{k}^{\prime}{\widetilde{C}}_{k}^{\circ}}{q} = {I}_{r} + {O}_{p}\left( \frac{1}{\sqrt{\delta_{pqn}}}\right) .
\]
\end{proof}

\setcounter{equation}{0}
 \setcounter{subsection}{0}
 \renewcommand{\theequation}{C.\arabic{equation}}
 \renewcommand{\thesubsection}{C.\arabic{subsection}}
\section*{C ~~ Details for Theorem 2}

\begin{proof}
     Step(1): We first show \(\left| {{\widehat{w}}_{t | t}^{\left( k\right) } - {I}_{\left( {s}_{t} = k\right) }}\right|  = {o}_{p}\left( \frac{1}{{N}^{\eta}}\right)\) , where \(N = {pq}\) . Define $\widehat{\Theta}=\{\widehat{\theta }_{1},{\widehat{\theta }}_{2}\}$.

When \({s}_{t} = k\) , since
\begin{align}
{\widehat{w}}_{t | t}^{\left( k\right) }
&= {P}_{r}\left( {{s}_{t} = k \mid  {y}_{t},{\mathcal{Y}}_{t - 1};\widehat{\Theta}}\right)
 \notag\\
&=\frac{{P}_{r}\left( {{s}_{t} = k,{y}_{t} \mid  {\mathcal{Y}}_{t - 1};\widehat{\Theta}}\right)}{p \left( {y}_{t} \mid  {\mathcal{Y}}_{t - 1}; \widehat{\Theta}\right)}
 \notag\\
&= \frac{p\left( {{y}_{t} \mid  {s}_{t} = k,{\mathcal{Y}}_{t-1};\widehat{\Theta}}\right) {P}_{r}\left( {s}_{t} = k \mid  {\mathcal{Y}}_{t-1};\widehat{\Theta} \right)}{\mathop{\sum }\limits_{{j = 1}}^{M}p\left( {{y}_{t} \mid  {s}_{t} = j, {\mathcal{Y}}_{t-1};\widehat{\Theta}}\right) {P}_{r}\left( {{s}_{t} = j \mid  {\mathcal{Y}}_{t - 1};\widehat{\Theta}}\right)}
 \notag\\
&= \frac{{\widehat{w}}_{t | t - 1}^{\left( k\right) }p\left( {{y}_{t} \mid  {s}_{t} = k,  {\mathcal{Y}}_{t - 1};\widehat{\Theta}}\right) }{\mathop{\sum }\limits_{{j = 1}}^{M}{\widehat{w}}_{t | t - 1}^{\left( j\right) }p\left( {{y}_{t} \mid  {s}_{t} = j, {\mathcal{Y}}_{t - 1};\widehat{\Theta}}\right) }, \notag
\end{align}
we have
\begin{align}
\left| {{\widehat{w}}_{t | t}^{\left( k\right) } - {I}_{\left( {s}_{t} = k\right) }}\right|
&= \frac{\mathop{\sum }\limits_{{j \neq  k}}{\widehat{w}}_{t |t - 1}^{\left( j\right) }p\left( {{y}_{t} \mid  {s}_{t} = j,\mathcal{Y}_{t-1};\widehat{\Theta}}\right) }{\mathop{\sum }\limits_{{j = 1}}^{M}{\widehat{w}}_{t | t - 1}^{\left( j\right) }p\left( {{y}_{t} \mid  {s}_{t} = j,\mathcal{Y}_{t-1};\widehat{\Theta}}\right) }
 \notag \\
&\leq\frac{\mathop{\sum }\limits_{{j \neq  k}}{\widehat{w}}_{t | t - 1}^{\left( j\right) }p\left( {y}_{t} \mid  {s}_{t} = j,\mathcal{Y}_{t-1},\widehat{\Theta} \right) }{{\widehat{w}}_{t \mid  t - 1}^{\left( k\right) }p\left( {{y}_{t} \mid  {s}_{t} = k,\mathcal{Y}_{t-1};\widehat{\Theta}}\right) }
\notag\\
&= \mathop{\sum }\limits_{{j \neq  k}}\frac{{\widehat{w}}_{t | t - 1}^{\left( j\right) }}{{\widehat{w}}_{t | t - 1}^{\left( k \right) }}\frac{p\left( {{y}_{t} \mid  {s}_{t} = j,\mathcal{Y}_{t-1};\widehat{\Theta}}\right) }{p\left( {{y}_{t} \mid  {s}_{t} = k,\mathcal{Y}_{t-1};\widehat{\Theta}}\right) }
\notag\\
&= \mathop{\sum }\limits_{{j \neq  k}}\frac{{\widehat{w}}_{t | t - 1}^{\left( j \right)}}{{\widehat{w}}_{t | t - 1}^{\left( k \right)}}\exp \left\{  \log p\left( {{y}_{t}|{s}_{t} = j,\mathcal{Y}_{t-1};\widehat{\Theta}}\right)  \right.
\notag \\
&- \left.
\log p\left( {{y}_{t}|{s}_{t} = k,\mathcal{Y}_{t-1};\widehat{\Theta}}\right) \right\}. \notag
\end{align}
When \({s}_{t} = h \neq k\) , Sine \(\mathop{\sum }\limits_{{j = 1}}^{M}{\widehat{w}}_{t | t}^{\left( j\right) } = 1\) , we have \({\widehat{w}}_{t | t}^{\left( k\right) } - {I}_{({s}_{t} = k)} = {\widehat{w}}_{t | t}^{\left( k\right) } = 1 - \mathop{\sum }\limits_{{j \neq k}}{\widehat{w}}_{t | t}^{\left( j\right) }\)  \(\leq  1 - {\widehat{w}}_{t | t}^{\left( h\right) }\) , thus it suffices to show \(\left| {{\widehat{w}}_{t | t}^{\left( k\right) } - {I}_{\left( {s}_{t} = k\right) }}\right|  = {o}_{p}\left( \frac{1}{{N}^{\eta}}\right)\) when \({s}_{t} = k\) .
Since
\begin{align}
{\widehat{w}}_{t | t-1}^{\left( k\right) }
&= {P}_{r}\left( {{s}_{t} = k \mid  \mathcal{Y}_{t-1};\widehat{\Theta},P}\right)  = \mathop{\sum }\limits_{{i = 1}}^{M}{P}_{r}\left( {s}_{t} = k,{s}_{t - 1} = i \mid \mathcal{Y}_{t-1};\widehat{\Theta},P\right)
\notag\\
&= \mathop{\sum }\limits_{{i = 1}}^{M}{P}_{r}\left( {{s}_{t} = k \mid  {s}_{t - 1} = i;P}\right) {P}_{r}\left( {{s}_{t - 1} = i \mid  \mathcal{Y}_{t-1};\widehat{\Theta},P}\right)
\notag\\
&= \mathop{\sum }\limits_{{i = 1}}^{M}{p}_{ik}{\widehat{w}}_{t-1 | t-1}^{\left( i\right) } \geq  \mathop{\min }\limits_{i}{p}_{ik} > 0,\notag
\end{align}
for all $k$ , it suffices to show
\[\mathop{\sup }\limits_{t} \exp \left\{ \log p\left( {{y}_{t} \mid {s}_{t} = j,\mathcal{Y}_{t-1};\widehat{\Theta}}\right)  - \log p\left( {{y}_{t} \mid {s}_{t} = k,\mathcal{Y}_{t-1};\widehat{\Theta}}\right)\right\} = {o}_{p}\left( \frac{1}{{N}^{\eta}}\right)\]
for any \( k \neq  j \) , i.e., it suffices to show for any fixed \({c}_{0} > 0\)

\[
{P}_{r}\left( \sup_{t} \left\lbrack {\log p\left( {{y}_{t} \mid  {s}_{t} = j,\mathcal{Y}_{t-1};\widehat{\Theta}}\right)  - \log p\left( {{y}_{t} \mid  {s}_{t} = k,\mathcal{Y}_{t-1};\widehat{\Theta}}\right) }\right\rbrack \geq  { \log \frac{{c}_{0}}{{N}^{\eta}} } \right) \to 0
\]

or,
\begin{align}\label{Th2_0}
   & {P}_{r}\left\{ { \mathop{\min }\limits_{t} \left\lbrack  {\log p\left( {{y}_{t} \mid  {s}_{t} = k,\mathcal{Y}_{t-1};\widehat{\Theta}}\right)  - \log p\left( {{y}_{t} \mid  {s}_{t} = j,\mathcal{Y}_{t-1} ;\widehat{\Theta}}\right) }\right\rbrack  } \right. \notag \\
  & \left. \leq  \eta\log N - \log {c}_{0} \right\} \to  0.
\end{align}
Since ${y}_{t \mid  t - 1}^{(k)} \triangleq  E\left\lbrack  {{y}_{t} \mid s_t=k, \mathcal{Y}_{t - 1};\Theta}\right\rbrack = \Lambda_k f_{t|t-1}^{(k)}$ and ${\Sigma }_{t \mid  t - 1}^{(k)} \triangleq  \operatorname{Cov}\left\lbrack  {{y}_{t} \mid  s_t=k,\mathcal{Y}_{t - 1};\Theta}\right\rbrack = \sigma^2I_{pq}+ \Lambda_k V_{t|t-1}^{(k)}\Lambda_k^{\prime}$, where  ${f}_{t \mid  t - 1}^{(k)}\triangleq  E\left\lbrack  {{f}_{t} \mid s_t=k, \mathcal{Y}_{t - 1};\Theta}\right\rbrack$ and \({V}_{t\mid {t} - 1}^{(k)}\triangleq\operatorname{Cov}\left\lbrack  {{f}_{t} \mid  s_t=k,\mathcal{Y}_{t - 1};\Theta}\right\rbrack\) can be given by the kalman filter recursions, then we can get
\begin{align}
    \log p\left( {{y}_{t} \mid  {s}_{t} = k,\mathcal{Y}_{t-1};\widehat{\Theta}}\right)  &=  - \frac{pq}{2}\log \left( 2\pi\right)  - \frac{1}{2}\log  \left|\widehat{\Sigma}_{t \mid  t - 1}^{\left( k\right) }\right| \notag\\
    &  - \frac{1}{2}\left( {y}_{t} - \widehat{y}_{t \mid  t - 1}^{\left( k\right) }\right) ^{\prime} \widehat{\Sigma}_{t \mid  t - 1}^{(k)-1} \left( {y}_{t} - \widehat{y}_{t \mid  t - 1}^{\left( k\right) }\right) ,\notag
\end{align}
where the definitions of $ \widehat{y}_{t \mid t- 1}^{(k)}$ and $ \widehat{\Sigma}_{t \mid t- 1}^{(k)}$ are the same as that of ${y}_{t \mid t- 1}^{(k)}$ and $ {\Sigma}_{t \mid t- 1}^{(k)},$  with the only requirement being to replace the parameters with $\widehat{\Theta}$ .

Similarly to equation (\ref{Th1_2_1}) in the proof of Theorem 1, define ${\widetilde{\Lambda }}_{k} = {\widehat{\Lambda }}_{k}{\left( {\widehat{V}}_{t|t - 1}^{\left( k\right) }\right) }^{\frac{1}{2}}$. Then we have
\begin{align}
  & {\left( {y}_{t} - {\widehat{y}}_{t | t-1}^{\left( k\right) }\right) }^{\prime }{\left( {\widehat{\Sigma}}_{t | t-1}^{\left( k\right) }\right) }^{-1}\left( {{y}_{t} - {\widehat{y}}_{t | t-1}^{\left( k\right) }}\right) \notag\\
  =&  {\widehat{\sigma }}^{-2}{\begin{Vmatrix}{M}_{{\widetilde{\Lambda }}_{k}}\left( {y}_{t} - {\widehat{y}}_{t \mid  t - 1}^{\left( k\right) }\right) \end{Vmatrix}}_{F}^{2} \notag\\
   +& {\left( {y}_{t} - {\widehat{y}}_{t \mid  t - 1}^{\left( k\right) }\right) }^{\prime }{\widetilde{\Lambda }}_{k}{\left( {\widehat{\sigma }}^{2}{I}_{r} + {\widetilde{\Lambda }}_{k}^{\prime }{\widetilde{\Lambda }}_{k}\right) }^{-1}{\left( {\widetilde{\Lambda }}_{k}^{\prime }{\widetilde{\Lambda }}_{k}\right) }^{-1}{\widetilde{\Lambda }}_{k}^{\prime}\left( {{y}_{t} - {\widehat{y}}_{t \mid  t - 1}}\right).
\end{align}
It further implies that
\begin{align}
  & \log p\left( {{y}_{t} \mid  {s}_{t} = k,\mathcal{Y}_{t-1};\widehat{\Theta }}\right)  \notag\\
  &=  - \frac{pq}{2}\log \left( {2\pi}\right)  - \frac{1}{2}\log \left| {{\widehat{\sigma }}^{2}{I}_{pq} + {\widetilde{\Lambda }}_{k}{\widetilde{\Lambda }}_{k}^{\prime}}\right| - \frac{1}{2}{\widehat{\sigma }}^{-2}{\begin{Vmatrix}{M}_{{\widetilde{\Lambda }}_{k}}\left( {y}_{t} - {\widehat{y}}_{t | t-1}^{\left( k\right) }\right) \end{Vmatrix}}_{F}^{2} \notag\\
   &- \frac{1}{2}\left( {{y}_{t} - {\widehat{y }}_{t | t-1}^{\left( k\right) }}\right)^{\prime} {\widetilde{\Lambda }}_{k}{\left( {\widehat{\sigma }}^{2}{I}_{r} + {\widetilde{\Lambda }}_{k}^{\prime}{\widetilde{\Lambda }}_{k}\right) }^{-1}{\left( {\widetilde{\Lambda }}_{k}^{\prime}{\widetilde{\Lambda }}_{k}\right) }^{-1}{\widetilde{\Lambda }}_{k}^{\prime}\left( {{y}_{t} - {\widehat{y}}_{t | t-1}^{\left( k\right) }}\right) ,
\end{align}
\begin{align}
   & \log p\left( {{y}_{t} \mid  {s}_{t} = j,\mathcal{Y}_{t-1};\widehat{\Theta }}\right)  \notag\\
   &=   - \frac{pq}{2}\log \left( {2\pi}\right)  - \frac{1}{2}\log \left| {{\widehat{\sigma }}^{2}{I}_{pq} + {\widetilde{\Lambda }}_{j}{\widetilde{\Lambda }}_{j}^{\prime}}\right|  - \frac{1}{2}{\widehat{\sigma }}^{-2}{\begin{Vmatrix}{M}_{{\widetilde{\Lambda }}_{j}}\left( {y}_{t} - {\widehat{y}}_{t | t-1}^{\left( j\right) }\right) \end{Vmatrix}}_{F}^{2} \notag\\
   &- \frac{1}{2}\left( {{y}_{t} - {\widehat{y }}_{t | t-1}^{\left( j\right) }}\right)^{\prime} {\widetilde{\Lambda }}_{j}\left(\widehat{\sigma }^{2}{I}_{r} + {\widetilde{\Lambda }}_{j}^{\prime}{\widetilde{\Lambda }}_{j}\right)^{-1}{\left( {\widetilde{\Lambda }}_{j}^{\prime}{\widetilde{\Lambda }}_{j}\right) }^{-1}{\widetilde{\Lambda }}_{j}^{\prime}\left( {{y}_{t} - {\widehat{y}}_{t | t-1}^{\left( j\right) }}\right).
\end{align}
Further by
\[
- \frac{1}{2}\log \left| {{\widehat{\sigma }}^{2}{I}_{pq} + {\widetilde{\Lambda }}_{k}{\widetilde{\Lambda }}_{k}^{\prime }}\right|  + \frac{1}{2}\log \left| {{\widehat{\sigma }}^{2}{I}_{pq} + {\widetilde{\Lambda }}_{j}{\widetilde{\Lambda }}_{j}^{\prime }}\right|  =  - \frac{1}{2}\log \frac{ \left| {I}_{r} + \frac{1}{{\widehat{\sigma }}^{2}}{\widetilde{\Lambda }}_{k}^{\prime}{\widetilde{\Lambda }}_{k}\right|}{\left|{{I}_{r} + \frac{1}{{\widehat{\sigma }}^{2}}{\widetilde{\Lambda }}_{j}^{\prime }{\widetilde{\Lambda }}_{j}}\right|},
\] then
\begin{align}
   &\log p\left( {{y}_{t} \mid  {s}_ {t}=k,\mathcal{Y}_{t-1};\widehat{\Theta }}\right)  - \log p\left( {{y}_{t} \mid  {s}_{t} = j ,\mathcal{Y}_{t-1};\widehat{\Theta }}\right)  \notag\\
   &=  - \frac{1}{2}\log \left| {{I}_{r} + \frac{1}{\widehat{\sigma }^{2}}{\widetilde{\Lambda }}_{k}^{\prime }{\widetilde{\Lambda }}_{k}}\right|  + \frac{1}{2}\log \left| {{I}_{r} + \frac{1}{\widehat{\sigma }^{2}}{\widetilde{\Lambda }}_{j}^{\prime }\widetilde{\Lambda }_{j}}\right| \notag\\
   &+ \frac{1}{2}{\widehat{\sigma }}^{-2}{\begin{Vmatrix} {M}_{{\widetilde{\Lambda }}_{j}}\left( {{y}_{t} - {\widehat{y}}_{t \mid  t - 1}^{\left( j\right) }}\right) \end{Vmatrix}}_{F}^{2} - \frac{1}{2}{\widehat{\sigma }}^{-2}{\begin{Vmatrix}{M}_{{\widetilde{\Lambda }}_{j}^{0}}\left( {{y}_{t} - {y}_{t \mid  t - 1}^{0\left( j\right) }}\right)\end{Vmatrix}}_{F}^{2} \notag\\
   &+ \frac{1}{2}{\widehat{\sigma }}^{-2}{\begin{Vmatrix}{M}_{\widetilde{\Lambda }_{k}^{0}}\left( {y}_{t} - {y}_{t |t-1}^{0\left( k\right) }\right) \end{Vmatrix}}_{F}^{2} - \frac{1}{2}{\widehat{\sigma }}^{-2}{\begin{Vmatrix}{M}_{{\widetilde{\Lambda }}_{k}}\left( {y}_{t} - {\widehat{y}}_{t |t- 1}^{\left(k\right) }\right) \end{Vmatrix}}_{F}^{2} \notag\\
   &+ \frac{1}{2}{\widehat{\sigma }}^{-2}{\begin{Vmatrix} {M}_{{\widetilde{\Lambda }}_{j}^{0}}\left( {{y}_{t} - {y}_{t \mid  t - 1}^{0\left( j\right) }}\right) \end{Vmatrix}}_{F}^{2} - \frac{1}{2}{\widehat{\sigma }}^{-2}{\begin{Vmatrix}{M}_{{\widetilde{\Lambda }}_{k}^{0}}\left( {{y}_{t} - {y}_{t \mid  t - 1}^{0\left( k\right) }}\right)\end{Vmatrix}}_{F}^{2} \notag\\
   & - \frac{1}{2}\left( {{y}_{t} - {\widehat{y}}_{t \mid  t - 1}^{\left( k\right) }}\right)^{\prime} {\widetilde{\Lambda }}_{k}{\left( {\widehat{\sigma }}^{2}{I}_{r} + {\widetilde{\Lambda }}_{k}^{\prime}{\widetilde{\Lambda }}_{k}\right) }^{-1}{\left( {\widetilde{\Lambda }}_{k}^{\prime}{\widetilde{\Lambda }}_{k}\right) }^{-1}{\widetilde{\Lambda }}_{k}^{\prime}\left( {{y}_{t} - {\widehat{y}}_{t \mid  t - 1}^{\left( k\right) }}\right) \notag \\
   &+ \frac{1}{2}\left( {{y}_{t} - {\widehat{y}}_{t \mid  t - 1}^{\left( j\right) }}\right)^{\prime} {\widetilde{\Lambda }}_{j}{\left( {\widehat{\sigma }}^{2}{I}_{r} + {\widetilde{\Lambda }}_{j}^{\prime}{\widetilde{\Lambda }}_{j}\right) }^{-1}{\left( {\widetilde{\Lambda }}_{j}^{\prime}{\widetilde{\Lambda }}_{j}\right) }^{-1}{\widetilde{\Lambda }}_{j}^{\prime}\left( {{y}_{t} - {\widehat{y}}_{t \mid  t - 1}^{\left( j\right) }}\right)
\end{align}
Since
\begin{align}\label{Th2_1}
  &  {\begin{Vmatrix} {M}_{\widetilde{\Lambda}_{j }^{0}}\left( {{y}_{t} - {y}_{t |t-1}^{0\left( j\right) }}\right) \end{Vmatrix}}_{F}^{2} - {\begin{Vmatrix} {M}_{\widetilde{\Lambda }_{k}^{0}}\left( {{y}_{t} - {y}_{t|t-1}^{0\left( k\right) }}\right) \end{Vmatrix}}_{F}^{2} \notag\\
  =& {\begin{Vmatrix}{M}_{{\Lambda }_{j}^{0}}{y}_{t} - {M}_{{\Lambda }_{j}^{0}}{\Lambda }_{j}^{0}{f}_{t |t - 1}^{0\left( j\right) }\end{Vmatrix}}_{F}^{2} - {\begin{Vmatrix}{M}_{{\Lambda }_{k}^{0}}\left( {\Lambda }_{k}^{0}{f}_{t} + {e}_{t} - {\Lambda }_{k}^{0}{f}_{t |t - 1}^{0\left( k\right) }\right) \end{Vmatrix}}_{F}^{2} \notag\\
  =& {\begin{Vmatrix}{M}_{{\Lambda}_{j}^{0}}{y}_{t}\end{Vmatrix}}_{F}^{2} - {\begin{Vmatrix}{M}_{{\Lambda}_{k}^{0}}{e}_{t}\end{Vmatrix}}_{F}^{2} \notag\\
  = &{y}_{t}^{\prime }{M}_{{\Lambda }_{j}^{0}}{y}_{t} - {e}_{t}^{\prime }{M}_{{\Lambda }_{k}^{0}}{e}_{t} \notag\\
  =& {\left( {\Lambda }_{k}^{0}{f}_{t} + {e}_{t}\right) }^{\prime }{M}_{{\Lambda }_{j}^{0}}\left( {{\Lambda }_{k}^{0}{f}_{t} + {e}_{t}}\right)  - {e}_{t}^{\prime }{M}_{{\Lambda }_{k}^{0}}{e}_{t}\notag\\
  =& {f}_{t}^{\prime }{\Lambda }_{k}^{0\prime }{M}_{{\Lambda }_{j}^{0}}{\Lambda }_{k}^{0}{f}_{t} + 2{e}_{t}^{\prime}{M}_{{\Lambda }_{j}^{0}}{\Lambda }_{k}^{0}{f}_{t} + {e}_{t}^{\prime}{M}_{{\Lambda }_{j}^{0}}{e}_{t} - {e}_{t}^{\prime}{M}_{{\Lambda }_{k}^{0}}{e}_{t} \notag\\
  = &{f}_{t}^{\prime }{\Lambda }_{k}^{0\prime }{M}_{{\Lambda }_{j}^{0}}{\Lambda }_{k}^{0}{f}_{t} + 2{e}_{t}^{\prime }{M}_{{\Lambda }_{j}^{0}}{\Lambda }_{k}^{0}{f}_{t} + {e}_{t}^{\prime }{P}_{{\Lambda }_{k}^{0}}{e}_{t} - {e}_{t}^{\prime }{P}_{{\Lambda }_{j}^{0}}{e}_{t} \notag\\
  \geq&   - {e}_{t}^{\prime }{P}_{{\Lambda }_{j}^{0}}{e}_{t} + 2{e}_{t}^{\prime }{M}_{{\Lambda }_{j}^{0}}{\Lambda }_{k}^{0}{f}_{t} + {f}_{t}^{\prime }{\Lambda }_{k}^{0\prime}{M}_{{\Lambda }_{j}^{0}}{\Lambda }_{k}^{0}{f}_{t},
\end{align}
then
\begin{align}
  & \log p\left({y}_{t} \mid  {s}_{t} = k,\mathcal{Y}_{t-1};\widehat{\Theta }\right)  - \log p\left( {{y}_{t} \mid  {s}_{t} = j,{\mathcal{Y}}_{t-1};\widehat{\Theta }}\right)  \notag\\
  \geq &  - \frac{1}{2}\log \left| {{I}_{r} + \frac{1}{\widehat{\sigma }^{2}}{\widetilde{\Lambda }}_{k}^{\prime}{\widetilde{\Lambda }}_{k}}\right| \notag\\
  - &\frac{1}{2}{\left( {y}_{t} - {\widehat{y }}_{t|t-1}^{\left( k\right) }\right) }^{\prime }\widehat{\Lambda }_{k}{\left( {\widehat{\sigma}}^{2}{I}_{r} + {\widetilde{\Lambda }}_{k}^{\prime }\widetilde{\Lambda }_{k}\right) }^{-1}{\left( {\widetilde{\Lambda }}_{k}^{\prime }\widetilde{\Lambda }_{k}\right) }^{-1}{\widetilde{\Lambda }}_{k}^{\prime }\left( {{y}_{t} - {\widehat{y}}_{t|t-1}^{\left( k\right) }}\right) \notag\\
  +& \frac{1}{2}{\widehat{\sigma }}^{-2}{\begin{Vmatrix} {M}_{\widetilde{\Lambda}_{j }}\left( {y}_{t} - \widehat{y}_{t|t-1}^{\left( j\right) }\right) \end{Vmatrix}}_{F}^{2} - \frac{1}{2}{\widehat{\sigma }}^{-2}{\begin{Vmatrix} {M}_{\widetilde{\Lambda}_{j }^{0}}\left( {y}_{t} - {y}_{t|t-1}^{0\left( j\right) }\right) \end{Vmatrix}}_{F}^{2} \notag\\
  +& \frac{1}{2}{\widehat{\sigma}}^{-2}{\begin{Vmatrix} {M}_{\widetilde{\Lambda}_{k}^{0}}\left( {y}_{t} - {y}_{t|t-1}\right) \end{Vmatrix}}_{F}^{2} - \frac{1}{2}{\widehat{\sigma}}^{-2}{\begin{Vmatrix} {M}_{\widetilde{\Lambda}_{k }}\left( {y}_{t} - \widehat{y}_{t|t-1}^{\left(k\right)}\right) \end{Vmatrix}}_{F}^{2} \notag\\
  -&\frac{1}{2}\widehat{\sigma}^{-2}{e}_{t}^{\prime}{P}_{{\Lambda }_{j}^{0}}{e}_{t} + \widehat{\sigma }^{-2}{e}_{t}^{\prime}{M}_{{\Lambda }_{j}^{0}}{\Lambda }_{k}^{0}{f}_{t} + \frac{1}{2}\widehat{\sigma}^{-2}{f}_{t}{}^{\prime }{\Lambda }_{k}^{0\prime }{M}_{{\Lambda }_{j}^{0}}{\Lambda }_{k}^{0}{f}_{t},
\end{align}
where the inequility follows from (\ref{Th2_1}) and throwing away \(\frac{1}{2}\log \left| {{I}_{r} + \frac{1}{\widehat{\sigma}^{2}}{\widetilde{\Lambda}_{j}}^{\prime }{\widetilde{\Lambda}_{j}}}\right|\) and \(\frac{1}{2}{\left( {y}_{t} - {\widehat{y}}_{t|t-1}^{\left( j\right) }\right) }^{\prime }{\widetilde{\Lambda}}_{j}\left( {\widehat{\sigma}}^{2}{I}_{r} + {\widetilde{\Lambda }}_{j}^{\prime }\widetilde{\Lambda }_{j}\right)^{-1}{\left( {\widetilde{\Lambda }}_{j}^{\prime }\widetilde{\Lambda }_{j}\right) }^{-1}{\widetilde{\Lambda }}_{j}^{\prime }\left( {{y}_{t} - {\widehat{y}}_{t|t-1}^{\left( j\right) }}\right)\). It follows that
\begin{align}\label{Th2_2}
   & \mathop{\min }\limits_{t}\left\lbrack  {\log p\left( {{y}_{t} \mid  {\mathcal{Y}}_{t-1},{s}_{t}=k;\widehat{\Theta}}\right)  - \log p\left( {{y}_{t} \mid  {\mathcal{Y}}_{t-1},{s}_{t} = j;\widehat{\Theta}}\right) }\right\rbrack \notag\\
    \geq & - \left( {{A}_{1} + {A}_{2} + {A}_{3} + {A}_{4} + {A}_{5} + {A}_{6}}\right) + \frac{1}{2}\widehat{\sigma }^{-2}\mathop{\min }\limits_{t}{f}_{t}^{\prime}{{\Lambda}_{k}^{0}}^{\prime}{M}_{{\Lambda}_{j}^{0}}{\Lambda}_{k}^{0}{f}_{t},
\end{align}
where
\begin{align}
    {A}_{1} & =\frac{1}{2}\log \left| {{I}_{r} + \frac{1}{\widehat{\sigma }^{2}}{\widetilde{\Lambda }}_{k}^{\prime}{\widetilde{\Lambda }}_{k}}\right|, \notag\\
{A}_{2}&=\frac{1}{2}\mathop{\sup }\limits_{t}{\left( {y}_{t} - {\widehat{y}}_{t|t - 1}^{\left( k\right) }\right) }^{\prime }\widetilde{\Lambda }_{k}{\left( {\widehat{\sigma }}^{2 }{I}_{r} + {\widetilde{\Lambda}}_{k}^{\prime}{\widetilde{\Lambda }}_{k}\right) }^{-1}\left( {\widetilde{\Lambda }}_{k}^{\prime}{\widetilde{\Lambda }}_{k}\right)^{-1}{\widetilde{\Lambda }_{k}}^{\prime}\left( {{y}_{t} - {\widehat{y}}_{t|t - 1}^{\left( k\right) }}\right),\notag\\
{A}_{3}&=\frac{1}{2}\widehat{\sigma }^{-2}\mathop{\sup }\limits_{t}\left| {\begin{Vmatrix}{M}_{\widetilde{\Lambda}_{j}}\left( {{y}_{t} - {\widehat{y}}_{t|t-1}^{\left( j\right) }}\right) \end{Vmatrix}}_{F}^{2} - {\begin{Vmatrix}{M}_{{\widetilde{\Lambda }}_{j}^{0}}\left( {{y}_{t} - {y}_{t|t-1}^{0\left( j\right) }}\right) \end{Vmatrix}}_{F}^{2}\right|,\notag\\
{A}_{4}&=\frac{1}{2}\widehat{\sigma }^{-2}\mathop{\sup }\limits_{t}\left| {\begin{Vmatrix}{M}_{\widetilde{\Lambda}_{k}^{0}}\left( {{y}_{t} - {y}_{t|t-1}^{0\left( k\right) }}\right) \end{Vmatrix}}_{F}^{2} - {\begin{Vmatrix}{M}_{{\widetilde{\Lambda }}_{k}}\left( {{y}_{t} - \widehat{y}_{t|t-1}^{\left( k\right) }}\right) \end{Vmatrix}}_{F}^{2}\right|,\notag\\
{A}_{5}&=\frac{1}{2}\widehat{\sigma }^{-2}\mathop{\sup }\limits_{t}{e}_{t}^{\prime}{P}_{{\Lambda}_{j}^{0}}{e}_{t},\notag\\
{A}_{6}&=\frac{1}{2}\mathop{\sup }\limits_{t}\left| {{e}_{t}^{\prime}{M}_{{\Lambda}_{j}^{0}}{\Lambda}_{k}^{0}{f}_{t}}\right|.\notag
\end{align}
Thus, for expression (\ref{Th2_0}), it suffices to show
\[
{P}_{r}\left( \frac{1}{2}{\widehat{\sigma }}^{-2}\mathop{\min }\limits_{t}{f}_{t}^{\prime}{\Lambda }_{k}^{0\prime}{M}_{{\Lambda}_{j}^{0}}{\Lambda }_{k}^{0}{f}_{t} \leq  {A}_{1} + {A}_{2} + {A}_{3} + {A}_{4} + {A}_{5} + {A}_{6} + \eta \log N\right) \rightarrow  0.
\]
By Assumption B(2), \(\mathop{\min }\limits_{t}{f}_{t}^{\prime}{\Lambda }_{k}^{0\prime}{M}_{{\Lambda}_{j}^{0}}{\Lambda }_{k}^{0}{f}_{t} \geq  N{c}_{0}\) for some \({c}_{0} > 0\). Thus, it suffices to show that \({A}_{1},\ldots ,{A}_{6}\) are all \({o}_{p}\left( N\right)\).

(1) Consider \({A}_{1}\).
\begin{align}
{A}_{1}  & = \frac{1}{2}\log \left| {{I}_{r} + \frac{1}{\widehat{\sigma }^{2}}{\widetilde{\Lambda }}_{k}^{\prime}{\widetilde{\Lambda }}_{k}}\right| \notag\\
&= \frac{1}{2}\log \left| {{I}_{r} + \frac{1}{\widehat{\sigma }^{2}}{\left( {\widehat{V}}_{t|t-1}^{\left( k\right) }\right) }^{\frac{1}{2}}{\widehat{\Lambda }}_{k}^{\prime }{\widehat{\Lambda }}_{k}{\left( {\widehat{V}}_{t|t-1}^{\left( k\right) }\right) }^{\frac{1}{2}}}\right|\notag\\
& = {O}_{p}\left( {\log \left( {pq}\right) }\right)  = {o}_{p}\left( {N}\right) .
\end{align}

(2) Consider \({A}_{2}\).
\begin{align}
{A}_{2} &= \frac{1}{2}\mathop{\sup }\limits_{t}{\left( {y}_{t} - {\widehat{y}}_{t|t-1}^{\left( k\right) }\right) }^{\prime }\widetilde{\Lambda}_{k}{\left( {\widehat{\sigma }}^{2}{I}_{r} + {\widetilde{\Lambda}}_{k}^{\prime}{\widetilde{\Lambda }}_{k}\right) }^{-1}\left( {\widetilde{\Lambda }}_{k}^{\prime}{\widetilde{\Lambda }}_{k}\right)^{-1}{\widetilde{\Lambda }_{k}}^{\prime}\left( {{y}_{t} - {\widehat{y}}_{t|t-1}^{\left( k\right) }}\right)\notag\\
&= \frac{1}{2}\mathop{\sup }\limits_{t}{\left( {y}_{t} - {\widehat{y}}_{t|t-1}^{\left( k\right) }\right) }^{\prime }\widetilde{\Lambda}_{k}{\left( {\widetilde{\Lambda }}_{k}^{\prime}{\widetilde{\Lambda }}_{k}\right) }^{-\frac{1}{2}}{\left( {\widetilde{\Lambda }}_{k}^{\prime}{\widetilde{\Lambda }}_{k}\right) }^{\frac{1}{2}}{\left( {\widehat{\sigma }}^{2}{I}_{r} + {\widetilde{\Lambda}}_{k}^{\prime}{\widetilde{\Lambda }}_{k}\right) }^{-1}{\left( {\widetilde{\Lambda }}_{k}^{\prime}{\widetilde{\Lambda }}_{k}\right) }^{-\frac{1}{2}}\notag\\
&~~~~{\left( {\widetilde{\Lambda }}_{k}^{\prime}{\widetilde{\Lambda }}_{k}\right) }^{-\frac{1}{2}}{\widetilde{\Lambda }}_{k}^{\prime}{\left( {y}_{t} - {\widehat{y}}_{t|t-1}^{\left( k\right) }\right) }\notag\\
&\leq  \frac{1}{2}\mathop{\sup }\limits_{t}{\begin{Vmatrix}{\left( {\widetilde{\Lambda }}_{k}^{\prime}{\widetilde{\Lambda }}_{k}\right) }^{-\frac{1}{2}}{\widetilde{\Lambda }}_{k}^{\prime}\left( {{y}_{t} - {\widehat{y}}_{t|t-1}^{\left( k\right) }}\right) \end{Vmatrix}}^{2}{\begin{Vmatrix}{\left( {\widetilde{\Lambda }}_{k}^{\prime}{\widetilde{\Lambda }}_{k}\right) }^{\frac{1}{2}}{\left( {\widehat{\sigma }}^{2}{I}_{r} + {\widetilde{\Lambda}}_{k}^{\prime}{\widetilde{\Lambda }}_{k}\right)}^{-1}{\left( {\widetilde{\Lambda }}_{k}^{\prime}{\widetilde{\Lambda }}_{k}\right) }^{-\frac{1}{2}}\end{Vmatrix}}\notag\\
&= \frac{1}{2}\mathop{\sup }\limits_{t}{\begin{Vmatrix}{P}_{{\widetilde{\Lambda }}_{k}}\left( {{y}_{t} - {\widehat{y}}_{t|t- 1}^{\left( k\right) }}\right) \end{Vmatrix}}^{2}\begin{Vmatrix}{ ({\widehat{\sigma }}^{2}{I}_{r}+ \widetilde{\Lambda }_{k}^{\prime}\widetilde{\Lambda }_{k})^{-1}}\end{Vmatrix} \notag\\
& \leq  \frac{1}{2}\mathop{\sup }\limits_{t}{\begin{Vmatrix} {{y}_{t} - {\widehat{y}}_{t|t- 1}^{\left( k\right) }}\end{Vmatrix}}_{F}^{2} \begin{Vmatrix}{ ({\widehat{\sigma }}^{2}{I}_{r}+ \widetilde{\Lambda }_{k}^{\prime}\widetilde{\Lambda }_{k})^{-1}}\end{Vmatrix} \notag.
\end{align}
By Assumption A(2) and B(1), \(\mathop{\sup }_{t} {\begin{Vmatrix}{\Lambda }_{k}^{0}{f}_{t}\end{Vmatrix}}^{\alpha } \leq  {\begin{Vmatrix}{\Lambda }_{k}^{0}\end{Vmatrix}}^{\alpha }
 \mathop{\sum }_{{t = 1}}^{n}{\begin{Vmatrix}{f}_{t}\end{Vmatrix}}^{\alpha }\)\(  = {O}_{p}\left( {p}^{\frac{\alpha }{2}}{q}^{\frac{\alpha}{2}}{ n}\right).\)
By Holder inequality, \({\begin{Vmatrix} {e}_{t}\end{Vmatrix}}^{2} = \mathop{\sum }_{i = 1}^{p}\mathop{\sum }_{{j = 1}}^{q}{e}_{t,{ij}}^{2} \leq {\left( \mathop{\sum }_{{i = 1}}^{p}\mathop{\sum }_{{j = 1}}^{q}{\left( {e}_{t,ij}^{2}\right) }^{\frac{\alpha}{2}}\right)}^{\frac{2}{\alpha}} {N }^{1-\frac{2}{\alpha}},\)  thus \(\mathop{\sup }_{t} {\begin{Vmatrix}{e}_{t}\end{Vmatrix}}^{\alpha} \leq  {N}^{\frac{\alpha }{2} - 1}\mathop{\sup }_{t}\left( {\mathop{\sum }_{{i = 1}}^{p}\mathop{\sum }_{{j = 1}}^{q}{e}_{t,ij}^{\alpha}}\right)  \leq  {N}^{\frac{\alpha }{2} - 1}\mathop{\sum }_{{t = 1}}^{n}\left( {\mathop{\sum }_{{i = 1}}^{p}\mathop{\sum }_{{j = 1}}^{q}{e}_{t,{ij}}^{\alpha }}\right)  = {O}_{p}\left( {{N}^{\frac{\alpha }{2}}n}\right)  = {O}_{p}\left( {{p}^{\frac{\alpha }{2}}{q}^{\frac{\alpha }{2}}n}\right)\) by Assumption C(1). It follows that \(\mathop{\sup }_{t}{\begin{Vmatrix}{y}_{t}\end{Vmatrix}}  \leq  \mathop{\sup }_{t}{\begin{Vmatrix} {\Lambda}_{k}^{0}{f}_{t}\end{Vmatrix}}  + \mathop{\sup }_{t}{\begin{Vmatrix} {e}_{t}\end{Vmatrix}}  = {O}_{p}\left({{p}^{\frac{1}{2}}{q}^{\frac{1}{2}}{n}^{\frac{1}{\alpha}}}\right).\)
Thus,
\begin{align}
{A}_{2}& \leq  {\frac{1}{2}} \left\lbrack  {\mathop{\sup }\limits_{t} {\begin{Vmatrix} {y}_{t}\end{Vmatrix}}^{2} + \mathop{\sup }\limits_{t} {\begin{Vmatrix} \widehat{y}_{t|t-1}^{(k)}\end{Vmatrix}}^{2}}\right\rbrack  \begin{Vmatrix}{ ({\widehat{\sigma }}^{2}{I}_{r}+ \widetilde{\Lambda }_{k}^{\prime}\widetilde{\Lambda }_{k}})^{-1}\end{Vmatrix} \notag\\
&= {O}_{p}\left( {{pq}{n}^{\frac{2}{\alpha}}}\right)  \cdot  {O}_{p}\left( {{p}^{-1}{q}^{-1}}\right)  = {O}_{p}\left( {n}^{\frac{2}{\alpha}}\right) \notag\\
&= {o}_{p}\left( N\right)   ~~\text{ when }\frac{{n}^{\frac{2}{\alpha }}}{N} \rightarrow  0\text{ and } \alpha > 2 . \notag
\end{align}

(3)Consider \({A}_{3}\).
\begin{align}
{A}_{3}& =\frac{1}{2}\widehat{\sigma }^{-2}\mathop{\sup }\limits_{t}\left| {\begin{Vmatrix}{M}_{\widetilde{\Lambda}_{j}}\left( {{y}_{t} - {\widehat{y}}_{t|t-1}^{\left( j\right) }}\right) \end{Vmatrix}}^{2} - {\begin{Vmatrix}{M}_{{\Lambda }_{j}^{0}}\left( {{y}_{t} - {y}_{t|t-1}^{0\left( j\right) }}\right) \end{Vmatrix}}^{2}\right| \notag\\
&= \frac{1}{2}{\widehat{\sigma }}^{-2}\mathop{\sup }\limits_{t}\left| {\begin{Vmatrix}{M}_{\widehat{\Lambda}_{j}}{y}_{t} \end{Vmatrix}}^{2} - {\begin{Vmatrix}{M}_{{\Lambda }_{j}^{0}}{y}_{t} \end{Vmatrix}}^{2}\right|\notag\\
&= \frac{1}{2}{\widehat{\sigma }}^{-2}\mathop{\sup }\limits_{t}\left| {{y}_{t}{}^{\prime }{P}_{\widehat{\Lambda}_{j}}{y}_{t} - {y}_{t}{}^{\prime }{P}_{{\Lambda}_{j}^{0}}{y}_{t}}\right|
\leq  \frac{1}{2}{\widehat{\sigma }}^{-2}\begin{Vmatrix}{{P}_{\widehat{\Lambda}_{j}} - {P}_{{\Lambda}_{j}^{0}}}\end{Vmatrix} \cdot  \mathop{\sup }\limits_{t}{\begin{Vmatrix}{y}_{t}\end{Vmatrix}}^{2}\notag\\
&= {O}_{p}\left( pq{n}^{\frac{2}{\alpha }}\right) \begin{Vmatrix}{{P}_{\widehat{\Lambda}_{j}} - {P}_{{\Lambda}_{j}^{0}}}\end{Vmatrix} \notag.
\end{align}
Since,
\begin{align}
{\begin{Vmatrix}{{P}_{\widehat{\Lambda}_{j}} - {P}_{{\Lambda}_{j}^{0}}}\end{Vmatrix}}^{2}& \leq  {\begin{Vmatrix}{{P}_{\widehat{\Lambda}_{j}} - {P}_{{\Lambda}_{j}^{0}}}\end{Vmatrix}}_{F}^{2} = {tr}\left\lbrack  {\left( {P}_{\widehat{\Lambda}_{j}} - {P}_{{\Lambda}_{j}^{0}}\right) }^{2}\right\rbrack \notag\\
&= 2{tr}\left\lbrack  {{I}_{r} - {P}_{\widehat{\Lambda}_{j}}{P}_{{\Lambda}_{j}^{0}}}\right\rbrack\notag\\
&= 2\left( {{tr}\left( {I}_{r}\right) - {tr}\left\lbrack  {P}_{{\Lambda}_{j}^{0}}\right\rbrack + {tr}\left\lbrack  {{M}_{\widehat{\Lambda}_{j}}{P}_{{\Lambda}_{j}^{0}}}\right\rbrack  }\right)\notag\\
&= {2tr}\left\lbrack  {{M}_{\widehat{\Lambda }_{j}}{P}_{{\Lambda }_{j}^{0}}}\right\rbrack   = {2tr}\left\lbrack  {{M}_{\widehat{\Lambda }_{j}}{\Lambda }_{j}^{0}{\left( {\Lambda }_{j}^{0\prime }{\Lambda }_{j}^{0}\right) }^{-1}{\Lambda }_{j}^{0\prime}}\right\rbrack \notag\\
&= 2{tr}\left\lbrack  {{\left( {\Lambda }_{j}^{0\prime }{\Lambda }_{j}^{0}\right) }^{-\frac{1}{2}}{\Lambda }_{j}^{0\prime }{M}_{\widehat{\Lambda }_{j}}{\Lambda }_{j}^{0}{\left( {\Lambda }_{j}^{0\prime }{\Lambda }_{j}^{0}\right) }^{-\frac{1}{2}}}\right\rbrack\notag\\
&= 2{\begin{Vmatrix}{M}_{\widehat{\Lambda }_{j}}{\Lambda }_{j}^{0}{\left( {\Lambda }_{j}^{0\prime }{\Lambda }_{j}^{0}\right) }^{-\frac{1}{2}}\end{Vmatrix}}_{F}^{2}\notag\\
&\leq  \frac{2}{pq}{\begin{Vmatrix}{M}_{\widehat{\Lambda }_{j}}{\Lambda }_{j}^{0}\end{Vmatrix}}_{F}^{2} = {O}_{p} \left( \frac{1}{\sqrt{{\delta}_{pqn}}}\right),\notag
\end{align}
then \({A}_{3} = {O}_{p}\left( {N \cdot  {n}^{\frac{2}{\alpha }}/{\delta}_{pqn}^{\frac{1}{4}}}\right)  = {o}_{p}\left( N\right)\) when \(\frac{{n}^{\frac{16}{\alpha }}}{N} \rightarrow  0\) and \(\alpha  > {16}\) . Similar to term \({A}_{3}\) , Term \({A}_{4}\) is also \({o}_{p}\left( N\right)\) .

(4) Consider \({A}_{5}\).
\begin{align}
{A}_{5} &= \frac{1}{2}\widehat{\sigma }^{-2}\mathop{\sup }\limits_{t}{e}_{t}^{\prime}{P}_{{\Lambda}_{j}^{0}}{e}_{t} \notag\\
&= \frac{1}{2}\widehat{\sigma }^{-2}\mathop{\sup }\limits_{t}{e}_{t}^{\prime}{\Lambda}_{j}^{0}{\left( {\Lambda }_{j}^{0\prime}{\Lambda }_{j}^{0 }\right) }^{-1}{\Lambda }_{j}^{0\prime}{e}_{t}\notag\\
&= \frac{1}{2}\widehat{\sigma }^{-2}\mathop{\sup }\limits_{t}\frac{{\begin{Vmatrix} {e}_{t}^{\prime}{\widetilde{\Lambda}}_{j}^{0}\end{Vmatrix}}^{2}}{N}\notag,
\end{align}
where $\widetilde{\Lambda}_{j}^{0}= \widetilde{C}_j^{(0)}\otimes  \widetilde{R}_j^{(0)}$ with $\widetilde{C}_j^{(0)} = C_j^0 (D_j^{(2)})^{-1/2}$ and $\widetilde{R}_j^{(0)} = R_j^0 (D_j^{(1)})^{-1/2}$.
By Assumption D(1), \(\mathop{\sup }\limits_{t}{\begin{Vmatrix}\frac{{e}_{t}^{\prime}{\widetilde{\Lambda}}_{j}^{0}}{\sqrt{N}}\end{Vmatrix}}^{\beta } \leq  \mathop{\sum }\limits_{{t = 1}}^{n}{\begin{Vmatrix}\frac{{e}_{t}^{\prime}{\widetilde{\Lambda}}_{j}^{0}}{\sqrt{N}}\end{Vmatrix}}^{\beta } =  {O}_{p}\left( n\right).\) Thus \({A}_{5} = {O}_{p}\left( {n}^{\frac{2}{\beta }}\right)  = {o}_{p}\left( N\right)\) when \(\frac{{n}^{\frac{\alpha }{\beta }}}{N} \rightarrow  0.\)

(5) Consider \({A}_{6}\).
\begin{align}
{A}_{6} &= \widehat{\sigma}^{-2}\mathop{\sup }\limits_{t}\left| {{e}_{t}^{\prime}{M}_{{\Lambda}_{j}^{(0)}}{\Lambda}_{k}^{0}  {f}_{t}}\right| \notag\\
&= \widehat{\sigma}^{-2}\mathop{\sup }\limits_{t}\left| {{e}_{t}^{\prime}{\Lambda }_{k}^{0}{f}_{t}} + {e}_{t}^{\prime }{\Lambda }_{j}^{0}{\left( {\Lambda }_{j}^{0\prime}{\Lambda }_{j}^{0}\right) }^{-1}{\Lambda }_{j}^{0\prime}{\Lambda }_{k}^{0}{f}_{t}\right|\notag\\
&\leq  \widehat{\sigma }^{-2}\mathop{\sup }\limits_{t}\left| {{e}_{t}^{\prime }{\Lambda }_{k}^{0}{f}_{t}}\right| + \frac{\widehat{\sigma }^{-2}}{pq}\mathop{\sup }\limits_{t}\left| {{e}_{t}^{\prime }{\widetilde{\Lambda} }_{j}^{0}{\widetilde{\Lambda} }_{j}^{0\prime}{\Lambda }_{k}^{0}{f}_{t}}\right| \notag.
\end{align}
By Assumption C(1), \(\mathop{\sup }_{t}{\begin{Vmatrix}{f}_{t}\end{Vmatrix}}^{\alpha } \leq  \mathop{\sum }_{{t = 1}}^{n}{\begin{Vmatrix} {f}_{t}\end{Vmatrix}}^{\alpha } = {O}_{p}\left( n\right)\). Thus,
\begin{align}
\mathop{\sup }\limits_{t}\left| {{e}_{t}^{\prime }{\Lambda }_{k}^{0}{f}_{t}}\right|
&\leq  \mathop{\sup }\limits_{t} {\begin{Vmatrix} {e}_{t}^{\prime }{\Lambda }_{k}^{0}\end{Vmatrix}}  \cdot \mathop{\sup }\limits_{t} {\begin{Vmatrix} {f}_{t}\end{Vmatrix}} \notag\\
&= {O}_{p}\left( {N}^{\frac{1}{2}}{n}^{\frac{1}{\beta}}\right)  \times  {O}_{p}\left( {n}^{\frac{1}{\alpha}}\right)\notag\\
&= {O}_{p}\left( {p}^{\frac{1}{2}}{q}^{\frac{1}{2}}{n }^{\frac{1}{\alpha} + \frac{1}{\beta }}\right)  = {o}_{p}\left( N\right) \text{when}\frac{{n}^{\frac{2}{\alpha } + \frac{2}{\beta} }}{N} \rightarrow  0.\notag
\end{align}
It follows that \({A}_{6} = {o}_{p}\left( N\right)\) when \(\frac{{n}^{\frac{2}{\alpha } + \frac{2}{\beta} }}{N} \rightarrow  0.\)

Step (2): We next prove \({\widehat{w}}_{t|n}^{\left(j\right)} = {o}_{p}\left( \frac{1}{{N}^{\eta}}\right)\) for \(j \neq  k\) when the true state is \({s}_{t} = k\)

Let \({P}_{\cdot\ell}\) denotes the \(\ell\) -th column of P. It easy to show that
\begin{align}
{w}_{t|n}^{\left(j\right)} &= {P}_{r}\left( {{s}_{t} = j \mid  \mathcal{Y}_{n}}\right)  = \mathop{\sum }\limits_{{\ell  = 1}}^{M}{P}_{r}\left( {{s}_{t} = j \mid  {s}_{t + 1} = \ell ,{\mathcal{Y}}_{n}}\right) {P}_{r}\left( {{s}_{t + 1} = \ell \mid  {\mathcal{Y}}_{n}}\right)\notag\\
&= \mathop{\sum }\limits_{{\ell  = 1}}^{M}{P}_{r}\left( {{s}_{t} = j \mid  {s}_{t + 1} = \ell ,{\mathcal{Y}}_{t}}\right) {w}_{t + 1|n}^{\left( \ell\right) }\notag\\
&= \mathop{\sum }\limits_{{\ell = 1}}^{M}\frac{{P}_{r}\left( {{s}_{t + 1} = \ell \mid  {s}_{t}=j,{\mathcal{Y}}_{t}}\right)  }{  {P}_{r}\left( {{s}_{t + 1} = \ell \mid  {\mathcal{Y}}_{t}}\right) } {P}_{r}\left( {{s}_{t}=j \mid  {\mathcal{Y}}_{t}}\right){w}_{t + 1|n}^{\left( \ell\right) }\notag\\
&= \mathop{\sum }\limits_{{\ell = 1}}^{M}\frac{{p}_{j\ell}{w}_{t | t}^{\left( j\right) }}{{w}_{t+1 \mid  t}^{\left(\ell\right) }}{w}_{t + 1 \mid  n}^{\left( \ell\right) }\notag\\
&= {w}_{t | t}^{\left( j\right) }{P}_{\cdot\ell}^{\prime }\left( {{w}_{t + 1|n} \ominus {w}_{t + 1|t}}\right)\notag,
\end{align}
where $w_{t+1|n}=(w_{t+1|n}^{(1)},\cdots, w_{t+1|n}^{(M)})^{\prime}$ and $ \ominus$ denotes element-wise division. Note that the second equality is due to that \({y}_{t + 1}\) is independent with \({s}_{t}\) when given \({s}_{t+1}\) and \({y}_{t}\). Furthermore, since $
{w}_{t + 1 \mid  t}^{\left( \ell\right) } = {P}_{r}\left( {{s}_{t + 1} = \ell \mid  {\mathcal{Y}}_{t}}\right)  = \mathop{\sum }\limits_{{i = 1}}^{M}{P}_{r}\left( {{s}_{t + 1} = \ell \mid  {s}_{t} = i,{\mathcal{Y}}_{t}}\right) {P}_{r}\left( {{s}_{t} = i \mid  {\mathcal{Y}}_{t}}\right)
\geq  {p}_{k\ell}{w}_{t|t}^{\left( k\right) },
$
then
\begin{align}
{\widehat{w}}_{t|n}^{\left( j\right) }
&= {\widehat{w}}_{t|t}^{\left( j\right) } \mathop{\sum }\limits_{{\ell = 1}}^{M}{\widehat{w}}_{t+1|n}^{\left( \ell\right) }\frac{{p}_{j\ell}}{{\widehat{w}}_{t + 1|t}^{\left( \ell\right) }} \notag\\
&\leq  {\widehat{w}}_{t | t}^{\left( j\right) }\mathop{\sum }\limits_{{\ell = 1}}^{M}{\widehat{w}}_{t + 1|n}^{\left( \ell\right) }\frac{{p}_{j\ell}}{{p}_{k\ell}{\widehat{w}}_{t|t}^{\left( k\right) }} \notag\\
&\leq  {\widehat{w}}_{t|t}^{(j)}\mathop{\max }\limits_{\ell}\frac{{p}_{j\ell}}{{p}_{k\ell}}\frac{1}{{\widehat{w}}_{t|t}^{(k)}} = {o}_{p}\left( \frac{1}{{N}^{\eta}}\right)  \notag,
\end{align}
where the last equality follows from step(1) and  \(\mathop{\min }\limits_{\ell}{p}_{k\ell} > 0\). We have finished the proof of the first result of Theorem2.

Consider the second result of Theorem2. Similar to expression (\ref{Th2_2}), it suffices to show

\[
{P}_{r}\left\lbrack  {\frac{1}{2}{\widehat{\sigma }}^{-2}{f}_{t}{}^{\prime }{{\Lambda }_{k}^{0}}^{\prime}{M}_{{\Lambda}_{j}^{0}}{\Lambda }_{k}^{0}{f}_{t} \leq  {A}_{1}^{\prime } + {A}_{2}^{\prime } + {A}_{3}^{\prime } + {A}_{4}^{\prime } + {A}_{5}^{\prime } + {A}_{6}^{\prime } + \eta\log N}\right\rbrack   \rightarrow  0,
\]
where \({A}_{1}^{\prime },\cdots ,{A}_{6}^{\prime }\) equals \({A}_{1},\cdots ,{A}_{6}\) without taking suppermum with respect to $t$.
Given the calculation of terms \({A}_{1},\ldots ,{A}_{6}\) , it is not difficult to see that without taking suppermum, \({A}_{1}^{\prime },\ldots ,{A}_{6}^{\prime }\) become \({O}_{p}\left( {\log N}\right) ,{O}_{p}\left( 1\right) ,{O}_{p}\left( N/{\delta}_{pqn}^{\frac{1}{4}}\right) ,{O}_{p}\left( N/{{\delta}_{pqn}^{\frac{1}{4}}}\right),\)  \({O}_{p}\left( 1\right)\) and \({O}_{p}\left( {N}^{\frac{1}{2}}\right)\) respectively. Since \({f}_{t}^{\prime }{\Lambda }_{k}^{0\prime }{M}_{{\Lambda }_{j}^{0}}{\Lambda }_{k}^{0\prime } {f}_{t} \geq  {N{c}_{0}}\) for some \({c}_{0}>0\) ,  \({A}_{1}^{\prime },\ldots ,{A}_{6}^{\prime }\) are all dominated by this term. Then we finish the proof of result(2) in Theorem2.

\end{proof}
\setcounter{equation}{0}
 \setcounter{subsection}{0}
 \setcounter{lemma}{0}
 \renewcommand{\theequation}{D.\arabic{equation}}
 \renewcommand{\thesubsection}{D.\arabic{subsection}}
  \renewcommand{\thelemma}{D.\arabic{lemma}}
\section*{D ~~ Details for Theorem 3}

Before presenting the proof of Theorem 3, we first provide the following three lemmas (Lemma \ref{lemaC1}- Lemma \ref{lemaC3}) and their proofs.
\begin{lemma}\label{lemaC1}
Let $\widehat{\theta}$ denote the quasi-maximum likelihood estimator (QMLE). Define
\(
\widehat{F}_{t|n}^{(k)} = E\left[F_t \mid s_t = k, \mathcal{Y}_n; \widehat{\theta}\right],
\)
$ \widehat{D}_k^{(1)}=\widehat{R}_k^{\prime}\widehat{R}_k/p$ and $\widehat{D}_k^{(2)}=\widehat{C}_k^{\prime}\widehat{C}_k/q$.
Under Assumptions (A)-(E) and asymptotic conditions
\(
{n^{16/\alpha}}/{(pq)} \to 0 \) and \(   {n^{2/\alpha + 2/\beta}}/{(pq)} \to 0 \), as $ p, q, n \to \infty,
$ we have
    $$\widehat{F}_{t|n}^{(k)}=\frac{1}{pq}(\widehat{D}_k^{(1)})^{-1}\widehat{R}_{k}^{\prime}Y_t\widehat{C}_k (\widehat{D}_k^{(2)})^{-1} +O_p(p^{-1}q^{-1}).$$
\end{lemma}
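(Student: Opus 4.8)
The plan is to collapse the Kalman smoother, which is written in terms of the $r\times r$ conditional factor covariances, onto a single least-squares projection, exploiting that the observation equation is massively over-identified: $\widehat\Lambda_k^{\top}\widehat\Lambda_k=(\widehat C_k^{\top}\widehat C_k)\otimes(\widehat R_k^{\top}\widehat R_k)=pq\,(\widehat D_k^{(2)}\otimes\widehat D_k^{(1)})$ has every eigenvalue of exact order $pq$, whereas the prediction covariances obey $\widehat V_{t\mid t-1}^{(m,k)}=\widehat\sigma_\epsilon^2 I_r+(\widehat\Gamma_k\otimes\widehat\Phi_k)\widehat V_{t-1\mid t-1}^{(m)}(\widehat\Gamma_k\otimes\widehat\Phi_k)^{\top}\succeq\widehat\sigma_\epsilon^2 I_r$ and stay $O_p(1)$ from above, once one knows inductively (starting from $\widehat V_{0\mid0}^{(m)}=0$ in Algorithm~2) that $\widehat V_{t\mid t}^{(m,k)}=O_p((pq)^{-1})$. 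Along the way I will use that $\widehat D_k^{(1)},\widehat D_k^{(2)}$ have eigenvalues bounded and bounded away from $0$ (from Assumption~B(1) and the asymptotic orthogonality relation that Theorem~\ref{Th1} entails), and that $\widehat\sigma^2,\widehat\sigma_\epsilon^2$ and $\{\widehat\beta_k,\widehat\Phi_k,\widehat\Gamma_k\}$ are $O_p(1)$ with $\widehat\sigma^2,\widehat\sigma_\epsilon^2$ bounded away from $0$ w.p.a.1.

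Concretely, Lemma~\ref{lema1} gives the exact identity $\tfrac{1}{\widehat\sigma^2}\widehat V_{t\mid t}^{(m,k)}\widehat\Lambda_k^{\top}\widehat\Lambda_k=I_r-C_t^{(m,k)}$ with $C_t^{(m,k)}=\bigl(I_r+\tfrac{1}{\widehat\sigma^2}\widehat V_{t\mid t-1}^{(m,k)}\widehat\Lambda_k^{\top}\widehat\Lambda_k\bigr)^{-1}$, hence $\|C_t^{(m,k)}\|=O_p((pq)^{-1})$; substituting into the correction step of Lemma~\ref{lema1} yields
\begin{align*}
\widehat f_{t\mid t}^{(m,k)}=(\widehat\Lambda_k^{\top}\widehat\Lambda_k)^{-1}\widehat\Lambda_k^{\top}y_t+C_t^{(m,k)}\bigl[\widehat f_{t\mid t-1}^{(m,k)}-(\widehat\Lambda_k^{\top}\widehat\Lambda_k)^{-1}\widehat\Lambda_k^{\top}y_t\bigr].
\end{align*}
A Kronecker computation, using $(\widehat\Lambda_k^{\top}\widehat\Lambda_k)^{-1}\widehat\Lambda_k^{\top}=\bigl((\widehat C_k^{\top}\widehat C_k)^{-1}\widehat C_k^{\top}\bigr)\otimes\bigl((\widehat R_k^{\top}\widehat R_k)^{-1}\widehat R_k^{\top}\bigr)$, identifies the first term with $\mathrm{vec}\bigl(\tfrac{1}{pq}(\widehat D_k^{(1)})^{-1}\widehat R_k^{\top}Y_t\widehat C_k(\widehat D_k^{(2)})^{-1}\bigr)$, i.e.\ exactly the claimed leading quantity. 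It remains to bound the two kinds of remainder. For the forward pass, a recursion on $\|\widehat f_{s\mid s}^{(m)}\|$ via the display above shows $\widehat f_{s\mid s-1}^{(m,k)},\widehat f_{s\mid s}^{(m)}=O_p(1)$, so the bracket above is $O_p(1)$ and $\widehat f_{t\mid t}^{(m,k)}$ equals the leading term $+\,O_p((pq)^{-1})$; collapsing over $m$ is a convex combination, so $\widehat f_{t\mid t}^{(k)}$ inherits the expansion. For the smoothing pass, the gain $\widehat G_t^{(j,k)}=\widehat V_{t\mid t}^{(j)}(\widehat\Gamma_k\otimes\widehat\Phi_k)(\widehat V_{t+1\mid t}^{(j,k)})^{-1}=O_p((pq)^{-1})\cdot O_p(1)\cdot O_p(1)=O_p((pq)^{-1})$, so a backward recursion on $\max_j\|\widehat f_{s\mid n}^{(j)}-\widehat f_{s\mid s}^{(j)}\|$ from $s=n$ down to $s=t$ closes and gives $\widehat f_{t\mid n}^{(k)}=\widehat f_{t\mid t}^{(k)}+O_p((pq)^{-1})$; finally, the explicit conditioning on $s_t=k$ (as opposed to averaging against the smoothed weights $\widehat w_{t,t+1\mid n}^{(j,k)}$) is reconciled via Theorem~\ref{Th2}, which forces the smoothed regime weights to collapse to indicators at rate $o_p((pq)^{-\eta})$, swamping the $O_p((pq)^{-1})$ error. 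Re-matricizing $\widehat f_{t\mid n}^{(k)}$ gives the statement.

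The main obstacle is the uniform control of the Kalman recursions rather than the algebra: one must verify that $\widehat V_{t\mid t-1}^{(m,k)}\succeq\widehat\sigma_\epsilon^2 I_r$ with $\widehat\sigma_\epsilon^2$ bounded away from $0$ and $\|\widehat V_{t\mid t-1}^{(m,k)}\|=O_p(1)$ propagate with constants that do not degrade along the $O(n)$ recursion steps, so that all filter and smoother gains are genuinely $O_p((pq)^{-1})$, and then to exploit that a product of $j$ such gains is $O_p((pq)^{-j})$ — so that only the one-step contributions to the forward/backward errors survive and these are $O_p((pq)^{-1})$, the geometric decay in $pq$ dominating the at-most-polynomial-in-$n$ growth of the intermediate predicted/updated means. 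The bounded-moment Assumptions~C(1) and D, and the rate conditions $n^{16/\alpha}/(pq)\to0$ and $n^{2/\alpha+2/\beta}/(pq)\to0$, enter exactly here and through the invocation of Theorem~\ref{Th2}.
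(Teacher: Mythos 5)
Your proposal follows essentially the same route as the paper's proof: Lemma 1 together with $\widehat\Lambda_k^{\top}\widehat\Lambda_k \asymp pq$ gives $\tfrac{1}{\widehat\sigma^2}\widehat V_{t\mid t}^{(m,k)}\widehat\Lambda_k^{\top}\widehat\Lambda_k = I_r + O_p((pq)^{-1})$, so the filtered mean collapses to the cross-sectional projection $\tfrac{1}{pq}(\widehat D_k^{(1)})^{-1}\widehat R_k^{\top}Y_t\widehat C_k(\widehat D_k^{(2)})^{-1}$, and the smoother gain $\widehat G_t^{(j,k)}=O_p((pq)^{-1})$ renders the smoothing correction negligible, with the regime-collapsing steps handled as convex combinations — exactly the paper's Steps (1) and (2). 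The only superfluous element is your final appeal to Theorem 2: since $\widehat F_{t\mid n}^{(k)}$ is, by the tower property, already the convex combination of $\widehat f_{t\mid n}^{(k,\ell)}$ over the future regime with weights $\widehat w_{t,t+1\mid n}^{(k,\ell)}/\widehat w_{t\mid n}^{(k)}$, no collapse of the smoothed regime probabilities to indicators is required at that point.
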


\begin{lemma}\label{lemaC2}
    Let $\widehat{R}_k$ and $\widehat{C}_k$ for $k\in[M]$ denote the QMLEs of $R_k^0$ and $C_k^0$.  Define the normalized estimators $\widetilde{R}_k= \widehat{R}_k(\widehat{R}_k^\prime\widehat{R}_k)^{-1/2}$ and $\widetilde{C}_k= \widehat{C}_k(\widehat{C}_k^\prime\widehat{C}_k)^{-1/2}$, with $\widetilde{R}_k^0$ and  $\widetilde{C}_k^0$ defined analogously. Under the Assumptions $A-E$ and asymptotic conditions
\(
{n^{16/\alpha}}/{(pq)} \to 0 \) and \(   {n^{2/\alpha + 2/\beta}}/{(pq)} \to 0 \), for any  $k_1\times k_1$ matrix $\widetilde{H}_{1k}$ and  $k_2\times k_2$ matrix $\widetilde{H}_{2k}$ satisfying $\|\widetilde{H}_{jk}\|=O_p(1)$ for $j=1,2$, the following hold:
\begin{align}
    (1) &~~~~ \left\| \sum_{t=1}^nI_{(s_t=k)}F_tC_k^0\widetilde{C}_k\widetilde{C}_k^\prime E_t^\prime\right\|_F^2=O_p(pq^3n)\left[1+\left\| \widetilde{C}_k-\widetilde{C}_k^0\widetilde{H}_{2k}\right\|_F^2 \right];\notag\\
(2) &~~~~ \left\| \sum_{t=1}^nI_{(s_t=k)}E_t\widetilde{C}_k\widetilde{C}_k^\prime E_t^\prime\right\|_F^2=O_p(p^2q^4n+pq^3n^2);\notag\\
(3) &~~~~ \left\| \sum_{t=1}^nI_{(s_t=k)}E_t\widetilde{C}_k\widetilde{C}_k^\prime E_t^\prime\widetilde{R}_k\right\|_F^2=O_p(p^2q^3n+pq^2n^2)\left[1+ \left\| \widetilde{C}_k-\widetilde{C}_k^0\widetilde{H}_{2k}\right\|_F^2\right] \notag\\
&~~~~~~~~~~~~~~~~~~~~~~~~~~~~~~~~~~~~+ O_p(p^2q^4n+pq^3n^2)\left\| \widetilde{R}_k-\widetilde{R}_k^0\widetilde{H}_{1k}\right\|_F^2;\notag\\
(4) &~~~~ \left\| \sum_{t=1}^n(\widehat{w}_{t|n}-I_{(s_t=k)})Y_t\widetilde{C}_k\widetilde{C}_k^\prime Y_t^\prime\right\|_F={o}_{p} \left( \frac{1}{N^{\eta}} \right)   \cdot  {O}_{p}\left( {p{q}^{2}n}\right),\notag
\end{align}
where $\widehat{w}_{t|n}$ and $\eta>0$ are given in Theorem \ref{Th2}.
\end{lemma}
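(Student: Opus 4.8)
The four bounds are all proved by the same device: decompose each orthonormalised loading into its population direction plus estimation error, $\widetilde{C}_k = \widetilde{C}_k^0\widetilde{H}_{2k} + \Delta_{2k}$ with $\Delta_{2k} \triangleq \widetilde{C}_k - \widetilde{C}_k^0\widetilde{H}_{2k}$ and likewise $\widetilde{R}_k = \widetilde{R}_k^0\widetilde{H}_{1k} + \Delta_{1k}$, substitute, and expand. Each resulting piece is either ``clean'' --- containing only $\widetilde{R}_k^0,\widetilde{C}_k^0$ and the data --- or ``error'', carrying an explicit factor of $\Delta_{1k}$ or $\Delta_{2k}$. For the clean pieces I would use the normalisations $\widetilde{C}_k^{0\top}\widetilde{C}_k^0 \propto I_{k_2}$, $C_k^{0\top}\widetilde{C}_k^0 \propto (D_k^{(2)})^{1/2}$ (and the row analogues) to collapse them to sums of the canonical forms $\sum_{t} I_{(s_t=k)}F_t M C_k^{0\top}E_t^\top$ and $\sum_{t} I_{(s_t=k)}E_t\widetilde{C}_k^0\widetilde{C}_k^{0\top}E_t^\top$ for bounded $M$; these are controlled, coordinate by coordinate in $[k_1]\times[k_2]$, by Assumption~C(1) (moments of $F_t$), Assumptions~D(2)--D(3) (covariance summability and the weighted second-moment bounds on the error array) and Assumption~E (the $L^2$ bound on the $F_t$--$E_t$ cross term). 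The error pieces are bounded by $\|\Delta_{1k}\|_F$ or $\|\Delta_{2k}\|_F$ times the crude spectral estimates $\|\sum_{t}I_{(s_t=k)}E_tE_t^\top\| = O_p(\sqrt{pq}\,n + pq\sqrt{n})$ (established within the proof of Theorem~\ref{Th1}) and $\|\sum_{t}I_{(s_t=k)}F_tF_t^\top\| = O_p(n)$ (Assumption~C); this is exactly the mechanism by which the brackets $[1 + \|\widetilde{C}_k - \widetilde{C}_k^0\widetilde{H}_{2k}\|_F^2]$ in~(1) and~(3) and the term $\|\widetilde{R}_k - \widetilde{R}_k^0\widetilde{H}_{1k}\|_F^2$ in~(3) arise, after using $\|\widetilde{R}_k\| = \|\widetilde{C}_k\| = O_p(1)$ to absorb the non-$\Delta$ factors.

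More specifically, for~(1) the clean term reduces, via $C_k^{0\top}\widetilde{C}_k^0 \propto (D_k^{(2)})^{1/2}$, to $q$ times a sum $\sum_t I_{(s_t=k)}F_t M C_k^{0\top}E_t^\top$; applying the $L^2$ bound of Assumption~E to each of its columns (each a $k_1$-vector over $t$) gives order $O_p(\sqrt{qn})$ per column, hence $O_p(\sqrt{pq^3n})$ in Frobenius norm, i.e.\ $O_p(pq^3n)$ after squaring. For~(2), writing $\widetilde{C}_k\widetilde{C}_k^\top = \sum_{l=1}^{k_2}\widetilde{c}_{k,\cdot l}\widetilde{c}_{k,\cdot l}^\top$ turns the expression into $\sum_l\sum_t I_{(s_t=k)}(E_t\widetilde{c}_{k,\cdot l})(E_t\widetilde{c}_{k,\cdot l})^\top$; splitting each summand into its conditional mean and fluctuation, the mean part is bounded using Assumptions~D(2.1)--D(2.2) and the fluctuation part using Assumptions~D(3.2)--D(3.3), which together yield the two regimes $O_p(p^2q^4n)$ and $O_p(pq^3n^2)$. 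Part~(3) is~(2) right-multiplied by $\widetilde{R}_k$: inserting $\widetilde{R}_k = \widetilde{R}_k^0\widetilde{H}_{1k}+\Delta_{1k}$ gives a clean term of the stated order plus a term proportional to $\|\Delta_{1k}\|_F^2$, while the internal $\widetilde{C}_k$-decomposition contributes the factor $[1+\|\Delta_{2k}\|_F^2]$.

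Part~(4) follows directly from Theorem~\ref{Th2}. Since $\sup_t|\widehat{w}_{t|n}^{(k)} - I_{(s_t=k)}| = o_p(N^{-\eta})$ with $N = pq$ under the stated rate conditions,
\begin{align*}
\left\|\sum_{t=1}^n(\widehat{w}_{t|n}^{(k)} - I_{(s_t=k)})Y_t\widetilde{C}_k\widetilde{C}_k^\top Y_t^\top\right\|_F \le o_p\!\left(\frac{1}{N^{\eta}}\right)\sum_{t=1}^n\left\|Y_t\widetilde{C}_k\widetilde{C}_k^\top Y_t^\top\right\|_F ,
\end{align*}
and then $\|Y_t\widetilde{C}_k\widetilde{C}_k^\top Y_t^\top\|_F \le q\,\|Y_t\|_F^2 \lesssim q(\|R_{s_t}^0 F_t C_{s_t}^{0\top}\|_F^2 + \|E_t\|_F^2)$, so Assumptions~B(1), C(1) and D(1) give $\sum_t\|Y_t\widetilde{C}_k\widetilde{C}_k^\top Y_t^\top\|_F = O_p(pq^2n)$, which is the claim.

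The main obstacle is the data dependence of $\widetilde{R}_k$ and $\widetilde{C}_k$: Assumptions~D(3) are stated for \emph{deterministic} bounded weights, so they cannot be applied to sums weighted by the random columns of $\widetilde{C}_k$, and one is forced through the deterministic $\widetilde{C}_k^0,\widetilde{R}_k^0$ and the attendant cross terms. The bookkeeping of which pieces remain clean and which absorb the $\|\Delta\|_F^2$ factors, together with the verification that the crude spectral bounds are sharp enough to reproduce exactly the two-rate arithmetic in~(2) and~(3), is the delicate part; everything else --- Assumption C, the summability bounds D(1)--D(2), Assumption E, and the spectral estimate for $\sum_t E_tE_t^\top$ imported from Theorem~\ref{Th1} --- is routine.
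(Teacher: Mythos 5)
Your overall architecture matches the paper's: split $\widetilde{C}_k=\widetilde{C}_k^0\widetilde{H}_{2k}+\Delta_{2k}$, $\widetilde{R}_k=\widetilde{R}_k^0\widetilde{H}_{1k}+\Delta_{1k}$, control the ``clean'' pieces via Assumption E (for the $F_t$--$E_t$ cross term) and Assumptions D(2)--D(3) (centered/mean split), and prove (4) exactly as you do, via $\sup_t|\widehat{w}_{t|n}^{(k)}-I_{(s_t=k)}|=o_p(N^{-\eta})$ from Theorem 2 together with $\sum_t\|Y_t\widetilde{C}_k\widetilde{C}_k^\top Y_t^\top\|_F=O_p(pq^2n)$. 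Parts (2) and (4), and the clean terms of (1) and (3), are handled essentially as in the paper.

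The genuine gap is your mechanism for the $\Delta$-weighted (error) pieces. You propose to bound them by $\|\Delta_{jk}\|_F$ times the crude spectral estimates $\|\sum_tI_{(s_t=k)}F_tF_t^\top\|=O_p(n)$ and $\|\sum_tI_{(s_t=k)}E_tE_t^\top\|$, but this cannot deliver the stated rates. Take the error piece of (1): after extracting the $O_p(q)$ factor from $C_k^{0\top}\widetilde{C}_k$, you need $\bigl\|\sum_tI_{(s_t=k)}F_t\Delta_{2k}^\top E_t^\top\bigr\|_F^2=O_p(pqn)\,\|\Delta_{2k}\|_F^2$. The best the spectral route gives is $\bigl\|\sum_tI_{(s_t=k)}F_t\Delta_{2k}^\top E_t^\top\bigr\|_F^2\le\|\Delta_{2k}\|^2\bigl(\sum_t\|F_t\|_F^2\bigr)\bigl\|\sum_tE_tE_t^\top\bigr\|$, which with your quoted bound is $O_p(\sqrt{pq}\,n^2+pq\,n^{3/2})\|\Delta_{2k}\|^2$ --- an extra $\sqrt{n}$ (or $n/q$) beyond $O_p(pqn)\|\Delta_{2k}\|^2$, so the bracket $[1+\|\Delta_{2k}\|_F^2]$ with prefactor $O_p(pq^3n)$ does not follow; and one cannot argue $\|\Delta_{2k}\|_F$ is small here without circularity, since this lemma feeds Theorem 3. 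The device the paper uses (and which you need) is entrywise Cauchy--Schwarz in the coordinates of $\Delta$: write $\sum_tI_{(s_t=k)}F_t\Delta_{2k}^\top E_t^\top=\sum_{j,l}\Delta_{2k,jl}\sum_tI_{(s_t=k)}F_{t,\cdot l}\,e_{t,\cdot j}^\top$ wait, more precisely $\sum_{j}\Delta_{2k,jl}\sum_tI_{(s_t=k)}1_p^{(i)\top}E_t1_q^{(j)}F_t$ per output coordinate, and then apply Assumption E at the deterministic elementary vectors $1_p^{(i)},1_q^{(j)}$, paying only $\|\Delta_{2k}\|_F^2\cdot O_p(pqn)$ and no extra powers of $n$. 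The same issue recurs in the $\Delta_{2k}$-weighted piece of (3) (the paper's $I_4$), where the rate $O_p(np^2q^3+n^2pq^2)\|\Delta_{2k}\|_F^2$ requires exactly this entrywise expansion combined with Assumption D(3.3) and the summability D(2.2); a crude spectral bound again loses a power of $n$. You correctly diagnosed the obstacle (the random weights $\widetilde{C}_k,\widetilde{R}_k$ block direct use of D(3)), but the escape route you offer is too weak; replacing it with the Cauchy--Schwarz-over-entries argument closes the gap. A secondary point: the bound $O_p(\sqrt{pq}\,n+pq\sqrt{n})$ established in Theorem 1's proof is for $\|\sum_t\mathrm{vec}(E_t)\mathrm{vec}(E_t)^\top\|$, not for the $p\times p$ matrix $\sum_tE_tE_t^\top$, so even the crude estimate you invoke is not directly available as cited.
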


\begin{lemma}\label{lemaC3}
Let $\widehat{\theta}$ denote the QMLE. Define the conditional expectation
\(
\widetilde{P}_{t|n}^{(k)} = E\left[F_t D_k^{(2)} F_t^\prime \mid s_t = k, \mathcal{Y}_n; \widehat{\theta}\right]
\)
and the matrix
\[
\bar{\Delta}_k = \frac{1}{n} \sum_{t=1}^n \widehat{w}_{t|n}^{(k)} (\widehat{D}_k^{(1)})^{1/2} \widetilde{P}_{t|n}^{(k)} (\widehat{D}_k^{(2)})^{1/2} \quad \text{for } k \in [M].
\]
Under Assumptions (A)-(E) and the asymptotic conditions
\(
\frac{n^{16/\alpha}}{pq} \to 0 \) and \(   \frac{n^{2/\alpha + 2/\beta}}{pq} \to 0 \) as $ p, q, n \to \infty,
$
the matrix $\bar{\Delta}_k$ converges almost surely to a positive definite matrix.
\end{lemma}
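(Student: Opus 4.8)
The plan is to show that, up to an asymptotically orthogonal rotation, $\bar\Delta_k$ equals $\pi_k^0\Sigma_{F_k}^{(1)}$ plus an $o_p(1)$ term, and then to read off positive definiteness from Assumption C(2.2) (or C(2.4)). The first step is to strip $\widetilde P_{t|n}^{(k)}$ down to the outer product of the smoothed factor mean: writing $F_t=\widehat F_{t|n}^{(k)}+\widetilde F_t$ with $\mathbb E[\widetilde F_t\mid s_t=k,\mathcal Y_n;\widehat\theta]=0$, one gets $\widetilde P_{t|n}^{(k)}=\widehat F_{t|n}^{(k)}D_k^{(2)}\widehat F_{t|n}^{(k)\top}+\mathbb E[\widetilde F_tD_k^{(2)}\widetilde F_t^\top\mid s_t=k,\mathcal Y_n;\widehat\theta]$, and the remainder is a contraction of the collapsed smoothed covariance $V_{t|n}^{(k)}$ against $D_k^{(2)}$, hence $O_p(\mathrm{tr}\,V_{t|n}^{(k)})$ uniformly in $t$. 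The Kalman correction identity of Lemma~\ref{lema1}, together with $\widehat\Lambda_k^\top\widehat\Lambda_k=(\widehat C_k^\top\widehat C_k)\otimes(\widehat R_k^\top\widehat R_k)\asymp pq$ (guaranteed by the loading consistency of Theorems~\ref{Th1} and~\ref{Th3}, which keep the eigenvalues of $\widehat D_k^{(j)}$ bounded away from $0$ and $\infty$), forces $V_{t|t}^{(m,k)}=O_p((pq)^{-1})$, and the smoothing recursions of Section~3.2 preserve this order; the remainder therefore contributes $O_p((pq)^{-1})=o_p(1)$ to $\bar\Delta_k$.

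Next I would substitute the least-squares representation of Lemma~\ref{lemaC1}, $\widehat F_{t|n}^{(k)}=(pq)^{-1}(\widehat D_k^{(1)})^{-1}\widehat R_k^\top Y_t\widehat C_k(\widehat D_k^{(2)})^{-1}+O_p((pq)^{-1})$. Using $\widehat R_k=\widetilde R_k(p\widehat D_k^{(1)})^{1/2}$ and $\widehat C_k=\widetilde C_k(q\widehat D_k^{(2)})^{1/2}$ with $\widetilde R_k=\widehat R_k(\widehat R_k^\top\widehat R_k)^{-1/2}$ and $\widetilde C_k$ defined analogously, the leading piece collapses to $(\widehat D_k^{(1)})^{1/2}\widehat F_{t|n}^{(k)}(\widehat D_k^{(2)})^{1/2}=(pq)^{-1/2}\widetilde R_k^\top Y_t\widetilde C_k$ up to an $O_p((pq)^{-1})$ remainder; hence, after using the consistency of $\widehat D_k^{(j)}$ for $D_k^{(j)}$ (up to the relevant rotation) to absorb the central factor, $\bar\Delta_k=(npq)^{-1}\sum_{t=1}^n\widehat w_{t|n}^{(k)}\,\widetilde R_k^\top Y_t\widetilde C_k\widetilde C_k^\top Y_t^\top\widetilde R_k+o_p(1)$. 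I then replace $\widehat w_{t|n}^{(k)}$ by $I_{(s_t=k)}$: Lemma~\ref{lemaC2}(4) gives $\|\sum_t(\widehat w_{t|n}^{(k)}-I_{(s_t=k)})Y_t\widetilde C_k\widetilde C_k^\top Y_t^\top\|_F=o_p(N^{-\eta})\,O_p(pq^2n)$, so the change is $o_p(qN^{-\eta})=o_p(1)$; the residual mass from $\{t:s_t\neq k\}$ is handled by Theorem~\ref{Th2}(1) ($\widehat w_{t|n}^{(k)}=o_p(N^{-\eta})$ uniformly) together with $\sup_t\|F_t\|=O_p(n^{1/\alpha})$ from Assumption C(1), giving a total of $o_p(n^{2/\alpha}N^{-\eta})=o_p(1)$ precisely under the rate conditions $n^{16/\alpha}/(pq)\to0$ and $n^{2/\alpha+2/\beta}/(pq)\to0$.

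On $\{s_t=k\}$ we have $Y_t=R_k^0F_tC_k^{0\top}+E_t$; expanding the quadratic form leaves the signal term $(npq)^{-1}\sum_tI_{(s_t=k)}\widetilde R_k^\top R_k^0F_tC_k^{0\top}\widetilde C_k\widetilde C_k^\top C_k^0F_t^\top R_k^{0\top}\widetilde R_k$ plus two cross terms and an $E_tE_t^\top$ term. The cross terms are bounded via Lemma~\ref{lemaC2}(1) and the $E_tE_t^\top$ term via Lemma~\ref{lemaC2}(3); inserting the loading-convergence rates of Theorem~\ref{Th3} for $\|\widetilde R_k-\widetilde R_k^0\widetilde H_{1k}\|_F$ and $\|\widetilde C_k-\widetilde C_k^0\widetilde H_{2k}\|_F$ and dividing by $npq$, all of these are $o_p(1)$. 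For the signal term, Theorems~\ref{Th1} and~\ref{Th3} give $\widetilde R_k^\top R_k^0/\sqrt p=\widetilde H_{1k}^\top(D_k^{(1)})^{1/2}+o_p(1)$ and $C_k^{0\top}\widetilde C_k/\sqrt q=(D_k^{(2)})^{1/2}\widetilde H_{2k}+o_p(1)$ with $\widetilde H_{jk}\widetilde H_{jk}^\top=I+o_p(1)$, so it reduces to $\widetilde H_{1k}^\top\big(\tfrac1n\sum_tI_{(s_t=k)}(D_k^{(1)})^{1/2}F_tD_k^{(2)}F_t^\top(D_k^{(1)})^{1/2}\big)\widetilde H_{1k}+o_p(1)$. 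By Assumption C(2.2) the bracketed average converges in probability to $\pi_k^0\Sigma_{F_k}^{(1)}$, so $\bar\Delta_k=\pi_k^0\widetilde H_{1k}^\top\Sigma_{F_k}^{(1)}\widetilde H_{1k}+o_p(1)$; since $\Sigma_{F_k}^{(1)}$ is positive definite and $\widetilde H_{1k}$ is asymptotically orthogonal, the spectrum of $\bar\Delta_k$ lies, w.p.a.1, in $[\tfrac12\pi_k^0\lambda_{\min}(\Sigma_{F_k}^{(1)}),\,2\pi_k^0\lambda_{\max}(\Sigma_{F_k}^{(1)})]$, which is the asserted positive definiteness (if a rotation-free limit is wanted one uses Assumption C(2.4) instead of C(2.2)).

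The step I expect to be the main obstacle is the sharp control of the idiosyncratic contribution $(npq)^{-1}\sum_tI_{(s_t=k)}\widetilde R_k^\top E_t\widetilde C_k\widetilde C_k^\top E_t^\top\widetilde R_k$. The crude bound $\|\widetilde R_k^\top(\sum_tI_{(s_t=k)}E_t\widetilde C_k\widetilde C_k^\top E_t^\top)\widetilde R_k\|\le\|\sum_tI_{(s_t=k)}E_t\widetilde C_k\widetilde C_k^\top E_t^\top\|_F$ used with Lemma~\ref{lemaC2}(2) is too lossy — it only yields $O_p(q/\sqrt n+\sqrt{q/p})$, which need not vanish — so one is forced to keep $\widetilde R_k$ inside the sum and exploit the finer bound of Lemma~\ref{lemaC2}(3) jointly with the Theorem~\ref{Th3} rates, which is exactly the configuration in which this term is $o_p(1)$; the same care (retaining the loading estimates inside the sums rather than pulling them out as operator norms) is needed for the two cross terms. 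A minor point worth flagging is that the limit of $\bar\Delta_k$ is pinned down only up to the random rotation $\widetilde H_{1k}$, so ``convergence to a positive definite matrix'' should be read as a uniform two-sided bound on the spectrum — which is all that is used when this lemma is later invoked to invert $\bar\Delta_k$.
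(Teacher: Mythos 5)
Your overall strategy (expand $\widetilde P_{t|n}^{(k)}$ into the outer product of the smoothed mean plus a negligible smoothed-covariance contraction, substitute Lemma~\ref{lemaC1}, and analyze the sandwiched quadratic form $\frac{1}{npq}\sum_t \widehat w_{t|n}^{(k)}\widetilde R_k^\top Y_t\widetilde C_k\widetilde C_k^\top Y_t^\top\widetilde R_k$ by a signal/cross/noise decomposition) is a genuinely different route from the paper, which never expands $\widetilde P_{t|n}^{(k)}$ at all: it treats $\bar\Delta_k$ implicitly through the EM estimating equation $\widehat R_k\bigl(\sum_t\widehat w_{t|n}^{(k)}\widehat P_{t|n}^{c(k)}\bigr)=\sum_t\widehat w_{t|n}^{(k)}Y_t\widehat C_k\widehat F_{t|n}^{(k)\top}$, bounds the error blocks $A_1,A_2,A_3,D,B$ there, and finishes with an eigendecomposition/projection argument on $\bar R_k^0$. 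The decisive problem with your version is circularity within the paper's architecture: you invoke Theorem~\ref{Th3} (the rates for $\|\widetilde R_k-\widetilde R_k^0\widetilde H_{1k}\|_F$, $\|\widetilde C_k-\widetilde C_k^0\widetilde H_{2k}\|_F$, the identity $\widetilde R_k^\top R_k^0/\sqrt p=\widetilde H_{1k}^\top(D_k^{(1)})^{1/2}+o_p(1)$, and the asymptotic orthogonality $\widetilde H_{jk}\widetilde H_{jk}^\top=I+o_p(1)$), but Theorem~\ref{Th3} is proved \emph{after} and \emph{by means of} this lemma — its proof writes $\widetilde R_k-\widetilde R_k^0\widetilde H_{1k}=(A_1+\cdots)\bar\Delta_k^{-1}$ and uses $\|\bar\Delta_k^{-1}\|=O_p(1)$, which is exactly the positive definiteness you are trying to establish; moreover the rotation $\widetilde H_{1k}$ itself is only constructed there via $\bar\Delta_k^{-1}$. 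Theorem~\ref{Th1} by itself does not deliver what you use: it only gives the Kronecker-product identity $\frac{\widetilde C_k^{0\top}\widetilde C_k}{q}\frac{\widetilde C_k^\top\widetilde C_k^0}{q}\otimes\frac{\widetilde R_k^{0\top}\widetilde R_k}{p}\frac{\widetilde R_k^\top\widetilde R_k^0}{p}=I_r+o_p(1)$, from which the individual row/column factors are pinned down only up to an unknown positive scalar $a$ — this is precisely why the paper introduces $a$ and the $\bar R_k^0,V_k$ projection step rather than asserting per-factor orthogonality. A similar unsupported step is your appeal to ``consistency of $\widehat D_k^{(j)}$ for $D_k^{(j)}$'' to absorb the central factor: nothing proved before this lemma gives that (the paper only assumes a related condition later, in Theorem~\ref{Th6}); the paper sidesteps it because inside the expectation $\widetilde P_{t|n}^{(k)}$ the matrix is $\widehat C_k^\top\widehat C_k/q$, which cancels exactly.

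The gap is repairable, and in fact more cheaply than you anticipate: for the cross and idiosyncratic terms the Theorem~\ref{Th3} rates are unnecessary — in Lemma~\ref{lemaC2} the matrices $\widetilde H_{jk}$ are only required to be $O_p(1)$, so the trivial bounds $\|\widetilde C_k-\widetilde C_k^0\widetilde H_{2k}\|_F^2=O_p(q)$ and $\|\widetilde R_k-\widetilde R_k^0\widetilde H_{1k}\|_F^2=O_p(p)$ already make those contributions $o_p(1)$ after the $1/(npq)$ normalization (keeping one $\widetilde R_k$ inside the sum, i.e.\ using part (3) rather than part (2), as you correctly flag). For the signal term, however, you must do what the paper does: extract from Theorem~\ref{Th1} only the per-factor statements up to the scalar $a$, and then obtain positive definiteness from Assumption C(2.2)/(2.4) through the eigendecomposition-and-projection argument (or an equivalent trace argument), rather than borrowing $\widetilde H_{1k}$ from Theorem~\ref{Th3}. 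Your closing observation — that the ``limit'' is only identified up to a rotation and that what is really needed is a two-sided spectral bound w.p.a.1 — is correct and consistent with how the paper uses the lemma.
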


\textbf{Proof of Lemma \ref{lemaC1}}
\begin{proof}
\textbf{Step (1)}: We first show
$${\widehat{f} }_{ t|t }^{\left( k\right) } = \frac{1}{pq}\operatorname{vec}\left( \widehat{D}_k^{(1)})^{-1}{{\widehat{R}}_{k}^{\prime }{}^{t}{Y}_{t}{\widehat{C}}_{k}}\widehat{D}_k^{(2)})^{-1}\right)  + {O}_{p}\left( {p}^{ - 1}{q}^{ - 1}\right).$$

From the filtering process and Lemma \ref{lema1} of the main text, we have
\begin{align}
{\widehat{f} }_{t|t}^{\left( k\right) }& = \mathop{\sum }\limits_{{i = 1}}^{M}\frac{{\widehat{w}}_{t - 1,t \mid  t}^{\left( i,k\right) }}{{\widehat{w}}_{t \mid  t}^{\left( k\right) }}{\widehat{f}}_{t \mid  t}^{\left( i,k\right) } \notag\\
&
= \mathop{\sum }\limits_{{i = 1}}^{M}\frac{{\widehat{w}}_{t - 1 , t \mid t}^{\left( i,k\right) }}{{\widehat{w}}_{t \mid t}^{\left( k\right) }}\left\lbrack  {{\widehat{f}}_{t \mid t-1}^{\left( i,k\right) } + \frac{1}{{\widehat{\sigma }}^{2}}{\widehat{V}}_{t \mid t}^{\left( i,k\right) }\left( {\widehat{\Lambda}_k^{\prime}y_t  -\widehat{\Lambda}_k^{\prime}\widehat{\Lambda}_k {\widehat{f}}_{t \mid t-1}^{\left( i,k\right) }}\right) }\right\rbrack \notag\\
&= \mathop{\sum }\limits_{{i = 1}}^{M}\frac{{\widehat{w}}_{t - 1,t \mid t}^{\left( i,k\right) }}{{\widehat{w}}_{t \mid t}^{\left( k\right) }}\left\lbrack  {\left( {{I}_{r} - \frac{{1}}{{\widehat{\sigma} }^{2}}{\widehat{V}}_{t \mid t}^{\left( i,k\right) }\widehat{\Lambda}_k^{\prime}\widehat{\Lambda}_k}\right) {\widehat{f}}_{t \mid t-1}^{\left( i,k\right) } + \frac{{1}}{{\widehat{\sigma} }^{2}}{\widehat{V}}_{t \mid t}^{\left( i,k\right) } \widehat{\Lambda}_k^{\prime}y_t  }\right\rbrack  . \notag
\end{align}
From Lemma 1 of the main text, we have
\begin{align}\label{lemac1_0}
  \frac{{1}}{{\widehat{\sigma} }^{2}}{\widehat{V}}_{t \mid t}^{\left( i,k\right)}  \widehat{\Lambda}_k^{\prime}\widehat{\Lambda}_k &= \frac{1}{\widehat{\sigma}^2}\left[ I_r+ \frac{{1}}{{\widehat{\sigma} }^{2}}{\widehat{V}}_{t \mid t-1}^{\left( i,k\right)}  \widehat{\Lambda}_k^{\prime}\widehat{\Lambda}_k \right]^{-1}{\widehat{V}}_{t \mid t-1}^{\left( i,k\right)}  \widehat{\Lambda}_k^{\prime}\widehat{\Lambda}_k \notag\\
  &=\left[\widehat{\sigma}^2({\widehat{V}}_{t \mid t-1}^{\left( i,k\right)}  \widehat{\Lambda}_k^{\prime}\widehat{\Lambda}_k )^{-1}+I_r \right ]^{-1} \notag\\
  &= I_r-\widehat{\sigma}^2[{\widehat{V}}_{t \mid t-1}^{\left( i,k\right)}  \widehat{\Lambda}_k^{\prime}\widehat{\Lambda}_k+\widehat{\sigma}^2I_r ]^{-1} =I_r+O_p(\frac{1}{pq}),
\end{align}
which further implies that
$$ {{I}_{r} - \frac{{1}}{{\widehat{\sigma} }^{2}}{\widehat{V}}_{t \mid t}^{\left( i,k\right) }\widehat{\Lambda}_k^{\prime}\widehat{\Lambda}_k}=O_p(p^{-1}q^{-1})$$
and
\begin{align*}
    \frac{{1}}{{\widehat{\sigma} }^{2}}{\widehat{V}}_{t \mid t}^{\left( i,k\right) } \widehat{\Lambda}_k^{\prime}y_t  &= \frac{{1}}{{\sigma }^{2}}{\widehat{V}}_{t \mid t}^{\left( i,k\right) } \widehat{\Lambda}_k^{\prime}\widehat{\Lambda}_k (\widehat{\Lambda}_k^{\prime}\widehat{\Lambda}_k)^{-1} \widehat{\Lambda}_k^{\prime}y_t \\
  &  =(\widehat{\Lambda}_k^{\prime}\widehat{\Lambda}_k)^{-1} \widehat{\Lambda}_k^{\prime}y_t+ {O}_{p}( {p}^{ - 1}{q}^{ - 1}) \\
  &=  \frac{1}{pq}\operatorname{Vec}\left( (\widehat{D}_k^{(1)})^{-1}{{\widehat{R}}_{k}^{\prime }{}{Y}_{t}{\widehat{C}}_{k}}(\widehat{D}_k^{(2)})^{-1}\right)  + {O}_{p}\left( {p}^{ - 1}{q}^{ - 1}\right).
\end{align*}
It follows that
\begin{align}\label{lemac1_1}
{\widehat{f}}_{t \mid t}^{\left( k\right) } &= \mathop{\sum }\limits_{{i = 1}}^{M}\frac{{\widehat{w}}_{t - 1,t \mid t}^{\left( i,k\right) }}{{\widehat{w}}_{t \mid t }^{\left( k\right) }}\left\lbrack  \frac{1}{pq}\operatorname{Vec}\left( (\widehat{D}_k^{(1)})^{-1}{{\widehat{R}}_{k}^{\prime }{Y}_{t}{\widehat{C}}_{k}}(\widehat{D}_k^{(2)})^{-1}\right)  + {O}_{p}\left( {p}^{ - 1}{q}^{ - 1}\right)\right\rbrack \notag\\
&= \frac{1}{pq}\operatorname{Vec}\left( (\widehat{D}_k^{(1)})^{-1}{{\widehat{R}}_{k}^{\prime }{Y}_{t}{\widehat{C}}_{k}}(\widehat{D}_k^{(2)})^{-1}\right)  + {O}_{p}\left( {p}^{ - 1}{q}^{ - 1}\right).
\end{align}

\textbf{Step (2)}: We next show
$$\widehat{f}_{t \mid n}^{(k)} =\frac{1}{pq}\operatorname{Vec}\left( (\widehat{D}_k^{(1)})^{-1}{{\widehat{R}}_{k}^{\prime }{Y}_{t}{\widehat{C}}_{k}}(\widehat{D}_k^{(2)})^{-1}\right)  + {O}_{p}\left( {p}^{ - 1}{q}^{ - 1}\right).$$

We can first show that
\begin{align}
  {\widehat{f}}_{t \mid n}^{\left( k,\ell \right) }&  = {\widehat{f}}_{t \mid t}^{\left( k\right) } + {\widehat{G}}_{t}^{\left( k,\ell\right) }\left( {{\widehat{f}}_{t + 1 \mid n}^{\left(\ell\right) } - {\widehat{f}}_{t + 1 \mid t}^{\left( k,\ell\right) }}\right)  \notag\\
  &= \frac{1}{pq}\operatorname{Vec}\left( (\widehat{D}_k^{(1)})^{-1}{{\widehat{R}}_{k}^{\prime }{Y}_{t}{\widehat{C}}_{k}}(\widehat{D}_k^{(2)})^{-1}\right)  + {O}_{p}\left( {p}^{ - 1}{q}^{ - 1}\right)\notag
\end{align}
by the following three facts:
\begin{description}

  \item[(1)] By equation (\ref{lemac1_1}),
  $${\widehat{f}}_{t \mid t}^{\left( k\right) } = \frac{1}{pq}\operatorname{vec}\left( (\widehat{D}_k^{(1)})^{-1}{{\widehat{R}}_{k}^{\prime }{Y}_{t}{\widehat{C}}_{k}}(\widehat{D}_k^{(2)})^{-1}\right)  + {O}_{p}\left( {p}^{ - 1}{q}^{ - 1}\right).$$

 \item[(2)] By Step (4) in Section 3.2.1 of the main text,  and equations (\ref{lemac1_0}) and (\ref{lemac1_1}),
\begin{align}\label{lemac1_2}
    {\widehat{V}}_{t \mid t}^{\left( k\right) }& = \sum_{i = 1}^{M} \frac{{\widehat{w}}_{t-1,t|t}^{\left( i,k\right) }}{{\widehat{w}}_{t | t}^{\left( k\right) }} \left[ {\widehat{V}}_{t | t}^{\left( i,k\right) } + \left(\widehat{f}_{t\vert t}^{(k)} - \widehat{f}_{t\vert t}^{(i,k)}\right) \left(\widehat{f}_{t\vert t}^{(k)} - \widehat{f}_{t\vert t}^{(i,k)}\right)^{\prime} \right] \notag\\
    &= \widehat{\sigma}^2(\widehat{\Lambda}_k^{\prime}\widehat{\Lambda}_k)^{-1} + O_p\left( {p}^{-2} q^{-2} \right).
\end{align}

 \item[(3)] By Step (2) in Section 3.2.2 of the main text, and result of  equation $  {\widehat{V}}_{t \mid t}^{\left( k\right) },$
\begin{align}\label{lemac1_21}
{\widehat{G} }_{t}^{\left( {k,\ell}\right)}  & = {\widehat{V} }_{t | t}^{\left( k\right) }(\widehat{\Phi}_k^{(2)}\otimes\widehat{\Phi}_k^{(1)}) {\left\lbrack  {V}_{t + 1 | t}^{\left(  k,\ell \right) }\right\rbrack  }^{-1} = {O}_{p}\left( {{p}^{-1}{q}^{-1}}\right).
\end{align}
\end{description}
Finally, we have the results of Lemma \ref{lemaC1}.
\end{proof}

\newpage
\textbf{Proof of Lemma \ref{lemaC2}}

\begin{proof}
In the proof of Lemma \ref{lemaC2}, we assume that \({k}_{1} = {k}_{2} = 1\). The same order results can be generalized to the general case of \(\left( {{k}_{1},{k}_{2}}\right)\). In the following, define $\widetilde{C}_k=\widehat{C}_k(\widehat{D}_k^{(2)})^{-1/2} $ and $\widetilde{R}_k=\widehat{R}_k(\widehat{D}_k^{(1)})^{-1/2} $.

\textbf{(1) Consider result (1) of Lemma \ref{lemaC2}.}
\begin{align}
& \left\Vert \sum_{t} {I}_{\left( {S}_{t} = k\right)} {F}_{t} {\widetilde{C}}_{k}^{0\prime} {\widetilde{C}}_{k}{\widetilde{C} }_{k}^{\prime} {E}_{t}^{\prime} \right\Vert_{F}^{2} = O_p\left( q^{2} \right) \left\|\sum_{t} I_{(s_t = k)} F_t\widetilde{C}_k^{\prime}E_t^{\prime}  \right\|_F^2 \notag\\
&
  \lesssim O_p\left( {q}^{2}\right) \left(\left\|\sum_{t} I_{(s_t = k)} F_t {\widetilde{H} _{2k}^{\prime}} \widetilde{C}_k^{0\prime} E_t^{\prime}\right\|_F^2 + \left\|\sum_{t} I_{(s_t = k)} F_t (\widetilde{C}_k - \widetilde{C}_k^{0}\widetilde{H}_{2k})^{\prime}E_t^{\prime} \right\|_F^2\right) \notag\\
  &= O_p\left( q^{2} \right) \left( I_1 + I_2 \right) \notag,
\end{align}
where
\begin{align}
 {I}_{1} &= \left\|\sum\limits_{t} I_{(s_t = k)} F_t \widetilde{C}_k^{0\prime} E_t^{\prime}\right\|_F^2  O_p\left( 1 \right)
 = \left\|\sum\limits_{t} I_{(s_t = k)} E_t \widetilde{C}_k^0 F_t^{\prime}\right\|_F^2  O_p\left( 1 \right) \notag\\
 &
 \approx \sum\limits_{i=1}^{p} \left\| \sum\limits_{t} I_{(s_t = k)} e_{t,i\cdot}^{\prime} \widetilde{C}_k^0 F_t^{\prime} \right\|_F^2
=\sum\limits_{i=1}^{p} \left\| \sum\limits_{t} {I_{(s_t = k)} 1_{p}^{(i)}}^{\prime} E_t \widetilde{C}_k^0 F_t^{\prime} \right\|_F^2 \notag\\
&= {O}_{p}\left( {pqn}\right) ,\left( {\text{ By Assumption }E}\right)
\end{align}
and
\begin{align}
{I}_{2} &= \left\|\sum\limits_{t} I_{(s_t = k)} F_t (\widetilde{C}_k - \widetilde{C}_k^{0}\widetilde{H}_{2k})^{\prime}E_t^{\prime} \right\|_F^2 \notag\\
&= \left\|\sum\limits_{t} I_{(s_t = k)} (\widetilde{C}_k - \widetilde{C}_k^{0}\widetilde{H}_{2k})^{\prime}E_t^{\prime}F_t \right\|_F^2\notag\\
&\leq \left\| \widetilde{C}_{k} - \widetilde{C}_{k}^{0} \widetilde{H}_{2k} \right\|_F^2 \sum\limits_{i = 1}^{p} \sum\limits_{j = 1}^{q} \left\| \sum_{t} I_{(s_t = k)} 1_{p}^{(i)\prime} E_t 1_{q}^{(j)} F_t \right\|_F^2  \text{(By Assumption E)}\notag\\
&= {O}_{p}\left( {pqn}\right)  \cdot  {O}_{p}\left( {\left| \right| \widetilde{C}_{k} - {\widetilde{C}}_{k}^{ 0 }{H}_{2k}{\left| \right| }_{F}^{2}}\right) .
\end{align}
Then we have  $$\left\Vert \sum\limits_{t} I_{(S_{t} = k)} F_{t} \widetilde{C}_{k}^{0\prime} \widetilde{C}_{k} \widetilde{C}_{k}^{\prime} E_{t}^{\prime} \right\Vert_{F}^{2} = O_{p}(pq^{3}n) \left[ 1 + O_{p}\left( \left\| \widetilde{C} _{k} - \widetilde{C}_{k}^{0} \widetilde{H}_{2k} \right\|_F^2 \right) \right].$$

\textbf{(2) Consider the result (2) of Lemma \ref{lemaC2}.}

Under Assumptions D(2.1) and D(3.1),
\begin{align}
\left\Vert \sum_{t} I_{(s_{t} = k)} E_{t} \widetilde{C}_{k} \widetilde{C}_{k}^{\prime} E_{t}^{\prime} \right\Vert_{F}^{2}
&= \sum_{i = 1}^{p} \sum_{i_{1} = 1}^{p} \left\Vert \sum_{t} I_{(s_{1} = k)} e_{t,i\cdot }^{\prime} \widetilde{C}_{k} \widetilde{C}_{k}^{\prime} e_{t,i_{1\cdot }} \right\Vert_{F}^{2}
\notag\\
&= \sum_{i = 1}^{p} \sum_{i_{1} = 1}^{p} \left\Vert \sum_{t} \widetilde{C}_k^{\prime}  I_{(s_{1} = k)} e_{t,i\cdot }  {e_{t,i_{1\cdot }}^{\prime}} \widetilde{C}_k\right\Vert_{F}^{2}
\notag\\
&\leq \left\Vert \widetilde{C}_{k} \right\Vert_{F}^{4} \cdot \sum_{i = 1}^{p} \sum_{i_{1} = 1}^{p} \left\Vert \sum_{t} I_{(s_{1} = k)} e_{t,i\cdot  } e_{t,i_{1\cdot }}^{\prime} \right\Vert_{F}^{2}
\notag\\
&= {q}^{2}\mathop{\sum }\limits_{{i = 1}}^{p}\mathop{\sum }\limits_{{{i}_{1} = 1}}^{p}\sum_{j = 1}^{q} \sum_{j_{1} = 1}^{q} \left\Vert \sum_{t} I_{(s_{1} = k)} e_{t,ij} e_{t,i_{1}j_{1}} \right\Vert_{F}^{2}
\notag\\
&\leq  {q}^{2}\mathop{\sum }\limits_{{i ,i_1}}\mathop{\sum }\limits_{{j,j_1}}\left\Vert \sum_{t} I_{(s_{1} = k)} \left[e_{t,ij} e_{t,i_{1}j_{1}} - E(e_{t,ij}e_{i_{1}j_{1}})\right]\right\Vert_{F}^{2}
\notag\\
&+ {q}^{2}\mathop{\sum }\limits_{{{i},{i}_{1}}}\mathop{\sum }\limits_{{{j},{j}_{1}}}\left\Vert \sum_{t} I_{(s_{1} = k)} E(e_{t,ij}e_{t,i_{1}j_{1}})\right\Vert_{F}^{2}
\notag\\
 & \lesssim  {O}_{p}\left( {{p}^{2}{q}^{4}n}\right)  + {q}^{2}\mathop{\sum }\limits_{{i,i_1}}\mathop{\sum }\limits_{{j,j_1}}\mathop{\sum }\limits_{{t,d}}|E(e_{t,ij},e_{t,i_1j_1})|
\notag\\
&= {O}_{p}\left( {{p}^{2}{q}^{4}n}\right)  + {O}_{p}\left( {p{q}^{3}{n}^{2}}\right) . \notag
\end{align}

\textbf{(3) Consider the result (3) of Lemma \ref{lemaC2}.}
\begin{align}\label{lemac2_0}
 & \left\Vert \sum\limits_{t} I_{(s_t = k)} E_{t} \widetilde{C}_{k} \widetilde{C}_{k}^{\prime} E_{t}^{\prime} \widetilde{R}_k \right\Vert_{F}^{2}
  \notag\\
 & \lesssim\left\Vert \sum\limits_t I_{(s_t = k)} E_{t} \widetilde{C}_{k} \widetilde{C}_{k}^{\prime} E_{t}^{\prime}\widetilde{R}_k^0 \widetilde{H}_{1k} \right\Vert_{F}^{2} +\left\Vert \sum\limits_{t} I_{(s_t = k)} E_{t} \widetilde{C}_{k} \widetilde{C}_{k}^{\prime} E_{t}^{\prime}(\widetilde{R}_k - \widetilde{R}_k^{0}\widetilde{H}_{1k}  ) \right\Vert_{F}^{2}
  \notag\\
&  \lesssim  {O}_{p}\left( 1\right) \left\Vert \sum\limits_t I_{(s_t = k)} E_{t} \widetilde{C}_{k} \widetilde{C}_{k}^{\prime} E_{t}^{\prime} \widetilde{R}_k^0 \right\Vert_{F}^{2} + \left\Vert \sum\limits_t I_{(s_t = k)} E_{t} \widetilde{C}_{k} \widetilde{C}_{k}^{\prime} E_{t}^{\prime}  \right\Vert_{F}^{2} \left\Vert \widetilde{R}_{k}- \widetilde{R}_{k}^{0} \widetilde{H}_{1k} \right\Vert_{F}^{2}
  \notag\\
&= O_{p}(1) \left\Vert \sum_{t} I_{(s_t = k)} E_{t} \widetilde{C}_{k} \widetilde{C}_{k}^{\prime} E_{t}^{\prime} \widetilde{R}_{k}^{0} \right\Vert_{F}^{2} + O_{p}\left( p^{2} q^{4} n + p q^{3} n^{2} \right) \left( \left\Vert \widetilde{R}_{k} - \widetilde{R}_{k}^{0} \widetilde{H}_{1k} \right\Vert_{F}^{2} \right),
\end{align}
where the last equality holds due to the result (2) of Lemma \ref{lemaC2}.
Then we just need to compute the order of \(\left \Vert \sum\limits_{t} I_{(s_t = k)} E_{t} \widetilde{C}_{k} \widetilde{C}_{k}^{\prime} E_t^{\prime} \widetilde{R}_{k}^{0} \right\Vert_{F}^{2} \), which can be bounded by
$$\left\Vert \sum\limits_{t} I_{(s_t = k)} E_{t} \widetilde{C}_{k} {\widetilde{R}_{k}^{0\prime}} E_{t} \right\Vert_{F}^{2} \left\Vert \widetilde{C}_{k} \right\Vert_{F}^{2} = q\left\Vert \sum\limits_{t} I_{(s_t = k)} E_{t} \widetilde{C}_{k} {\widetilde{R}_{k}^{0\prime}} E_{t} \right\Vert_{F}^{2}. $$
Since
\begin{align*}
   q\left\Vert \sum\limits_{t} I_{(s_t = k)} E_{t} \widetilde{C}_{k} {\widetilde{R}_{k}^{0\prime}} E_{t} \right\Vert_{F}^{2} &\lesssim q\left\Vert \sum\limits_{t} I_{(s_t = k)} E_{t} \widetilde{C}_{k}^{0} \widetilde{H}_{2k} {\widetilde{R}_{k}^{0\prime}} E_{t} \right\Vert_{F}^{2} \\
   &+ q\left\Vert \sum\limits_{t} I_{(s_t = k)} E_{t} (\widetilde{C}_{k} -C_k^0 \widetilde{H}_{2k}) {\widetilde{R}_{k}^{0\prime}} E_{t} \right\Vert_{F}^{2} \\
   &\equiv I_3 + I_4,
\end{align*}
we only need to compute the orders of each term in the above inequality.

Consider \({I}_{3}\). Under Assumptions D(3.2) and D(2.2), we have
\begin{align}
  {I}_{3}& =  q\left\Vert \sum\limits_{t} I_{(s_t = k)} E_{t} \widetilde{C}_{k}^{0} \widetilde{H}_{2k} {\widetilde{R}_{k}^{0\prime}} E_{t} \right\Vert_{F}^{2} =  O_p(q)\left\Vert \sum\limits_{t} I_{(s_t = k)} E_{t} \widetilde{C}_{k}^{0} {\widetilde{R}_{k}^{0\prime}} E_{t} \right\Vert_{F}^{2} \notag\\
 & = q\mathop{\sum }\limits_{{i = 1}}^{p}\mathop{\sum }\limits_{{j = 1}}^{q} \left\Vert \sum\limits_{t} I_{(s_t = k)} e_{t,i\cdot } \widetilde{C}_{k}^{0} {\widetilde{R}_{k}^{0\prime}} e_{t,\cdot j} \right\Vert_{F}^{2}
  \notag\\
&  = q\mathop{\sum }\limits_{{i = 1}}^{p}\mathop{\sum }\limits_{{j = 1}}^{q} \left\Vert \sum\limits_{t} \sum\limits_{i_1} \sum\limits_{j_1}   I_{(s_t = k)} \widetilde{r}_{k,i_1 }^{0} \widetilde{c}_{k,j_1}^{0} e_{t,ij_1} e_{t,i_1j} \right\Vert_{F}^{2}
  \notag\\
&  \lesssim  q\mathop{\sum }\limits_{{i = 1}}^{p}\mathop{\sum }\limits_{{j = 1}}^{q} \left\Vert \sum\limits_{t} \sum\limits_{i_1} \sum\limits_{j_1}   I_{(s_t = k)} \widetilde{r}_{k,i_1 }^{0} \widetilde{c}_{k,j_1}^{0} [e_{t,ij_1} e_{t,i_1j} - E(e_{t,ij_1}  e_{t,i_1j})]\right\Vert_{F}^{2}
  \notag\\
&  + q\mathop{\sum }\limits_{{i = 1}}^{p}\mathop{\sum }\limits_{{j = 1}}^{q} \left\Vert \sum\limits_{t} \sum\limits_{i_1} \sum\limits_{j_1}   I_{(s_t = k)} \widetilde{r}_{k,i_1 }^{0} \widetilde{c}_{k,j_1}^{0} E(e_{t,ij_1} e_{t,i_1j})\right\Vert_{F}^{2}
  \notag\\
&= {O}_{p}\left( {n{p}^{2}{q}^{3} + {n}^{2}p{q}^{2}}\right) .
\end{align}

Consider \({I}_{4}\) , where
\begin{align}\label{lemac2_2_1}
 {I}_{4} &= q\left\Vert \sum\limits_{t} I_{(s_t = k)} E_{t} (\widetilde{C}_{k} - \widetilde{C}_k^{0} \widetilde{H}_{2k} ){\widetilde{R}_{k}^{0\prime}} E_{t} \right\Vert_{F}^{2}
 \notag\\
 &= q \sum\limits_{i = 1}^{p} \sum\limits_{j = 1}^{q}  \left\Vert \sum\limits_{t} I_{(s_t = k)} e_{t,i\cdot } ^{\prime}(\widetilde{C}_{k} - \widetilde{C}_k^{0} \widetilde{H}_{2k} ){\widetilde{R}_{k}^{0\prime}} e_{t,\cdot j} \right\Vert_{F}^{2}  \notag\\
 &\leqslant  q   \left\Vert  \widetilde{C}_{k} - \widetilde{C}_k^{0} \widetilde{H}_{2k} \right\Vert_{F}^{2} \sum\limits_{i = 1}^{p} \sum\limits_{j = 1}^{q}  \left\Vert \sum\limits_{t} I_{(s_t = k)}e_{t,i\cdot } e_{t,\cdot j}^{\prime}R_k^{0}\right\Vert_{F}^{2}
  \notag\\
&  =  q   \left\Vert (\widetilde{C}_{k} - \widetilde{C}_k^{0} \widetilde{H}_{2k}  \right\Vert_{F}^{2} \sum\limits_{i = 1}^{p} \sum\limits_{j ,j_1}^{q}  \left\Vert \sum\limits_{t} \sum_{i_1}I_{(s_t = k)}e_{t,ij_1} e_{t,i_1j}\widetilde{r}_{k,i_1}^{0}\right\Vert_{F}^{2}
   \notag\\
&   \lesssim  q   \left\Vert  \widetilde{C}_{k} - \widetilde{C}_k^{0} \widetilde{H}_{2k}  \right\Vert_{F}^{2} \sum\limits_{i = 1}^{p} \sum\limits_{j,j_1}^{q}  \left\Vert \sum\limits_{t} \sum\limits_{i_1}  I_{(s_t = k)}\widetilde{r}_{k,i_1}^{0}[e_{t,ij_1 } e_{t,i_1j}-E(e_{t,ij_1}e_{t,i_1j})] \right\Vert_{F}^{2}
    \notag\\
 &   +   q   \left\Vert  \widetilde{C}_{k} - \widetilde{C}_k^{0} \widetilde{H}_{2k}  \right\Vert_{F}^{2} \sum\limits_{i = 1}^{p} \sum\limits_{j,j_1}^{q}  \left\Vert \sum\limits_{t} \sum\limits_{i_1}  I_{(s_t = k)}\widetilde{r}_{k,i_1}^{0}E(e_{t,ij_1}e_{t,i_1j})] \right\Vert_{F}^{2}
     \notag\\
&     = {O}_{p}\left( {n{p}^{2}{q}^{3}}\right) \left\Vert  \widetilde{C}_{k} - \widetilde{C}_k^{0} \widetilde{H}_{2k}  \right\Vert_{F}^{2} +   q\left\Vert  \widetilde{C}_{k} - \widetilde{C}_k^{0} \widetilde{H}_{2k}  \right\Vert_{F}^{2}
 \notag\\
&    \sum\limits_{i } \sum\limits_{j,j_1}\sum\limits_{i_1,i_2} \sum\limits_{t,d}  I_{(s_t = k)}I_{(s_{d} = k)}\widetilde{r}_{k,i_1}^{0}\widetilde{r}_{k,i_2}^{0}E(e_{t,ij_1}e_{t,i_1j}) E(e_{d,ij_1}e_{d,i_2j})
 \notag\\
& =  {O}_{p}\left( {n {p}^{2}{q}^{3}}\right) {\begin{Vmatrix}\widetilde{{C}}_{k} - \widetilde{C}_{k}^{ 0  }\widetilde{H}_{2k}\end{Vmatrix}}_{F}^{2} +q   \left\Vert  \widetilde{C}_{k} - \widetilde{C}_k^{0} \widetilde{H}_{2k}  \right\Vert_{F}^{2}
  \notag\\
&    \sum\limits_{i } \sum\limits_{j,j_1} \sum\limits_{i_1} \sum\limits_{t}  I_{(s_t = k)}\widetilde{r}_{k,i_1}^{0}E(e_{t,ij_1},e_{t,i_1j}) \left[\sum\limits_{i_2}\sum\limits_{d} \widetilde{r}_{k,i_2} E(e_{d,ij_1},e_{d,i_2j})\right].
\end{align}
By Assumption D (2.2), we have
\(\sum\limits_{i_2} \sum\limits_{d} \widetilde{r}_{k,i_2}^0 |E(e_{d,ij_1} e_{d,i_2j})| \leq nc_0,\) and \(\sum\limits_{i_1=1}^{p} \sum\limits_{j_1=1}^{p} \widetilde{r}_{k,i_1}^0 |E(e_{t,ij_1} e_{t,i_1j})| \leq c_0,\) which implies that the second term on the right-hand side of (\ref{lemac2_2_1}) can be bounded by \(O_p(n^2 pq^2) \left\| \widetilde{C}_{k} - \widetilde{C}_{k}^{0} \widetilde{H}_{2k} \right\|_F^2\).
Then we have
$$I_{4}=O_p(np^2q^3 + n^2 pq^2)\left\| \widetilde{C}_{k} - \widetilde{C}_{k}^{0} \widetilde{H}_{2k} \right\|_F^2. $$

Combined results of \({I}_{3} \) and \( {I}_{4}\), we have
\begin{align}
   \left\lVert \sum\limits_{t} I(s_t = k) E_t \widetilde{C}_k \widetilde{C}_k^{\prime} E_t^{\prime} \widetilde{R}_k \right\rVert_{F}^{2}
   &= O_p\left( n p^{2} q^{3} + n^{2} p q^{2} \right) \left[ 1 + O_p\left\| \widetilde{C}_k - \widetilde{C}_{k}^{0} \widetilde{H}_{2k} \right\|_F^2 \right] \notag\\
&+  {{O}_{p}\left( {{p}^{2}{q}^{4}n + p{q}^{3}{n}^{2}}\right) \left\| \widetilde{R}_k - \widetilde{R}_{k}^{0} \widetilde{H}_{1    k} \right\|_F^2.} \notag
\end{align}

\textbf{(4) Consider result (4) of Lemma \ref{lemaC2}.}
\begin{align}
    \left\Vert \sum\limits_{t}(\widehat{w}_{t|n}^{(k)} - I_{(s_t = k)}) Y_{t} \widetilde{C}_k \widetilde{C}_k^{\prime} Y_{t}^{\prime} \right\Vert_{F}
    &\leq \sum\limits_{t} \left| \widehat{w}_{t | n}^{(k)} - I{(s_{t} = k)} \right| \left \Vert Y_{t} \widetilde{C}_k \widehat{C}_k^{\prime} Y_{t}^{\prime} \right\Vert_{F}\notag\\
&\lesssim k_2\sum\limits_{t} \left| \widehat{w}_{t | n}^{(k)} - I{(s_{t} = k)} \right| \left \Vert Y_{t} \widetilde{C}_k \widetilde{C}_k^{\prime} Y_{t}^{\prime} \right\Vert
\notag\\
&\leqslant k_2 \sup_{t} \left| \widehat{w}_{t | n}^{(k)} - I(s_{t} = k) \right| \sum_{t} \left\| Y_{t} \widetilde{C}_k \widetilde{C}_k^{\prime} Y_{t}^{\prime} \right\|
\notag\\
&\leqslant k_2 \sup_{t} \left| \widehat{w}_{t | n}^{(k)} - I(s_{t} = k) \right| \sum_{t} \left\| \widetilde{C}_k^{\prime} Y_{t}^{\prime} Y_{t} \widetilde{C}_k \right\|_F
\notag\\
&\lesssim {O}_{p} \left( \frac{1}{N^{\eta}} \right) \cdot \left\| \widetilde{C}_{k} \right\|_{F}^{2} \sum_{t}  \left\| {Y}_t \right\|_{F}^{2}
\notag\\
&= {o}_{p} \left( \frac{1}{N^{\eta}} \right)   \cdot  {O}_{p}\left( {p{q}^{2}n}\right) \notag.
\end{align}
Note that the second inequality holds due to that for a matrix, \(\parallel A{\parallel }_{F} \leq  \operatorname{rank}\left( A\right) \|A\|\) and the last inequality holds due to Assumptions B(1), C(1) and D(1).
\end{proof}

\textbf{Proof of Lemma \ref{lemaC3}}
\begin{proof}
Consider the estimation equation of \(\widehat{R}_k \) given in the main text, that is
$$\widehat{R}_k(\sum_{t=1}^n\widehat{w}_{t|n}^{(k)}\widehat{P}_{t|n}^{c(k)})=\sum_{t=1}^n \widehat{w}_{t|n}^{(k)}Y_t\widehat{C}_k\widehat{F}_{t|n}^{(k)\prime},$$
where  $\widehat{P}_{t|n}^{c(k)}=E[F_tC_k^\prime C_kF_t^\prime|s_t=k,\mathcal{Y}_n;\widehat{\theta}]=qE[F_t D_k^{(2)}F_t^{\prime}|s_t=k,\mathcal{Y}_n;\widehat{\theta} ] \triangleq q \widetilde{P}_{t|n}^{(k)}$. Define $\widehat{\Delta}_k=\frac{1}{n}\sum_{t=1}^n\widehat{w}_{t|n}^{(k)}\widetilde{P}_{t|n}^{(k)}$. From the asymptotic representation of \( \widehat{f}_{t|n}^{(k)} \) given in Lemma \ref{lemaC1}, we have the following.
\begin{align}\label{lemac3_1}
  \widehat{R}_k \widehat{{\Delta}} _k &
  =    \frac{1}{p{q}^{2}n }\mathop{\sum }\limits_{{t = 1}}^{n}\widehat{w}_{t|n}^{(k)}{Y}_{t}\widehat{C}_k (\widehat{D}_k^{(2)})^{-1}\widehat{{C}}_k^{\prime}{Y}_{t}^{\prime}\widehat{ R}_{k}(\widehat{D}_k^{(1)})^{-1}
  \notag\\
  &+
  \underbrace{\frac{1}{nq}\mathop{\sum }\limits_{{t = 1}}^{n}\widehat{w}_{t|n}^{(k)}{Y}_{t}\widehat{{C}}_k\Delta_{ f_{t|n}^{(k)}}\left( {p ^{-1} q ^{- 1}}\right)}_{:= B},\notag\\
  &=  \frac{1}{p{q}^{2}n }\mathop{\sum }\limits_{{t = 1}}^{n}I_{s_t=k}{Y}_{t}\widehat{C}_k (\widehat{D}_k^{(2)})^{-1}\widehat{{C}}_k^{\prime}{Y}_{t}^{\prime}\widehat{ R}_{k}(\widehat{D}_k^{(1)})^{-1}\notag\\
  &
  + \underbrace{\frac{1}{p{q}^{2}n }\mathop{\sum }\limits_{{t = 1}}^{n}(\widehat{w}_{t|n}^{(k)}-I_{s_t=k}){Y}_{t}\widehat{C}_k (\widehat{D}_k^{(2)})^{-1}\widehat{{C}}_k^{\prime}{Y}_{t}^{\prime}\widehat{ R}_{k}(\widehat{D}_k^{(1)})^{-1}}_{:=D}+B,
\end{align}
where $ \Delta_{ f_{t|n}^{(k)}}\left( {p ^{-1} q ^{- 1}}\right)$ denotes a $k_1 \times k_1$ matrix with each element bounded by $O_p(p^{-1}q^{-1})$. In the following, define $\widetilde{R}_k =\widehat{R}_k(\widehat{D}_k^{(1)})^{-1/2}$ and $\widetilde{C}_k =\widehat{C}_k(\widehat{D}_k^{(2)})^{-1/2}$. Then $\widetilde{R}_k^{\prime}\widetilde{R}_k=pI_{k_1}$ and $\widetilde{C}_k^{\prime}\widetilde{C}_k=qI_{k_2}$. Define$\bar{\Delta}_k=(\widehat{D}_k^{(1)})^{1/2}\widehat{\Delta}_k(\widehat{D}_k^{(1)})^{1/2}$. Then equation (\ref{lemac3_1}) can be rewritten as
\begin{align}\label{lemac3_2}
    \widetilde{R}_k\bar{\Delta}_k= \frac{1}{npq^2}\sum_{t=1}^nI_{(s_t=k)} Y_t \widetilde{C}_k\widetilde{C}_k^{\prime}Y_t^{\prime}\widetilde{R}_k+(D+B)(\widehat{D}_k^{(1)})^{1/2}.
\end{align}
Substituting $Y_t = R_k^0 F_t C_k^{0\prime}+E_t$ when $s_t=k$ into the first term on the right hand side of above equation, we have
\begin{align}\label{lemac3_3_0}
     \widetilde{R}_k\bar{\Delta}_k& =
     \frac{1}{{p}{q}^{2} n}\mathop{\sum }\limits_{{t = 1}}^{n}{I}_{\left( s_t = k\right) }{R}_{k}^{0}{F}_{t}{{C}_{k}^{0}}^{\prime}\widetilde {C}_{k}\widetilde{ C}_{k}^{\prime}{C}_{k}^{0}{F_t}^{\prime}{{R}_{k}^{0}}^{\prime}\widetilde{R}_k \notag\\
     &+\underbrace{\frac{1}{{p}{q}^{2} n}\mathop{\sum }\limits_{{t = 1}}^{n}{I}_{\left( s_t = k\right) }{R}_{k}^{0}{F}_{t}{{C}_{k}^{0}}^{\prime}\widetilde {C}_{k}\widetilde {C}_{k}^{\prime}{E_t}^{\prime}\widetilde{R}_k }_{:=A_1} \notag\\
     &+\underbrace{\frac{1}{{p}{q}^{2} n}\mathop{\sum }\limits_{{t = 1}}^{n}{I}_{\left( s_t = k\right) }{E}_{t}\widetilde {C}_{k}\widetilde {C}_{k}^{\prime}{C}_{k}^{0}{F_t}^{\prime}{{R}_{k}^{0}}^{\prime}\widetilde{R}_k}_{:=A_2} \notag\\
     &+\underbrace{ \frac{1}{{p}{q}^{2} n}\mathop{\sum }\limits_{{t = 1}}^{n}{I}_{\left( s_t = k\right) }{E}_{t}\widetilde {C}_{k}\widetilde {C}_{k}^{\prime}{{E}_{t}}^{\prime}\widetilde{R}_k}_{:=A_3}\notag\\
     &+(D+B)(\widehat{D}_k^{(1)})^{1/2}.
\end{align}
In addition, define \(\bar{R}_{k} = \frac{\widetilde{R}_{k}}{\sqrt{p} }\) as the normalized version of $\widehat{R}_k$.
 Then equation (\ref{lemac3_3_0}) becomes
 \begin{align}
     \bar{R}_{k}\bar{\Delta} _ k &= \frac{1}{p{q}^{2}n}\mathop{\sum }\limits_{{t = 1}}^{n} {I}_{(s_t= k)} {R}_{k}^{0}{F}_{t}{{C}_{k}^{0}}^{\prime}\widetilde {C}_{k}\widetilde {C}_{k}^{\prime}{C}_k^0{F}_{t}^{\prime}{R}_{k}^{0}\bar{R}_k
     \notag \\
     &+ \frac{1}{\sqrt{p} }{A}_{1} + \frac{1}{\sqrt{p} }{A}_{2}+ \frac{1}{\sqrt{p} }{A}_{3}+ \frac{1}{\sqrt{p} }(D+B)(\widehat{D}_k^{(1)})^{1/2}.
 \end{align}
By Lemma \ref{lemaC2} (1), we have
\begin{align}
    {\begin{Vmatrix}\frac{1}{\sqrt{p}}{A}_{1}\end{Vmatrix}}_{F}^{2}& = \frac{1}{p}{\begin{Vmatrix}{A}_{1}\end{Vmatrix}}_{F}^{2} = \frac{1}{p^3q^4n^2} \left\Vert \sum\limits_{t} I_{(S_{t} = k)} R_{k}^{0}{F_{t}C_k^0}^{\prime} \widetilde{C}_k \widetilde{C}_k^{\prime} E_{t}^{\prime} \widetilde{R}_{k} \right\Vert_{F}^{2} \notag\\
    &\lesssim  \frac{\left| \right| {R}_{k}^{0}{\left| \right| }_{F}^{2}}{p} \cdot  \frac{\left| \right| {\widetilde{R}}_{k}{\left| \right| }_{F}^{2}}{{p}} \cdot  \frac{1}{pq^4n^2} \left\Vert \sum\limits_{t} I_{(S_{t} = k)} {F_{t}C_k^0}^{\prime} \widetilde{C}_k \widetilde{C}_k^{\prime} E_{t}^{\prime}  \right\Vert_{F}^{2}
    \notag\\
    &= O_{p}\left( \frac{1}{qn}\right) \left[1 + \left\| \widetilde{C}_{k} - \widetilde{C}_{k}^{0} \widetilde{H}_{2k} \right\|_F^2 \right] = o_p(1). \notag
\end{align}
Similarity, \(\left| \right| \frac{1}{\sqrt {p}}{A}_{2}{\left| \right| }_{F}^{2} \asymp  \left| \right| \frac{1}{\sqrt {p}}{A}_{1}{\left| \right| }_{F}^{2}\). By Lemma \ref{lemaC2}(3),
\begin{align}
    \left| \right| \frac{1}{\sqrt {p}}{A}_{3}{\left| \right| }_{F}^{2} &= \frac{1}{{p}^{3}{q}^{4} n^{2}}\left\Vert \sum\limits_{t} I_{(S_{t} = k)} E_{t} \widetilde{C}_k \widetilde{C}_k^{\prime} {E_{t}}^{\prime}\widetilde{ R}_k  \right\Vert_{F}^{2}
    \notag\\
    &=O_p(\frac{1}{pqn}+\frac{1}{p^2q^2})\left[1 + \left\| \widetilde{C}_k - C_{k}^{0} \widetilde{H}_{2k} \right\|_F^2 \right] \notag\\
    &+O_p(\frac{1}{pn}+\frac{1}{p^2q})\left\| \widetilde{R}_k - \widetilde{R}_{k}^{0} \widetilde{H}_{1k} \right\|_F^2=o_p(1).
\end{align}
By Lemma \ref{lemaC2}(4) and Theorem \ref{Th2},
\begin{align}
    \left| \right| \frac{1}{\sqrt{p}}D{\left| \right| }_{F}^{2} &= \frac{1}{{p}^{3}{q}^{4}{n}^{2}}\left\Vert \sum_{t} (\widehat{w}_{t|n}^{(k)} -I_{(s_{t} = k)}) Y_t \widetilde{C}_k \widetilde{C}_k^{\prime} {Y_{t}}^{\prime} \widetilde{R}_k  \right\Vert_{F}^{2} \notag\\
    &=o_p(\frac{1}{p^{2\eta+1}q^{2\eta+1}})=o_p(1).
\end{align}
By the same calculation, we can show that \(\frac{1}{p}\parallel B{\parallel }_{F}^{2}\) can be bounded by \({\begin{Vmatrix}\frac{1}{\sqrt{p} }{A}_{1}\end{Vmatrix}}_{F}^{2},\cdots {\begin{Vmatrix}\frac{1}{\sqrt{p} }D\end{Vmatrix}}_{F}^{2}\).
Then we have
 \begin{align}\label{lemac3_2}
     \bar{R}_{k}\bar{\Delta} _ k &= \frac{1}{p{q}^{2}n}\mathop{\sum }\limits_{{t = 1}}^{n} {I}_{(s_t= k)} {R}_{k}^{0}{F}_{t}{{C}_{k}^{0}}^{\prime}\widetilde {C}_{k}\widetilde {C}_{k}^{\prime}{C}_k^0{F}_{t}^{\prime}{R}_{k}^{0\prime}\bar{R}_k
     +o_p(1).
 \end{align}
Next, we will show that \(\frac{{\widetilde{C}_{k}^{0\prime}}\widetilde{C}_k}{q} \cdot  \frac{{\widetilde{C}_k^{\prime} }\widetilde{C}_k^{0}}{q} = \frac{1}{a}{I}_{k_2}+ o_p\left( 1\right)\) , where \(a > 0\) and is bound away from 0. From Theorem \ref{Th1}, we have the following.
\begin{align}\label{lemac3_2_1}
   \frac{{\widetilde{R}_{k}^{0\prime}}\widetilde{R}_k} {p} \cdot  \frac{{\widetilde{R}_k }^{\prime}\widetilde{R}_k^{0}}{p} \otimes  \frac{{\widetilde{C}_{k}^{0\prime}}\widetilde{C_k}} {q}   \frac{{\widetilde{C_k} }^{\prime}\widetilde{C}_k^{0}}{q}= {I}_{r} + {o}_{p}(1),
\end{align}
which means \({a_s}^{\prime }{a}_{s} \cdot  \frac{{\widetilde{C}_{k}^{0\prime}}\widetilde{C}_k} {q} \cdot  \frac{{\widetilde{C}_k }^{\prime}\widetilde{C}_k^{0}}{q}= {I}_{{k}_{2}} + {o}_p{{\left( 1\right) }}\) for all \(s \in  \left\lbrack  {k}_{1}\right\rbrack\) , where \({a}_{s} = ({\frac{{\widetilde{R}_{k,\cdot s}^{0\prime}}\widetilde{R}_k} {p}})^{\prime}\) . Since \({a_1}^{\prime }{a}_{1} \cdot  \frac{{\widetilde{C}_{k}^{0\prime}}\widetilde{C}_k} {q} \cdot  \frac{{\widetilde{C}_k }^{\prime}\widetilde{C}_k^{0}}{q}= {I}_{{k}_{2}} + {o}_p{{\left( 1\right) }}\) and \({a_2}^{\prime }{a}_{2} \cdot  \frac{{\widetilde{C}_{k}^{0\prime}}\widetilde{C}_k} {q} \cdot  \frac{{\widetilde{C}_k }^{\prime}\widetilde{C}_k^{0}}{q}= {I}_{{k}_{2}} + {o}_{p}(1)\) , then \({{a}_{1}}^{\prime}{a}_{1} - {{a}_{2}}^{\prime}{a}_{2}= {o_p}\left( 1\right),\)  which means \({a}_{s}^{\prime }{a}_{s} = a + o_{p}{\left( 1\right) }\) , for all \(s \in  \left\lbrack  {k}_{1}\right\rbrack\) , and $a$ is bounded away from 0. So equation (\ref{lemac3_2}) can be further simplified as
\begin{align}\label{lemac3_3}
    \bar{R}_k \bar{\Delta }_k = \frac{1}{{a}}\frac{1}{{pn}}\mathop{\sum }\limits_{t=1}^{n}{I}_{(s_{t = k}) }{R}_{k}^{0}{F}_{t}(D_k^{(2)}){F}_{t}^{\prime}{{R}_{k}^{0}}^{\prime}\bar{R}_k +o_p(1).
\end{align}

Let \(V_{k}\) be a \({k}_{1} \times  {k}_{2}\) diagonal matrix consisting of eigenvalues of \(\frac{1}{n\pi_{k}^{0}} \sum\limits_{t} I_{(s_t=k)}(D_k^{(1)})^{1/2}F_t D_k^{(2)}{F_t}^{\prime}(D_k^{(1)})^{1/2}\) in descending order, and \(\tau _k\) be the corresponding vectors. Let \(\bar {R}_k^0 = \frac{1}{\sqrt{p}} {R}_{k}^{0 }(D_k^{(1)})^{-1/2}{\tau}_{k}\) , then \( \bar{R}_k^{0\prime}\bar{R}_k^0 = {I}_{k_1}\). It follows that
\begin{align}\label{lemac3_4}
    \bar{R}_k \bar{\Delta} _k & = \frac{1}{{\alpha}} \cdot \frac{1}{{pn}}\mathop{\sum }\limits_{t=1}^{n}{I}_{(s_{t = k}) }{R}_{k}^{0}{F}_{t} D_k^{(2)}{F}_{t}^{\prime}{{R}_{k}^{0}}^{\prime}\bar{R}_k +o_p(1) \notag\\
    &= \frac{1}{\alpha } \cdot  {\pi }_{k}^{0}\bar{R}_k^0{V}_{k}\bar{R}_k^{0\prime} \bar{R}_k+ {o_p}\left( 1\right).
\end{align}
The left-hand side of equation (\ref{lemac3_4}) equals \({P}_{{\bar{R}_k^0}}\bar{R}_k \bar{\Delta}_k + {M}_{{\bar{R}_k^0}}\bar{R}_k \bar{\Delta}_k  = {\bar{R}_k^0}{\bar{R}_k^{0\prime}}\bar{R}_k \bar{\Delta _k} + {M}_{{\bar{R}_k^0}}\bar{R}_k \bar{\Delta} _k \) , thus we have
\[{\bar{R}}_{k}^{0}\left( \bar{R}_k^{0\prime} {\bar{R}_k\bar{\Delta} _k} - \frac{1}{a}{\pi }_{k}^{0}{V}_{k}{\bar{R}_k^{0\prime}}\bar{R}_k\right)  + {M}_{\bar{R}_k^0}\bar{R}_k\bar{\Delta}_k = {o_p\left( 1\right) }.\]
Since the two terms on the left-hand side are orthogonal to each other, thus both \({\begin{Vmatrix}{\bar{R}_k}^{0}\left( \bar{R}_k^{0\prime} {\bar{R}_k\bar{\Delta} _k} - \frac{1}{a}{\pi }_{k}^{0}{V}_{k}{\bar{R}_k^{0\prime}}\bar{R}_k\right) \end{Vmatrix}}_{F}\) and \({\begin{Vmatrix}{M}_{\bar{R}_k^0}\bar{{R}_{k}}\bar{\Delta} _k\end{Vmatrix}}_{F}\) are \(o_p\left( 1\right)\) .
Let \(A =  \bar{R}_k^{0\prime} {\bar{R}_k\bar{\Delta} _k} - \frac{1}{a}{\pi }_{k}^{0}{V}_{k}{\bar{R}_k^{0\prime}}\bar{R}_k\). Since
$$
 o_p \left( 1\right)  = {\begin{Vmatrix}{\bar{R}}_{k}^{0}A\end{Vmatrix}}_{F} = \sqrt{{tr}\left( {{\bar{R}}_{k}^{0}A{A}^{\prime }{\bar{R}}_{k}^{0}}^{\prime}\right) } = \sqrt{{tr}\left( A{A}^{\prime }{{\bar{R_k^0}}^{\prime}{\bar{R}}_{k}^{0}}\right) } = \sqrt{{tr}\left( {A{A}^{\prime }}\right) } = \parallel A{\parallel }_{F},
$$
 then \(\parallel A{\parallel }_{F} = {\begin{Vmatrix}\left( \bar{R}_k^{0\prime} {\bar{R}_k\bar{\Delta} _k} - \frac{1}{a}{\pi }_{k}^{0}{V}_{k}{\bar{R}_k^{0\prime}}\bar{R}_k\right)\end{Vmatrix}}_{F} = {o}_{p\left( 1\right) }\), which means
 \begin{align}\label{lemac3_5}
 &   \bar{R}_k^{0\prime} {\bar{R}_k\bar{\Delta}_k} = \frac{1}{a}{\pi }_{k}^{0}{V}_{k}{\bar{R}_k^{0\prime}}\bar{R}_k + o_p(1) \notag\\
\Rightarrow &  \bar{R}_k^{\prime}\bar{R}_k^0\bar{R}_k^{0\prime} {\bar{R}_k\bar{\Delta }_k} = \frac{1}{a}{\pi }_{k}^{0}\bar{R}_k^{\prime}\bar{R}_k^0V_k\bar{R}_k^{0\prime} \bar{R}_k + {o_p}\left( 1\right).
 \end{align}
Combining equation (\ref{lemac3_2_1}) and \(\frac{{\widetilde{C}_{k}^{0\prime}}\widetilde{C_k}}{q} \cdot  \frac{{\widetilde{C}_k }^{\prime}\widetilde{C}_k^{0}}{q} = \frac{1}{a}{I}_{k_2}+ o_p\left( 1\right)\)
we have \(\frac{{{\widetilde{R}_k^{0\prime}}}{{\bar{R}_k}}}{p}\frac{{{\bar{R}_k}^{\prime}}{{\widetilde{R}_k^0}}}{p}  = a{I}_{k_1} + {o}_{p}\left( 1\right)\),
which implies that \(\bar{R}_k^{0\prime}\bar{R}_k\bar{R_k}^\prime\bar{R}_k^0=aI_{k_1}+o_p(1)\). Furthermore, since
  \begin{align}
    \begin{Vmatrix}aI_{k_1}- \bar{R}_k^\prime\bar{R}_k^{0}\bar{R}_k^{0\prime}\bar{R}_k
    \end{Vmatrix}_F &\leq \sqrt{k_1}\|aI_{k_1}- \bar{R}_k^\prime\bar{R}_k^{0}\bar{R}_k^{0\prime}\bar{R}_k\|\notag\\
    &\leq \sqrt{k_1}\text{Tr}[aI_{k_1}- \bar{R}_k^\prime\bar{R}_k^{0}\bar{R}_k^{0\prime}\bar{R}_k ] \notag\\
     &=\sqrt{k_1}\text{Tr}[aI_{k_1}- \bar{R}_k^{0\prime}\bar{R}_k\bar{R_k}^\prime\bar{R}_k^0 ] \notag\\
      &=o_p(1),
  \end{align}
then $\bar{R}_k^\prime\bar{R}_k^{0}\bar{R}_k^{0\prime}\bar{R}_k= aI_{k_1}+o_p(1)$. Backing to equation (\ref{lemac3_5}), we have
\begin{align}
&\bar{\Delta}_k = \frac{1}{a^2} \pi_k^0  \bar{R}_k^\prime\bar{R}_k^{0}V_k\bar{R}_k^{0\prime}\bar{R}_k+o_p(1).
\end{align}
  where $\bar{R}_k^\prime\bar{R}_k^{0} $ is an orthogonal matrix of asymptotic orthogonal columns. Then we finished the proof of Lemma \ref{lemaC3}.
\end{proof}

\begin{proof}

\textbf{Proof of Theorem 3}

Proof: From equation (\ref{lemac3_3_0}) in the proof of Lemma \ref{lemaC3}, we have
\begin{align}\label{Th3_1}
  {\widetilde{R}}_{k} &= \widetilde{R}_{k}^{0} \underbrace{ \frac{1}{pq^2n}\mathop{\sum }\limits_{{t = 1}}^{n}{I}_{\left( {s_t = k}\right) }(D_k^{(1)})^{\frac{1}{2}}F_t{{C}_{k}^{0}}^\prime\widetilde{C}_k{\widetilde{C}_k}^{\prime}C_k^0F_t^{\prime}{R_t^0}^\prime \widetilde{R}_k{\bar{\Delta}_k^{-1}}}_{:= \widetilde{H}_{1k}}\notag\\
  &+ \left( {{A}_{1} + {A}_{2} + {A}_{3} + D(\widehat{D}_k^{(1)})^{\frac{1}{2}}+B(\widehat{D}_k^{(1)})^{\frac{1}{2}}}\right)   \bar{\Delta }_{k}^{-1},
\end{align}
which means
\begin{align}\label{Th3_2}
   &  \widetilde{R}_{k} - \widetilde{R}_{k}^{ 0 } \widetilde{H}_{1k} = \left( {{A}_{1} + {A}_{2} + {A}_{3} + D(\widehat{D}_k^{(1)})^{\frac{1}{2}} + B(\widehat{D}_k^{(1)})^{\frac{1}{2}} }\right) \bar{\Delta }_{k}^{-1} \notag\\
   \Rightarrow &
    \frac{1}{p}{\begin{Vmatrix} \widetilde{R}_{k} - \widetilde{R}_{k}^{ 0 } \widetilde{H}_{1_k}\end{Vmatrix}}_{F}^{2} \leq  \left\lbrack  {\frac{1}{p}{\begin{Vmatrix}{A}_{1}\end{Vmatrix}}_{F}^{2} + \frac{1}{p}{\begin{Vmatrix}{A}_{2}\end{Vmatrix}}_{F}^{2} + \frac{1}{p}{\begin{Vmatrix}{A}_{3}\end{Vmatrix}}_{F}^{2} + \frac{1}{p}{\begin{Vmatrix}B\end{Vmatrix}}_{F}^{2} + \frac{1}{p}{\begin{Vmatrix}D\end{Vmatrix}}_{F}^{2}}\right\rbrack   \cdot  O_p\left( 1\right).
\end{align}
By Lemma \ref{lemaC2}(1)-(3) and Theorem \ref{Th2}, we have
\begin{align}
    \frac{1}{p}\parallel {A}_{1}{\parallel }_{F}^{2}  & =O_p\left( \frac{1}{q n} \right) \left[1+  \left\| \widetilde{C}_k - \widetilde{C}_k^0 \widetilde{H}_{2k} \right\|_F^2\right], \notag\\
        \frac{1}{p}\parallel {A}_{2}{\parallel }_{F}^{2}  & =O_p\left( \frac{1}{q n} \right) \left[1+  \left\| \widetilde{C}_k - \widetilde{C}_k^0 \widetilde{H}_{2k} \right\|_F^2\right] ,\notag\\
    \frac{1}{p}\parallel {A}_{3}{\parallel }_{F}^{2}  & =O_p\left( \frac{1}{pq n}+\frac{1}{p^2q^2} \right) \left[1+  \left\| \widetilde{C}_k - \widetilde{C}_k^0 \widetilde{H}_{2k} \right\|_F^2\right]  + o_p\left(\frac{\left\| \widetilde{R}_k - \widetilde{R}_k^0 \widetilde{H}_{1k} \right\| }{p}\right),\notag\\
 \frac{1}{p}\parallel D{\parallel }_{F}^{2} &=o_p(\frac{1}{p^{2\eta+1}q^{2\eta+1}}). \notag
\end{align}
 By the same calculation, we can show that \(\frac{1}{p}\parallel B{\parallel }_{F}^{2}\) can be bounded by \(\frac{1}{p}\parallel {A}_{1}{\parallel }_{F}^{2} + \frac{1}{p}\parallel {A}_{2}{\parallel }_{F}^{2} + \frac{1}{p}\parallel {A}_{3}{\parallel }_{F}^{2} + \frac{1}{p}\parallel D{\parallel }_{F}^{2}\) . Then we have
\begin{align}\label{Th3_3}
\frac{1}{p}{\begin{Vmatrix}\widetilde{R}_{k} - \widetilde{R}_{k}^{0}\widetilde{H}_{1k}\end{Vmatrix}}_{F}^{2} = O_{p}\left( {\frac{1}{qn} + \frac{1}{{p}^{2}{q}^{2}}}\right)\left [1 +  {\begin{Vmatrix}\widetilde{C}_{k} - \widetilde{C}_{k}^{0}\widetilde{H}_{2k}\end{Vmatrix}}_{F}^{2}\right].
\end{align}
By the same way, from the estimation equation of \(\widehat{C}_k\) , we have
\begin{align}\label{Th3_4}
\frac{1}{q}{\begin{Vmatrix}\widetilde{C}_{k} - \widetilde{C}_{k}^{0}\widetilde{H}_{2k}\end{Vmatrix}}_{F}^{2} = O_{p}\left( {\frac{1}{pn} + \frac{1}{{p}^{2}{q}^{2}}}\right) \left[1 +  {\begin{Vmatrix}\widetilde{R}_{k} - \widetilde{R}_{k}^{0}\widetilde{H}_{1k}\end{Vmatrix}}_{F}^{2}\right].
\end{align}
Combined the results of (\ref{Th3_3}) and (\ref{Th3_4}), we finally have
\begin{align}
\frac{1}{p}{\begin{Vmatrix}\widetilde{R}_{k} - \widetilde{R}_{k}^{0}\widetilde{H}_{1k}\end{Vmatrix}}_{F}^{2}&= {O}_{p}\left( {\frac{1}{{q}{n}} + \frac{1}{{p}^{2}{q}^{2}} + \frac{1}{p{n}^{2}}}\right)\triangleq O_p(w_1),\notag\\
\frac{1}{q}{\begin{Vmatrix}\widetilde{C}_{k} - \widetilde{C}_{k}^{0}\widetilde{H}_{2k}\end{Vmatrix}}_{F}^{2}& = {O}_{p}\left( {\frac{1}{pn} + \frac{1}{{p}^{2}{q}^{2}} + \frac{1}{q{n}^{2}}}\right)\triangleq O_p(w_2) . \notag
\end{align}
It remains to show that \(\widetilde{H}_{1k}^{\prime }\widetilde{H}_{1k} = {I}_{k_1} + {o_p }\left( 1\right)\) . Since
\begin{align}\label{Th3_5}
{\begin{Vmatrix}\frac{1}{p}{{\widetilde{R}}_{k}}^{\prime }(\widetilde{R}_k- \widetilde{R}_{k}^{0} \widetilde{H}_{1k}) \end{Vmatrix}}_{F}^{2} & \leq  \frac{\parallel \widetilde{R}_{k}{\parallel }_{F}^{2}}{p} \cdot  \frac{1}{p}{\begin{Vmatrix}\widetilde{R}_{k} - \widetilde{R}_{k}^{0}\widetilde{H}_{1k}\end{Vmatrix}}_{F}^{2}=  {O}_{p}\left( {w}_{1}\right),
\notag\\
{\begin{Vmatrix}\frac{1}{p}{{\widetilde{R}}_{k}}^{\prime }(\widetilde{R}_k- \widetilde{R}_{k}^{0} \widetilde{H}_{1k}) \end{Vmatrix}}_{F}^{2} &\leq  \frac{\parallel \widetilde{R}_{k}{\parallel }_{F}^{2}}{p} \cdot  \frac{1}{p}{\begin{Vmatrix}\widetilde{R}_{k} - \widetilde{R}_{k}^{0}\widetilde{H}_{1k}\end{Vmatrix}}_{F}^{2}   \triangleq  {O}_{p}\left( {w}_{1}\right),
\end{align}
then from the first inequality, it is straightforward to demonstrate that
\begin{align}\label{Th3_6}
&\frac{1}{p}\widetilde{R}_k^{\prime }\left( \widetilde{R}_k - \widetilde{R}_k^{0} \widetilde{H}_{1k}\right)  = {I}_{{k}_{1}} - \frac{\widetilde{R}_k^{\prime }  \widetilde{R}_k^{0}}{P} \widetilde{H}_{1k} = {O}_{p}\left( {w}_{1}^{\frac{1}{2}}\right),
\notag\\
\Rightarrow  & {I}_{{k}_{1}} = \frac{\widetilde{R}_k^{\prime }  \widetilde{R}_k^0} {p} \widetilde{H}_{1k} + {O}_{p}\left( {w}_{1}^{\frac{1}{2}}\right) .
\end{align}
From the second inequality of equation (\ref{Th3_5}), we have
\begin{align}\label{Th3_7}
&\frac{1}{p}{\widetilde{R}_k^{0\prime}}\left( \widetilde{R}_k - \widetilde{R}_k^{0} \widetilde{H}_{1k}\right)  = \frac{\widetilde{R}_k^{0\prime }  \widetilde{R}_k}{p} - \widetilde{H}_{1k} = {O}_{p}\left( {w}_{1}^{\frac{1}{2}}\right),
\notag \\
\Rightarrow&\frac{\widetilde{R}_k^{0\prime }  \widetilde{R}_k}{p} =\widetilde{H}_{1k} +{O}_{p}\left( {w}_{1}^{\frac{1}{2}}\right).
\end{align}
Taking (\ref{Th3_7}) into (\ref{Th3_6}), we have \({I}_{{k}_{1}} = \frac{\widetilde{R}_k^{\prime }  \widetilde{R}_k^{0}}{p}\frac{\widetilde{
R}_k^{0\prime }  {R_k}}{p}  +{O}_{p}\left( {w}_{1}^{\frac{1}{2}}\right) \) and \(\widetilde{H}_{1k}^\prime \widetilde{H}_{1k} = {I}_{k_1} + {O}_{p}\left( {w}_{1}^{\frac{1}{2}}\right)\) . Similarly, we have \({I}_{k_2} = \frac{\widetilde{C}_k^{\prime }  \widetilde{C}_k^{0}}{q}\frac{\widetilde{C}_k^{0\prime }  \widetilde{C}_k^{\prime}}{q}  +{O}_{p}\left( {w}_{1}^{\frac{1}{2}}\right) \) and \(\widetilde{H}_{2k}^\prime \widetilde{H}_{2k} = {I}_{k_2} + {O}_{p}\left( {w}_{1}^{\frac{1}{2}}\right)\) .
\end{proof}

\setcounter{equation}{0}
 \setcounter{subsection}{0}
 \renewcommand{\theequation}{E.\arabic{equation}}
 \renewcommand{\thesubsection}{E.\arabic{subsection}}
\section*{E ~~ Details for Theorem 4}

\begin{proof}
     Let \({A}_{1i},{A}_{2i},{A}_{3i},{D}_{i},{B}_{i}\) denote the $i$-th row of \({A}_{1},{A}_{2},{A}_{3}\) , \(D\) and \(B\) respectively.
From (\ref{Th3_1}), we have
\begin{align}\label{Th4_1}
    {\widetilde{\gamma}}_{k,i}^{\prime }. - \widetilde{\gamma}_{k,i}^{\circ\prime }.{\widetilde{H}}_{1k} = \left( {{A}_{1i} + {A}_{2i} + {A}_{3i} + {D}_{i}(\widehat{D}_k^{(1)})^{\frac{1}{2}} + {B}_{i}}(\widehat{D}_k^{(1)})^{\frac{1}{2}}\right) {\bar{\Delta }}_{k}^{ -1 }.
\end{align}
Next, we will consider the order of each term on the right hand side of (\ref{Th4_1}).

(1) Consider \({A}_{1i}\) \({\bar{\Delta }}_{k}^{ -1 }\) .
By Lemma \ref{lemaC3}, \({\left| \right| {\bar{\Delta }}_{k}^{ -1 }\left| \right|}_{F}^{2} =  {O_p}(1)\).
We only need to consider \(\left| \right| {A}_{1i}{\left| \right| }_{F}^{2}\) , where
\begin{align}\label{Th4_2}
{\begin{Vmatrix}{A}_{1i}\end{Vmatrix}}_{F}^{2} &= \frac{1}{{p}^{2}{q}^{4} {n}^{2}}{\begin{Vmatrix}\mathop{\sum }\limits_{t}{I}_{({s}_{t}=k)}{\gamma}^{\prime\circ }_{k,i}  {F}_{t}{C}_{k}^{\circ\prime}  {\widetilde{{C}}_{k}}{\widetilde{{C}}_{k}^{\prime}}{{E}_{t}^{\prime}}{\widetilde{{R}}_{k}}\end{Vmatrix}}_{F}^{2}
\notag\\
&\lesssim  \frac{\parallel {C}_{k}^{\circ\prime }{\widetilde{C}_{k}}{\parallel }_{F}^{2}}{{q}^{2}} \cdot  \frac{1}{{p}^{2}{q}^{2}{n}^{2}}{\begin{Vmatrix}\mathop{\sum }\limits_{t}I\left( {{s}_{t} = k}\right) {F}_{t}{\widetilde{C}_{k}}^{\prime}{E}_{t}^{\prime}{\widetilde{R}_{k}
} \end{Vmatrix}}_{F}^{2} {\begin{Vmatrix}{\gamma}_{k,i}^{\circ }.\end{Vmatrix}}_{F}^{2}.
\end{align}
Since
\begin{align}
&\frac{1}{{p}^{2}{q}^{2} {n}^{2}} {\begin{Vmatrix}\mathop{\sum }\limits_{t}{I}_{({s}_{t}=k)}{F}_{t}{\widetilde{C}_{k}^{\prime}}{{E}_{t}^{\prime}}{\widetilde{R}_{k}}\end{Vmatrix}}_{F}^{2} \notag\\
\lesssim&  \frac{1}{{p}^{2}{q}^{2}{n}^{2}}{\begin{Vmatrix}\mathop{\sum }\limits_{t}{I}_{({s}_{t}=k)}{F}_{t}\widetilde{H}^{\prime}_{2k} \widetilde{C}_{k}^{\circ\prime}{E}_{t}^{\prime}\widetilde{R}_{k}^{ \circ}\widetilde{H}_{1k}\end{Vmatrix}}_{F}^{2}
\notag\\
+&\frac{1}{{p}^{2}{q}^{2}{n}^{2}}{\begin{Vmatrix}\mathop{\sum }\limits_{t}{I}_{({s}_{t}=k)}{F}_{t}(\widetilde{C}_{k}-
\widetilde{C}_{k}^{\circ}\widetilde{H}_{2k})^{\prime}{E}_{t}^{\prime}\widetilde{R}_{k}^{ \circ}\widetilde{H}_{1k}\end{Vmatrix}}_{F}^{2}
\notag\\
+&\frac{1}{{p}^{2}{q}^{2}{n}^{2}}{\begin{Vmatrix}\mathop{\sum }\limits_{t}{I}_{({s}_{t}=k)}{F}_{t}\widetilde{H}^{\prime}_{2k} \widetilde{C}_{k}^{ \circ\prime}{E}_{t}^{\prime}\left( {\widetilde{R}_{k} - \widetilde{R}_{k}^{\circ}\widetilde{H}_{1k}}\right)\end{Vmatrix}}_{F}^{2}
\notag\\
+&\frac{1}{{p}^{2}{q}^{2}{n}^{2}}{\begin{Vmatrix}\mathop{\sum }\limits_{t}{I}_{({s}_{t}=k)}{F}_{t}(\widetilde{C}_{k}-
\widetilde{C}_{k}^{\circ}\widetilde{H}_{2k})^{\prime}{E}_{t}^{\prime}\left( {\widetilde{R}_{k} - \widetilde{R}_{k}^{\circ}\widetilde{H}_{1k}}\right)\end{Vmatrix}}_{F}^{2}
\notag\\
\triangleq & {I}_{11} + {I}_{12} + {I}_{13} + {I}_{14}.
\end{align}
We will determine the order of \({I}_{11},{I}_{12},{I}_{13}\) and \({I}_{14}\) , respectively. Under Assumption E(1).
$$
{I}_{11} \asymp \frac{1}{npq}
\begin{Vmatrix}\dfrac{1}{\sqrt{n}}\mathop{\sum }\limits_{t}{I}_{({s}_{t}=k)}{F}_{t}\dfrac{\widetilde{C}_{k}^{\circ\prime}}{\sqrt{q}}{E}_{t}^{\prime}\dfrac{\widetilde {R}_{k}^{ \circ}}{\sqrt{p}}\end{Vmatrix}_{F}^{2}.
{O}_{p}\left(1\right)={O}_{p}\left(\frac{1}{pqn}\right).
$$
By Assumption E(1) and Theorem \ref{Th3},
\begin{align}
{I}_{12} &\lesssim \frac{1}{q}{\begin{Vmatrix}\widetilde{C}_{k}-
\widetilde{C}_{k}^{\circ}\widetilde{H}_{2k}\end{Vmatrix}}_{F}^{2}
\frac{1}{{p}^{2}{q}{n}^{2}} {\begin{Vmatrix}\mathop{\sum }\limits_{t}{F}_{t}^{\prime}\widetilde{R}_{k}^{\circ\prime}{E}_{t}\end{Vmatrix}}_{F}^{2}
\notag\\
&= {O}_{p}\left( {w}_{2}\right)  \times  {O}_{p}\left( {\frac{1}{np}}\right)
= {O}_{p}\left( \frac{1}{{n}^{2}{p}^{2}}\right)  + {o}_{p}\left( \frac{1}{{n}{p}{q}}\right).
\end{align}
Similarly, we have
\(
{I}_{13} = {O}_{p}\left( {w}_{1}\right)  \times  {O}_{p}\left( \frac{1}{nq}\right)  = {O}_{p}\left( \frac{1}{{n}^{2}{q}^{2}}\right)  + {o}_{p}\left( \frac{1}{npq}\right)
\)
, and
\(
{I}_{14} = {O}_{p}\left( {{w}_{1} \times  {w}_{2}}\right)  \times  {O}_{p}\left( \frac{1}{n}\right)  = {O}_{p}\left( \frac{1}{{n}^{4}{q}^{2}} + \frac{1}{{n}^{4}{p}^{2}}\right)  + {o}_{p}\left( \frac{1}{npq}\right)
\) .
Combined the results of
\({I}_{11}-{I}_{14}\) and (\ref{Th4_2}), we have \(\left| \right| {A}_{1i}{\left| \right| }_{F}^{2} = {O}_{p}\left( \frac{1}{{n}^{2}{p}^{2}}\right) + {o}_{p}\left( \frac{1}{np}\right)\) .
Note that when we determine the order, we assume \({k}_{1} = {k}_{2} = 1\) . Actually the order results can be easily generalized to the general case just as the proof process of Lemma B4 in \cite{2024Quasi}
.

(2) Consider \({A}_{2i} {\bar{\Delta }}_{k}^{-1}\) .
\begin{align}
A_{2i} \bar{\Delta}_k^{-1}
&= \frac{1}{p q^2 n} \sum_t I_{(s_t=k)} e_{t,i.}' \widetilde{C}_k \widetilde{C}_k' C_k^\circ F_t' R_k^{\circ\prime} \widetilde{R}_k \bar{\Delta}_k^{-1} \notag \\
&= \frac{1}{q n} \sum_t I_{(s_t=k)} e_{t,i.}' \left( \widetilde{C}_k - \widetilde{C}_k^0 \widetilde{H}_{2k} + \widetilde{C}_k^0 \widetilde{H}_{2k} \right)
\frac{\widetilde{C}_k' \widetilde{C}_k^0}{q} \widetilde{F}_t' \frac{\widetilde{R}_k^{\circ\prime} \widetilde{R}_k}{p} \bar{\Delta}_k^{-1} \notag \\
&= \frac{1}{q n} \sum_{t=1}^n I_{(s_t=k)} e_{t,i.}' \left( \widetilde{C}_k - \widetilde{C}_k^0 \widetilde{H}_{2k} \right)
\frac{\widetilde{C}_k' C_k^\circ}{q} \widetilde{F}_t' \frac{\widetilde{R}_k^{\circ\prime} \widetilde{R}_k}{p} \bar{\Delta}_k^{-1} \notag \\
&\quad + \frac{1}{q n} \sum_{t=1}^n I_{(s_t=k)} e_{t,i.}' C_k^\circ \widetilde{H}_{2k}
\frac{\widetilde{C}_k' C_k^\circ}{q} \widetilde{F}_t' \frac{\widetilde{R}_k^{\circ\prime} \widetilde{R}_k}{p} \bar{\Delta}_k^{-1} \notag \\
&\triangleq I_{21} + I_{22},
\end{align}
where $\widetilde{F}_t=(D_k^{(1)})^{1/2}F_t(D_k^{(2)})^{1/2}.$ By Assumption E(1) and Theorem \ref{Th3}, we have
\begin{align}
    {\begin{Vmatrix}{I}_{{21}}\end{Vmatrix}}_{F}^{2} & \asymp  {\begin{Vmatrix}\dfrac{1}{qn}\sum\limits_{t}I_{\left({s}_{t}=k\right)}{e}_{t,i.}^{\prime}\left(\widetilde{C}_{k}-\widetilde{C}_{k}^{\circ}\widetilde{H}_{2k}\right){F}_{t}^{\prime}\end{Vmatrix}}_{F}^{2}
\notag\\
&\lesssim \frac{1}{q} {\begin{Vmatrix}\widetilde{C}_{k}-\widetilde{C}_{k}^{\circ}\widetilde{H}_{2k}\end{Vmatrix}}\frac{1}{qn^{2}}{\left\|\sum\limits_{t}I_{\left({s}_{t}=k\right)}{e}_{t,i.}F_{t}\right\|}_{F}^{2}\notag\\
&= {O}_{p}\left( {w}_{2}\right)   \frac{1}{{q}{n}^{2}}\mathop{\sum }\limits_{{j = 1}}^{q}{\left\|\sum_{t}{I}_{\left({{s}_{t}= k}\right)}{e}_{t,ij}{F}_{t}\right\|}
\notag\\
&= {O}_{p}\left( \frac{{w}_{2}}{n}\right)  = {O}_{p}\left( \frac{1}{{n}^{2}p}\right)  + {o}_{p}\left( \frac{1}{qn}\right).
\end{align}
Similar to equation (\ref{Th3_2}) in the proof of Theorem \ref{Th3},
We can get
\[
\frac{\widetilde{C}_{k}^{\circ\prime}\widetilde{C}_{k}}{q}{\widetilde{H}}_{2k}^{\prime} = \frac{\widetilde{C}_{k}^{\circ\prime}\widetilde{C}_{k}}{q}\frac{\widetilde{C}_{k}^{\prime}\widetilde{C}_{k}^{\circ}}{q} + {o}_{p}\left( 1\right)  = {I}_{{k}_{2}} + {o}_{p}\left( 1\right) .
\]
Then
$${I}_{22} = \frac{1}{{qn}}\mathop{\sum }\limits_{{t = 1}}^{n}{I}_{\left({s}_{t} = k\right) }{e}_{t,i\cdot}^{\prime } \widetilde{C}_{k}^{ \circ  }
\widetilde{F}_t^{\prime}\frac{\widetilde{R}_{k}^{\circ\prime}\widetilde{R}_{k}}{p}  {\bar{\Delta}}_{k}^{-1} + {o}_{p}\left( \frac{1}{\sqrt{qn}}\right),$$
where \({\begin{Vmatrix}\frac{1}{qn}\sum\limits_{t}{I}_{\left ({s}_{t}= k\right)}{e}_{t,i\cdot}^{\prime }\widetilde{C}_{k}^{ \circ  }\widetilde{F}_{t}\end{Vmatrix}}_{F}= {O}_{p}\left( \frac{1}{\sqrt{qn}}\right)\).

Combing the results of \({I}_{21}\) and \({I}_{22}\) , we have
$${A}_{2i}{{\bar{\Delta }}_{k}}^{-1}
= \frac{1}{qn}\mathop{\sum }\limits_{t}
{I}_{\left( {s}_{t}=k\right)}
{e}_{t,i.}^{\prime }\widetilde{C}^{\circ}_{k}\widetilde{F}_{t}^{\prime }\frac{\widetilde{R}_{k}^{\circ\prime}\widetilde{R}_{k}}{p}  {{\bar{\Delta}}_{k}}^{-1} + {O}_{p}\left( \frac{1}{\sqrt{{n^{2}p}}}\right) + {o}_{p}\left( \frac{1}{\sqrt{qn}}\right). $$

(3) Consider \({A}_{3i}\bar{\Delta }_{k}^{-1}\).

From results (3) of Lemma \ref{lemaC2}, and Theorem \ref{Th3}, it is easy to show that
$$\left| \right| {A}_{3i}\bar{\Delta }_{k}^{-1}{\left| \right| }^{2}_{F} = {O}_{p}\left( {\frac{1}{{p}^{2}{q}^{2}} + \frac{1}{{n}^{2}{p}^{2}} + \frac{1}{{n}^{3}p}}\right)  + {o}_{p}\left( \frac{1}{{qn}}\right).$$

(4)Consider the orders of the last two terms.

By Lemma \ref{lemaC1} (4) and Theorem \ref{Th2}, \(\left| \right| {D}_{i}{\bar{\Delta }}_{k}^{-1}{\left| \right| }_{F}^{2} \lesssim  \left| \right| D{\left| \right| }_{F}^{2}\).  \({O}_{p}\left( 1\right)  = {o}_{p}\left( \frac{1}{{p}^{2\eta}{q}^{2\eta + 1}}\right).\)
By some simple calculation,  \(\left|\right|{B}_{i}\bar{\Delta }_{k}^{-1}\left|\right|_{F}^{2}\)can be bounded by  \(\left|\right|{A}_{1i}\bar{\Delta }_{k}^{-1}\left|\right|_{F}^{2},\left|\right|{A}_{3i}\bar{\Delta }_{k}^{-1}\left|\right|_{F}^{2}\) and \(\left|\right|{D}_{i}\bar{\Delta }_{k}^{-1}\left|\right|.\)

Combing the above results (1) to (4), equation (\ref{Th4_1}) equals
\begin{align}\label{Th4_3}
  \widetilde{\gamma}_{k,i\cdot} - \widetilde{H}_{1k}^{\prime} \widetilde{\gamma}_{k,i\cdot}^{ \circ  }&
  = {\left( {\bar{\Delta }}_{k}\right) }^{-1}  \frac{{\widetilde{R}}_{k}^{\prime}\widetilde{R}_{k}^{\circ}}{p}  \frac{1}{qn}\mathop{\sum }\limits_{{t = 1}}^{n}{I}_{\left({s}_{t}=k\right)}(D_k^{(1)})^{1/2}{F}_{t}{C}_{k}^{\circ\prime}{e}_{t,i\cdot}\notag\\
  &
+{O}_{p}\left( {\frac{1}{{pq}} + \frac{1}{{\sqrt{{n}^{2}p}}}}\right)  + {o}_{p}\left( \frac{1}{\sqrt{qn}}\right).
\end{align}
From equation (\ref{lemac3_4}) and Assumption A(1), we have
\begin{align}
    \bar{\Delta}_k= \pi_k^0\frac{\widetilde{R}_{k}^{\prime}\widetilde{R}_{k}^{\circ}}{p}\Sigma_{F_k}^{(1)} \frac{\widetilde{R}_{k}^{\circ\prime}\widetilde{R}_{k}}{p}+o_p(1),
\end{align}
which means $(\bar{\Delta}_k)^{-1} \frac{\widetilde{R}_{k}^{\prime}\widetilde{R}_{k}^{\circ}}{p}= \frac{1}{\pi_k^{\circ}}(\frac{\widetilde{R}_{k}^{\circ\prime}\widetilde{R}_{k}}{p})^{-1}(\Sigma_{F_k}^{(1)})^{-1} +o_p(1)= \frac{1}{\pi_k^{\circ}}\widetilde{H}_{1k}^\prime(\Sigma_{F_k}^{(1)})^{-1} +o_p(1)$,
where the last equation holds due to that
\(\frac{{\widetilde{R}}_{k}^{\circ\prime}{{R}}_{k}}{p} = \widetilde{H}_{1k} + o_{p} {\left( 1\right) }\) by equation (\ref{Th3_7}). Then left multiplying both side of equation (\ref{Th4_3}) by $\widetilde{H}_{1k}$, we have
\begin{align}
  \widetilde{H}_{1k}(\widetilde{\gamma}_{k,i\cdot} - \widetilde{H}_{1k}^{\prime} \widetilde{\gamma}_{k,i\cdot}^{ \circ  })&
  =\frac{1}{\pi_k^0} {\left( \Sigma_{F_k}^{(1)}\right) }^{-1}   \frac{1}{qn}\mathop{\sum }\limits_{{t = 1}}^{n}{I}_{\left({s}_{t}=k\right)}(D_k^{(1)})^{1/2}{F}_{t}{C}_{k}^{\circ\prime}{e}_{t,i\cdot}\notag\\
  &
+{O}_{p}\left( {\frac{1}{{pq}} + \frac{1}{{\sqrt{{n}^{2}p}}}}\right)  + {o}_{p}\left( \frac{1}{\sqrt{qn}}\right).
\end{align}
Finally, when \(qn = {o}_{p}\left( {{p}^{2}{q}^{2},{n}^{2}{p}}\right)\) , By Assumption F(1), we have
\[\sqrt{qn}\widetilde{H}_{1k}\left({ \widetilde{\gamma}_{k,i}. - \widetilde{H}_{1k}^{\prime}{\gamma}^{\circ}_{k,i}.}\right) \overset{L}{ \rightarrow  }{N}_{{k}_{1}}\left( {0},\left( {\pi }_{k}^{\circ}\right)^{-2}\left({\Sigma}_{F_k}^{\left( 1\right) }\right)^{-1}{V}_{1i}\left({\Sigma}_{F_k}^{\left( 1\right) }\right)^{-1}\right) .\]
Similarity, when \({pn} = {o}_{p}\left( {{p}^{2}{q}^{2},{n}^{2}q}\right)\) , by Assumption F(2), we have
\[\sqrt{pn}\widetilde{H}_{2k}\left( {{\widetilde{C}}_{k,j}, - {\widetilde{H}}_{2k}^{\prime }{C}^{\circ}_{k,j}.}\right)  \overset{L}{\rightarrow } {N}_{{k}_{2}}\left( {0},\left( {\pi }_{k}^{\circ}\right)^{-2}\left({\Sigma}_{F_k}^{\left( 2\right) }\right)^{-1}{V}_{2j}\left({\Sigma}_{F_k}^{\left( 2\right) }\right)^{-1}\right) . \]

\end{proof}

\setcounter{equation}{0}
 \setcounter{subsection}{0}
 \renewcommand{\theequation}{F.\arabic{equation}}
 \renewcommand{\thesubsection}{F.\arabic{subsection}}
\section*{F ~~ Details for Theorem 5}
\begin{proof}
From the estimation equation of \( \rho  = \text{vec}\left( P\right) \) ,i.e,
\[\widehat{\rho } = \widehat{\delta}_{\rho} \oslash [1_M\otimes(1_M'\otimes{I_M})]\widehat{\delta}_p,\]
where \( {\widehat{\delta}}_{p} = \mathop{\sum }\limits_{{t = 2}}^{n}\text{Vec}\left( {\widehat\Delta }_{t}^{\rho}\right) \) with \( {\widehat{\Delta }}_{t}^{\rho} = ( {\widehat{w}}_{t - 1,t \mid  n}^{\left( i,j\right) })_{i = 1,j = 1}^{M, M} \), we have
\begin{align}\label{Th5_1}
   \widehat{p}_{ij} = \frac{\mathop{\sum }\limits_{{t = 2}}^{n}{\widehat{w}}_{t - 1,t \mid  n}^{\left( i,j\right) }}{\mathop{\sum }\limits_{{t = 2}}^{n}\mathop{\sum }\limits_{{k = 1}}^{M}{\widehat{w}}_{t - 1,t \mid  n}^{\left( i,k\right) }} = \frac{\frac{1}{n-1} \mathop{\sum }\limits_{{t=2}}^{n}{\widehat{w}}_{t - 1,t \mid  n}^{\left( i,j\right) }}{\frac{1}{n-1}\mathop{\sum }\limits_{{t = 2}}^{n}{\widehat{w}}_{t - 1\mid n}^{\left( i\right) }} = \frac{\frac{1}{n - 1}\mathop{\sum }\limits_{{t = 2}}^{n}{\widehat{w}}_{t - 1,t \mid  n}^{\left( i,j\right) }}{\frac{1}{n - 1}  \mathop{\sum }\limits_{{t = 1}}^{n-1}{\widehat{w}}_{t \mid  n}^{\left( i\right) }}.
\end{align}
For the denominator, by Theorem \ref{Th2}, we have
\begin{align}
 \frac{1}{n-1}\mathop{\sum }\limits_{{t = 1}}^{{n - 1}}{\widehat{w}}_{t \mid n}^{\left( i\right) } = \frac{1}{n - 1}\mathop{\sum }\limits_{{t = 1}}^{{n - 1}}{I}_{\left( {s}_{t} = i\right) } + {o}_{p}\left( \frac{1}{{N}^{\eta}}\right) \xrightarrow[]{p}{\pi }_{i}^{o}.
\end{align}
For the numerator, we have
\begin{align}\label{Th5_3}
    \frac{1}{n - 1}\mathop{\sum }\limits_{{t = 2}}^{n}{\widehat{w}}_{t - 1,t \mid n}^{\left( i,j\right) }& = \frac{1}{n - 1}\mathop{\sum }\limits_{{t = 2}}^{n}{P}_{r}\left( {{s}_{t - 1} = i,{s}_{t} = j \mid  \mathcal{Y}_{n};\widehat{\theta}}\right) .\notag\\
   & = \frac{1}{n - 1}\mathop{\sum }\limits_{{t = 2}}^{n}{P}_{r}\left( {{s}_{t - 1} = i,\mid {s}_{t} = j,\mathcal{Y}_{n};\widehat{\theta}}\right) {\widehat{w}}_{t \mid n}^{\left( j\right) }
    \notag\\
   & = \frac{1}{n - 1}\mathop{\sum }\limits_{{t = 2}}^{n}{P}_{r}\left( {{s}_{t - 1} = i \mid  {s}_{t} = j,\mathcal{Y}_{n};\widehat{\theta}}\right) \left\lbrack  {I_{\left( {{s}_{t} = j}\right)}  + {o}_{p}\left( \frac{1}{{N}^{\eta}}\right) }\right\rbrack .
\end{align}
The last equality of (\ref{Th5_3})  follows from Theorem \ref{Th2}. Further,
\begin{align}\label{Th5_4}
{P}_{r}\left( {{s}_{t - 1} = i \mid  {s}_{t} = j,\mathcal{Y}_{n};\widehat{\theta}}\right) &
= {P}_{r}\left( {{s}_{t - 1} = i \mid  {s}_{t} = j,\mathcal{Y}_{t - 1};\widehat{\theta}}\right)
\notag\\
&= \frac{{P}_{r}\left( {{s}_{t - 1} = i,{s}_{t} = j \mid  \mathcal{Y}_{t - 1};\widehat{\theta}}\right) }{{P}_{r}\left( {{s}_{t} = j \mid  \mathcal{Y}_{t - 1};\widehat{\theta}}\right) } = \frac{{\widehat{p}}_{ij} \cdot  {\widehat{w}}_{{t - 1} \mid {t-1}}^{\left( i\right) }}{\mathop{\sum }\limits_{{h = 1}}^{M}{\widehat{p}}_{hj} \cdot  {\widehat{w  }}_{{t - 1} \mid {t-1}}^{\left( h\right) }}\notag\\
&= I_{\left( {{s}_{t - 1} = i}\right)}  + {o}_{p}\left( \frac{1}{{N}^{\eta}}\right) ,
\end{align}
where the last equality follows from Theorem \ref{Th2}. Since $s_t$ follows a Markov process,
\( \frac{1}{n - 1}\mathop{\sum }\limits_{{t = 1}}^{n}I_{\left( {{s}_{t} = j}\right)} I_{\left( {{s}_{t - 1} = i}\right)} \overset{p}{ \rightarrow  }E\left\lbrack  {I_{\left( {{s}_{t} = j}\right)} I_{\left( {{s}_{t - 1} = i}\right)} }\right\rbrack   = E\left\lbrack  {E\left\lbrack  {I_{\left( {{s}_{t} = j}\right)} I_{\left( {{s}_{t - 1} = i}\right)} |I_{\left( {{s}_{t - 1} = i}\right)} }\right\rbrack  }\right\rbrack   = {p}_{ij}^{0}{\pi}_{i}^{0}. \)

Taking equations (\ref{Th5_1})-(\ref{Th5_4}) together, we have shown \( \widehat{p}_{ij}\xrightarrow[]{{p}}{p}_{ij}^0 \) .
\end{proof}

\setcounter{equation}{0}
 \setcounter{subsection}{0}
 \renewcommand{\theequation}{G.\arabic{equation}}
 \renewcommand{\thesubsection}{G.\arabic{subsection}}
\section*{G ~~ Details for Theorem 6}
\begin{proof}

 Firstly, consider the asymptotic representation of \(\widehat{w}_{t\mid n}^{(k)}\widehat{F}_{t\mid n}^{(k)}\) . From Lemma \ref{lemaC1}, we have
\begin{align}\label{G1}
\widehat{F}_{t\mid n}^{(k)}=&\frac{1}{pq}(\widehat{D}_k^{(1)})^{-1}\widehat{R}_k^\prime Y_t \widehat{C}_k (\widehat{D}_k^{(2)})^{-1} + O_p(p^{-1}q^{-1}) \notag \\
=&\sum_{l=1}^M I_{(s_t=l)} \frac{1}{pq}(\widehat{D}_k^{(1)})^{-1}\widehat{R}_k^\prime R_l^0 F_t {C_l^0}^\prime \widehat{C}_k(\widehat{D}_k^{(2)})^{-1}  \notag \\
&+ \frac{1}{pq}(\widehat{D}_k^{(1)})^{-1}\widehat{R}_k^\prime E_t \widehat{C}_k(\widehat{D}_k^{(2)})^{-\frac{1}{2}} + O_p(p^{-1}q^{-1}) \\
=&\sum_{l=1}^M I_{(s_t=l)} \frac{1}{pq}(\widehat{D}_k^{(1)})^{-1}\widehat{R}_k^\prime R_l^0 F_t {C_l^0}^\prime \widehat{C}_k(\widehat{D}_k^{(2)})^{-1}+ O_p(p^{-\frac{1}{2}}q^{-\frac{1}{2}}) + o_p(n^{-\frac{1}{2}}),\notag
\end{align}
where the last equation holds due to the order results given in Theorem \ref{Th3}. Then, combining the result of \(\widehat{w}_{t\mid n}^{(k)}\) in Theorem \ref{Th2}, we have
\begin{align}\label{G2}
    \widehat{w}_{t\mid n}^{(k)}\widehat{F}_{t\mid n}^{(k)}=&I_{(s_t=k)}\widehat{F}_{t\mid n}^{(k)} + o_p(\frac{1}{N^{\eta}}) \\
    =&I_{(s_t=k)}(\widehat{D}_k^{(1)})^{-1}\frac{\widehat{R}_k^\prime R_k^0}{p} F_t \frac{{C_k^{0\prime}} \widehat{C}_k}{q} (\widehat{D}_k^{(2)})^{-1} + O_p(p^{-\frac{1}{2}}q^{-\frac{1}{2}}) + o_p(n^{-\frac{1}{2}}). \notag
\end{align}
From equation (\ref{Th3_5}), that is, \(\frac{\widetilde{R}_k^{\prime} \widetilde{R}_k^0}{p} \frac{\widetilde{R}_k^{0\prime} \widetilde{R}_k}{p} = I_{k_1} + O_p(w_1^{\frac{1}{2}})\), we can get \(\widehat{D}_k^{(1)} = D_k^{(1)} + o_p(n^{-\frac{1}{2}})\) under the assumption \(\frac{\widehat{R}_k^{\prime} R_k^0}{p} = D_k^{(1)} + o_p(n^{-\frac{1}{2}})\). It further implies that \((\widehat{D}_k^{(1)})^{-1} \frac{\widehat{R}_k^\prime R_k^0}{p} = I_{k_1} + o_p(n^{-\frac{1}{2}})\). Similarly, \(\frac{C_k^{0\prime} \widehat{C}_k}{q} (\widehat{D}_k^{(2)})^{-1} = I_{k_2} + o_p(n^{-\frac{1}{2}})\), which means equation (\ref{G2}) can be reduced to
\begin{align}\label{G3}
     \widehat{w}_{t\mid n}^{(k)}\widehat{F}_{t\mid n}^{(k)}=I_{(s_t=k)}F_t + O_p(p^{-\frac{1}{2}}q^{-\frac{1}{2}}) + o_p(n^{-\frac{1}{2}}).
\end{align}

Next, consider the asymptotic representation of \(\widehat{F}_{t-1\mid n}^{*(k)}\). By equation (\ref{Th5_4}) in the proof of Theorem \ref{Th5}, we have
\begin{align}\label{G4}
    \widehat{F}_{t-1\mid n}^{*(k)}=& E[F_{t-1}\mid \mathcal{Y}_n,s_t=k;\widehat{\theta}] \notag \\
    =&\sum_{i=1}^M E[F_{t-1}\mid \mathcal{Y}_n,s_{t-1}=i,s_t=k;\widehat{\theta}]P_r[s_{t-1}=i\mid \mathcal{Y}_n,s_t=k;\widehat{\theta}]\notag\\
    =&\sum_{i=1}^M\widehat{F}_{t-1\mid n}^{(i)}[I_{(s_{t-1}=i)}+o_p(\frac{1}{N^{\eta}})]\notag\\
    =&\sum_{i=1}^M I_{(s_{t-1}=i)}F_{t-1} + O_p(p^{-\frac{1}{2}}q^{-\frac{1}{2}})+ o_p(n^{-\frac{1}{2}})\notag\\
    =&F_{t-1} + O_p(p^{-\frac{1}{2}}q^{-\frac{1}{2}}) + o_p(n^{-\frac{1}{2}}).
\end{align}
Finally, from the estimation equation of \(\widehat{B}_k\) given in the main text, we have
\begin{align*}
    \widehat{B}_k=&\frac{1}{\frac{1}{n}\sum_{t=1}^n \widehat{w}_{t\mid n}^{(k)}} \frac{1}{n} \sum_{t=1}^n I_{(s_t=k)} [F_t-\widehat{\Phi}_k F_{t-1} \widehat{\Gamma}_k^\prime] + O_p(p^{-\frac{1}{2}}q^{-\frac{1}{2}}) + o_p(n^{-\frac{1}{2}})\\
    =&\frac{1}{\pi_k^0}\left[ \frac{1}{n} \sum_{t=1}^n I_{(s_t=k)} F_t - \widehat{\Phi}_k(\frac{1}{n} \sum_{t=1}^n I_{(s_t=k)} F_{t-1})\widehat{\Gamma}_k^\prime \right] +  O_p(p^{-\frac{1}{2}}q^{-\frac{1}{2}}) + o_p(n^{-\frac{1}{2}}).
\end{align*}
Since \(\frac{1}{n} \sum_{t=1}^n I_{(s_t=k)} F_t=\frac{1}{n} \sum_{t=1}^n I_{(s_t=k)}(\Phi_k^0F_{t-1}\Gamma_k^{0\prime} + \epsilon_t)\), which means
\begin{align*}
     \frac{1}{n} \sum_{t=1}^n I_{(s_t=k)} F_{t-1} = \Phi_k^{0-1}[ \frac{1}{n} \sum_{t=1}^n I_{(s_t=k)} F_t -  \frac{1}{n} \sum_{t=1}^n I_{(s_t=k)} \epsilon_t](\Gamma_k^{0\prime})^{-1} = O_p(n^{-\frac{1}{2}}).
\end{align*}
Then we finish the proof of $\widehat{B}_k$.

 Consider the estimation equation of \(\widehat{\Phi}_k\) given in the main text, that is
\begin{align}\label{G5}
    \widehat{\Phi}_k =& \left[\sum_{t=1}^n \widehat{w}_{t\mid n}^{(k)}(P_{t,t-1\mid n}^{(2k)} - \widehat{B}_k \widehat{\Gamma}_k F_{t-1\mid n}^{*(k)\prime})\right]\left(\sum_{t=1}^n \widehat{w}_{t\mid n}^{(k)}P_{t-1\mid n}^{*(2k)}\right)^{-1}.
\end{align}
Firstly, consider \(\sum_t \widehat{w}_{t\mid n}^{(k)} P_{t-1\mid n}^{*(2k)}\), where
$$P_{t-1\mid n}^{*(2k)} = \mathop{\sum }\limits_{d=1}^{k_2}(e_{k_2}^{(d)\prime} \widehat{\Gamma}_k \otimes I_{k_1}) \widehat{P}_{t-1\mid n}^{*(k)} (\Gamma_k^\prime e_{k_2}^{(d)}\otimes I_{k_1}) $$ with
\begin{align*}
    \widehat{P}_{t-1\mid n}^{*(k)} =& \sum_{i=1}^M (\widehat{V}_{t-1\mid n}^{(i)} + \widehat{f}_{t-1\mid n}^{(i)}\widehat{f}_{t-1\mid n}^{(i)\prime})\widehat{w}_{t-1,t\mid n}^{(i,k)}/\widehat{w}_{t\mid n}^{(k)}.\tag{By Section A.5}
\end{align*}
By equations (\ref{lemac1_2}) - (\ref{lemac1_21}) and the definition of \(\widehat{V}_{t\mid n}^{(i)}\) given in the main text, we can show that \(\widehat{V}_{t-1\mid n}^{(i)} = O_p(p^{-1}q^{-1})\). Following the proof process of equation (\ref{Th5_4}), we can further have \(w_{t-1,t\mid n}^{(i,k)} = I_{(s_{t-1}=i,s_t=k)} + o_p(\frac{1}{N^{\eta}})\). Then we have \(\widehat{P}_{t-1\mid n}^{*(k)} = \mathop{\sum}\limits_{i=1}^M I_{(s_{t-1}=i)}\widehat{f}_{t-1\mid n}^{(i)}\widehat{f}_{t-1\mid n}^{(i)\prime} + o_p(\frac{1}{N^\eta})\), which further implies that
\begin{align}\label{G6}
    P_{t-1\mid n}^{*(2k)} =& \sum _{d=1}^{k_2}\sum_{i=1}^M I_{(s_{t-1}=i)}(e_{k_2}^{(d)\prime} \widehat{\Gamma}_k \otimes I_{k_1}) \widehat{f}_{t-1\mid n}^{(i)}\widehat{f}_{t-1\mid n}^{(i)\prime}(\widehat{\Gamma}_k^\prime e_{k_2}^{(d)} \otimes I_{k_1}) + o_p(\frac{1}{N^\eta}) \notag \\
    =&\sum _{d=1}^{k_2}\sum_{i=1}^M I_{(s_{t-1}=i)}\text{Vec}(\widehat{F}_{t-1\mid n}^{(i)} \widehat{\Gamma}_k^\prime e_{k_2}^{(d)})\text{Vec}^\prime (\widehat{F}_{t-1\mid n}^{(i)} \widehat{\Gamma}_k^\prime e_{k_2}^{(d)})+o_p(\frac{1}{N^\eta})\notag\\
    =&\sum _{d=1}^{k_2}\sum_{i=1}^M I_{(s_{t-1}=i)}\widehat{F}_{t-1\mid n}^{(i)} \widehat{\Gamma}_k^\prime e_{k_2}^{(d)}e_{k_2}^{(d)\prime} \widehat{\Gamma}_k \widehat{F}_{t-1\mid n}^{(i)\prime} + o_p(\frac{1}{N^\eta})\notag\\
    =&\sum_{i=1}^M I_{(s_{t-1}=i)}\widehat{F}_{t-1\mid n}^{(i)} \widehat{\Gamma}_k^\prime\widehat{\Gamma}_k \widehat{F}_{t-1\mid n}^{(i)\prime} + o_p(\frac{1}{N^\eta})\notag\\
    =& \sum_{i=1}^M I_{(s_{t-1}=i)} F_{t-1}\widehat{\Gamma}_k^\prime\widehat{\Gamma}_k F_{t-1}^\prime + O_p(p^{-\frac{1}{2}}q^{-\frac{1}{2}}) + o_p(n^{-\frac{1}{2}}) \tag{By equation (\ref{G3})}\notag \\
    =& F_{t-1}\widehat{\Gamma}_k^\prime\widehat{\Gamma}_k F_{t-1}^\prime + O_p(p^{-\frac{1}{2}}q^{-\frac{1}{2}}) + o_p(n^{-\frac{1}{2}}).
\end{align}
Then we have
\begin{align}\label{G7}
    \frac{1}{n}\sum_{t=1}^n \widehat{w}_{t\mid n}^{(k)} P_{t-1\mid n}^{*(2k)} = \frac{1}{n}\sum_{t=1}^n I_{(s_t=k)} F_{t-1}\widehat{\Gamma}_k^\prime\widehat{\Gamma}_k F_{t-1}^\prime +  O_p(p^{-\frac{1}{2}}q^{-\frac{1}{2}}) + o_p(n^{-\frac{1}{2}}).
\end{align}
Next, consider \(\mathop{\sum}\limits_{t=1}^n \widehat{w}_{t\mid n}^{(k)} P_{t,t-1\mid n}^{(2k)}\), where \(P_{t,t-1\mid n}^{(2k)} = \mathop{\sum}\limits_{d=1}^{k_2}(e_{k_2}^{(d)\prime}\otimes I_{k_1}) \widehat{P}_{t,t-1\mid n}^{(k)} (\widehat{\Gamma}_k^\prime e_{k_2}^{(d)} \otimes I_{k_1})\). Since
\begin{align*}
    \widehat{P}_{t,t-1\mid n}^{(k)}\triangleq &E[f_t f_{t-1} \mid \mathcal{Y}_n,s_t=k;\widehat{\theta}]\\
    =&\text{Cov}[f_t ,f_{t-1} \mid \mathcal{Y}_n,s_t=k;\widehat{\theta}]+\widehat{f}_{t\mid n}^{(k)} \widehat{f}_{t-1\mid n}^{*(k)\prime}\\
    =&\text{Cov}[\beta_k^0 + (\Gamma_k^0 \otimes \Phi_k^0)f_{t-1} + \epsilon_t,f_{t-1}\mid \mathcal{Y}_n, s_t=k;\widehat{\theta}]+\widehat{f}_{t\mid n}^{(k)} \widehat{f}_{t-1\mid n}^{*(k)\prime}\\
    =& (\Gamma_k^0 \otimes \Phi_k^0)[\widehat{P}_{t-1\mid n}^{*(k)} - \widehat{f}_{t-1\mid n}^{*(k)} \widehat{f}_{t-1\mid n}^{*(k)\prime}] + \widehat{f}_{t\mid n}^{(k)} \widehat{f}_{t-1\mid n}^{*(k)\prime},
\end{align*}
and \(\widehat{f}_{t-1\mid n}^{*(k)} = \mathop{\sum}\limits_{i=1}^M I_{(s_{t-1}=i}) \widehat{f}_{t-1\mid n}^{(i)} + o_p(\frac{1}{N^\eta})\)(See equation (\ref{G4})), and the asymptotic representation of \(\widehat{P}_{t-1}^{*(k)}\)given above , we have
\begin{align*}
    \widehat{P}_{t,t-1\mid n}^{(k)} = \widehat{f}_{t\mid n}^{(k)} \widehat{f}_{t-1\mid n}^{*(k)\prime} + o_p(\frac{1}{N^\eta}),
\end{align*}
which further implies that
\begin{align*}
     \widehat{P}_{t,t-1\mid n}^{(2k)} =& \sum_{d=1}^{k_2}(e_{k_2}^{(d)\prime} \otimes I_{k_1}) \widehat{P}_{t,t-1\mid n}^{(k)} (\widehat{\Gamma}_k^\prime e_{k_2}^{(d)} \otimes I_{k_1})\\
     =&\sum_{d=1}^{k_2}(e_{k_2}^{(d)\prime} \otimes I_{k_1}) \widehat{f}_{t\mid n}^{(k)} \widehat{f}_{t-1\mid n}^{*(k)\prime}(\widehat{\Gamma}_k^\prime e_{k_2}^{(d)} \otimes I_{k_1}) + o_p(\frac{1}{N^\eta})\\
     =& \sum_{d=1}^{k_2} \text{Vec}(\widehat{F}_{t\mid n}^{(k)}e_{k_2}^{(d)}) \text{Vec}^\prime (\widehat{F}_{t-1\mid n}^{*(k)}\widehat{\Gamma}_k^\prime e_{k_2}^{(d)}) + o_p(\frac{1}{N^\eta})\\
     =&\sum_{d=1}^{k_2} \widehat{F}_{t\mid n}^{(k)}e_{k_2}^{(d)} e_{k_2}^{(d)\prime} \widehat{\Gamma}_k \widehat{F}_{t-1\mid n}^{*(k)\prime} + o_p(\frac{1}{N^\eta})\\
     =& \widehat{F}_{t\mid n}^{(k)} \widehat{\Gamma}_k \widehat{F}_{t-1\mid n}^{*(k)\prime}+ o_p(\frac{1}{N^\eta}).
\end{align*}
 By equations (\ref{G3}) and (\ref{G4}),
\begin{align}\label{G8}
     \frac{1}{n}\sum_{t=1}^n \widehat{w}_{t\mid n}^{(k)} P_{t-1\mid n}^{(2k)} =& \frac{1}{n} \sum_t\widehat{w}_{t\mid n}^{(k)} \widehat{F}_{t\mid n}^{(k)} \widehat{\Gamma}_k \widehat{F}_{t-1\mid n}^{*(k)\prime} + o_p(\frac{1}{N^\eta}) \notag \\
     =&\frac{1}{n} \sum_t I_{(s_t=k)}F_t \widehat{\Gamma}_k F_{t-1}^\prime + O_p(p^{-\frac{1}{2}}q^{-\frac{1}{2}}) + o_p(n^{-\frac{1}{2}}) \notag\\
     =&\frac{1}{n} \sum_t I_{(s_t=k)} \Phi_k^0 F_{t-1} \Gamma_k^{0\prime} \widehat{\Gamma}_k F_{t-1}^\prime + \frac{1}{n} \sum_t I_{(s_t=k)} \epsilon_t \widehat{\Gamma}_k F_{t-1}^\prime \notag\\
     &+ O_p(p^{-\frac{1}{2}}q^{-\frac{1}{2}}) + o_p(n^{-\frac{1}{2}}).
\end{align}
Combined equations (\ref{G5}),  (\ref{G7}), (\ref{G8}) and results of \(\widehat{B}_k\), we finally have
\begin{align*}
    \|\widehat{\Phi}_k - \Phi_k^0 \widehat{P}_{k_2} \|_F^2 = O_p(p^{-1}q^{-1}) + O_p(\frac{1}{n}).
\end{align*}
Similarly, we can get the result of \(\widehat{\Gamma}_k\).

\end{proof}

\end{document}